\date{July 8, 2020}
\newtheorem{proposition}{Proposition}
\newtheorem{theorem}[proposition]{Theorem}
\newtheorem{lemma}[proposition]{Lemma}
\newtheorem{corollary}[proposition]{Corollary}
\theoremstyle{definition}
\theoremstyle{remark}
\newtheorem{remark}[proposition]{Remark}
\numberwithin{equation}{section}
\numberwithin{proposition}{section}
\newcommand\R{{\ensuremath {\mathbb R} }}
\newcommand\C{{\ensuremath {\mathbb C} }}
\newcommand\Z{{\ensuremath {\mathbb Z} }}
\renewcommand\phi{\varphi}
\renewcommand\le{\leqslant}
\renewcommand\ge{\geqslant}
\renewcommand\epsilon{\varepsilon}
\renewcommand\tilde{\widetilde}
\newcommand\ii{{\ensuremath {\infty}}}
\newcommand{\cK}{\mathcal{K}}
\newcommand{\cM}{\mathcal{M}}
\newcommand{\cO}{\mathcal{O}}
\newcommand{\cU}{\mathcal{U}}
\newcommand\1{{\ensuremath {\mathds 1} }}
\newcommand{\Sph}{\mathbb{S}}
\DeclareMathOperator{\dist}{dist}
\DeclareMathOperator{\im}{Im}
\DeclareMathOperator{\re}{Re}
\DeclareMathOperator{\spec}{spec}
\DeclareMathOperator{\supp}{supp}
\DeclareMathOperator{\Tr}{Tr}
\DeclareMathOperator{\vol}{Vol}
\newcommand{\dps}{\displaystyle}
\newcommand{\loc}{{\rm loc}}
\newcommand\nnfootnote[1]{%
  \begin{NoHyper}
  \renewcommand\thefootnote{}\footnote{#1}%
  \addtocounter{footnote}{-1}%
  \end{NoHyper}
}%unnumbered footnote
\title{Sharp Weyl laws with singular potentials}
\author[R. L. Frank]{Rupert L. Frank}
\address[R. L. Frank]{Mathematisches Institut, Ludwig-Maximilians-Universit\"at M\"unchen, Theresienstr. 39, 80333 M\"unchen, Germany, and Mathematics 253-37, Caltech, Pasa\-de\-na, CA 91125, USA}
\email{r.frank@lmu.de, rlfrank@caltech.edu}
\author[J. Sabin]{Julien Sabin}
\address[J. Sabin]{Laboratoire de Math\'ematiques d'Orsay, Univ. Paris-Sud, CNRS, Universit\'e Paris-Saclay, 91405 Orsay, France} 
\email{julien.sabin@universite-paris-saclay.fr}
\begin{document}

\begin{abstract}
	We consider the Laplace--Beltrami operator on a three-dimensional Riemannian manifold perturbed by a potential from the Kato class and study whether various forms of Weyl's law remain valid under this perturbation. We show that a pointwise Weyl law holds, modified by an additional term, for any Kato class potential with the standard sharp remainder term. The additional term is always of lower order than the leading term, but it may or may not be of lower order than the sharp remainder term. In particular, we provide examples of singular potentials for which this additional term violates the sharp pointwise Weyl law of the standard Laplace-Beltrami operator. For the proof we extend the method of Avakumovi\'c to the case of Schr\"odinger operators with singular potentials.
\end{abstract}

\maketitle

\nnfootnote{\copyright\, 2020 by
the authors. This paper may be reproduced, in its entirety, for
non-commercial purposes.}

\section{Introduction and main results}

In this paper we investigate the question to what extent the Weyl law in its various forms for the Laplace--Beltrami operator on a compact manifold remains valid when a singular potential is added.

To be more precise, let $(M,g)$ be a three-dimensional compact Riemannian manifold without boundary. We comment on the three-dimensionality assumption in Remark \ref{rk:3d} below. Our results are valid under not very restrictive regularity assumptions on $M$, namely $C^4$-smoothness, as discussed in Remark \ref{rk:regularity-g}, but they are also new in the $C^\infty$ setting.

We denote by $-\Delta_g$ the (nonnegative) Laplace--Beltrami operator on $(M,g)$. Its spectrum is discrete and we denote by $\1(-\Delta_g\le t)(x,y)$ the integral kernel of its spectral projection $\1(-\Delta_g\le t)$ corresponding to eigenvalues $\le t$. Moreover,
$$
N(t,-\Delta_g) = \Tr\1(-\Delta_g\le t) = \int_M \1(-\Delta_g\le t)(x,x)\,dv_g(x)
$$
denotes the number of its eigenvalues $\le t$, counting multiplicities.

We are interested in four different forms of Weyl's law concerning the limit $t\to\infty$, namely

\begin{tabular}{ll}
 $\bullet$ integrated Weyl law: & $\dps N(t,-\Delta_g) = \frac{t^{3/2}}{6\pi^2}\,\vol_g(M) + o(t^{3/2})$ \\
 $\bullet$ pointwise Weyl law:& $\dps \1(-\Delta_g\le t)(x,x) = \frac{t^{3/2}}{6\pi^2} + o(t^{3/2}) \ \text{uniformly in}\ x\in M$ \\
 $\bullet$ sharp integrated Weyl law: & $\dps N(t,-\Delta_g) = \frac{t^{3/2}}{6\pi^2}\,\vol_g(M) + \mathcal O(t)$ \\
 $\bullet$ sharp pointwise Weyl law: &  $\dps \1(-\Delta_g\le t)(x,x) = \frac{t^{3/2}}{6\pi^2} + \mathcal O(t)\ \text{uniformly in}\ x\in M$.
\end{tabular}

Clearly, the pointwise versions imply the integrated versions and the sharp versions imply the regular versions. The integrated Weyl law was originally proved by Weyl \cite{Weyl-1911} in Euclidean space and its pointwise version in Euclidean space is due to Carleman \cite{Carleman-34,Carleman-36}. The extension of the pointwise Weyl law to manifolds is due to Minakshisundaram and Pleijel \cite{MinPle-49}. The sharp pointwise Weyl law for closed manifolds is due to Avakumovi\'c \cite{Avakumovic-56}, Levitan \cite{Levitan-52}, and H\"ormander \cite{Hormander-68}. The adjective `sharp' refers to the fact that there are manifolds, for instance the round sphere or, more generally, Zoll manifolds, for which the remainder $\mathcal O(t)$ cannot be improved. On the other hand, there is a large literature on improved Weyl laws with $o(t)$ remainders for certain manifolds, both in integrated \cite{DuiGui-75,Landau-27,Hlawka-50,Walfisz-59,Heath-Brown-99,Huxley-03,Berard-77} and in pointwise \cite{Safarov-88,SogZel-02} form. These improved Weyl laws will not play a major role in this paper and we only comment on them in Remark \ref{rk:improved-remainder} below.

Let $V:M\to\R$ belong to the Kato class, that is, it is measurable and
$$
\lim_{\epsilon\to 0} \sup_{x\in M} \int_{d_g(x,y)<\epsilon}\frac{|V(y)|}{d_g(x,y)}\,dv_g(y) = 0 \,.
$$
Under this assumption $-\Delta_g+V$ can be defined via a quadratic form as a selfadjoint, bounded below operator with discrete spectrum. When $V$ is smooth, it is well-known \cite{Hormander-68} that all four versions of Weyl's law remain valid when $-\Delta_g$ is replaced by $-\Delta_g +V$. Our goal in this paper is to investigate to what extent this is true for general Kato class $V$. We will find both preservation and violation, depending on the specific form of Weyl's law. Let us summarize our main results.

\begin{theorem}[Pointwise Weyl law for Kato class potentials]\label{thm:weylptw}
	Let $V:M\to\R$ be in the Kato class. Then, as $t\to+\infty$ and uniformly in $x\in M$
	$$
	\1(-\Delta_g +V\le t)(x,x) = \frac{t^{3/2}}{6\pi^2} + o(t^{3/2}) \,.
	$$
\end{theorem}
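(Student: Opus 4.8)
The plan is to deduce the pointwise Weyl law for $-\Delta_g+V$ from the one for $-\Delta_g$ by a resolvent/heat-kernel comparison, using that Kato class is an infinitesimally form-small perturbation whose effect on the spectral function is uniformly controlled. First I would recall the pointwise Weyl law for the unperturbed operator, $\1(-\Delta_g\le t)(x,x) = \tfrac{t^{3/2}}{6\pi^2} + o(t^{3/2})$ uniformly in $x$, which is the classical Minakshisundaram--Pleijel result. The main comparison tool will be the formula expressing the spectral function as a Laplace-type or Tauberian transform of $e^{-s(-\Delta_g+V)}(x,x)$, or alternatively via the resolvent $(-\Delta_g+V-z)^{-1}(x,x)$; I would work with whichever makes the perturbation estimate cleanest, most likely the resolvent at $z=-\lambda$ with $\lambda\to\infty$ and a Tauberian argument, since the Kato condition is tailor-made to bound $\int_M G_0(x,y;\lambda)|V(y)|\,dv_g(y)$ where $G_0$ is the free Green's function.

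The key steps, in order, are: (1) Show that for $\lambda$ large, the Birman--Schwinger-type operator $|V|^{1/2}(-\Delta_g+\lambda)^{-1}|V|^{1/2}$ has operator norm $\to 0$ as $\lambda\to\infty$, with a rate depending only on the Kato modulus of $V$ — this is the quantitative form of infinitesimal form-smallness and is uniform in everything. (2) Use the resolvent identity iteratively to write $(-\Delta_g+V+\lambda)^{-1}(x,x) = (-\Delta_g+\lambda)^{-1}(x,x) + (\text{correction})$, and estimate the correction pointwise on the diagonal: the first-order term is $-\int_M G_0(x,y;\lambda)V(y)G_0(y,x;\lambda)\,dv_g(y)$, which by the Kato condition and the explicit short-distance behavior $G_0(x,y;\lambda)\sim \tfrac{1}{4\pi d_g(x,y)}e^{-\sqrt\lambda d_g(x,y)}$ in three dimensions is $o\big((-\Delta_g+\lambda)^{-1}(x,x)\big) = o(\lambda^{-1/2})$ uniformly in $x$; higher-order terms are summed using the Neumann series controlled by step (1). (3) Transfer this diagonal resolvent estimate into a statement about $\1(-\Delta_g+V\le t)(x,x)$ via a Tauberian theorem (Karamata/Hardy--Littlewood), using monotonicity of the spectral function in $t$ to upgrade the Abel-summed asymptotics to the sharp-order $o(t^{3/2})$ leading term; the uniformity in $x$ follows because all estimates in steps (1)--(2) are uniform in $x$.

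I expect the main obstacle to be step (2): controlling the diagonal of the iterated resolvent correction uniformly in $x$, because the free Green's function $G_0(x,y;\lambda)$ is singular on the diagonal and the naive pointwise bound on $\int G_0 |V| G_0$ involves $\int \frac{|V(y)|}{d_g(x,y)^2}\,dv_g(y)$, which is \emph{not} finite for general Kato potentials. The resolution is to not expand too far: one should split $V = V_1 + V_2$ with $V_1$ bounded (hence harmless, handled by the smooth theory or by a crude bound) and $V_2$ small in the Kato norm, and for the $V_2$ part use the resolvent identity only \emph{once} on each side, i.e. work with $(-\Delta_g+\lambda)^{-1}|V_2|^{1/2}$, which maps $L^2\to L^\infty$ with small norm and lands the singular factor in an $L^2$-in-$y$ integral of $G_0(x,y;\lambda)^2 \sim d_g(x,y)^{-2}$ that \emph{is} controlled by the Kato condition after one more integration. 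Equivalently, one factors through the Birman--Schwinger operator of step (1) and only needs the diagonal bound on $(-\Delta_g+\lambda)^{-1}|V|^{1/2}$ in $L^\infty_x L^2_y$, which the Kato condition supplies. Once this uniform-in-$x$ smallness of the resolvent diagonal correction is in hand, the Tauberian passage in step (3) is routine.
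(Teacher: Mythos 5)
Your overall strategy (free parametrix for the resolvent, Neumann series controlled by the smallness of the Kato norm at scale $\lambda^{-1/2}$, then a Tauberian theorem using monotonicity of $t\mapsto \1(-\Delta_g+V\le t)(x,x)$) is the same as the paper's, and the Tauberian step is indeed routine for the non-sharp statement. The genuine gap is in your key estimate in step (2). You claim that the Kato condition supplies the bound $\|(-\Delta_g+\lambda)^{-1}|V|^{1/2}\|_{L^\infty_x L^2_y}<\infty$, equivalently $\sup_x\int G_\lambda(x,y)^2|V(y)|\,dv_g(y)<\infty$. Since $G_\lambda(x,y)^2\sim (4\pi d_g(x,y))^{-2}$ near the diagonal, this is exactly the condition $\sup_x\int_{d_g(x,y)<\epsilon}|V(y)|\,d_g(x,y)^{-2}\,dv_g(y)<\infty$, i.e.\ hypothesis \eqref{eq:weylptwsharpass} of Theorem \ref{thm:weylptwsharp}. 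This is strictly stronger than membership in the Kato class and fails for the potential \eqref{eq:counterexample} with $\eta\in(0,1)$: at $x=x_0$ one gets $\int_0^1 r^{-2+\eta}\,dr=\infty$. The failure is not a technical artifact of the method: Proposition \ref{lem:example-potential} shows that the corresponding term in the spectral function genuinely contributes at order $t^{(3-\eta)/2}$, so no operator-norm bound of the Birman--Schwinger type can control it; it can only be beaten down to $o(t^{3/2})$ by exploiting the smallness of $\|V\|_{\cK(\epsilon)}$ as $\epsilon\to0$, and for that one must never let both Green's function singularities attached to the base point $x$ land in the same $y$-integral. Splitting $V=V_1+V_2$ with $V_1$ bounded does not help, since the problematic $d_g(x,y)^{-2}$ integral already diverges for the small-Kato-norm singular part $V_2$.

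Two possible repairs. The paper's route is to work with $\partial_\lambda G_\lambda^V(x,x)=-(-\Delta_g+V+\lambda)^{-2}(x,x)$ (note that your first-power diagonal $(-\Delta_g+\lambda)^{-1}(x,x)$ is $+\infty$ in three dimensions, so passing to the second power is forced anyway) and to observe that the $\lambda$-derivative inserts a factor $\bigl(d_g(x,z_1)+\cdots+d_g(z_n,x)\bigr)$ into the iterated chain $T_\lambda V T_\lambda\cdots V T_\lambda$ on the diagonal; this factor cancels exactly one of the $n+1$ singular denominators, and the Rodnianski--Schlag bound \eqref{eq:rodschl} then controls the whole chain by $(C\|V\|_{\cK(\epsilon)})^n$, uniformly in $x$ and $t$. (The paper additionally records these chains as exact Stieltjes transforms, which is needed for the sharp $\mathcal O(t)$ results but not for the $o(t^{3/2})$ statement.) A softer repair closer to your step (2) is to write $R=(-\Delta_g+V+\lambda)^{-1}=G_\lambda+E$ with $E=-G_\lambda V R$ and expand $R^2(x,x)=\|G_\lambda(x,\cdot)\|_2^2+2\langle G_\lambda(x,\cdot),E(x,\cdot)\rangle+\|E(x,\cdot)\|_2^2$; the cross term equals $-\int G_\lambda(x,z)V(z)[RG_\lambda](z,x)\,dv_g(z)$ and the last term equals $[G_\lambda VR^2VG_\lambda](x,x)$, and in both cases the composed kernels $[RG_\lambda]$ and $R^2$ are \emph{bounded} by $C\lambda^{-1/2}$ (via $\|R\|_{L^2\to L^\infty}\le C\lambda^{-1/4}$ for Kato $V$), so one only ever needs $\sup_x\int G_\lambda(x,z)|V(z)|\,dv_g(z)\le C\|V\|_{\cK(2/\sqrt\lambda)}$, i.e.\ Lemma \ref{lem:est-kato}, and never the divergent $d_g^{-2}$ integral. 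Either way, the estimate you actually invoked must be replaced.
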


On the other hand, we show that there is no $\epsilon>0$ such that the term $o(t^{3/2})$ in Theorem \ref{thm:weylptw} can be replaced by $\mathcal O( t^{(3-\epsilon)/2})$ for all Kato class $V$ and, in particular, we show that the sharp form of the pointwise Weyl law is \emph{not} valid for all Kato class potentials. In fact, in Proposition \ref{lem:example-potential} we will show that the sharp pointwise Weyl law fails for the potential
\begin{equation}\label{eq:counterexample}
V(x) = \gamma \,\frac{\chi(d_g(x,x_0))}{d_g(x,x_0)^{2-\eta}}
\end{equation}
with $\eta\in(0,1)$, $x_0\in M$, $\gamma\in\R\setminus\{0\}$ and $\chi$ a cut-off function which is $\equiv 1$ near zero. More precisely, we will see that there is an explicit singular term in $\1(-\Delta_g +V\le t)(x,x)$ that lives at scale $t^{-1/2}$ around $x_0$ and has size $t^{(3-\eta)/2}$. We refer to this proposition and Lemma~\ref{xifunction} for a detailed analysis of this example.

Next, we show that the sharp pointwise Weyl law remains valid for $V$ satisfying a mild additional regularity condition.

\begin{theorem}[Sharp pointwise Weyl law]\label{thm:weylptwsharp}
	Assume that $V:M\to\R$ satisfies for some $\epsilon'>0$,
	\begin{equation}
	\label{eq:weylptwsharpass}
	\sup_{x\in M} \int_{d_g(y,x)<\epsilon'} \frac{|V(y)|}{d_g(y,x)^2}\,dv_g(y) <\infty \,.
	\end{equation}
	Then, as $t\to+\infty$ and uniformly in $x\in M$,
	$$
	\1(-\Delta_g +V\le t)(x,x) = \frac{t^{3/2}}{6\pi^2} + \mathcal O(t) \,.
	$$
\end{theorem}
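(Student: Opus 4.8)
The plan is to adapt Avakumovi\'c's resolvent/wave-parametrix method, following the same scheme used to prove Theorem \ref{thm:weylptw}, but now keeping track of the size of the error as a function of $t$ rather than merely showing it is $o(t^{3/2})$. The starting point is the Tauberian reduction: one passes from the spectral projection kernel $\1(-\Delta_g+V\le t)(x,x)$ to the smoothed object obtained by convolving the spectral measure $d_t \1(-\Delta_g+V\le t)(x,x)$ against a Schwartz function $\rho$ whose Fourier transform is supported in a small interval $(-\delta,\delta)$ (with $\delta$ smaller than the injectivity radius of $M$). The classical Avakumovi\'c argument shows that, once one proves
\[
\left| \partial_\lambda \left( \rho * d\sigma_x \right)(\lambda) \right| \lesssim \lambda
\quad\text{uniformly in } x\in M,\ \lambda\to\infty,
\]
where $\sigma_x(\lambda) = \1(-\Delta_g+V\le \lambda^2)(x,x)$ is the spectral function in the $\lambda=\sqrt{t}$ variable, then the sharp $\mathcal O(t)=\mathcal O(\lambda^2)$ remainder in the (unsmoothed) pointwise Weyl law follows by a Tauberian theorem, because the monotonicity of $\sigma_x$ lets one trade the smoothing for an $\mathcal O(\lambda)$ loss. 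So the entire problem reduces to the uniform bound on the smoothed spectral derivative.

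To get that bound, I would write $\rho * d\sigma_x (\lambda)$ via the (half-)wave operator: up to harmless terms, it equals (a constant times) $\int \hat\rho(s) e^{is\lambda} \left( e^{-is\sqrt{-\Delta_g+V}} \right)(x,x)\,ds$, or more conveniently work with $\cos(s\sqrt{-\Delta_g+V})$. For $\hat\rho$ supported in $(-\delta,\delta)$ the free ($V=0$) contribution is exactly the classical Avakumovi\'c--H\"ormander computation, giving $t^{3/2}/6\pi^2 + \mathcal O(t)$ by finite propagation speed and the local parametrix for the wave equation on $(M,g)$. The new work is to control the difference between the perturbed and unperturbed wave traces at the diagonal. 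I would do this by a Duhamel expansion of $e^{-is\sqrt{-\Delta_g+V}}$ — or, more robustly, of the resolvent $(-\Delta_g+V-z)^{-1}$, expanded as a Born series $\sum_{n\ge 0}(-1)^n (-\Delta_g-z)^{-1}\bigl(V(-\Delta_g-z)^{-1}\bigr)^n$ — and estimate each term, and the remainder, using the pointwise bounds on the free resolvent kernel $(-\Delta_g-z)^{-1}(x,y)$ in dimension three, which decays like $|x-y|^{-1}e^{-c\,\mathrm{Im}\sqrt{z}\,|x-y|}$ near the diagonal. The hypothesis \eqref{eq:weylptwsharpass}, namely $\sup_x \int_{d_g(y,x)<\epsilon'} |V(y)| d_g(y,x)^{-2}\,dv_g(y) <\infty$, is precisely the quantity that makes the singular double integral $\int\int |x-y|^{-1}|V(y)||y-z|^{-1}\,dy$ bounded (it is the critical Stummel-type condition in $3$D), which is exactly the estimate one needs so that the first Born correction — and, iterating, the whole series — contributes only $\mathcal O(t)$ to the smoothed spectral derivative, matching the free remainder.

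The main obstacle I expect is the uniformity in $x$ together with the need to handle the \emph{oscillatory} character of the integrals at energy $\sim t$: one cannot simply use absolute-value bounds on the resolvent at fixed distance above the real axis, because that loses powers of $t$. The standard device is to take $z = \lambda^2 + i\lambda$ (i.e. $\mathrm{Im}\sqrt z \sim 1$) so that the free resolvent kernel is damped on scale $d_g(x,y)\gtrsim 1$, combine this with the support condition $|s|<\delta$ coming from $\hat\rho$ (which localizes everything to $d_g(x,y)<\delta$), and verify that the Kato-type norm of $V$ controls the resulting quantities with a gain of $\lambda^{-1}$ per extra factor of $V$ — so that the Born series converges for $\lambda$ large (for $\delta$ small, or after absorbing a bounded piece of $V$) and the sum is $\mathcal O(\lambda^2)=\mathcal O(t)$. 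Assembling the free parametrix contribution, the convergent perturbation series, and the Tauberian step then yields the claimed $\1(-\Delta_g+V\le t)(x,x) = t^{3/2}/(6\pi^2) + \mathcal O(t)$ uniformly in $x$.
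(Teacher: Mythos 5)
Your strategy is the Levitan--H\"ormander route (smoothed wave trace, finite propagation speed, Fourier Tauberian theorem), which is genuinely different from the paper's: the paper works with the Stieltjes transform $\int_0^\infty (t+\lambda)^{-3}A(t)\,dt = -\tfrac12\partial_\lambda G_\lambda^V(x,x)$ for real $\lambda\to+\infty$, a resolvent parametrix cut off at a small scale $\epsilon$, and a Tauberian theorem for the Stieltjes transform (Sections \ref{sec:tauberian}--\ref{sec:avakv}); the sharp Weyl law then reduces, via Corollary \ref{coro:weyl-V}, to the single estimate $\sup_x|r_{0,\epsilon}^V(t,x)|\le Ct^{-1/2}$. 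You correctly identify that \eqref{eq:weylptwsharpass} is exactly the quantity controlling the first Born correction. However, there are two concrete gaps.

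First, the claim that "the Kato-type norm of $V$ controls the resulting quantities with a gain of $\lambda^{-1}$ per extra factor of $V$" is false for the most singular terms of the Born/Duhamel series, namely those in which every factor is $VT_\lambda$. For these terms the Kato norm gives smallness in the localization scale but no decay in $\lambda$ whatsoever; this is precisely why the paper's additional term $t^{3/2}r_{0,\epsilon}^V(t,x)$ appears and why the sharp pointwise Weyl law genuinely fails for the Kato-class potentials of Proposition \ref{lem:example-potential} (the violating contribution is of size $t^{(3-\eta)/2}$, living at scale $t^{-1/2}$ around the singularity). Any correct argument must extract the $t^{-1/2}$ gain on these terms from the hypothesis \eqref{eq:weylptwsharpass} (in the paper: from the decay $|\kappa(r)|\le Cr^{-1}$ of the oscillatory kernel, which converts into one \emph{extra} power of $1/d_g$ in the multilinear integral), not from a per-factor resolvent gain.

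Second, even granting that mechanism for $n=1$ (where the resulting integral is literally $\sup_x\int|V(y)|d_g(x,y)^{-2}dv_g(y)$), your "and, iterating, the whole series" does not work as stated. The $n$-th term has $n+1$ distance factors and $n$ potentials plus one extra power of $1/d_g$; iterating the double-integral bound would consume $\sup_x\int|V|/d_g^2$ once \emph{per factor}, and that quantity is assumed finite but not small, so the series need not converge. The paper resolves this with a H\"older/geometric-mean trick (proof of Theorem \ref{thm:weylptwsharp}): the extra power of $1/d_g$ is distributed evenly as $d_g^{-(n+1)/n}$ over all $n$ links, and the product is then bounded by the hypothesis used exactly \emph{once} times $\|V\|_{\cK(\epsilon)}^{n-1}$, with $\|V\|_{\cK(\epsilon)}$ made small by shrinking the parametrix cut-off $\epsilon$. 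Relatedly, your two regularizations are in tension: the compact support of $\hat\rho$ localizes the wave trace to $d_g<\delta$, but the substitute device $z=\lambda^2+i\lambda$ only damps the resolvent kernel at distances $\gtrsim1$, which gives you no small parameter to sum the series. Some version of a small-scale cut-off (and the accompanying exponentially small or separately Tauberian-acceptable error) is needed to close the argument.
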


Note that the condition in this theorem is, in particular, satisfied if $V\in L^q(M)$ for some $q>3$. For comparison, for any $q<3$ there is an $\eta\in(0,1)$ such that \eqref{eq:counterexample} belongs to $L^q$, so $q=3$ is the threshold for the validity of the sharp pointwise Weyl law on the $L^q$ scale.

Finally, we discuss the integrated forms of Weyl's law. Clearly, Theorem \ref{thm:weylptw} yields this in the form of an $o(t^{3/2})$ remainder. Remarkably, however, despite the possible failure of the sharp pointwise Weyl law, the integrated form remains valid in the strong form for arbitrary Kato class potentials. This was recently shown by Huang and Sogge \cite{HuaSog-20}. Here we give an independent proof of their result and extend it to potentials $V$ that are sums of Kato class and $L^{3/2}(M)$ functions. Note that the Kato class and $L^{3/2}(M)$ share the same critical scaling behavior and that neither one is contained in the other one.

\begin{theorem}[Sharp integrated Weyl law]\label{thm:weylintsharp}
	Assume that $V:M\to\R$ is the sum of a function in the Kato class and a function in $L^{3/2}(M)$. Then, as $t\to+\infty$,
	\begin{equation}\label{eq:weyl-integrated-manifold}
	N(t,-\Delta_g+V)=\frac{t^{3/2}}{6\pi^2}+\cO(t) \,.
	\end{equation}
\end{theorem}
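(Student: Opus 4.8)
The plan is to compare $N(t,-\Delta_g+V)$ with the unperturbed counting function $N(t,-\Delta_g)$, for which the sharp integrated Weyl law $N(t,-\Delta_g)=\frac{t^{3/2}}{6\pi^2}\vol_g(M)+\cO(t)$ is classical (Avakumovi\'c--Levitan--H\"ormander, already cited above), and to show that adding $V$ shifts the eigenvalues by an amount that is under control on the scale $\cO(t)$. Write $V=V_1+V_2$ with $V_1$ in the Kato class and $V_2\in L^{3/2}(M)$. The key structural input is the min-max principle: if $V$ is bounded, say $\|V\|_{L^\infty}\le \Lambda$, then $N(t-\Lambda,-\Delta_g)\le N(t,-\Delta_g+V)\le N(t+\Lambda,-\Delta_g)$, and since the sharp Weyl law for $-\Delta_g$ has remainder $\cO(t)$ and $\frac{(t\pm\Lambda)^{3/2}}{6\pi^2}=\frac{t^{3/2}}{6\pi^2}+\cO(t^{1/2})$, the desired conclusion follows immediately for bounded $V$. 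So everything reduces to handling the singular part, and the natural device is a truncation $V=V^{\le K}+V^{>K}$ where $V^{\le K}=V\,\1(|V|\le K)$ is bounded and the tail $V^{>K}$ is small in the relevant norm (Kato norm plus $L^{3/2}$ norm) once $K$ is large; by the Kato-class hypothesis and dominated convergence this tail has small critical (form-)norm. One then inserts the bounded part by the min-max argument above and must absorb the small singular tail into the $\cO(t)$ remainder \emph{uniformly in} $t$.

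The main step is therefore a quantitative bound of the following type: there is a constant $C$ and a function $\omega(\delta)\to 0$ as $\delta\to0$ such that, if $W$ has Kato-plus-$L^{3/2}$ norm at most $\delta$, then for all $t\ge 1$,
\begin{equation*}
\bigl| N(t,-\Delta_g+V^{\le K}+W) - N(t,-\Delta_g+V^{\le K}) \bigr| \le \omega(\delta)\,t + C\,t^{1/2}.
\end{equation*}
Granting this with $W=V^{>K}$, combining with the bounded-$V$ case gives $N(t,-\Delta_g+V)=\frac{t^{3/2}}{6\pi^2}\vol_g(M)+\omega(\delta)\,t+o(t)$ for every $\delta>0$, and since $\omega(\delta)\to0$ we may let $\delta\to0$ to conclude. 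To prove the displayed bound I would use the Birman--Schwinger / resolvent comparison at a spectral parameter of size comparable to $t$: the eigenvalue count of $-\Delta_g+V^{\le K}+W$ below $t$ differs from that of $-\Delta_g+V^{\le K}$ below $t$ by at most the number of eigenvalues of a Birman--Schwinger-type operator $|W|^{1/2}(-\Delta_g+V^{\le K}-t\mp i)^{-1}|W|^{1/2}$ exceeding $1$, i.e. by a rank/trace estimate. The Kato condition is exactly what makes the operator $|W|^{1/2}(-\Delta_g - z)^{-1}|W|^{1/2}$ bounded with norm controlled by the Kato norm of $W$ (via the kernel bound $|(-\Delta_g-z)^{-1}(x,y)|\lesssim d_g(x,y)^{-1}e^{-c|{\rm Im}\sqrt z|\,d_g(x,y)}$ in dimension $3$), while the $L^{3/2}$ part is handled by Sobolev / Kato--Seiler--Simon bounds $\|W^{1/2}(-\Delta_g+1)^{-1/2}\|_{\mathfrak S_6}\lesssim\|W\|_{L^{3/2}}^{1/2}$. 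Tracking the $t$-dependence through these bounds — using that the resolvent at energy $\sim t$ improves the relevant operator norms by a power of $t$ — yields the factor $\omega(\delta)$ multiplying the leading power $t$ rather than a fixed multiple of $t^{3/2}$.

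The hard part is precisely this last point: obtaining a remainder of the \emph{correct order} $\cO(t)$, i.e. one power of $t$ below the leading term, rather than merely an $o(t^{3/2})$ bound, for the singular tail. A crude perturbation argument would only give $o(t^{3/2})$; to get $\cO(t)$ one must exploit that the relevant Birman--Schwinger operator at energy $t$ has \emph{operator norm} $\to0$ as $\delta\to0$ uniformly in $t$ (so that no eigenvalues cross $1$ once $\delta$ is small, making the direct comparison trivial) \emph{or}, if one keeps $\delta$ fixed, that its number of eigenvalues above $1$ grows only like $t^{1}$ — this is where the three-dimensionality and the critical scaling of the Kato and $L^{3/2}$ classes enter decisively, since the Cwikel--Lieb--Rozenblum-type counting bound for such operators in dimension $3$ produces exactly the power $t^{d/2-1}=t^{1/2}\cdot t^{1/2}$... more precisely a bound of order $t$ when combined with the volume factor. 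I expect the cleanest route is the first alternative: reduce to $\delta$ arbitrarily small via truncation, get the genuinely uniform operator-norm smallness $\|\,|W|^{1/2}(-\Delta_g+V^{\le K}-t\mp i0)^{-1}|W|^{1/2}\| \le \omega(\delta)$, and then the comparison of counting functions costs only the already-present $\cO(t)$ from the bounded-perturbation min-max step, with the $\delta\to0$ limit removing $\omega(\delta)$ at the end.
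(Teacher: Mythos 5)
Your proposal has a genuine gap at its central step, and the gap is not repairable along the lines you sketch.

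The Birman--Schwinger comparison you invoke is only valid for counting eigenvalues \emph{below the spectrum} of the unperturbed operator (or in a spectral gap). Here you need to compare counting functions at an energy $t$ deep inside the spectrum of $H_0=-\Delta_g+V^{\le K}$, which on a compact manifold is a discrete set accumulating at $+\infty$ with spacing $\sim t^{-1/2}$ near $t$. Consequently $(H_0-t\mp i0)^{-1}$ does not exist when $t$ is an eigenvalue and has arbitrarily large norm nearby; there is no limiting absorption principle on a compact manifold, so the claimed uniform bound $\|\,|W|^{1/2}(H_0-t\mp i0)^{-1}|W|^{1/2}\|\le\omega(\delta)$ is false. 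Even if one had such a bound, ``no Birman--Schwinger eigenvalue crosses $1$'' would not imply that $N(t,H_0+W)=N(t,H_0)$: at positive energy the difference of counting functions (the spectral shift function) is not controlled by any rank or operator-norm estimate on a sandwiched resolvent. Once this step is removed, your truncation scheme collapses to the form-boundedness comparison $(1-\epsilon)(-\Delta_g)-C_\epsilon\le -\Delta_g+V\le(1+\epsilon)(-\Delta_g)+C_\epsilon$, which, as you yourself note, only yields $o(t^{3/2})$. Note also that smallness of the Kato norm cannot by itself be the mechanism: the counterexample \eqref{eq:counterexample} with small coupling $\gamma$ has arbitrarily small Kato norm yet still violates the sharp \emph{pointwise} law, so the integrated $\mathcal O(t)$ must come from a cancellation upon integration in $x$, which no operator-norm argument sees.

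The paper's proof is structured quite differently. For the Kato-class part it integrates the pointwise expansion of Corollary \ref{coro:weyl-V} over $M$ and shows directly that $\int_M r_{0,\epsilon}^V(t,x)\,dx=\mathcal O(t^{-1/2})$, using the decay $|\kappa(r)|\le C r^{-1}$ inside the explicit multiple integral defining $r_{0,\epsilon}^{(n,V)}$; this is where the sharp remainder for singular Kato potentials is actually won, and it relies on the full parametrix construction rather than on perturbing away from $-\Delta_g$. For the $L^{3/2}$ part the correct elementary device is not Birman--Schwinger but the kinetic-energy splitting $N(0,A+B)\le N(0,A)+N(0,B)$ with $A=-(1\mp\delta)\Delta_g+V_1-t$ and $B=-\delta\Delta_g\pm V_2$, combined with the CLR bound $N(0,-\delta\Delta_g\pm V_2)=\mathcal O(\delta^{-3/2})$ and the optimization $\delta=t^{-3/5}$, giving $\delta t^{3/2}+\delta^{-3/2}=\mathcal O(t^{9/10})$. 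Your instinct that a CLR-type counting bound in dimension three produces a power strictly below $t^{3/2}$ is the right one for the $L^{3/2}$ piece, but it must be applied to a Schr\"odinger operator with shrunken kinetic energy, not to a Birman--Schwinger operator at energy $t$, and it cannot handle the Kato-class piece at all since the Kato class is not contained in $L^{3/2}$.
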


This concludes the summary of our main results. Let us now comment on the motivation for these results and on their method of proofs.

Potentials with singularities appear naturally in physics, most notably the Coulomb case $|x|^{-1}$ in three dimensions. Spectral asymptotics in this case, both in integrated \cite{Hughes-85,SieWei-87,SieWei-89,SolSpi-03} and in pointwise form \cite{FefSec-94,FefSec-95}, have been extensively studied. As we will explain below, however, our asymptotic regime is different and we find phenomena of another nature.

Our motivation to consider Kato class potentials is two-fold. First, it is natural to investigate the validity of Weyl's law under rather minimal assumptions on the potential $V$ for which the Schr\"odinger operator $-\Delta_g+V$ can be defined as selfadjoint, bounded below operator. Both membership to the Kato class and to $L^{3/2}(M)$ are almost optimal conditions in this respect. For instance, singularities of the type $|x|^{-\beta}$ for $\beta<2$ are allowed for both classes. Note that the exponent $\beta=2$ is critical with respect to the scaling of the Laplacian and if, in this case, the coefficient of the singularity is too large negative, then the corresponding quadratic form is not bounded from below. We mention in passing (see Remark \ref{katovs32}) that neither the Kato class is contained in $L^{3/2}$, nor the other way around.

Second, the property that makes the Kato class more natural for our analysis than $L^{3/2}$ is its important property that all eigenfunctions of $-\Delta_g+V$ with Kato class $V$ are bounded. In fact, this class is almost sharp with respect to this property, in the sense that the fact that $e^{-t(-\Delta_g+V)}$ maps $L^2$ to $L^\ii$ essentially implies that $V$ belongs to the Kato class \cite{AizSim-82}. In contrast, eigenfunctions for potentials in $L^{3/2}$ need not be bounded. Since boundedness of eigenfunctions is an obvious requirement for a pointwise Weyl law to hold, the Kato class appears naturally in this context.

When comparing our results with other results about Weyl's law with singular potentials, it is important to distinguish between the `large eigenvalue regime' $-\Delta_g+V\le t$, $t\to+\infty$, which we consider here, and the `semiclassical regime' $-h^2\Delta_g+V\le 0$, $h\to0$. The large eigenvalue regime corresponds to a semiclassical regime $-h^2\Delta_g -1 + h^2 V\le 0$, $h=t^{-1/2}$, with a small parameter in front of the nonconstant part of the potential. Effects of singularities have been studied in details in the semiclassical regime but, as far as we know, their study in the large eigenvalue regime has only begun very recently in \cite{BlaSirSog-19,HuaSog-20}. 

The analysis in the semiclassical regime is, in part, motivated by that of the Schr\"odinger operator $-\Delta - Z^{4/3} \phi(Z^{1/3}x)$ in $L^2(\R^3)$ with a potential satisfying $\phi(y) \sim |y|^{-1}$ as $y\to 0$, which appears in atomic phyiscs in connection with Thomas--Fermi theory. Via rescaling we see that it corresponds to the semiclassical regime with $h=Z^{-1/3}$. It is well known that the singularity of $\phi$ has an effect, called the Scott correction, on the semiclassical expansion of the Riesz means of order one (meaning that instead of counting eigenvalues, we sum the negative eigenvalues). On the other hand, this effect does not appear when looking at the number of eigenvalues less than a fixed energy level $E<0$ \cite{HelKnaSieWei-92}. This phenomenon was thoroughly studied in \cite{Sobolev-96} where it was shown, in particular, that singularities $|x|^{-\beta}$ with $\beta<2$ do not affect the local eigenvalue counting function to order $\mathcal O(h^{-d+1})$. We emphasize that the results \cite{HelKnaSieWei-92,Sobolev-96} involve (at least locally) integrated Weyl laws and that it is not clear to us whether these methods can make assertions about pointwise Weyl laws. Moreover, the specific powerlike form of the singularity seems to be important, whereas Kato class potentials can have much more complicated singularities. Violations of the main term in the Weyl law in the semi-classical regime have also been studied, see for instance \cite{Laptev-93,BirLap-96}.

Let us return to the large eigenvalue regime studied in this paper. One of our key findings, which is behind Theorems \ref{thm:weylptw}, \ref{thm:weylptwsharp} and \ref{thm:weylintsharp}, is that there is an \emph{additional term} in the pointwise Weyl law. Theorem \ref{thm:weylptw} says that this term is $o(t^{3/2})$ for all Kato class potentials, Theorem \ref{thm:weylptwsharp} says that it is $\mathcal O(t)$ under the additional assumption \eqref{eq:weylptwsharpass} and Theorem \ref{thm:weylintsharp} says that it is $\mathcal O(t)$ when integrated over $M$ for all Kato class potentials. On the other hand, for the potential \eqref{eq:counterexample} we show that this term contributes a constant times $t^{(3-\eta)/2}$ at the singularity.

We have not seen this additional term discussed in the literature before. This term is neither of semiclassical origin nor of spectral origin, in contrast to the term giving rise to the Scott correction \cite{Hughes-85,SieWei-87,SieWei-89}, for instance.

Besides drawing attention to this additional term in the pointwise Weyl law, a goal of this paper is to advertise the method that we use to prove such results. It is the method of Avakumovi\'c \cite{Avakumovic-56}, which was developed in the 1950's but seems to have been forgotten in the last decades. As we will demonstrate, this method, which was originally developed for the case $V=0$, is a powerful tool (at least in three dimensions) to get very precise results on Weyl laws and is well suited to extensions to singular potentials $V$. For comparison, it is not clear to us how to adapt the approaches of the related works \cite{Sobolev-96,HuaSog-20} to obtain pointwise Weyl laws. In particular, we hope that our result can be used as a benchmark for the proof of pointwise Weyl laws with singular potentials by other methods, in the sense that such other methods must reveal the additional terms violating the standard Weyl law that we uncover in case of a strong enough singularity.

In the remaining part of this introduction we now discuss Avakumovi\'c's method and our new additional ingredients.

Avakumovi\'c's method relies on two ingredients, namely Tauberian theorems and para\-metrix estimates. The Tauberian theorems used by Avakumovi\'c concern a higher order Stieltjes transform instead of the Fourier transform, as in the work of Levitan \cite{Levitan-52} and then in many works following \cite{Hormander-68}. Consequently, one will seek a parametrix for the resolvent (or its powers) for spectral parameters far away from the spectrum and not for the wave propagator.

In connection with the parametrix estimates, two of Avakumovi\'c's fundamental insights are the following. First, an exponentially small remainder in these estimates is acceptable in the Tauberian theorems. This observation immediately leads to the remainder bound $\mathcal O(t^{(N-1)/2}\dist(x,\partial\Omega)^{-1})$ in the case of an open set $\Omega\subset\R^N$ with Dirichlet boundary conditions; see \cite{Avakumovic-52}. The second key insight is that some error terms which are not exponentially small are themselves (higher order) Stieltjes transforms and therefore acceptable in a sufficiently precise version of the Tauberian theorems.

Our main task is to prove corresponding parametrix estimates in the case where $V$ is added. In fact, we will use the same parametrix for the Green's function as Avakumovi\'c does, so the task will be to bound the effect of $V$ on the difference between the true Green's function and the parametrix. Several terms in this difference can be estimated by modifying Avakumovi\'c's arguments. In fact, in the case of bounded potentials $V$, this has already been done by Bojani\'c in \cite{Bojanic-53}. When $V$ is unbounded, however, there are other terms which are not as small as the previous ones and which need to be treated separately. These are the terms that lead to the `additional term' in the pointwise Weyl law, mentioned above, and that may lead to a violation of the sharp pointwise Weyl law.

This analysis eventually leads to the decomposition
\begin{multline}
\label{eq:decompintro}
(-\Delta_g + V + \lambda)^{-2}(x,x) = \frac{1}{8\pi\sqrt\lambda} - \int_0^\infty \frac{t^{3/2}\, r_{0}^V(t,x) + R^V(t,x)}{(t+\lambda)^3}\,dt\\
+ \text{exponentially small as}\ \lambda\to+\ii.
\end{multline}
The term $R^V(t,x)$ here is of size $\mathcal O(t)$ and can be dealt with as in Avakumovi\'c's paper. The important new contribution is the term $t^{3/2} r_{0}^V(t,x)$, which is defined through an explicit, absolutely convergent power series and encompasses the singular contributions coming from $V$.  This is the term that ultimately give rise to violations of the sharp pointwise Weyl law. A careful analysis of the structural properties of this term eventually leads to the various versions of Weyl's law in Theorems \ref{thm:weylptw}, \ref{thm:weylptwsharp} and \ref{thm:weylintsharp}.

One can get from the decomposition \eqref{eq:decompintro} to a corresponding decomposition of the spectral density by a Tauberian theorem for a higher order Stieltjes transform. Note that
$$
(-\Delta_g + V + \lambda)^{-2}(x,x) = 2 \int_0^\infty \frac{\1(-\Delta_g+V\le t)(x,x)}{(t+\lambda)^3}\,dt \,,
\qquad
\frac{1}{8\pi\sqrt\lambda} = 2 \int_0^\infty \frac{t^{3/2}/(6\pi^2)}{(t+\lambda)^3}\,dt \,.
$$
The Tauberian theorem gives
\begin{equation}
\label{eq:decomp1intro}
\1(-\Delta_g + V\le t)(x,x) = \frac{t^{3/2}}{6\pi^2} - \frac12\, r_{0}^V(t,x)\, t^{3/2} + \mathcal O(t) \,.
\end{equation}
Due to the presence of the term $t^{3/2} r_{0}^V(t,x)$, we need a more precise Tauberian theorem than Avakumovi\'c. Fortunately for us, however, such a theorem can be established using techniques that Avakumovi\'c introduced in a different context in \cite{Avakumovic-50}, relying on the Tauberian theorem of Ingham \cite{Ingham-35} and Karamata \cite{Karamata-34}. Since we have not been able to find the relevant Tauberian theorem in the literature, we have decided to add a complete proof in Sections \ref{sec:tauberian} and \ref{sec:tauberian2}.

The precise statement of the decomposition in \eqref{eq:decompintro} appears in Proposition \ref{prop:greender} and its proof takes up all of Section \ref{sec:avakv}. We think it is worthwhile to precede this proof by our interpretation of Avakumovi\'c's result for $V=0$ in Section \ref{sec:avak}. One difference is that we find it more convenient to work with $(-\Delta_g +\lambda)^{-2}(x,x)$ than with Avakumovi\'c's
$$
\lim_{y\to x} \left( (-\Delta+\lambda)^{-1}(x,y) - \lim_{\lambda'\to 0} \left( (-\Delta+\lambda')^{-1}(x,y) - \frac{1}{\lambda'\vol_g M} \right) \right).
$$

The derivation of Theorems \ref{thm:weylptw}, \ref{thm:weylptwsharp} and \ref{thm:weylintsharp} from decomposition \eqref{eq:decomp1intro}, as well as the counterexamples to the strong pointwise Weyl law appear in Section \ref{sec:weylproof}.

In a final Section \ref{sec:eucl} we sketch an application of these ideas to the case of subsets of Euclidean space and in an appendix we recall relevant results about the Kato class.

\medskip

\noindent\textbf{Acknowledgements:} Partial support through US National Science Foundation grant DMS-1363432 (R.L.F.) and ANR DYRAQ ANR-17-CE40-001 (J. S.) are acknowledged.

%%%%%%%%%%%
%%%%%%%%%%%

\section{A Tauberian theorem for the Laplace transform}\label{sec:tauberian}

In this and in the next section we give a proof of the Tauberian theorem that is required for the proof of our main result. The proof of this theorem, which concerns the Stieltjes transform, relies on another Tauberian theorem, namely one for the Laplace transform. The latter, which is due to Ingham \cite[Thm.~III]{Ingham-35} and Karamata \cite[Satz~II]{Karamata-36} (see also \cite{Karamata-34}) is the topic of this section. For our application we need a nonasymptotic version of this theorem.

We denote by $BV_\loc[0,\infty)$ the set of functions on $[0,\infty)$ whose total variation on any compact subinterval of $[0,\infty)$ is finite. Let $A\in BV_\loc[0,\infty)$ be  such that $A(u)=\cO(e^{\alpha u})$ as $u\to+\ii$ for all $\alpha>0$. Under these assumptions, the Laplace transform
\begin{equation}
\label{eq:laplacetrafo}
f(s):=\int_0^\ii e^{-us}dA(u)
\end{equation}
is well-defined and analytic on $\re s>0$. Here, $dA$ denotes the Lebesgue--Stieljes measure on $[0,\infty)$ associated to $A$ in such a way that $dA([0,t))=A(t_-)-A(0)$. This is equivalent to the definition of the Lebesgue--Stieljes measure \cite[Chap.~4]{CarBru-book} when $A$ is extended by $A(0)$ on $(-\ii,0)$. We will often assume that $A(0)=0$ and we emphasize that this does not imply an assumption on $A(0_+)$.

The crucial assumption on $f$ is that $\frac{f(s)-a}{s}$ has a continuous extension to $s=it$ for $t\in[-T,T]$ for some $a\in\C$ (which of course implies that $a=f(0)$), and that $A$ has some generalized monotonicity property.

The following theorem is the main result of this section.

\begin{theorem}\label{coro:tauber-approx-increasing}
	Let $A\in BV_\loc[0,\infty)$ with $A(0)=0$ and $A(u)=\cO(e^{\alpha u})$ as $u\to+\ii$ for all $\alpha>0$. Assume that there are $c,\delta>0$ such that for all $u+\delta\ge v\ge u\ge0$ we have $A(v)-A(u)\ge -c$ and assume that there are $a\in\C$ and $T>0$ such that $\phi(s):=\frac{f(s)-a}{s}$, with $f$ from \eqref{eq:laplacetrafo}, can be continuously extended to $s=it$ for $t\in[-T,T]$. Then, for all $\omega>0$ and with an absolute constant $C$,
	$$
	|A(\omega)|\le C\left(|a|+\int_{-T}^T|\phi(it)|\,dt+\left(1+\frac{1}{\delta T}\right)c\right).
	$$
\end{theorem}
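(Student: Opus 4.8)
The plan is to reduce the statement to a standard Tauberian theorem for the Laplace transform (Ingham--Karamata) by a suitable cutoff and approximation argument, keeping track of all constants. The starting point is that $\phi(s) = (f(s)-a)/s$ extends continuously to the segment $[-iT, iT]$ of the imaginary axis, and $f(s) = \int_0^\infty e^{-us}\,dA(u)$; thus formally $\phi(s) = \int_0^\infty e^{-us} A(u)\,du$ on $\re s > 0$ (after integrating by parts, using $A(0)=0$ and the subexponential growth), so $\phi$ is the Laplace transform of $A$ itself. The one-sided bound $A(v)-A(u) \ge -c$ for $0 \le v-u \le \delta$ is the generalized monotonicity that replaces plain monotonicity in the classical statement.

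First I would smooth: choose a nonnegative, even, smooth bump $\rho$ supported in $[-1,1]$ with $\int \rho = 1$, set $\rho_\eta(t) = \eta^{-1}\rho(t/\eta)$, and consider the mollified quantity $A * \check\rho_\eta$ or, more precisely, localize the Fourier/Laplace inversion to the interval $[-T,T]$ using a kernel whose transform is supported there. Concretely, one writes $A(\omega)$ (or a local average of $A$ near $\omega$) as an integral of $\phi(it)$ against an explicit kernel $K_T(t)$ supported on $[-T,T]$, plus an error controlled by the generalized monotonicity. The kernel is built from the Fej\'er kernel (so that $\widehat{K_T}$ is a triangle supported on $[-T,T]$ and $\widehat{K_T}(0)=1$); the positivity of the Fej\'er kernel is what lets the generalized monotonicity bound $A(v)-A(u)\ge -c$ convert differences $A(\omega) - (\text{average of } A \text{ near } \omega)$ into a term bounded by a constant times $c(1 + 1/(\delta T))$. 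The contribution of $a$ is handled separately and produces the $|a|$ term; the term $\int_{-T}^T |\phi(it)|\,dt$ comes directly from bounding the main integral by placing absolute values inside.

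The second step is to justify passing the contour of integration to the imaginary axis. For $\re s > 0$ the identity relating $A$ (via a Fej\'er-smoothed version) to $\phi$ holds by Fubini; then one lets $\re s \downarrow 0$ and uses the hypothesis that $\phi$ extends continuously to $[-iT,iT]$, together with the compact support of $K_T$, to take the limit and obtain a representation purely in terms of boundary values $\phi(it)$. The subexponential growth $A(u) = \mathcal O(e^{\alpha u})$ for every $\alpha>0$ guarantees $\phi$ is analytic on $\re s > 0$ and that the integrals converge; the generalized monotonicity provides the needed \emph{lower} control on $A$ to run the Tauberian step (upper control on $|A|$ near $\omega$ follows by telescoping: $A(u)$ for $u$ slightly less than $\omega$ is bounded below by $A(\omega) - c$ times the number of steps, etc.).

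The main obstacle I expect is bookkeeping the constants to get a clean \emph{nonasymptotic} inequality with an absolute $C$: the classical Ingham--Karamata theorem is stated asymptotically, so one must redo its proof quantitatively, being careful that the smoothing scale (of order $1/T$) and the monotonicity scale $\delta$ interact only through the factor $1 + 1/(\delta T)$. In particular, when $\delta T$ is small one must iterate the one-sided bound over roughly $1/(\delta T)$ steps of length $\delta$, which is exactly what produces that factor; when $\delta T \gtrsim 1$ a single application suffices. Everything else — Fubini, the Fej\'er kernel computation, the contour limit — is routine once the kernel is chosen, so I would present the kernel, state the two estimates (main term $\lesssim |a| + \int_{-T}^T|\phi(it)|\,dt$, error $\lesssim (1+1/(\delta T))c$), and then verify them.
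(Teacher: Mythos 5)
Your overall frame (smooth $A$ against a kernel with compactly supported, nonnegative Fourier transform, pass to the boundary $\re s\downarrow 0$, then use the one-sided monotonicity to deconvolve) is the right one and is essentially the Ingham--Karamata strategy the paper follows, but there are two concrete problems. First, the Fej\'er kernel is the wrong choice: it decays only like $y^{-2}$, so $\int |y|\,K(y)\,dy=\infty$, while the error terms you must integrate against the kernel have the form $c\,(1+|v-u|/\delta)$ (obtained by iterating the hypothesis $A(v)-A(u)\ge -c$ over steps of length $\delta$); these integrals diverge for the Fej\'er kernel. One needs a kernel that is still nonnegative with Fourier transform supported in $[-T,T]$ but with $y^2K\in L^1$; the paper uses the Jackson kernel $K(y)=(\sin(y/4)/(y/4))^4$, and it is precisely the finiteness of $\int y\,K(y)\,dy$ that produces the factor $1+1/(\delta T)$.

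Second, and more seriously, the step you dismiss as routine --- passing from the smoothed bound $\bigl|\int_0^\infty TK(T(\omega-u))A(u)\,du\bigr|\le\Theta$ to the pointwise bound on $A(\omega)$ --- is the actual content of the theorem, and your telescoping argument does not close it. The one-sided hypothesis controls $A(u)$ from above for $u<\omega$ (namely $A(u)\le A(\omega)+c(1+(\omega-u)/\delta)$) and from below for $u>\omega$; it gives no useful lower bound on $A(u)$ for $u$ far to the left of $\omega$ beyond $A(u)\ge -c(1+u/\delta)$ coming from $A(0)=0$, and that bound degrades as $\omega$ grows. Hence neither direction of the pointwise estimate follows from the smoothed bound plus local telescoping. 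The paper resolves this with a bootstrap: it introduces $B(u)=-\inf\{A(u'):0\le u'\le u\}$, first proves $A(v)\le B(v)+c(1+M_1/(\delta T))+C\Theta$ (using $A\ge -B(v)$ on $[0,v]$ and the monotonicity on $[v,\infty)$), and then shows that $B$ itself is bounded by evaluating the kernel average at the shifted point $v-T^{-1}$, so that the kernel mass $\int_{-1}^\infty K$ weighting $A(v)$ strictly exceeds the mass $\int_{-\infty}^{-1}K$ weighting $-B(v)$; this strict inequality is what allows one to absorb $B(v)$ and conclude. Without some version of this two-step argument your proof cannot yield a bound uniform in $\omega$.
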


As we already mentioned, this theorem is due to Ingham \cite[Thm.~III]{Ingham-35} and Karamata \cite[Satz~II]{Karamata-36} (see also \cite{Karamata-34}). A slightly weaker form that, however, is \emph{not} sufficient for our purposes, appears in \cite[Sec.~22.37]{Titchmarsh-book-vol2}. In these works the result is only stated in an asymptotic form, whereas we need a nonasymptotic form. Such a version is probably well-known to experts and it is stated in \cite[Hilfss.~6]{Gromes-70} without proof and with reference to the unpublished work \cite{Pauly-65}, to which we do not have access. We also note that in \cite[Sec.~1.6]{Avakumovic-56} Avakumovi\'c sketches how this result under the additional assumption $A(\omega)=\mathcal O(\sqrt\omega)$ can be deduced from known results. While the required nonasymptotic forms of these known results are probably also known to experts, we have not been able to find an appropriate reference. In view of this situation we believe it is useful to include a self-contained and complete proof of Theorem~\ref{coro:tauber-approx-increasing}.

\begin{remark}
	Avakumovi\'c \cite{Avakumovic-56} uses the following consequence of Theorem \ref{coro:tauber-approx-increasing}: Let $A\in BV_\loc[0,\infty)$ with $A(0)=0$ and $A(u)=\cO(e^{\alpha u})$ for all $\alpha>0$. Assume that there is a $c>0$ such that for all $0\le u\le v\le u+1$ we have $A(v)-A(u)\ge -c$ and assume that
	$$s\mapsto \int_0^\ii e^{-su}A(u)\,du=:\phi(s)$$
	can be extended analytically to $|s|<\epsilon$ for some $\epsilon>0$. Then for all $\omega>0$,
	$$
	|A(\omega)|\le C\left(\int_{-\epsilon/2}^{\epsilon/2}|\phi(it)|\,dt+\left(1+\frac2\epsilon\right)c\right)
	$$
	where $C$ is an absolute constant. Indeed, since $\phi(s)=f_A(s)/s$, where $f_A$ is the Laplace transform of $A$, this follows from Corollary \ref{coro:tauber-approx-increasing} with $a=0$, $T=\epsilon/2$ and $\delta =1$.
\end{remark}

%%%%%%%%%%%%%

The proof of Theorem \ref{coro:tauber-approx-increasing} uses the Jackson kernel, whose properties are summarized in the following lemma.

\begin{lemma}
	For any $\eta\in[-1,1]$, define
	$$\ell(\eta):=\begin{cases}
	-\frac43(1-2|\eta|)^3+\frac83(1-|\eta|)^3 & \text{if}\ |\eta|\le1/2,\\
	\frac83(1-|\eta|)^3 & \text{if}\ 1/2\le|\eta|\le1.
	\end{cases}$$
	Then for all $y\in\R$,
	$$K(y):=\int_{-1}^1\ell(\eta)e^{-i\eta y}\,d\eta=\left(\frac{\sin(y/4)}{y/4}\right)^4
	$$
	and, in particular, $K>0$ almost everywhere and $K,\, |y| K,\, y^2 K \in L^1(\R)$.
\end{lemma}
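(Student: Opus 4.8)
The plan is to verify the claimed identity for $K(y)$ by a direct computation, exploiting the structure of $\ell$ as a superposition of scaled triangle functions, and then read off the stated integrability and positivity properties from the closed form $K(y) = \left(\frac{\sin(y/4)}{y/4}\right)^4$.

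\smallskip

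First I would recall the elementary Fourier-analytic fact that the \emph{triangle function} $\Lambda_a(\eta) := (1-|\eta|/a)_+$, supported on $[-a,a]$, satisfies
$$
\int_{-a}^a \left(1-\frac{|\eta|}{a}\right) e^{-i\eta y}\,d\eta = a\left(\frac{\sin(ay/2)}{ay/2}\right)^2 .
$$
This is standard: it is the convolution square of the box function $\1(|\eta|\le a/2)$, whose Fourier transform is $2\sin(ay/2)/y$, so squaring gives $4\sin^2(ay/2)/y^2 = a^2 (\sin(ay/2)/(ay/2))^2$, and dividing by the appropriate normalization (the box has mass $a$) yields the displayed formula. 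Alternatively one integrates by parts twice directly.

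\smallskip

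Next I would rewrite $\ell$ in terms of such triangle functions. Writing $8(1-|\eta|)^3/3$ and $-4(1-2|\eta|)^3/3$ suggests differentiating, but it is cleaner to observe that $\ell$ is designed so that $K = (\widehat{\text{box}})^4$; concretely, let $\phi_1(\eta) := \1(|\eta|\le 1/2)$, so $\hat\phi_1(y) = \int \phi_1 e^{-i\eta y}\,d\eta = \dfrac{\sin(y/2)}{y/2}$. Then the Fejér-type kernel on the Fourier side raised to the fourth power, $\hat\phi_1(y)^4 = \left(\frac{\sin(y/2)}{y/2}\right)^4$, is the Fourier transform of the fourfold convolution $\phi_1 * \phi_1 * \phi_1 * \phi_1$, which is an explicit piecewise-cubic function supported on $[-2,2]$. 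Rescaling $\eta\mapsto 2\eta$ (which turns $\sin(y/2)$ into $\sin(y/4)$ after the corresponding dilation of $y$, and shrinks the support to $[-1,1]$) produces precisely a piecewise-cubic function on $[-1,1]$; carrying out the convolution bookkeeping on the two regions $|\eta|\le 1/2$ and $1/2\le|\eta|\le 1$ gives exactly the two branches of $\ell$. Rather than performing the fourfold convolution, I would instead just verify the identity $\int_{-1}^1 \ell(\eta) e^{-i\eta y}\,d\eta = \left(\frac{\sin(y/4)}{y/4}\right)^4$ by two integrations by parts: since $\ell$ is $C^1$ on $[-1,1]$ with $\ell(\pm 1)=0$ and $\ell'(\pm 1)=0$, and $\ell''$ is piecewise linear and continuous with jumps in $\ell'''$ only at $\eta=0,\pm 1/2,\pm 1$, integrating by parts four times reduces the integral to a finite sum of exponentials $e^{-i\eta y}$ evaluated at the breakpoints weighted by the jumps of $\ell'''$, which assembles into $(4/y)^4\cdot\frac{1}{\,}$ times a trigonometric polynomial that one recognizes as $\sin^4(y/4)$. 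I expect this bookkeeping of jump contributions to be the only real computation, and the main (mild) obstacle: one must track the values and one-sided derivatives of both cubic branches at $\eta=\pm 1/2$ to confirm $\ell\in C^1$ (indeed $C^2$) there, and correctly tabulate the jumps of $\ell'''$.

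\smallskip

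Finally, the consequences are immediate from the closed form. Since $K(y) = (\sin(y/4))^4/(y/4)^4 \ge 0$ everywhere and vanishes only on the discrete set $y\in 4\pi\Z$, we have $K>0$ almost everywhere. For integrability, near $y=0$ one has $K(y)\to 1$, so $K$, $|y|K$, $y^2 K$ are all bounded there; for $|y|$ large, $K(y) = O(|y|^{-4})$, hence $|y|^j K(y) = O(|y|^{j-4})$ is integrable at infinity for $j=0,1,2$. Combining, $K,\ |y|K,\ y^2 K \in L^1(\R)$, completing the proof.
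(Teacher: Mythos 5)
Your proposal is correct, and since the paper states this lemma without proof (it is the classical Jackson kernel), there is no in-paper argument to compare against; your route is the standard one. The key identification is right: with $\phi:=\1(|\eta|\le 1/4)$ one has $\widehat{\phi}(y)=\tfrac12\,\frac{\sin(y/4)}{y/4}$, so $\bigl(\frac{\sin(y/4)}{y/4}\bigr)^4=16\,\widehat{\phi^{*4}}(y)$, and $16\,\phi^{*4}=2B_3(2\cdot)$ where $B_3$ is the cubic B-spline; expanding $-\tfrac43(1-2|\eta|)^3+\tfrac83(1-|\eta|)^3=\tfrac43-8\eta^2+8|\eta|^3$ on $|\eta|\le 1/2$ confirms $\ell=2B_3(2\cdot)$, so either the convolution bookkeeping or the four-fold integration by parts (you write ``two'' at one point, but you clearly mean four, using $\ell\in C^2$ with $\ell=\ell'=\ell''=0$ at $\pm1$ and jumps of $\ell'''$ only at $0,\pm\tfrac12,\pm1$) closes the identity. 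The opening display about the triangle function $\Lambda_a$ is correct but is never actually used in your argument and could be dropped. The positivity and integrability conclusions follow exactly as you say from the closed form.
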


\begin{proof}[Proof of Theorem \ref{coro:tauber-approx-increasing}]
	We follow closely the proof of \cite[Satz II]{Karamata-36}. First, we extend $A$ by zero on $(-\ii,0)$. The property $A(v)-A(u)\ge -c$ then holds for all $u\le v\le u+\delta$.
	
	\emph{Step 1.} As a preliminary step we note that by iterating the assumption on $A$, we obtain for all $u\le v$
	\begin{equation}
	\label{eq:avarcond}
	A(v)-A(u) \ge -c \left( 1+ \frac{v-u}{\delta}\right).
	\end{equation}
	
	\emph{Step 2.} We now show that
	\begin{equation}
	\label{eq:ikproof1}
	\left| \int_0^\infty T K(T(v-u)) A(u)\,du \right| \le \frac{8\pi}3  |a| + \frac43  \int_{-T}^T |\phi(it)|\,dt =:\Theta\,.
	\end{equation}
	For all $s\in\C$ with $\re s>0$, we have by integration by parts
	$$\frac{f(s)}{s}=\int_0^\ii e^{-su}A(u)\,du$$
	and therefore
	$$\phi(s)=\frac{f(s)-a}{s}=\int_0^\ii(A(u)-a)e^{-su}\,du.$$
	Now let $v\in\R$, $\sigma>0$ and $t\in[-T,T]$. We multiply the previous formula at $s=\sigma+it$ by $\ell(t/T)e^{iv t}$ and integrate over $t\in[-T,T]$. By Fubini, it follows that
	$$\int_{-T}^T\phi(\sigma+it)\ell(t/T)e^{iv t}\,dt=\int_0^\ii T K(T(v-u)) \left(A(u)-a\right) e^{-\sigma u} \,du.$$
	Since $\phi(s)$ is continuous up to $\re s=0$, we can take the limit $\sigma\to0$ of the left side. On the right side, we write
	$$
	A(u)-a = \left( A(u) + c \left( 1 + \frac{u}{\delta} \right)\right) - \left( a + c\left( 1 + \frac{u}{\delta} \right)  \right).
	$$
	For the second parenthesis on the right side we can apply dominated convergence, using the fact that $(1+|y|) K(y)\in L^1$. For the first parenthesis we apply monotone convergence, using \eqref{eq:avarcond}, $A(0)=0$ and $K\ge 0$. In this way, we obtain in the limit $\sigma\to 0$
	$$\int_{-T}^T\phi(it)\ell(t/T)e^{iv t}\,dt=\int_0^\ii T K(T(v-u)) \left(A(u)-a\right) \,du.$$
	Using
	$$
	\int_0^\ii T K(T(v-u)) \,du = \int_{-\infty}^{Tv} K(y)\,dy \le \int_\R K(y)\,dy = \frac{8\pi}3 \,, 
	$$
	and $0\le\ell(\eta)\le 4/3$, we deduce \eqref{eq:ikproof1}.
	
	\emph{Step 3.} Let us introduce
	$$
	B(u) := - \inf\left\{ A(u') :\ 0\le u'\le u \right\}.
	$$
	Since $A(0)=0$ we have $B\ge 0$ and, clearly,
	\begin{equation}
	\label{eq:ikproof2}
	A(u) \ge - B(u)
	\qquad\text{for all}\ u\ge 0 \,.
	\end{equation}
	For later purposes, we note that the assumption on $A$ implies
	$$
	B(v)-B(u) \le c
	\qquad\text{for all}\ 0\le u\le v \le u+\delta \,,
	$$
	and therefore, by iteration,
	\begin{eqnarray}
	\label{eq:ikproof3}
	B(v)-B(u) \le c \left( 1+ \frac{v-u}{\delta}\right)
	\qquad\text{for all}\ 0\le u\le v \,.
	\end{eqnarray}
	
	\emph{Step 4.} We now prove an upper bound on $A$ in terms of $B$, namely,
	\begin{equation}
	\label{eq:ikproof4}
		A(v) \le B(v) + c\left( 1 + \frac{M_1}{\delta T}\right) + \frac{3}{4\pi} \Theta 
		\qquad\text{for all}\ v\ge 0 \,,
	\end{equation}
	with $M_1 = \int_0^\infty y K(y)\,dy / \int_0^\infty K(y)\,dy$. Indeed, by \eqref{eq:ikproof1} and \eqref{eq:avarcond}, we have for any $v\ge 0$,
	\begin{align*}
	\Theta & \ge \int_0^v TK(T(v-u)) A(u)\,du + \int_v^\infty T K(T(v-u)) A(u)\,du \\
	& \ge - B(v) \int_0^v TK(T(v-u)) \,du + A(v) \int_v^\infty T K(T(v-u)) \,du \\
	& \qquad \qquad - c \int_v^\infty T K(T(v-u)) \left( 1+ \frac{u-v}{\delta}\right) du \\
	& \ge - \int_0^\infty K(y)\,dy \left( B(v) - A(v) - c\left( 1 + \frac{M_1}{\delta T}\right) \right),
	\end{align*}
	which is the same as \eqref{eq:ikproof4}.
	
	\emph{Step 5.} Next, we prove that $B$ is bounded, namely,
	\begin{equation}
	\label{eq:ikproof5}
	B(v) \le M \left( \Theta + c \left( 1+ \frac{1}{\delta T} \right)\right)
	\end{equation}
	for an absolute constant $M$. Indeed, by \eqref{eq:ikproof1}, \eqref{eq:avarcond}, \eqref{eq:ikproof4} and \eqref{eq:ikproof3} we have for any $v\ge 0$,
	\begin{align*}
	-\Theta & \le \int_0^{v} T K(T(v-T^{-1}-u)) A(u) \,du + \int_{v}^\infty T K(T(v-T^{-1}-u)) A(u) \,du \\
	& \le A(v) \int_{-\infty}^{v} T K(T(v-T^{-1}-u)) \,du + c \int_{-\infty}^{v} T K(T(v-T^{-1}-u)) \left( 1 + \frac{v-u}{\delta}\right)du \\
	& \quad + \int_{v}^\infty T K(T(v-T^{-1}-u)) B(u) \,du \\
	& \quad + \left(  c\left( 1 + \frac{M_1}{\delta T}\right) + \pi^{-1} \Theta  \right) \int_{v}^\infty T K(T(v-T^{-1}-u)) \,du \\
	& \le A(v) \int_{-\infty}^{v} T K(T(v-T^{-1}-u)) \,du + c \int_{-\infty}^{v} T K(T(v-T^{-1}-u)) \left( 1 + \frac{v-u}{\delta}\right)du \\
	& \quad + B(v) \int_{v}^\infty T K(T(v-T^{-1}-u)) \,du + c\int_{v}^\infty T K(T(v-T^{-1}-u)) \left( 1+ \frac{u-v}{\delta}\right) du \\
	& \quad + \left(  c\left( 1 + \frac{M_1}{\delta T}\right) + \pi^{-1} \Theta  \right) \int_{v}^\infty T K(T(v-T^{-1}-u)) \,du \,.
	\end{align*}
	This implies, with some absolute constant $M_2$,
	\begin{multline*}
	A(v) \int_{-\infty}^{v} T K(T(v-T^{-1}-u)) \,du \\
	\ge - B(v) \int_{v}^\infty T K(T(v-T^{-1}-u)) \,du - M_2 \left( \Theta + c \left( 1+ \frac{1}{\delta T} \right)\right),
    \end{multline*}
	which is the same as
	$$
	A(v) \int_{-1}^\infty K(y) \,dy \ge - B(v) \int_{-\infty}^{-1} K(y) \,dy - M_2 \left( \Theta + c \left( 1+ \frac{1}{\delta T} \right)\right).
	$$
	Since $B$ is nondecreasing, this implies
	$$
	-B(v) \int_{-1}^\infty K(y) \,dy \ge - B(v) \int_{-\infty}^{-1} K(y) \,dy - M_2 \left( \Theta + c \left( 1+ \frac{1}{\delta T} \right)\right),
	$$
	which is the same as \eqref{eq:ikproof5} with $M=M_2/ \int_{-1}^1 K(y)\,dy$.
	
	\emph{Step 6.} The bound in the theorem now follows from \eqref{eq:ikproof1}, \eqref{eq:ikproof4} and \eqref{eq:ikproof5}.
\end{proof}

%%%%%%%%%%%%%%
%%%%%%%%%%%%%%

\section{A Tauberian theorem for the Stieltjes transform}\label{sec:tauberian2}

In this section we will deduce a Tauberian theorem for the Stieltjes transform from the one for the Laplace transform proved in the previous section. The Stieltjes transform or, more precisely, a higher order version of it arises naturally in our context through powers of the resolvent of elliptic operators.

\begin{theorem}\label{thm:stieljes-3D}
 Let $\delta,t_0>0$. Let $B_0,B_1,B_2:[0,\infty)\to\R$ be such that $B_0$ is bounded, $B_1$ is nondecreasing and right-continuous, and assume that
 $$C_{B_2}:=\sup_{t\ge0}\frac{|B_2(t)|}{t_0+t},\quad -C_{B_0}:=\inf_{0\le u\le v\le u+\delta}\min\{u(B_0(v^2)-B_0(u^2)),0\} $$
 are finite. Assume also that $u\mapsto u^3B_0(u^2)$, $u\mapsto B_2(u^2)$ are right-continuous and belong to $BV_{{\rm loc}}[0,\infty)$. Assume that there are $C_0>0$, $\Lambda>0$ and $\epsilon_0\in(0,\eta)$ such that for all $\lambda\ge\Lambda$,
 $$\left|\int_0^\ii\frac{B_0(t)t^{3/2}+B_1(t)+B_2(t)}{(t+\lambda)^3}\,dt\right|\le C_0 \,\frac{e^{-\epsilon_0\sqrt{\lambda}}}{\lambda}.$$ 
 Then $B_0(t)\,t^{3/2}+B_1(t)+B_2(t)=\mathcal O(t)$ as $t\to+\infty$. More precisely, one has for all $t\ge 0$,
 $$
 |B_0(t)\,t^{3/2}+B_1(t)+B_2(t)|\le \mathcal B \,\epsilon_0^{-1} \left( t + \epsilon_0^{-2} \right) 
 $$
 where, for some universal constant $C$,
 \begin{align*}
 \mathcal B & \le C \epsilon_0^3 \left( 1 + \epsilon_0\sqrt\Lambda\, e^{\epsilon_0\sqrt{\Lambda}/2} \right) \left( \|B_0\|_{L^\infty}\Lambda^{3/2} + |B_1(0)| + C_{B_2}t_0(1+\delta^{-1}\epsilon_0^{-1})+C_{B_2}\Lambda \right) \\
& \quad + C \epsilon_0^3 \Lambda \left(  \epsilon_0\sqrt\Lambda\, e^{\epsilon_0\sqrt{\Lambda}/2} + \epsilon_0^{-2} \Lambda^{-1} \right) C_0 \\
 & \quad + C \left( \epsilon_0 + \delta^{-1} \right) \left( C_{B_0} + \|B_0\|_{L^\infty} \left(\delta + \epsilon_0^{-1} \right) + C_{B_2} \left( 1 + \ln(1+\delta \epsilon_0) \right)\right).
 \end{align*}
\end{theorem}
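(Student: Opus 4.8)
The plan is to deduce Theorem~\ref{thm:stieljes-3D} from the Laplace-transform Tauberian theorem, Theorem~\ref{coro:tauber-approx-increasing}, by means of the substitution $\lambda=\mu^2$, which turns the higher-order Stieltjes transform into a Laplace transform in $\mu$. Set $G(t):=B_0(t)t^{3/2}+B_1(t)+B_2(t)$ and $\tilde G(\omega):=G(\omega^2)$; by the assumed right-continuity and local bounded variation of $\omega\mapsto\omega^3B_0(\omega^2)$, $B_1(\omega^2)$, $B_2(\omega^2)$, the function $\omega\mapsto\tilde G(\omega)$ lies in $BV_\loc[0,\infty)$ and is right-continuous. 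First I would extract a crude a priori bound: since $B_1$ is nondecreasing, $|B_2(t)|\le C_{B_2}(t_0+t)$, $\|B_0\|_{L^\infty}<\infty$, and $\int_0^\infty t^{3/2}(t+\lambda)^{-3}\,dt\lesssim\lambda^{-1/2}$, the hypothesis forces $\int_0^\infty(B_1(t)-B_1(0))(t+\lambda)^{-3}\,dt\lesssim\lambda^{-1/2}$ for large $\lambda$, and comparing with the lower bound $\int_{\lambda/2}^\lambda(\cdots)\gtrsim\lambda^{-2}(B_1(\lambda/2)-B_1(0))$ gives $B_1(t)=\mathcal O(t^{3/2}+|B_1(0)|)$, hence $\tilde G(\omega)=\mathcal O(\omega^3)$; this guarantees absolute convergence of the integrals below and the subexponential growth required by Theorem~\ref{coro:tauber-approx-increasing}. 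Fixing a parameter $\beta>0$ comparable to $\epsilon_0^{-1}$ (with a mild $\Lambda$-dependence), I would then set $A(\omega):=\dfrac{\tilde G(\omega)}{\omega^2+\beta^2}-\dfrac{G(0)}{\beta^2}$, so that $A(0)=0$, $A\in BV_\loc[0,\infty)$ is right-continuous, and the asserted bound will follow from an $\omega$-independent bound on $|A|$ via $|\tilde G(\omega)|=(\omega^2+\beta^2)\,|A(\omega)+G(0)/\beta^2|\le(\omega^2+\beta^2)(\,\sup|A|+|G(0)|/\beta^2)$, with $\omega=\sqrt t$ and $\omega^2+\beta^2\le 2(t+\epsilon_0^{-2})$.

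The technical heart is to relate $\phi_A(s):=s^{-1}f_A(s)=\int_0^\infty e^{-s\omega}A(\omega)\,d\omega$ (so that $a=0$ in Theorem~\ref{coro:tauber-approx-increasing}) to $F(\lambda):=\int_0^\infty G(t)(t+\lambda)^{-3}\,dt$ and to obtain the required boundary regularity of $\phi_A$. With $t=\omega^2$ one has $F(\mu^2)=2\int_0^\infty \omega\,\tilde G(\omega)\,(\omega^2+\mu^2)^{-3}\,d\omega$, and inserting the identity $(\omega^2+\mu^2)^{-3}=\int_0^\infty e^{-\mu s}\psi_\omega(s)\,ds$ — where $\psi_\omega$ is the threefold convolution of $s\mapsto\omega^{-1}\sin(\omega s)$ with itself, an explicit combination of $\omega^{-j}\sin(\omega s)$, $\omega^{-j}s\cos(\omega s)$ and $s^2\sin(\omega s)$ — exhibits $F(\mu^2)$ as a Laplace transform in $\mu$. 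One then disentangles this at the level of the explicit elementary kernels, and separates off the part of the $\omega$-integral over $\omega<\sqrt\Lambda$, which is an entire function of $\mu$ whose size involves $\Lambda^{3/2}$ and $\epsilon_0\sqrt\Lambda\,e^{\epsilon_0\sqrt\Lambda/2}$ (the latter after shifting a contour to $\re\mu=\sqrt\Lambda$), so as to express $\phi_A(s)$ through $F$ plus explicit corrections built from $G(0)$ and $B_1(0)$. Since $|\mu^2F(\mu^2)|\le C_0e^{-\epsilon_0\mu}$ for real $\mu\ge\sqrt\Lambda$, and since the normalization by $\omega^2+\beta^2$ makes the contribution of the nondecreasing part $B_1$ to $f_A$ a resolvent-type quantity that converges on all of $\re s\ge0$, the function $\phi_A$ extends continuously — in fact analytically — from $\re s>0$ across the segment $\{s=it:|t|\le T\}$ for any $T<\epsilon_0$; the factor $e^{-\epsilon_0\mu}$ is exactly what removes the would-be singularity of $\phi_A$ at $s=0$. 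I would take $T=\epsilon_0/2$, so $1+(\delta T)^{-1}\sim 1+(\delta\epsilon_0)^{-1}$, and then $\int_{-T}^T|\phi_A(it)|\,dt$ is controlled by $C_0$ and the explicit corrections, producing the first two lines of the bound for $\mathcal B$.

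Next I would verify the generalized monotonicity $A(v)-A(u)\ge-c$ for $0\le u\le v\le u+\delta$ (after the natural rescaling), with $c\lesssim C_{B_0}+\|B_0\|_{L^\infty}(\delta+\epsilon_0^{-1})+C_{B_2}\bigl(1+\ln(1+\delta\epsilon_0)\bigr)$. Writing $A(v)-A(u)=\dfrac{G(v^2)-G(u^2)}{v^2+\beta^2}+G(u^2)\Bigl(\dfrac{1}{v^2+\beta^2}-\dfrac{1}{u^2+\beta^2}\Bigr)$ separates this into three effects: the $B_1$-increment is $\ge0$ because $B_1$ is nondecreasing; the $B_0$-increment $v^3B_0(v^2)-u^3B_0(u^2)$, divided by $v^2+\beta^2$, is bounded below by means of $v^3-u^3\le3\delta v^2$, $\|B_0\|_{L^\infty}$, and the hypothesis $u(B_0(v^2)-B_0(u^2))\ge-C_{B_0}$ — which is exactly what prevents $\omega^3B_0(\omega^2)$ from dropping faster than a multiple of $v^2$ — while the $B_2$-increment together with the second term is controlled by $|B_2|\le C_{B_2}(t_0+t)$, the a priori bound $\tilde G(\omega)=\mathcal O(\omega^3)$ (reabsorbed into $\sup|A|$ via $\tfrac{|\tilde G(u)|}{u^2+\beta^2}\le|A(u)|+|G(0)|/\beta^2$ and $v^2+\beta^2\ge2v\beta$, producing only a small multiple of $\sup|A|$ that is later absorbed), and the logarithm arising from integrating $dt/t$ over the relevant dyadic range. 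This yields the third line of the bound for $\mathcal B$.

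Finally, applying Theorem~\ref{coro:tauber-approx-increasing} to $A$ with $a=0$, the above $T$ and $\delta$, and this $c$ gives $\sup_\omega|A(\omega)|\le C\bigl(\int_{-T}^T|\phi_A(it)|\,dt+(1+(\delta T)^{-1})c\bigr)$; substituting this into $|\tilde G(\omega)|\le(\omega^2+\beta^2)(\sup|A|+|G(0)|/\beta^2)$ and inserting the estimates from the two previous paragraphs yields, after collecting powers of $\epsilon_0$, $\delta$ and $\Lambda$, exactly the three-line bound for $\mathcal B$ and $|G(t)|\le\mathcal B\,\epsilon_0^{-1}(t+\epsilon_0^{-2})$ for all $t\ge0$. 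I expect the main obstacle to lie in the second paragraph: giving the precise identification of $\phi_A$ in terms of $F$, proving that after removing the elementary part coming from $\lambda\in[0,\Lambda)$ one genuinely obtains a function extending \emph{analytically} past $s=0$ (not merely one that is small on the positive real axis), and tracking the exponential factor $e^{\epsilon_0\sqrt\Lambda/2}$ and the powers of $\Lambda,\epsilon_0,\delta$ through to the final constant. The generalized-monotonicity step, although the conceptual core of the Tauberian mechanism and the unique place where all three structural hypotheses on $B_0,B_1,B_2$ are genuinely used, is by contrast a direct if delicate computation once the normalization by $\omega^2+\beta^2$ is in place.
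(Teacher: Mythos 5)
Your overall architecture coincides with the paper's: substitute $t=\omega^2$, normalize $G(\omega^2)$ by a factor of order $\omega^2$ so that the Ingham--Karamata theorem (Theorem \ref{coro:tauber-approx-increasing}) applies, verify the generalized monotonicity from the three structural hypotheses, and undo the normalization at the end. The gap sits exactly where you place your "main obstacle", and it is created by your choice of normalization. The paper normalizes by passing to $h(u)=\int_{[0,u]}w^{-2}\,dg_0(w)$ with $g_0$ supported in $\{u\ge u_0\}$; on the transform side this is just the \emph{second antiderivative} in $s$ of $\int_0^\infty e^{-su}\,dg_0(u)$ (cf.\ \eqref{eq:quantf}), so the analyticity of $f$ across $s=0$ is inherited directly from that of $\int_0^\infty e^{-su}A(u^2)\,du$, which in turn follows in one line from \eqref{eq:repr} --- the identity expressing the \emph{Laplace kernel as a Stieltjes transform}, $e^{-s\sqrt t}/\sqrt t=\int_0^\infty\lambda^{3/2}\kappa(s\sqrt\lambda)(t+\lambda)^{-3}\,d\lambda$ --- together with $|\kappa(z)|\le Ce^{|z|}$ and the hypothesis. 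Your identity goes the opposite way (Stieltjes kernel as a Laplace transform of sine convolutions): it exhibits $F(\mu^2)$ as a Laplace transform of something, but does not express $\phi_A$ through $F$. Moreover your pointwise division by $\omega^2+\beta^2$ corresponds on the transform side not to an antiderivative but to a resolvent-type superposition, $\int_0^\infty e^{-s\omega}\tfrac{\tilde G(\omega)}{\omega^2+\beta^2}\,d\omega=\int_0^\infty e^{-\beta\tau}\,\tfrac{1}{2i}\bigl(L(s-i\tau)-L(s+i\tau)\bigr)\,d\tau$ with $L(z)=\int_0^\infty e^{-z\omega}\omega^{-1}\tilde G(\omega)\,d\omega$, so the deferred "disentangling" is genuinely nontrivial and is not supplied. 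Relatedly, your claim that $a=0$ is false: if $G(t)\sim ct$ (the expected behavior), then $A(\omega)$ tends in the Abelian sense to $c-G(0)/\beta^2\neq 0$, so $f_A(s)/s\sim a/s$ near $s=0$ with this nonzero $a$; the correct $a$ is this Abel limit, it must be estimated (as the paper estimates $f(0)$ in \eqref{eq:quanthlapl}), and it is precisely why Theorem \ref{coro:tauber-approx-increasing} carries the constant $a$.

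A second, smaller gap is in the monotonicity step. The term $G(u^2)\bigl((v^2+\beta^2)^{-1}-(u^2+\beta^2)^{-1}\bigr)$ is only bounded below by $-\bigl(\sup|A|+|G(0)|/\beta^2\bigr)\,\delta/\beta$, and with the natural choices $\delta\sim\epsilon_0^{-1}\sim\beta$ this is a full $O(1)$ multiple of $\sup|A|$ entering the constant $c$. Since Theorem \ref{coro:tauber-approx-increasing} returns $\sup|A|\le C(\cdots+c)$ with an absolute constant $C>1$, your "absorption" only works if $\beta$ is a sufficiently large absolute multiple of $\epsilon_0^{-1}$ and the monotonicity window is capped accordingly; neither is addressed, and as written the estimate is circular. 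The paper's normalization avoids this entirely: $h(v)-h(u)=\int_{(u,v]}w^{-2}\,dg_0(w)$ is controlled purely by the increments of $g_0$ (see \eqref{eq:bound-c}), so no term proportional to $\sup|h|$ ever enters its almost-monotonicity constant.
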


This theorem generalizes \cite[Thm.~22.38]{Titchmarsh-book-vol2}, whose proof seems to be influenced by Avakumovi\'c's work \cite{Avakumovic-50}. In fact, \cite[Thm.~22.38]{Titchmarsh-book-vol2} corresponds to the special case where $B_0$ is a negative constant, where $B_2\equiv 0$ and where $A$ vanishes near the origin. For our application it will be crucial that $B_0$ is \emph{not} required to be constant. 

We will typically apply this theorem in the situation where $\Lambda=C_1/\epsilon_0^2$, $\delta=\epsilon_0^{-1}$ and $t_0=0$. In this case the bound on $\mathcal B$ simplifies to
\begin{align}\label{eq:stieltjessimple}
\mathcal B & \le C \left( \left(1+C_1^2e^{\sqrt{C_1}/2}\right)\left(\|B_0\|_{L^\infty} + \epsilon_0^3  |B_1(0)| + \epsilon_0  C_{B_2} + \epsilon_0 C_0\right) + \epsilon_0  C_{B_0} \right).
\end{align}

\begin{proof}
	To enhance the readability of the proof, we present here only the proof of the qualitative assertion $\mathcal O(t)$ in the theorem, without keeping track of the precise dependence of the constants. The details of these more precise bounds are deferred to Appendix \ref{sec:tauberdetails}. We abbreviate
	$$
	A(t) := B_0(t)\,t^{3/2}+B_1(t)+B_2(t) \,.
	$$
	
	\emph{Step 1.} We show that, for all $t>0$,
	\begin{equation}
	\label{eq:qualabound}
	|A(t)|\le C(1+t^{3/2}) \,, 
	\end{equation}
	see \eqref{eq:quantabound} for a quantitative version. The bounds on $B_0$ and $B_2$ imply that 
	$$\int_0^\ii\frac{B_0(t) t^{3/2}}{(t+\lambda)^3}\,dt \le C \lambda^{-1/2},\ \int_0^\ii\frac{B_2(t)}{(t+\lambda)^3}\,dt \le C(\lambda^{-2}+\lambda^{-1}),$$
	so, in view of the exponential decay of the Stieltjes transform of $A$, we deduce that
	$$\int_0^\ii\frac{B_1(t)-B_1(0)}{(t+\lambda)^3}\,dt \le C(\lambda^{-1/2}+\lambda^{-2}).$$
	By the monotonicity of $B_1$, we deduce that for all $\lambda>0$,
	$$(B_1(\lambda)-B_1(0))/(8\lambda^2) \le \int_\lambda^\ii\frac{B_1(t)-B_1(0)}{(t+\lambda)^3}\,dt\le C(\lambda^{-1/2}+\lambda^{-2}) \,,$$
	which gives $|B_1(t)|\leq C (1+t^{3/2})$ and therefore also \eqref{eq:qualabound}.
 
 	\emph{Step 2.} We claim that
 	\begin{equation}
 	\label{eq:qualanal1}
 	s\mapsto  \int_0^\ii e^{-su} A(u^2)\,du
 	\qquad\text{is analytic in}\ \{ \re s>0\}\cup\{ |s|<\epsilon_0\} \,.
 	\end{equation}
 	This qualitative assertion is supplemented in Appendix \ref{sec:tauberdetails} by the quantitative bounds \eqref{eq:quantalapl} for $s=\sigma>0$ and \eqref{eq:quantalapl2} for $|s|\le\epsilon_0/2$.
 
 	In order to prove \eqref{eq:qualanal1} we use the representation formula (recall the Yukawa potential)
 	$$e^{-s\sqrt{t}}=\frac{1}{\pi}\int_0^\ii\frac{\sin(s\sqrt{\lambda})}{\lambda+t}\,d\lambda \,.$$
 	This implies that for all $s,t>0$,
	\begin{equation}\label{eq:repr}
 	\frac{e^{-s\sqrt{t}}}{\sqrt{t}}=\int_0^\ii\frac{\lambda^{3/2}\,\kappa(s\sqrt\lambda)}{(t+\lambda)^3}\,d\lambda
 	\end{equation}
 	with
 	\begin{eqnarray}
 	\label{eq:kappa}
 	\kappa(z):=\frac{8}{\pi}\frac{\sin z - z\cos z}{z^3}
 	\qquad\text{for all} \ z\in\C \,.
 	\end{eqnarray}
 	Using \eqref{eq:repr} we can write
 	$$
 	\int_0^\ii e^{-su} A(u^2)\,du = \frac12 \int_0^\ii\frac{e^{-s\sqrt{t}}}{\sqrt{t}}A(t)\,dt = \frac12 \int_0^\ii \lambda^{3/2}\,\kappa(s\sqrt\lambda) \int_0^\ii\frac{A(t)}{(t+\lambda)^3}\,dt\,d\lambda \,.
 	$$
 	The fact that $\kappa$ is entire and satisfies the bound
 	\begin{equation}
 	\label{eq:kappabound}
 	|\kappa(z)|\le Ce^{|z|} \qquad\text{for all}\ z\in\C \,,
 	\end{equation}
 	together with the assumed exponential decay for the Stieltjes transform implies \eqref{eq:qualanal1}.
 	
 	\emph{Step 3.} In order to apply Theorem \ref{coro:tauber-approx-increasing}, we would like to deal with functions vanishing near the origin and we define, with a parameter $u_0>0$ to be specified later,
 	$$
 	g_0(u):=\begin{cases}
 	A((u-u_0)^2) - A(0) & \text{if}\ u\ge u_0 \,,\\
 	0 & \text{if}\ u<u_0 \,.
 	\end{cases}
 	$$
 	We claim that
 	\begin{equation}
 	\label{eq:qualanal2}
 	s\mapsto  \int_0^\ii e^{-su} \,dg_0(u)
 	\qquad\text{is analytic in}\ \{ \re s>0\}\cup\{ |s|<\epsilon_0\}
 	\end{equation}
 	and
 	\begin{equation}\label{eq:bound-dg0}
 	\left|\int_0^\ii e^{-tu}dg_0(u)\right|\le Ce^{-tu_0}(1+t^{-3})
 	\end{equation}
 	for all $t>0$. These qualitative assertions are supplemented by the quantitative bounds \eqref{eq:quantg0lapl} for $s=\sigma>0$ and \eqref{eq:quantg0lapl2} for $|s|\le\epsilon_0/2$.
 	
 	To prove \eqref{eq:qualanal2} and \eqref{eq:bound-dg0} we define $g(u):=A(u^2)-A(0)$. Then, using $g(0)=0$, we see that
	$$
	\int_0^\ii e^{-su}dg(u) = s\int_0^\ii e^{-su} g(u)\,du = s\int_0^\ii e^{-su} A(u^2)\,du - A(0) \,.
	$$
	According to \eqref{eq:qualanal1} this is analytic in $\{\re s>0\}\cup\{|s|<\epsilon_0\}$. Moreover, from \eqref{eq:qualabound} we obtain the rough bound, for all $t>0$,
 	$$\left|\int_0^\ii e^{-tu}dg(u)\right|\le C(1+t^{-3})$$
 	In terms of the function $g_0$, we have
 	$$
 	\int_0^\ii e^{-su}dg_0(u) = \int_{u_0}^\ii e^{-su}dg(u-u_0) = \int_0^\ii e^{-s(u+u_0)}dg(u) = e^{-su_0}\int_0^\ii e^{-su}dg(u) \,.
 	$$
 	Therefore the analyticity properties and the bound of the corresponding integral with $g$ imply \eqref{eq:qualanal2} and \eqref{eq:bound-dg0}.
 	
	\emph{Step 4.} We apply Theorem \ref{coro:tauber-approx-increasing} to the function
 	$$h(u):=\int_{[0,u]}\frac{dg_0(w)}{w^2} \,.$$
	Note that, since $g_0$ vanishes for $u\le u_0$,  the function $h$ is well-defined, belongs to $h\in BV_\loc[0,\infty)$ and satisfies $h(0)=0$. Moreover, its Laplace transform
	$$
	f(s) := \int_0^\ii e^{-su}dh(u)
	$$
	is  well-defined for $\re s>0$ by \eqref{eq:bound-dg0} and satisfies
 	\begin{equation}\label{eq:quantf}
 		f(s) = \int_0^\ii \frac{e^{-su}}{u^2}dg_0(u)
 		= \int_s^\infty \int_{s'}^\infty \int_0^\infty e^{-s'' u} \,dg_0(u)\,ds''\,ds' \,.
 	\end{equation}
 	Therefore, in view of \eqref{eq:qualanal2}, $f$ is analytic in $\{\re s>0\}\cup\{ |s|<\epsilon_0\}$. This qualitative assertion is supplemented with the quantitative bounds \eqref{eq:quanthlapl} on $f(0)$ and \eqref{eq:quanthlapl2} on $(f(s)-f(0))/s$ for $|s|\le\epsilon_0/2$.
 	
   \emph{Step 5.} We claim that the function $g$ satisfies the almost monotonicity property $h(v)-h(u)\ge -c$ for $u+\delta\ge v\ge u\ge0$. The proof is not complicated, but somewhat lengthy and we defer it to Appendix \ref{sec:tauberdetails}; see \eqref{eq:bound-c}.
   
   \emph{Step 6.} According to Steps 4 and 5, the function $h$ satisfies all the assumptions of Theorem \ref{coro:tauber-approx-increasing} and we deduce that $h(u)=\cO(1)$ as $u\to+\ii$. Note now that since $g_0$ is right-continuous,
    $$\int_0^x uh(u)\,du=\int_{[0,x]} \frac{dg_0(v)}{v^2} \int_v^xu\,du = \frac12 x^2 h(x) - \frac12 g_0(x).$$
    Hence, $g_0(x)=x^2 h(x) -2\int_0^x uh(u)\,du=\cO(x^2)$ as $x\to+\ii$, meaning that $A(u)=\cO(u)$ as $u\to+\ii$. A quantitative version of this bound, valid for all $u>0$ is proved in \eqref{eq:quantaboundfinal}. This completes the proof of Theorem \ref{thm:stieljes-3D}.
\end{proof}

\iffalse

\begin{remark}\label{rk:quantitative-tauberian-improved}
 In applications, one often has an a priori information on the non-decreasing function $B_1$ which may improve the behaviour on the parameters. For instance, if for some $t_1>0$ the supremum
 $$C_{B_1}:=\sup_{t\ge0}\frac{|B_1(t)|}{(t_1+t)^{3/2}}$$
 is finite, then one does not need Step 1 in the previous proof and the estimates \eqref{eq:quantalapl} and \eqref{eq:quantalapl2} of Step 2 may be replaced by 
 $$\left|\int_0^\infty e^{-\sigma u} A(u^2)\,du\right| \le C( (\|B_0\|_{L^\ii}+C_{B_1})\sigma^{-4}+C_{B_2}\sigma^{-3}+(C_{B_1}t_1^{3/2}+C_{B_2}t_0)\sigma^{-1} ),\ \sigma>0,$$
 \begin{multline*}
    \left|\int_0^\infty e^{-s u} A(u^2)\,du\right| \le C( e^{\epsilon_0\sqrt{\Lambda}/2}((\|B_0\|_{L^\ii}+C_{B_1})\Lambda^{2} + C_{B_2}\Lambda^{3/2} \\
    + (C_{B_1}t_1^{3/2}+C_{B_2}t_0)\sqrt{\Lambda} )+C_0\epsilon_0^{-3}),\ |s|\le\epsilon_0/2.
 \end{multline*}
 The conclusion of Theorem \ref{thm:stieljes-3D} then holds with a right side
 $$C\Big(\|B_0\|_{L^\ii},C_{B_0},C_{B_1},t_1,C_{B_2},t_0,\Lambda,\delta,\eta\Big)(\epsilon_0^{-1}+C_0)(1+t),$$
 which means that we got rid of the factor $\ln_+\epsilon_0^{-1}$.
\end{remark}

\fi

%%%%%%%%%%%%%

\section{The method of Avakumovi\'c}\label{sec:avak}

We now apply Theorem \ref{thm:stieljes-3D} to prove the sharp Weyl laws in 3D, inspired by the method of Avakumovi\'c. Let $(M,g)$ a three-dimensional smooth compact Riemannian manifold without boundary, and let $-\Delta_g$ the (non-negative) Laplace-Beltrami operator on $(M,g)$. For $\lambda>0$ and $x,y\in M$ with $x\neq y$, denote the Green's function by
$$G_{\lambda}(x,y)=(-\Delta_g+\lambda)^{-1}(x,y).$$
The key result to obtain a sharp Weyl law is the following Avakumovi\'c-type result:

\begin{proposition}\label{prop:avak}
	There are constants $C_1,C_2,C_3>0$, $\epsilon_0>0$, and a function $R: [0,\ii)\times M\to\R$ such that for all $x\in M$ and for all $\lambda \ge C_1/\epsilon_0^2$,
	\begin{align*}
		& \left| \partial_\lambda G_\lambda(x,x)+\frac{1}{8\pi\sqrt{\lambda}}-\int_0^\ii\frac{R(t,x)}{(t+\lambda)^3}\,dt \right| \le C_2\, \frac{e^{-\epsilon_0\sqrt{\lambda}/4}}{\lambda}
	\end{align*}
	and for all $x\in M$, $R(\cdot,x)$ is $C^1$ on $[0,\infty)$ and for all $t\ge0$,
	$$
	|R(t,x)| \le C_3\, t.
	$$
	More precisely, the constants $C_1,C_2,C_3$ and $\epsilon_0$ can be chosen as
	$$
	C_1 = C\cM_{\rho/2} \,,
	\qquad
	C_2 = \frac{C\, \mathcal M_{\rho/2}^{3/2}\,\mathcal M_{\rho/2}'}{\rho} \,,
	\qquad
	C_3 = \frac{C \,\mathcal M_{\rho/2}^{1/2} \, \mathcal M_{\rho/2}' }{\rho} \,,
	\qquad
	\epsilon_0 = \frac{\rho}{C\,\mathcal M_{\rho/2}^{1/2}} \,,
	$$
	where $C$ is an absolute constant, $\rho$ is the injectivity radius of $M$ and $\mathcal M_{\rho/2}$ and $\mathcal M_{\rho/2}'$ are explicit constants, defined below, which depend in a scale invariant way on $M$.
\end{proposition}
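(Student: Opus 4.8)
The plan is to construct an explicit parametrix for the Green's function $G_\lambda(x,y)$ of $-\Delta_g$ on the geodesic ball of radius $\rho$ (the injectivity radius), following Avakumovi\'c, and then to estimate the difference between $G_\lambda$ and this parametrix. First I would work in geodesic normal coordinates centered at $x$; in these coordinates the Laplace--Beltrami operator is a perturbation of the flat Laplacian, and the flat free Green's function in three dimensions is the Yukawa kernel $e^{-\sqrt\lambda |z|}/(4\pi|z|)$. I would take as the leading parametrix $P_\lambda(x,y) = e^{-\sqrt\lambda\, d_g(x,y)}\, a(x,y)/(4\pi\, d_g(x,y))$ with $a$ a smooth amplitude (essentially the square root of the Jacobian factor $\sqrt{|g|}$, chosen so that $(-\Delta_g+\lambda)P_\lambda$ has a remainder with no $1/|z|$ singularity). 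Applying $-\Delta_g+\lambda$ to $P_\lambda$ produces an error term $E_\lambda(x,y)$ which, crucially, still carries the exponential factor $e^{-\sqrt\lambda\, d_g(x,y)}$ but whose singularity at $y=x$ is milder (bounded, or at worst mildly singular depending on the regularity of $g$); one then iterates, writing $G_\lambda = P_\lambda - (-\Delta_g+\lambda)^{-1}E_\lambda$, and estimates $(-\Delta_g+\lambda)^{-1}E_\lambda$ using the resolvent bound on $L^2$ together with elliptic regularity / a Neumann-series / Duhamel argument. The cutoff to the ball of radius $\rho$ introduces an additional error supported away from the diagonal, which is $\cO(e^{-c\sqrt\lambda})$ since it carries the factor $e^{-\sqrt\lambda\,\rho/2}$ or similar — this is exactly the exponentially small remainder Avakumovi\'c allows.

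The second step is to pass from $G_\lambda(x,y)$ to $\partial_\lambda G_\lambda(x,x)$. One has $\partial_\lambda G_\lambda(x,x) = -(-\Delta_g+\lambda)^{-2}(x,x)$, which is finite in three dimensions (unlike $G_\lambda(x,x)$ itself). I would differentiate the parametrix identity in $\lambda$ and take the limit $y\to x$; the leading contribution comes from $\partial_\lambda P_\lambda$ evaluated on the diagonal, and a direct computation with the Yukawa kernel gives $\partial_\lambda\big(e^{-\sqrt\lambda|z|}/(4\pi|z|)\big)\big|_{z\to 0} = -1/(8\pi\sqrt\lambda)$, which is precisely the claimed main term with the correct sign. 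The remaining terms — the subleading terms of the parametrix and the iterated error $(-\Delta_g+\lambda)^{-1}E_\lambda$ — must then be shown to be of the form $\int_0^\infty R(t,x)(t+\lambda)^{-3}\,dt$ plus something exponentially small. The key structural observation here (Avakumovi\'c's second insight, quoted in the introduction) is that these error terms are themselves Stieltjes transforms: using the representation $e^{-s\sqrt t}/\sqrt t = \int_0^\infty \lambda^{3/2}\kappa(s\sqrt\lambda)(t+\lambda)^{-3}\,d\lambda$ (equation \eqref{eq:repr}) together with the spectral representation of the resolvent powers, one rewrites each error term as a Stieltjes transform of an explicit density $R(t,x)$, and the exponential factors $e^{-c\sqrt\lambda}$ in the parametrix errors translate (via $\kappa$ being entire with $|\kappa(z)|\le Ce^{|z|}$, cf.\ \eqref{eq:kappabound}) into genuinely exponentially small contributions for $\lambda$ large enough relative to $\epsilon_0^{-2}$.

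The third step is the pointwise bound $|R(t,x)|\le C_3\,t$ and the $C^1$-regularity of $R(\cdot,x)$. Since $R(t,x)$ is (up to the smooth parametrix pieces, which contribute the main term already subtracted) essentially built from the kernel of $\1(-\Delta_g\le t)$ restricted to a small ball via the error terms, the bound $|R(t,x)|\le C_3 t$ reflects the fact that the error terms $E_\lambda$ are one full power of $\lambda$ (equivalently, one power of $t^{1/2}$) better than the leading $t^{3/2}$ behavior — this is where the gain from the mildness of the parametrix singularity is cashed in. Concretely, I would obtain it by bounding the relevant resolvent-power kernels using $L^2\to L^\infty$ mapping properties (finiteness of $(-\Delta_g+\lambda)^{-2}(x,x)$ in 3D) and Sobolev embedding, tracking the $\lambda$-power carefully, and then reading off the $t$-dependence of the density from the Stieltjes representation; $C^1$-regularity in $t$ follows from the smoothness in $\lambda$ of the resolvent kernels away from the spectrum combined with the inversion of the Stieltjes transform. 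Throughout, the explicit constants $\cM_{\rho/2}$ and $\cM'_{\rho/2}$ (bounds on the metric and finitely many of its derivatives on balls of radius $\rho/2$, in normal coordinates, together with their scale-invariant combinations) enter through the amplitude $a$ and the error $E_\lambda$; tracking their appearance gives the stated formulas for $C_1,C_2,C_3,\epsilon_0$.

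\medskip

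\emph{Main obstacle.} The hard part will be the careful bookkeeping in the iteration step: showing that the iterated remainder $(-\Delta_g+\lambda)^{-1}E_\lambda$, and more generally the full Neumann series, both (i) admits a Stieltjes-transform representation with an explicit, locally-$C^1$ density $R(t,x)$ and (ii) obeys the sharp bound $|R(t,x)|\le C_3 t$ uniformly in $x$, all while the only regularity assumed on $g$ is $C^4$. The competing requirements — one needs enough derivatives of $g$ to make the amplitude construction close and the error singularity mild, but one wants to assume as little as possible — mean the iteration has to be truncated after finitely many steps, with the tail estimated crudely by resolvent bounds, and the exponential factors must be propagated through each step without loss. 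Ensuring that the $\lambda$-powers line up so that everything beyond the explicit main term $-1/(8\pi\sqrt\lambda)$ is genuinely a Stieltjes transform of an $\cO(t)$ density (rather than $\cO(t^{3/2})$, which would be fatal) is the crux, and it is exactly here that three-dimensionality is used.
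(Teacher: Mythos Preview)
Your overall architecture is right and matches the paper's: parametrix $T_{\lambda,\epsilon}(x,y)=\frac{e^{-\sqrt\lambda d_g(x,y)}}{4\pi d_g(x,y)}U_0(x,y)\chi(d_g(x,y)/\epsilon)$ with $U_0=\theta^{-1/2}$, Neumann series for $\gamma_{\lambda,\epsilon}=T_{\lambda,\epsilon}-G_\lambda$, and the representation \eqref{eq:repr} to produce the Stieltjes transform. But the mechanism you propose for extracting $R(t,x)$ and bounding it is not the one that actually works, and your ``Main obstacle'' paragraph shows you have not yet found the right one.

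The missing idea is a \emph{sharp} near/far splitting inside each Neumann term $\gamma_{\lambda,\epsilon}^{(n)}$. You write $\gamma_{\lambda,\epsilon}^{(n)}=a_{\lambda,\epsilon}^{(n)}+b_{\lambda,\epsilon}^{(n)}$, where in $a^{(n)}$ \emph{every} integration variable $z_k$ is restricted to $d_g(z_k,z_{k+1})<\epsilon/2$. On this region the cutoffs $\chi$ are all identically $1$, so the entire $\lambda$-dependence of the integrand is through a single factor $e^{-\sqrt\lambda\,(d_g(x,z_1)+\cdots+d_g(z_n,x))}$; differentiating in $\lambda$ and applying \eqref{eq:repr} then gives $\partial_\lambda a^{(n)}(x,x)$ \emph{exactly} as $-\int_0^\infty r^{(n)}(t,x)(t+\lambda)^{-3}\,dt$ with an explicit $r^{(n)}$. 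The bound $|r^{(n)}(t,x)|\le C t\cdot(\text{geometric factor})^{n-1}$ is then immediate from the pointwise estimate $|\kappa(r)|\le C r^{-1}$ (which converts the $t^{3/2}$ in \eqref{eq:repr} into $t$) together with elementary $L^\infty L^1$ and $L^\infty L^2$ bounds on $U_0/d_g$ and $\Delta U_0/d_g$ --- no resolvent mapping properties, Sobolev embedding, or Stieltjes inversion is needed. The complementary piece $b^{(n)}$ has at least one distance $\ge\epsilon/2$, so it carries an honest $e^{-\epsilon\sqrt\lambda/4}$ and is estimated directly (not rewritten as a Stieltjes transform). Finally, there is no truncation of the Neumann series: the whole series is summed, the convergence being guaranteed once $\epsilon^2\lambda$ exceeds an absolute multiple of $\cM_{\rho/2}$, which is precisely the origin of the threshold $\lambda\ge C_1/\epsilon_0^2$.
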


\begin{corollary}\label{coro:weyl-3D}
	We have, as $\lambda\to\infty$, uniformly in $x\in M$,
	$$\1(-\Delta_g\le\lambda)(x,x)=\frac{\lambda^{3/2}}{6\pi^2}+\cO(\lambda) \,.$$
	More precisely, we have for all $x\in M$ and all $t\ge 0$,
	$$
	\left| \1(-\Delta_g\le\lambda)(x,x) - \frac{\lambda^{3/2}}{6\pi^2} \right| \le  C_0 \, \rho^{-1} \left( t + \rho^{-2} \right),
	$$
	where $C_0$ depends only on upper bounds on $\mathcal M_{\rho/2}$, $\mathcal M_{\rho/2}'$ and $\rho^3/\vol_g M$.
\end{corollary}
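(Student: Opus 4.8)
The plan is to combine Proposition~\ref{prop:avak} with the Tauberian Theorem~\ref{thm:stieljes-3D}. First I would rewrite the decomposition of Proposition~\ref{prop:avak} in terms of the spectral counting function. Since $\partial_\lambda G_\lambda(x,x)=-(-\Delta_g+\lambda)^{-2}(x,x)$ and, expanding in an orthonormal eigenbasis $(\phi_j)_j$ of $-\Delta_g$,
$$
(-\Delta_g+\lambda)^{-2}(x,x)=2\int_0^\ii\frac{\1(-\Delta_g\le t)(x,x)}{(t+\lambda)^3}\,dt\,,\qquad \frac{1}{8\pi\sqrt\lambda}=2\int_0^\ii\frac{t^{3/2}/(6\pi^2)}{(t+\lambda)^3}\,dt\,,
$$
the estimate of Proposition~\ref{prop:avak} reads, for all $x\in M$ and all $\lambda\ge C_1/\epsilon_0^2$,
$$
\left|\int_0^\ii\frac{A(t,x)}{(t+\lambda)^3}\,dt\right|\le C_2\,\frac{e^{-\epsilon_0\sqrt\lambda/4}}{\lambda}\,,\qquad A(t,x):=2\Bigl(\1(-\Delta_g\le t)(x,x)-\tfrac{t^{3/2}}{6\pi^2}\Bigr)+R(t,x)\,.
$$

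Next, for each fixed $x\in M$, I would apply Theorem~\ref{thm:stieljes-3D} to the functions
$$
B_0\equiv-\tfrac{1}{3\pi^2}\,,\qquad B_1(t)=2\,\1(-\Delta_g\le t)(x,x)\,,\qquad B_2(t)=R(t,x)\,,
$$
so that $B_0(t)t^{3/2}+B_1(t)+B_2(t)=A(t,x)$, with the choices $t_0=0$, $\delta=4/\epsilon_0$, $\Lambda=C_1/\epsilon_0^2$, decay constant $C_2$, and decay rate $\epsilon_*:=\epsilon_0/4$ (the parameter denoted $\epsilon_0$ in Theorem~\ref{thm:stieljes-3D}). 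These put us in the ``typical'' regime $\Lambda=(C_1/16)\epsilon_*^{-2}$, $\delta=\epsilon_*^{-1}$, $t_0=0$ for which the simplified bound \eqref{eq:stieltjessimple} applies. The hypotheses are straightforward to check: $B_0$ is a negative constant, so $\|B_0\|_{L^\infty}$ is absolute, $C_{B_0}=0$, and $u\mapsto u^3B_0(u^2)$ is smooth; $B_1(t)=\sum_{\lambda_j\le t}|\phi_j(x)|^2$ is nonnegative, nondecreasing and right-continuous; the bound $|R(t,x)|\le C_3t$ from Proposition~\ref{prop:avak} forces $R(0,x)=0$ (so that $t_0=0$ is admissible) and gives $C_{B_2}\le C_3$, while $u\mapsto B_2(u^2)=R(u^2,x)$ is $C^1$; and the Stieltjes transform of $A(\cdot,x)$ converges since $(-\Delta_g+\lambda)^{-2}(x,x)<\ii$. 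Theorem~\ref{thm:stieljes-3D} then gives $|A(t,x)|\le\mathcal B\,\epsilon_*^{-1}(t+\epsilon_*^{-2})$ for all $t\ge0$, uniformly in $x$.

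The remaining task is the bookkeeping of constants, using $C_1=C\mathcal M_{\rho/2}$, $C_2=C\mathcal M_{\rho/2}^{3/2}\mathcal M_{\rho/2}'/\rho$, $C_3=C\mathcal M_{\rho/2}^{1/2}\mathcal M_{\rho/2}'/\rho$ and $\epsilon_0=\rho/(C\mathcal M_{\rho/2}^{1/2})$ from Proposition~\ref{prop:avak}. In \eqref{eq:stieltjessimple} (applied with the quantity there called $C_1$ equal to $\Lambda\epsilon_*^2=C\mathcal M_{\rho/2}/16$): the factor $1+C_1^2e^{\sqrt{C_1}/2}$ and $\|B_0\|_{L^\infty}$ depend only on an upper bound for $\mathcal M_{\rho/2}$; the terms $\epsilon_*C_{B_2}$ and $\epsilon_*C_0$ are $\lesssim\epsilon_0(C_2+C_3)$, hence controlled by upper bounds for $\mathcal M_{\rho/2},\mathcal M_{\rho/2}'$; $C_{B_0}=0$; and the term $\epsilon_*^3|B_1(0)|$ is a constant times $\epsilon_0^3/\vol_g M$, because $B_1(0)=2\,\1(-\Delta_g\le0)(x,x)=2\,\vol_g(M)^{-1}$ (the diagonal of the projection onto constants) and $\epsilon_0^3/\vol_g M=(C\mathcal M_{\rho/2})^{-3/2}(\rho^3/\vol_g M)$ --- this is where the dependence on $\rho^3/\vol_g M$ enters. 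Thus $\mathcal B$ is controlled by upper bounds for $\mathcal M_{\rho/2},\mathcal M_{\rho/2}',\rho^3/\vol_g M$, and since $\epsilon_*^{-1}(t+\epsilon_*^{-2})\lesssim\mathcal M_{\rho/2}^{3/2}\rho^{-1}(t+\rho^{-2})$ we obtain $|A(t,x)|\le C_0\,\rho^{-1}(t+\rho^{-2})$. Subtracting $R(t,x)$, which is bounded by $C_3t\lesssim\mathcal M_{\rho/2}^{1/2}\mathcal M_{\rho/2}'\rho^{-1}t$, and dividing by $2$ yields the asserted estimate on $|\1(-\Delta_g\le t)(x,x)-t^{3/2}/(6\pi^2)|$; the first, $\mathcal O(\lambda)$, assertion then follows by restricting to $\lambda\ge\rho^{-2}$.

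I do not expect a genuine conceptual obstacle, since the substance lies in Proposition~\ref{prop:avak} (the parametrix analysis) and in Theorem~\ref{thm:stieljes-3D} (the Tauberian step). The point that needs attention is matching \emph{signs} with the hypotheses of Theorem~\ref{thm:stieljes-3D}: its monotone piece $B_1$ is required to be \emph{non}decreasing, which forces $B_1=2\,\1(-\Delta_g\le t)(x,x)$ rather than its negative, and hence the \emph{negative} constant $B_0=-1/(3\pi^2)$; one must also observe that $R(0,x)=0$ in order to take $t_0=0$. The rest is routine bookkeeping.
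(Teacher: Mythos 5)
Your proposal is correct and follows essentially the same route as the paper: rewrite Proposition~\ref{prop:avak} as an exponentially small bound on the Stieltjes transform of $A(t,x)=2(\1(-\Delta_g\le t)(x,x)-t^{3/2}/(6\pi^2))+R(t,x)$ and apply Theorem~\ref{thm:stieljes-3D} in the simplified form \eqref{eq:stieltjessimple} with $B_0$ a negative constant, $B_1$ the (nondecreasing) spectral density, $B_2=R$, $\delta\sim\epsilon_0^{-1}$, and $B_1(0)$ proportional to $1/\vol_g M$. The overall factor of $2$ in your choice of $A$ and your explicit handling of the factor $4$ between the two decay rates are immaterial differences from the paper's proof.
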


\begin{proof}[Proof of Corollary \ref{coro:weyl-3D} assuming Proposition \ref{prop:avak}]
	Define for all $t\ge0$
	$$A(t,x) :=\1(-\Delta_g\le t)(x,x)-t^{3/2}/(6\pi^2)+R(t,x)/2.$$
	Let $\lambda_n$ denote the eigenvalues of $-\Delta_g$ in nondecreasing order and repeated according to multiplicities and $\phi_n$ the corresponding normalized eigenfunctions. Then
	$$\1(-\Delta_g\le t)(x,x)=\sum_{\lambda_n\le t}|\phi_n(x)|^2$$
	and thus
	$$\int_0^\ii\frac{\1(-\Delta_g\le t)(x,x)}{(\lambda+t)^3}\,dt=\frac12\sum_{n\ge0}\frac{|\phi_n(x)|^2}{(\lambda_n+\lambda)^2}=-\frac12\partial_\lambda G_\lambda(x,x).$$
	Next, we have 
	$$\int_0^\ii \frac{t^{3/2}}{(\lambda+t)^3}\,dt=\frac{3\pi}{8\sqrt{\lambda}},$$
	so that, by Proposition \ref{prop:avak},
	$$\left|\int_0^\ii\frac{A(t,x)}{(t+\lambda)^3}\,dt\right|=\left|-\frac12\partial_\lambda G_\lambda(x,x)-\frac{1}{16\pi\sqrt{\lambda}}+\frac12\int_0^\ii\frac{R(t,x)}{(t+\lambda)^3}\,dt\right|\le C_2\, \frac{e^{-\epsilon_0\sqrt{\lambda}/4}}{\lambda}$$
	for all $\lambda\ge C_1/\epsilon_0^2$. The result then follows from Theorem \ref{thm:stieljes-3D} in the simplified form \eqref{eq:stieltjessimple} with $B_0\equiv  -(6\pi)^{-1}$, $B_1(t)=\1(-\Delta_g\le t)(x,x)$, $B_2(t)= R(t,x)/2$. Note, in particular, that the almost monotonicity assumption on $B_0$ in that theorem is satisfied with $C_{B_0}=0$ and any $\delta\ge 0$. In order to obtain the claimed bound we apply Theorem \ref{thm:stieljes-3D} with the choice $\delta = \epsilon_0^{-1}$. (The parameters $\epsilon_0$ in Theorem \ref{thm:stieljes-3D} and in Proposition \ref{prop:avak} have essentially the same meaning, up to a multiplicative constant.) We also use the fact that $B_1(0)= 1/\vol_g M$.
\end{proof}

%%%%%%%%%%

\subsection{Proof of Proposition \ref{prop:avak}}\label{sec:parametrix}

We first recall the construction of a local parametrix for $G_\lambda$. Let us denote by $\rho>0$ the injectivity radius of $M$. Then, for any $y\in M$, the exponential $\exp_y$ at $y$ is well-defined on the set $\{v\in T_y M, |v|<\rho\}$, and maps it to the set $\{x\in M,\ d_g(x,y)<\rho\}$. As explained in Avakumovi\'c \cite{Avakumovic-56}, in Minakshisundaram--Pleijel \cite{MinPle-49}, or in \cite[Sec. III.E.3]{BerGauMaz-book}, for any $(x,y)\in M\times M$ with $d_g(x,y)<\rho$, the function
$$\theta(x,y):=|\det T_{\exp_y^{-1}(x)} \exp_y|,$$
where $T$ is the tangent map, is such that $\theta(y,y)=1$ and such that the Riemannian volume in normal coordinates around $y$ is $\theta(\exp_y(v),y)\,dv$, where $dv$ is the Lebesgue measure on $T_y M$. The function $U_0:=\theta^{-1/2}$ is then smooth on $V_\rho:=\{(x,y)\in M\times M,\ d_g(x,y)<\rho\}$. In order to quantify the dependence of our error estimates on the geometry of the manifold, we introduce for $\epsilon\in(0,\rho]$
$$
\mathcal M_\epsilon := \sup_{(x,y)\in V_\epsilon} \frac{ U_0(x,y) + \epsilon^2 |(-\Delta_g)_x U_0(x,y)| }{\min\{U_0(x,y)^2,U_0(y,x)^2\}}
\qquad\text{and}\qquad
\mathcal M_\epsilon' := \sup_{(x,y)\in V_\epsilon} U_0(x,y) \,.  
$$
Let $\chi\in C^\ii_0(\R)$ with $\chi\equiv1$ on $[-1/2,1/2]$ and $\chi\equiv0$ outside of $(-1,1)$. We consider $\chi$ as fixed and consider $L^\infty$ bounds on $\chi$, $\chi'$ and $\chi''$ as absolute constants. For $\epsilon\in(0,\rho]$ and $x\neq y$, we define
$$
T_{\lambda,\epsilon}(x,y):=\frac{e^{-\sqrt{\lambda}d_g(x,y)}}{4\pi d_g(x,y)}U_0(x,y)\chi(\epsilon^{-1} d_g(x,y))
$$
and note that
\begin{eqnarray}
	\label{eq:tbounds}
	\|T_{\lambda,\epsilon} \|_{L^\ii_x L^2_y}\le \frac{C}{\lambda^{1/4}}.
\end{eqnarray}
Next, let
$$
\gamma_{\lambda,\epsilon}(x,y):=T_{\lambda,\epsilon}(x,y)-G_\lambda(x,y)
$$
and compute using the expression of the Laplace-Beltrami operator in normal coordinates \cite[Sec. II.G.V]{BerGauMaz-book}
\begin{align*}
	R_{\lambda,\epsilon}(x,y) & := (-\Delta_g+\lambda)_x\gamma_{\lambda,\epsilon}(x,y) \\
	& = \frac{e^{-\sqrt{\lambda}d_g(x,y)}}{4\pi d_g(x,y)} \chi(\epsilon^{-1} d_g(x,y)) (-\Delta_g)_xU_0(x,y)\\
	&  \quad - \frac{e^{-\sqrt{\lambda}d_g(x,y)}}{4\pi d_g(x,y)} \, U_0(x,y) \left( \epsilon^{-2} \chi''(\epsilon^{-1}d_g(x,y)) - 2\epsilon^{-1}\sqrt\lambda \,\chi'(\epsilon^{-1} d_g(x,y)) \right).
\end{align*}
This implies
$$|R_{\lambda,\epsilon}(x,y)|\le C \mathcal M_\epsilon \, \frac{e^{-\frac{\sqrt{\lambda}}{2} d_g(x,y)}}{\epsilon^2 d_g(x,y)} \min\{ U_0(x,y)^2,U_0(y,x)^2\} \1(d_g(x,y)<\epsilon)$$
and, consequently,
\begin{eqnarray}
	\label{eq:rbounds}
	\|R_{\lambda,\epsilon}\|_{L^\ii_x L^1_y}+ \|R_{\lambda,\epsilon}\|_{L^\ii_y L^1_x}\le \frac{C\mathcal M_\epsilon }{\epsilon^2 \lambda}\quad\text{and}\qquad
	\|R_{\lambda,\epsilon} \|_{L^\ii_x L^2_y}+ \|R_{\lambda,\epsilon} \|_{L^\ii_y L^2_x}\le \frac{C\mathcal M_\epsilon\mathcal M_\epsilon'}{\epsilon^2 \lambda^{1/4}}.
\end{eqnarray}
Applying $(-\Delta_g+\lambda)^{-1}$ to the equation defining $R_{\lambda,\epsilon}$, we obtain the integral equation
$$\gamma_{\lambda,\epsilon}(x,y)=\int_M T_{\lambda,\epsilon}(x,z) R_{\lambda,\epsilon}(z,y)\,dv_g(z)-\int_M\gamma_{\lambda,\epsilon}(x,z) R_{\lambda,\epsilon}(z,y)\,dv_g(z).$$
Formally, this equation can be solved by iteration and one obtains the series representation
$$\gamma_{\lambda,\epsilon}(x,y)=\sum_{n\ge1}(-1)^{n+1}\gamma_{\lambda,\epsilon}^{(n)}(x,y)$$
with
$$
\gamma_{\lambda,\epsilon}^{(n)}(x,y):=\int_M\,dv_g(z_1)\cdots\int_M\,dv_g(z_n)\, T_{\lambda,\epsilon}(x,z_1)R_{\lambda,\epsilon}(z_1,z_2)\cdots R_{\lambda,\epsilon}(z_n,y).
$$

We now prove bounds on $\gamma_{\lambda,\epsilon}^{(n)}$ which show, in particular, that the above series converges provided $\lambda\epsilon^2$ is large enough. By the Cauchy-Schwarz inequality with respect to the $z_1$-integration we obtain
\begin{equation}\label{eq:bound-gamman}
	|\gamma_{\lambda,\epsilon}^{(n)}(x,y)|\le \|T_{\lambda,\epsilon} \|_{L^\ii_x L^2_y}\|R_{\lambda,\epsilon} \|_{L^\ii_y L^2_x}\|R_{\lambda,\epsilon}\|_{L^\ii_y L^1_x}^{n-1}.
\end{equation}
Inserting the bounds from \eqref{eq:tbounds} and \eqref{eq:rbounds} we find that for all $n\ge1$ and all $(x,y)\in M\times M$,
$$|\gamma_{\lambda,\epsilon}^{(n)}(x,y)|\le \frac{C\mathcal M_\epsilon \mathcal M_\epsilon'}{\epsilon^2\lambda^{1/2}}\left(\frac{A \mathcal M_\epsilon}{\epsilon^2\lambda}\right)^{n-1}.$$
This implies that the Neumann series defining $\gamma_{\lambda,\epsilon}$ converges for $\lambda \epsilon^2 >A \mathcal M_\epsilon$ and for, say, $\lambda \epsilon^2 \ge 2 A \mathcal M_\epsilon$ we have
$$
|\gamma_{\lambda,\epsilon}(x,y)| \le \frac{C\mathcal M_\epsilon \mathcal M_\epsilon'}{\epsilon^2\lambda^{1/2}}.
$$
This implies that $\gamma_{\lambda,\epsilon}$, as an operator, maps $L^1$ to $L^\ii$, and hence also $L^2$ to $L^2$. To show that we indeed have $G_\lambda = T_{\lambda,\epsilon} - \gamma_{\lambda,\epsilon}$ (where now $\gamma_{\lambda,\epsilon}$ is defined as the sum of the above convergent series), we notice that by construction we have $(-\Delta_g+\lambda)_x (T_{\lambda,\epsilon} - \gamma_{\lambda,\epsilon})(x,y)=\delta_y$, or, as operators, $(-\Delta_g+\lambda)(T_{\lambda,\epsilon} - \gamma_{\lambda,\epsilon}) f = f$ for all $f\in C^\ii_0(M)$. Since there is a unique bounded operator on $L^2$ that satisfies this relation, we indeed have the identity $G_\lambda = T_{\lambda,\epsilon} - \gamma_{\lambda,\epsilon}$.

In order to prove Proposition \ref{prop:avak} we need to take the structure of $\gamma_{\lambda,\epsilon}^{(n)}$ more precisely into account and we split
$$\gamma_{\lambda,\epsilon}^{(n)}(x,y)=a^{(n)}_{\lambda,\epsilon}(x,y)+b^{(n)}_{\lambda,\epsilon}(x,y)$$
with
\begin{multline*}
	a^{(n)}_{\lambda,\epsilon}(x,y):=\int_{d_g(z_n,y)<\epsilon/2}\,dv_g(z_n)\int_{d_g(z_{n-1},z_n)<\epsilon/2}\,dv_g(z_{n-1})\cdots\int_{d_g(z_1,z_2)<\epsilon/2}\,dv_g(z_1)\times\\
	\times T_{\lambda,\epsilon}(x,z_1)R_{\lambda,\epsilon}(z_1,z_2)\cdots R_{\lambda,\epsilon}(z_n,y).
\end{multline*}

\begin{lemma}\label{lem:main}
	There are $A,C>0$ and for every $n\ge 1$ and every $\epsilon\in(0,\rho)$ there is a function $r^{(n)}_\epsilon:[0,\infty)\times M\to\R$ such that for all $x\in M$ and all $\lambda>0$
	$$\partial_\lambda a^{(n)}_{\lambda,\epsilon}(x,x)= - \int_0^\ii\frac{r^{(n)}_\epsilon(t,x)}{(t+\lambda)^3}\,dt
	$$
	and for all $x\in M$, all $n\ge1$ and all $\epsilon\in(0,\rho)$, $r^{(n)}_\epsilon(\cdot,x)$ is $C^1$ on $[0,\infty)$ and for all $t>0$
	$$
	|r^{(n)}_\epsilon(t,x)|\le C \mathcal M_{\rho/2} \mathcal M_{\rho/2}' \,\frac{\epsilon t}{\rho^2} \left( \frac{A \mathcal M_{\rho/2}\, \epsilon^2}{\rho^2} \right)^{n-1}.
	$$
\end{lemma}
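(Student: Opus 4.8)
The plan is to analyze the structure of $a^{(n)}_{\lambda,\epsilon}(x,x)$ by exploiting the fact that on the relevant integration domain (where consecutive points are within $\epsilon/2$ of each other), the cutoffs $\chi(\epsilon^{-1}d_g(\cdot,\cdot))$ appearing in $T_{\lambda,\epsilon}$ and the first factor of each $R_{\lambda,\epsilon}$ are all identically $1$, since $d_g < \epsilon/2$ forces the argument into $[-1/2,1/2]$. Concretely, write $T_{\lambda,\epsilon}(x,z_1) = \frac{e^{-\sqrt\lambda d_g(x,z_1)}}{4\pi d_g(x,z_1)}U_0(x,z_1)$ on this set, and similarly $R_{\lambda,\epsilon}(z_j,z_{j+1})$ reduces (using the formula computed above for $R_{\lambda,\epsilon}$) to a sum of terms each of the form $\frac{e^{-\sqrt\lambda d_g(z_j,z_{j+1})}}{4\pi d_g(z_j,z_{j+1})}\times(\text{smooth amplitude})$, except that the last factor $R_{\lambda,\epsilon}(z_n,x)$ still carries its cutoff (since $d_g(z_n,x)$ can be up to $\epsilon$). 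The point is that each factor carries a Yukawa-type kernel $e^{-\sqrt\lambda d}/d$, so a product of $n+1$ such kernels integrated against smooth amplitudes produces, after the change of variables to make the innermost variable the "free" one, a function of $\sqrt\lambda$ times the product of distances — which is exactly the structure that becomes a higher-order Stieltjes transform in $\lambda$.

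The key computational step is the representation formula for the Yukawa potential, $\frac{e^{-s\sqrt t}}{\sqrt t} = \int_0^\infty \frac{\lambda^{3/2}\kappa(s\sqrt\lambda)}{(t+\lambda)^3}\,d\lambda$ from \eqref{eq:repr}, or more precisely its consequences for products. First I would set $x=y$ and perform the spatial integrations, organizing $a^{(n)}_{\lambda,\epsilon}(x,x)$ as an integral over the "total distance" variable or, more robustly, expand $e^{-\sqrt\lambda(d_1+\cdots+d_{n+1})}$ where $d_j$ are the consecutive distances, and combine the prefactors. One then recognizes that $a^{(n)}_{\lambda,\epsilon}(x,x)$, as a function of $\sqrt\lambda$, is (up to polynomial factors) a Laplace transform of a compactly supported measure/density in the variable $u = d_1+\cdots+d_{n+1} \in [0,(n+1)\epsilon]$, with a density that is bounded by $C\mathcal M_{\rho/2}\mathcal M_{\rho/2}'\,\epsilon\,(A\mathcal M_{\rho/2}\epsilon^2/\rho^2)^{n-1}/\rho^2$ after accounting for the $n$ iterated integrations each contributing a factor controlled by the $L^1$ bounds \eqref{eq:rbounds} rescaled to remove the $\lambda$-dependence. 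Applying $\partial_\lambda$ and then converting the Laplace transform in $\sqrt\lambda$ into a Stieltjes transform in $\lambda$ via the kernel $\kappa$ (inverting \eqref{eq:repr}) yields the claimed identity $\partial_\lambda a^{(n)}_{\lambda,\epsilon}(x,x) = -\int_0^\infty \frac{r^{(n)}_\epsilon(t,x)}{(t+\lambda)^3}\,dt$, and the bound $|r^{(n)}_\epsilon(t,x)| \le C\mathcal M_{\rho/2}\mathcal M_{\rho/2}'\frac{\epsilon t}{\rho^2}(A\mathcal M_{\rho/2}\epsilon^2/\rho^2)^{n-1}$ follows by tracking how $\kappa$ transports the bound on the Laplace-side density to the Stieltjes-side density (the factor $t$, i.e. $t^{3/2}/t^{1/2}$, being exactly what the formula \eqref{eq:repr} produces). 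The $C^1$ regularity in $t$ follows because the Laplace-side density is supported in a bounded interval and $\kappa$ is entire, so differentiating under the integral sign is justified.

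The main obstacle I expect is bookkeeping the combinatorics of the $n$ iterated integrals while simultaneously keeping the $n$-dependence of the constant \emph{geometric} (i.e.\ of the form $(A\mathcal M_{\rho/2}\epsilon^2/\rho^2)^{n-1}$) rather than, say, $n!$-type growth. This is the same phenomenon that makes the Neumann series converge, and it should come from the same Cauchy–Schwarz / Young's inequality argument used in \eqref{eq:bound-gamman} and the bounds following it — namely, after the first factor is handled by Cauchy–Schwarz in $L^2$, each subsequent $R_{\lambda,\epsilon}$ contributes an $L^\infty_y L^1_x$ norm, and crucially these norms are $\lambda$-independent once one factors out the Yukawa exponential (which is what is being "stored" in the Stieltjes variable $t$). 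A secondary subtlety is making sure that the last factor $R_{\lambda,\epsilon}(z_n,x)$, which still carries its cutoff $\chi$, does not spoil the Stieltjes structure; but since $\chi$ is smooth and compactly supported and multiplies a smooth amplitude, it just modifies the (still compactly supported, still bounded) Laplace-side density, so the argument goes through. One should also verify the elementary point that the diagonal evaluation $y=x$ is legitimate here — it is, because for $n\ge 1$ the kernel $\gamma^{(n)}_{\lambda,\epsilon}$ (and hence $a^{(n)}_{\lambda,\epsilon}$) is continuous up to the diagonal, which follows from the integrability bounds already established.
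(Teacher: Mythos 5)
Your proposal is correct and follows essentially the same route as the paper: on the localized domain the integrand of $a^{(n)}_{\lambda,\epsilon}(x,x)$ is $e^{-\sqrt\lambda D}$ (with $D$ the sum of the consecutive distances) times a $\lambda$-independent amplitude; differentiating in $\lambda$ and applying the Yukawa representation \eqref{eq:repr} turns $\partial_\lambda a^{(n)}_{\lambda,\epsilon}(x,x)$ into an exact Stieltjes transform, and the bound on $r^{(n)}_\epsilon$ follows from $|\kappa(r)|\le C r^{-1}$ (which converts $t^{3/2}\,D\,\kappa(\sqrt t\,D)$ into $\mathcal O(t)$) combined with the Cauchy--Schwarz/$L^1$ scheme of \eqref{eq:bound-gamman}, exactly as in the paper.

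One bookkeeping point is reversed in your write-up. In $a^{(n)}_{\lambda,\epsilon}$ the restrictions $d_g(z_k,z_{k+1})<\epsilon/2$ are imposed on \emph{every} $R_{\lambda,\epsilon}$ factor, including the last one $R_{\lambda,\epsilon}(z_n,y)$, so all the $\chi'$ and $\chi''$ terms in the $R$ factors vanish identically there; it is the first factor $T_{\lambda,\epsilon}(x,z_1)$ whose cutoff $\chi(\epsilon^{-1}d_g(x,z_1))$ survives, since $d_g(x,z_1)$ is only constrained to be $<\epsilon$. This matters for your ``secondary subtlety'': a surviving cutoff in $T_{\lambda,\epsilon}$ really is harmless, because there it multiplies a $\lambda$-independent amplitude; but a surviving cutoff in an $R_{\lambda,\epsilon}$ factor would \emph{not} be handled by your ``smooth amplitude'' remark, since the term $-2\epsilon^{-1}\sqrt\lambda\,\chi'(\epsilon^{-1}d_g)$ carries an explicit $\sqrt\lambda$ and would destroy the exact Stieltjes structure. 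This is precisely why the $a^{(n)}/b^{(n)}$ splitting places the $\epsilon/2$ restrictions on the $R$-factor distances. With the restrictions read correctly, your argument goes through as planned.
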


\begin{proof}
	Due to the localization of the integrals, we have
	\begin{multline*}
		a^{(n)}_{\lambda,\epsilon}(x,y):=\int_{d_g(z_n,y)<\epsilon/2}\,dv_g(z_n)\int_{d_g(z_{n-1},z_n)<\epsilon/2}\,dv_g(z_{n-1})\cdots\int_{d_g(z_1,z_2)<\epsilon/2}\,dv_g(z_1)\times\\
		\times \chi^{(n)}_{\lambda,\epsilon}(x,z_1,\ldots,z_n,y)
	\end{multline*}
	with
	\begin{multline*}
		\chi^{(n)}_{\lambda,\epsilon}(x,z_1,\ldots,z_n,y) = \frac{e^{-\sqrt{\lambda}(d_g(x,z_1)+\cdots+d_g(z_n,y))}}{(4\pi)^{n+1}}\times\\
		\times\frac{U_0(x,z_1)\chi(\epsilon^{-1} d_g(x,z_1))\Delta U_0(z_1,z_2)\cdots\Delta U_0(z_n,y)}{d_g(x,z_1)\cdots d_g(z_n,y)}.
	\end{multline*}
	We now use the representation formula \eqref{eq:repr} with $\kappa$ defined by \eqref{eq:kappa} to write
	\begin{multline*}
		\partial_\lambda \chi^{(n)}_{\lambda,\epsilon}(x,z_1,\ldots,z_n,x)  = -\frac{1}{2(4\pi)^{n+1}} \int_0^\infty dt\,\frac{t^{3/2}}{(t+\lambda)^3} 
		\, \kappa(\sqrt t\,( d_g(x,z_1)+\cdots+d_g(z_n,x)))\times \\
		 \times \left( d_g(x,z_1)+\cdots+d_g(z_n,x)\right)  \frac{U_0(x,z_1)\chi(\epsilon^{-1} d_g(x,z_1))\Delta U_0(z_1,z_2)\cdots\Delta U_0(z_n,x)}{d_g(x,z_1)\cdots d_g(z_n,x)}.
	\end{multline*}
	This yields the formula in the lemma with
	\begin{align*}
		r^{(n)}_\epsilon(t,x)& :=\frac{t^{3/2}}{2(4\pi)^{n+1}}\int_{d_g(z_n,x)<\epsilon/2}\,dv_g(z_n)\int_{d_g(z_{n-1},z_n)<\epsilon/2}\,dv_g(z_{n-1})\cdots\int_{d_g(z_1,z_2)<\epsilon/2}\,dv_g(z_1) \\
		& \qquad\times (d_g(x,z_1)+\cdots+d_g(z_n,x)) \, \kappa(\sqrt t\,( d_g(x,z_1)+\cdots+d_g(z_n,x))) \\
		& \qquad\times\frac{U_0(x,z_1)\chi(\epsilon^{-1} d_g(x,z_1))\Delta U_0(z_1,z_2)\cdots\Delta U_0(z_n,x)}{d_g(x,z_1)\cdots d_g(z_n,x)}\,.
	\end{align*}
	In order to prove the bound in the lemma, we use the bound
	\begin{eqnarray}
	\label{eq:kappabound1}
	|\kappa(r)|\le C r^{-1} \qquad\text{for all}\ r>0 \,, 
	\end{eqnarray}
	and obtain
	\begin{align*}
		|r^{(n)}_\epsilon(t,x)|& \le \frac{C t}{2(4\pi)^{n+1}} \int_{d_g(z_n,x)<\epsilon/2}\,dv_g(z_n)\int_{d_g(z_{n-1},z_n)<\epsilon/2}\,dv_g(z_{n-1})\cdots\int_{d_g(z_1,z_2)<\epsilon/2}\,dv_g(z_1) \\
		& \qquad\times\frac{|U_0(x,z_1)\chi(\epsilon^{-1} d_g(x,z_1))\Delta U_0(z_1,z_2)\cdots\Delta U_0(z_n,x)|}{d_g(x,z_1)\cdots d_g(z_n,x)} \\
		& \le \frac{C t}{2(4\pi)^{n+1}} \mathcal M_{\rho/2}^{n} (\rho/2)^{-2n} \int\,dv_g(z_n)\int\,dv_g(z_{n-1})\cdots\int\,dv_g(z_1) \\
		& \qquad\times \frac{\1(d_g(x,z_1)<\epsilon)U_0(x,z_1)}{d_g(x,z_1)} \frac{\1(d_g(z_1,z_2)<\epsilon/2)U_0(z_2,z_1)^2}{d_g(z_1,z_2)}\times\\
		& \qquad \times\cdots\times \frac{\1(d_g(z_n,x)<\epsilon/2)U_0(x,z_n)^2}{d_g(z_n,x)}.
	\end{align*}
	We bound the right side similarly as in \eqref{eq:bound-gamman} using
	$$\left\| \frac{\1(d_g(x,y)<\epsilon/2) U_0(y,x)^2}{d_g(x,y)} \right\|_{L^\ii_y L^1_x}\le C \epsilon^2,\quad \left\| \frac{\1(d_g(x,y)<\epsilon/2) U_0(y,x)}{d_g(x,y)} \right\|_{L^\ii_y L^2_x}\le C \sqrt{\epsilon}.$$
	We obtain
	$$
	|r^{(n)}_\epsilon(t,x)| \le C \mathcal M_{\rho/2} \mathcal M_{\rho/2}' \,\frac{\epsilon t}{\rho^2} \left( \frac{A \mathcal M_{\rho/2}\, \epsilon^2}{\rho^2} \right)^{n-1},
	$$
	as claimed. Using that $\kappa'$ is bounded, one can similarly show that $r^{(n)}_{\epsilon}(\cdot,x)$ is $C^1$ on $[0,\infty)$ with a bound $|\partial_tr^{(n)}_\epsilon(t,x)|\le C_{\rho,\epsilon}\sqrt{t}(1+\sqrt{t})n^2(A_\rho\epsilon^2)^{n-1}$.
\end{proof}

\begin{lemma}\label{lem:remainder}
	There exists $A>0$ and $C>0$ such that for all $x\in M$, all $\lambda>0$, all $n\ge1$, and all $\epsilon\in(0,\rho)$ we have
	$$|\partial_\lambda b^{(n)}_{\lambda,\epsilon}(x,x)|\le C \mathcal M_\epsilon' \,\mathcal M_\epsilon \, \frac{e^{-\epsilon\sqrt{\lambda}/4}}{\epsilon \lambda} \left(\frac{A\mathcal M_\epsilon}{\epsilon^2\lambda}\right)^{n-1}.$$
\end{lemma}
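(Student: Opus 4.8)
The plan is to exploit the defining feature of $b^{(n)}_{\lambda,\epsilon}$, namely that in contrast to $a^{(n)}_{\lambda,\epsilon}$, at least one of the consecutive distances $d_g(x,z_1)$, $d_g(z_1,z_2),\ldots,d_g(z_n,y)$ exceeds $\epsilon/2$. The Neumann-series factors carry an exponential weight $e^{-\sqrt\lambda\,(d_g(x,z_1)+\cdots+d_g(z_n,y))}$, so a single distance $\ge \epsilon/2$ produces a gain of $e^{-\epsilon\sqrt\lambda/2}$ beyond what the bound \eqref{eq:bound-gamman} gives. More precisely, I would first note that $b^{(n)}_{\lambda,\epsilon}(x,y)$ can be written as a sum of at most $n+1$ terms, one for each choice of which link is the ``long'' one (so that $d_g\ge\epsilon/2$); this is the complement of the product of indicators $\1(d_g<\epsilon/2)$ defining $a^{(n)}_{\lambda,\epsilon}$. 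In each such term, along the long link I replace the kernel $T_{\lambda,\epsilon}$ or $R_{\lambda,\epsilon}$ by a pointwise bound that extracts the factor $e^{-\epsilon\sqrt\lambda/2}$; what remains is estimated exactly as in the chain \eqref{eq:bound-gamman}, using the $L^\infty_xL^2_y$, $L^\infty_yL^2_x$ and $L^\infty_yL^1_x$ bounds from \eqref{eq:tbounds} and \eqref{eq:rbounds}, with the only change being that the $\lambda$-powers improve slightly because of the $e^{-\epsilon\sqrt\lambda/2}$ gain (which one splits off as, say, $e^{-\epsilon\sqrt\lambda/4}\cdot e^{-\epsilon\sqrt\lambda/4}$, using half of it to absorb the loss in the $\lambda$-powers and half to keep in the final estimate).

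Second, I would handle the $\partial_\lambda$. Differentiating the Neumann-series summand $\gamma^{(n)}_{\lambda,\epsilon}(x,x)$ in $\lambda$ hits either one of the $e^{-\sqrt\lambda d_g}$ factors (bringing down a power of $d_g$, harmless since $d_g<\epsilon$ on each factor's support and $\sqrt t\,d_g e^{-c\sqrt\lambda d_g}$ is bounded) or one of the implicit $\lambda$-dependences in $\chi'(\epsilon^{-1}d_g)$-type terms inside $R_{\lambda,\epsilon}$. In every case one picks up at most a factor $d_g$ or $\epsilon^{-1}$, i.e.\ at worst a factor $C\epsilon/\sqrt\lambda$ or $C/(\epsilon\sqrt\lambda)$ relative to the undifferentiated bound; since the undifferentiated estimate already contains the exponential gain, differentiating preserves it. Alternatively, and perhaps more cleanly, one can use that $\partial_\lambda$ applied to $(-\Delta_g+\lambda)^{-1}$ is $-(-\Delta_g+\lambda)^{-2}$, so $\partial_\lambda\gamma^{(n)}_{\lambda,\epsilon}$ is a sum of chains in which exactly one resolvent kernel is replaced by its square; one then uses the analogue of \eqref{eq:tbounds}--\eqref{eq:rbounds} for the squared kernels, which only improves the $\lambda$-decay. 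Setting $x=y$ at the end and collecting the at most $n+1$ terms and summing the geometric factor $(A\mathcal M_\epsilon/(\epsilon^2\lambda))^{n-1}$ gives the stated bound.

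The main obstacle is bookkeeping: one must carefully verify that along the ``long'' link the extraction of $e^{-\epsilon\sqrt\lambda/2}$ costs at most a bounded multiple of the corresponding operator-norm factor — in particular, that the truncated kernels $e^{-\sqrt\lambda d_g}U_0/(4\pi d_g)\cdot\1(\epsilon/2\le d_g<\epsilon)$ still satisfy $L^\infty_xL^2_y$ and $L^\infty_yL^1_x$ bounds of the same order as in \eqref{eq:tbounds}, \eqref{eq:rbounds} up to the gained exponential — and that when the long link is an interior $R_{\lambda,\epsilon}$ factor the Cauchy--Schwarz splitting in \eqref{eq:bound-gamman} still closes with the right number of $L^1$ versus $L^2$ factors. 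A secondary technical point is making the $\partial_\lambda$ estimate uniform for \emph{all} $\lambda>0$ (not just large $\lambda$), which is why the lemma keeps the full factor $(A\mathcal M_\epsilon/(\epsilon^2\lambda))^{n-1}$ rather than assuming $\epsilon^2\lambda$ large; but this is automatic since each term in the differentiated series is individually estimated, and convergence of the sum is only invoked afterwards in the proof of Proposition \ref{prop:avak}.
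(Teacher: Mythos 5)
Your plan is essentially the paper's own proof: decompose $b^{(n)}_{\lambda,\epsilon}$ according to which link violates $d_g<\epsilon/2$, extract the factor $e^{-\epsilon\sqrt\lambda/4}$ from an $L^2$ bound on the kernel truncated to $\epsilon/2\le d_g<\epsilon$, treat $\partial_\lambda$ by the product rule with each differentiated factor costing $\epsilon/\sqrt\lambda$, and close the chain with the same Cauchy--Schwarz factorization as in \eqref{eq:bound-gamman} (placing the $(x,z_1)$ factor and the long-link factor in $L^2$). The only caveat is that your ``alternative'' treatment of $\partial_\lambda$ via $\partial_\lambda(-\Delta_g+\lambda)^{-1}=-(-\Delta_g+\lambda)^{-2}$ does not apply, since $T_{\lambda,\epsilon}$ and $R_{\lambda,\epsilon}$ are explicit parametrix kernels rather than resolvents; stick with your first route, which is the one the paper uses.
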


\begin{proof}
	We first bound the integrand of $b_{\lambda,\epsilon}^{(n)}$ by absolute values and then bound
	$$
	1- \1(d_g(z_n,x)<\epsilon/2 \,,\ d_g(z_{n-1},z_n)<\epsilon/2\,,\ldots, d_g(z_1,z_2)<\epsilon/2) \le \sum_{k=1}^{n} \1(d_g(z_k,z_{k+1})\ge\epsilon/2) \,,
	$$
	where we put $z_{n+1}=x$ on the right side. In this way we can bound $|\partial_\lambda b^{(n)}_{\lambda,\epsilon}(x,x)|$ by a sum of $n(n+1)$ integrals. The factor of $n$ comes from the constraint $\1(d_g(z_k,z_{k+1})\ge\epsilon/2)$, $k=1,\ldots,n$, and the factor $n+1$ comes from the product rule when differentiating with respect to $\lambda$ the product of $n+1$ functions.
	
	The factors without $\lambda$ derivatives can be bounded as before by
	\begin{align*}
		|T_{\lambda,\epsilon}(x,y)| & \le C \frac{e^{-\sqrt\lambda d_g(x,y)}}{d_g(x,y)} U_0(x,y) \1(d_g(x,y)<\epsilon) \,, \\
		|R_{\lambda,\epsilon}(x,y)| & \le C \mathcal M_\epsilon \frac{e^{-\frac12\sqrt\lambda d_g(x,y)}}{\epsilon^2 d_g(x,y)} \min\{U_0(x,y)^2,U_0(y,x)^2\} \1(d_g(x,y)<\epsilon) \,.
	\end{align*}
	Using the explicit expression for $T_{\lambda,\epsilon}$ and $R_{\lambda,\epsilon}$ one easily finds that
	\begin{align*}
		|\partial_\lambda T_{\lambda,\epsilon}(x,y)| & \le C \frac{e^{-\sqrt\lambda d_g(x,y)}}{\sqrt\lambda} U_0(x,y) \1(d_g(x,y)<\epsilon) \,, \\
		|\partial_\lambda R_{\lambda,\epsilon}(x,y)| & \le C \mathcal M_\epsilon \frac{e^{-\sqrt\lambda d_g(x,y)}}{\epsilon^2 \sqrt{\lambda}} \min\{U_0(x,y)^2,U_0(y,x)^2\} \1(d_g(x,y)<\epsilon) \,.
	\end{align*}
	We see that the effect of the derivative $\partial_\lambda$ is to multiple the bounds by a factor of $d_g/\sqrt\lambda$ and, due to the localization of the integrals, this can be bounded by $\epsilon/\sqrt\lambda$. Moreover, we see that the difference between the bounds involving $R_{\lambda,\epsilon}$ and $T_{\lambda,\epsilon}$ (and their derivatives) is a factor $\epsilon^{-2}$ in the former. This shows that, again with the convention $z_{n+1}=x$,
	\begin{align*}
		|\partial_\lambda b^{(n)}_{\lambda,\epsilon}(x,x)| & \le C (n+1)\mathcal M_\epsilon^n \frac{1}{\epsilon^{2n}} \frac{\epsilon}{\sqrt{\lambda}} \sum_{k=1}^n \int dv_g(z_n) \cdots \int dv_g(z_1) \1(d_g(z_k,z_{k+1})\ge\epsilon/2) \\
		& \qquad \times 
		\frac{e^{-\frac12\sqrt\lambda d_g(x,z_1)}}{d_g(x,z_1)} U_0(x,z_1) \1(d_g(x,z_1)<\epsilon) \\
		& \qquad \times
		\frac{e^{-\frac12\sqrt\lambda d_g(z_1,z_2)}}{d_g(z_1,z_2)} \min\{U_0(z_1,z_2)^2,U_0(z_2,z_1)^2\} \1(d_g(z_1,z_2)<\epsilon)
		\cdots \\
		& \qquad\times
		\frac{e^{-\frac12\sqrt\lambda d_g(z_{n},z_{n+1})}}{d_g(z_{n},z_{n+1})} \min\{U_0(z_{n},z_{n+1})^2,U_0(z_{n+1},z_{n})^2\} \1(d_g(z_{n},z_{n+1})<\epsilon).
	\end{align*}
	For all $1\le k\le n$ we bound
	\begin{align*}
		& \left| \int dv_g(z_1)\cdots \int dv_g(z_n)\, K_0(x,z_1) K_1(z_1,z_2)\cdots K_n(z_n,y) \right| \\
		& \le \| K_0 \|_{L^\infty_x L^2_y} \left( \prod_{j=1}^{k-1} \|K_j\|_{L^\infty_x L^1_y}^{1/2} \|K_j\|_{L^\infty_y L^1_x}^{1/2} \right)
		\| K_k\|_{L^\infty_y L^2_x} \left( \prod_{\ell=k+1}^n \| K_\ell \|_{L^\infty_y L^1_x} \right),
	\end{align*}
	which is proved by factorizing the kernel as
	$$
	\left( K_0 K_1^{1/2}\cdots K_{k-1}^{1/2} K_{k+1}^{1/2} \cdots K_n^{1/2} \right) \\
	\times \left( K_1^{1/2} \cdots K_{k-1}^{1/2} K_k K_{k+1}^{1/2} \cdots K_n^{1/2} \right)
	$$
	(suppressing the arguments for simplicity) and applying the Cauchy--Schwarz inequality. This bound allows us to always put the factor involving the pair $(x,z_1)$ (which only has a single $U_0$) and the factor involving $(z_k,z_{k+1})$ (which has the additional cut-off away from the diagonal) into $L^2$. Using \eqref{eq:tbounds}, \eqref{eq:rbounds} together with
	$$
	\left\|\frac{e^{-\frac12\sqrt\lambda d_g(x,y)}}{d_g(x,y)}\min\{U_0(x,y)^2,U_0(y,x)\}^2\1(\epsilon/2\le d_g(x,y)<\epsilon)\right\|_{L^\ii_y L^2_x} \le \frac{C \cM_\epsilon'}{\lambda^{1/4}} e^{-\frac14\epsilon\sqrt{\lambda}},
	$$
	yields the bound 
	$$|\partial_\lambda b^{(n)}_{\lambda,\epsilon}(x,x)|\le Cn(n+1) \mathcal M_\epsilon' \,\mathcal M_\epsilon \,\frac{e^{-\epsilon\sqrt{\lambda}/4}}{\epsilon \lambda} \left(\frac{A\mathcal M_\epsilon}{\epsilon^2\lambda}\right)^{n-1}.$$
	The term $n(n+1)$ here can be dropped by increasing the absolute constant $A$. This completes the proof of the lemma.
\end{proof}

\begin{proof}[Proof of Proposition \ref{prop:avak}]
	With $r^{(n)}_\epsilon$ from Lemma \ref{lem:main} let $R_\epsilon(t,x)=\sum_{n\ge1}(-1)^{n+1} r^{(n)}_\epsilon(t,x)$. According to that lemma, if we choose $\epsilon=\epsilon_0=\rho/ \sqrt{2A\mathcal M_{\rho/2}}$,  then the series defining $R_\epsilon(t,x)$ converges and for all $x\in M$, $R_\epsilon(\cdot,x)$ is $C^1$ on $[0,\infty)$ and for all $t>0$,
	$$
	|R_\epsilon(t,x)| \le C \,\mathcal M_{\rho/2}^{1/2} \, \mathcal M_{\rho/2}' \,\frac{t}{\rho}.
	$$
	We may assume that the absolute constant in the definition of $\epsilon$ satisfies $A\ge 2$. This implies that $\epsilon_0\le\rho/2$, since $\mathcal M_\epsilon\ge 1$, which in turn is a consequence of $U_0(x,x)=1$.
	
	This bound on $R_\epsilon$ and the bound from Lemma \ref{lem:remainder} imply that $\gamma_{\lambda,\epsilon}$ can be differentiated with respect to $\lambda$, that this derivative is given by differentiating its series expansion termwise and that for all $\lambda\ge 2 A'\cM_{\rho/2}/\epsilon^2$
	\begin{align*}
		\left| \partial_\lambda\gamma_{\lambda,\epsilon}(x,x)+\int_0^\ii\frac{R_\epsilon(t,x)}{(t+\lambda)^3}\,dt \right| & = \left| \sum_{n\ge 1} (-1)^{n+1} \left(\gamma_{\lambda,\epsilon}^{(n)}(x,x) - a_{\lambda,\epsilon}^{(n)}(x,x) \right) \right| \\
		& = \left| \sum_{n\ge 1} (-1)^{n+1} b_{\lambda,\epsilon}^{(n)}(x,x) \right| \\
		& \le C \mathcal M_\epsilon\, \mathcal M_\epsilon' \,\frac{e^{-\epsilon\sqrt\lambda/4}}{\epsilon\lambda}\\
		& \le C \mathcal M_{\rho/2}^{3/2}\, \mathcal M_{\rho/2}' \frac{e^{-\epsilon\sqrt\lambda/4}}{\rho\lambda},
	\end{align*}
    where in the last line we used the fact that $\mathcal{M}_\epsilon$ and $\mathcal M_\epsilon'$  are nondecreasing in $\epsilon$ and the choice of $\epsilon$. On the other hand, by definition of $\gamma_{\lambda,\epsilon}$ and a computation of $\partial_\lambda T_{\lambda,\epsilon}(x,x)$ we have
	$$
	\partial_\lambda G_\lambda(x,x) = \partial_\lambda T_{\lambda,\epsilon}(x,x) - \partial_\lambda\gamma_{\lambda,\epsilon}(x,x) = -\frac{1}{8\pi\sqrt{\lambda}} - \partial_\lambda\gamma_{\lambda,\epsilon}(x,x) \,.
	$$
	Combining this identity with the bound on $\partial_\lambda\gamma_{\lambda,\epsilon}$, we obtain the bound in the proposition.
\end{proof}

\begin{remark}\label{rk:3d}
 The above proof is specific to three space dimensions because of the particular form of the resolvent kernel of the Laplacian in three dimensions, 
 $$(-\Delta_{\R^3}+\lambda)^{-1}(x,y)=\frac{e^{-\sqrt{\lambda}|x-y|}}{4\pi|x-y|} \,.$$
 In particular, we use in a crucial way that $e^{\sqrt{\lambda}|x-y|}(-\Delta_{\R^3}+\lambda)^{-1}(x,y)$ is independent of $\lambda$, so that the terms $a^{(n)}$ defined above can be written as exact Stieltjes transforms. This fact is not true anymore in other dimensions: for instance, in dimension five one has
 $$(-\Delta_{\R^5}+\lambda)^{-1}(x,y)=\frac{e^{-\sqrt{\lambda}|x-y|}}{8\pi^2|x-y|^3}\left(1+\sqrt{\lambda}|x-y|\right)$$
 and it is not clear how to deal with the term $1+\sqrt{\lambda}|x-y|$ on the right. In the footnote on p.~328 of \cite{Avakumovic-56}, Avakumovi\'c mentions that in higher dimensions, one should use the heat kernel instead of Green's functions, but without any further details. We were not able to reconstruct his argument in higher dimensions. 
\end{remark}

\begin{remark}[Regularity needed on $g$]\label{rk:regularity-g}
The previous arguments shows that we need uniform bounds on $\Delta U_0$, so, in particular, the assumption $U_0\in C^2$ suffices. By the definition of $U_0$, this means that the exponential map is $C^3$, and since the exponential map may be seen as a time-one flow of an ODE whose vector field depends on the Christoffel symbols (that involve one derivative of $g$), we see that overall this argument requires $g\in C^4$. It is interesting to compare this regularity with the one found in \cite{Donnelly-01,Smith-06} for the (weaker) $L^\ii$ bounds for eigenfunctions or quasimodes, where they only need essentially $C^2$ regularity (which is the minimal regularity needed for geodesics to exist). The additional two derivatives needed here can be explained by the choice of the parametrix. For $L^\ii$ bounds of eigenfunctions, one may replace the function $U_0$ by $1$ in the parametrix $T_{\lambda,\epsilon}$ above, which gives bounds of the same order for $R_{\lambda,\epsilon}$ (and hence no information on $\Delta U_0$ is indeed). We, however, need more than mere bounds for $R_{\lambda,\epsilon}$, we need that $e^{\sqrt{\lambda}d_g(x,y)}R_{\lambda,\epsilon}(x,y)$ is independent of $\lambda$ for $d_g(x,y)<\epsilon/2$, which motivates the introduction of $U_0$ in the parametrix.
\end{remark}

%%%%%%%%%%%%%
%%%%%%%%%%%%%

\section{Adding singular potentials}\label{sec:avakv}

The previous construction is quite robust and allows the inclusion of singular potentials. Indeed, let $V:M\to\R$ belonging to the \emph{Kato class} defined as the set of all measurable $V$ such that 
$$
\lim_{r\to0_+} \|V\|_{\cK(r)}=0 \,,
$$
where
$$
\|V\|_{\cK(r)} = \sup_{x\in M}\int_{d_g(x,y)<r}\frac{|V(y)|}{d_g(x,y)}\,dv_g(y) \,.
$$
Then $-\Delta_g+V$ defines a bounded below quadratic form, as we recall below in Appendix \ref{app:op-kato}. Define for $\lambda>-\inf\spec(-\Delta_g+V)$ and $x,y\in M$,
$$G_\lambda^V(x,y):=(-\Delta_g+V+\lambda)^{-1}(x,y) \,.$$

\begin{proposition}\label{prop:greender}
	There are constants $C_1,\ldots,C_6\ge 1$ and for every Kato class $V: M\to\R$ there is a constant $\epsilon_0>0$ such that for all $\epsilon\in(0,\epsilon_0]$ there are functions $R^V_\epsilon,r_{0,\epsilon}^V:(0,\infty)\times M\to\R$ with the following properties. For all $x\in M$, all $\lambda\ge C_1/\epsilon^2$, and all $\epsilon\in(0,\epsilon_0]$
	$$
	\left| \partial_\lambda G_\lambda^V(x,x) + \frac{1}{8\pi\sqrt\lambda} - \int_0^\infty \frac{t^{3/2}\, r_{0,\epsilon}^V(t,x) + R^V_\epsilon(t,x)}{(t+\lambda)^3}\,dt \right|
	\le C_2\, \frac{e^{-\epsilon\sqrt{\lambda}/4}}{\epsilon\lambda},
	$$
	and for all $x\in M$, all $\epsilon\in(0,\epsilon_0]$, all $t>0$,
	$$
	\left| R^V_\epsilon(t,x) \right| \le C_3\,\rho^{-2} \epsilon t \,,
	\qquad
	\left| r_{0,\epsilon}^V(t,x)\right|\le C_4\|V\|_{\cK(\epsilon)}.$$
	Moreover, for all $x\in M$, all $0<t\le t'$, and all $\epsilon\in(0,\epsilon_0]$,
	\begin{equation}
	\label{eq:almostmono1}
	\left| r_{0,\epsilon}^V(t,x) - r_{0,\epsilon}^V(t',x)\right|\le C_5\|V\|_{\cK(\epsilon)}\frac{\sqrt{t'}-\sqrt t}{\sqrt t}.
	\end{equation}
	\begin{equation}
	\label{eq:almostmono2}
	\left|t^{-3/2} R_\epsilon^V(t,x)-(t')^{-3/2} R_\epsilon^V(t',x)\right| \le C_5\,\frac{\epsilon^2}{\rho^2}\,\frac{\sqrt{t'}-\sqrt t}{\sqrt t}.
	\end{equation}
	The function $r_{0,\epsilon}^V$ is explicitly given by \eqref{eq:defr0} and \eqref{eq:r0n}. The constant $\epsilon_0$ satisfies
	\begin{equation*}
	C_6 \, \|V\|_{\mathcal K(\epsilon_0)} \le 1 \,,
	\qquad
	C_6 \, \epsilon_0 \le \rho \,,
	\end{equation*}
	and the constants $C_1,\ldots,C_6$ depend only on $\mathcal M_{\rho/2}$ and $\mathcal M_{\rho/2}'$.
\end{proposition}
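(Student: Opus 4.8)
The plan is to run the proof of Proposition~\ref{prop:avak} with the \emph{same} parametrix $T_{\lambda,\epsilon}$, absorbing $V$ into the remainder of the parametrix equation. Since $(-\Delta_g+\lambda)_x T_{\lambda,\epsilon}(x,y)=\delta_y(x)+R_{\lambda,\epsilon}(x,y)$, the operator $\gamma^V_{\lambda,\epsilon}:=T_{\lambda,\epsilon}-G^V_\lambda$ solves $\gamma^V_{\lambda,\epsilon}=T_{\lambda,\epsilon}\widetilde R^V_{\lambda,\epsilon}-\gamma^V_{\lambda,\epsilon}\widetilde R^V_{\lambda,\epsilon}$ with the enlarged remainder $\widetilde R^V_{\lambda,\epsilon}(x,y):=R_{\lambda,\epsilon}(x,y)+V(x)T_{\lambda,\epsilon}(x,y)$, hence the Neumann series $\gamma^V_{\lambda,\epsilon}=\sum_{n\ge1}(-1)^{n+1}T_{\lambda,\epsilon}(\widetilde R^V_{\lambda,\epsilon})^n$. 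The one genuinely new operator bound is
\[
\|V T_{\lambda,\epsilon}\|_{L^\ii_y L^1_x}\le C\,\mathcal{M}'_\epsilon\,\|V\|_{\cK(\epsilon)},
\]
which is exactly where the Kato norm enters, since $\int|V(x)T_{\lambda,\epsilon}(x,y)|\,dv_g(x)\le\frac{\mathcal{M}'_\epsilon}{4\pi}\int_{d_g(x,y)<\epsilon}\frac{|V(x)|}{d_g(x,y)}\,dv_g(x)$. Together with \eqref{eq:tbounds}--\eqref{eq:rbounds} this shows the series converges once $\lambda\epsilon^2$ is large and $\|V\|_{\cK(\epsilon)}$ is small, which forces $\epsilon_0$ to be chosen with $C_6\|V\|_{\cK(\epsilon_0)}\le1$, $C_6\epsilon_0\le\rho$ --- possible precisely because $V$ is Kato. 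One must note that, since neither $\int|V|/d_g^2$ nor $\int|V|^2$ is Kato--controlled, a factor $V T_{\lambda,\epsilon}$ can only be estimated in $L^1$, never in $L^2$.

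Next I would expand each $n$-th order term over the $2^n$ ways of choosing, at each of its $n$ factors, either $R_{\lambda,\epsilon}$ or $V T_{\lambda,\epsilon}$, and split the resulting terms into two families. For a \emph{pure--potential} pattern (all factors $V T_{\lambda,\epsilon}$), the product times $e^{\sqrt\lambda(d_g(x,z_1)+\cdots+d_g(z_n,x))}$ is $\lambda$--independent everywhere, so applying $\partial_\lambda$ and the Yukawa representation \eqref{eq:repr}--\eqref{eq:kappa} as in Lemma~\ref{lem:main} turns such a term into an honest Stieltjes transform $-\int_0^\ii\frac{t^{3/2}r^{V,(n)}_{0,\epsilon}(t,x)}{(t+\lambda)^3}\,dt$; summing over $n$ defines $r^V_{0,\epsilon}$. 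For a pattern with at least one $R_{\lambda,\epsilon}$, I would split as in Proposition~\ref{prop:avak} into the diagonal part (all consecutive distances $<\epsilon/2$, where all $\chi',\chi''$ vanish and $e^{\sqrt\lambda(\cdots)}\times(\text{product})$ is again $\lambda$--independent, producing Stieltjes transforms whose numerators sum --- together with the all--$R$ patterns, which reproduce Avakumovi\'c's $R_\epsilon$ --- to $R^V_\epsilon$) and the far part (some consecutive distance $\ge\epsilon/2$, whence an extra $e^{-\epsilon\sqrt\lambda/2}$ from the corresponding $R_{\lambda,\epsilon}$ or $V T_{\lambda,\epsilon}$); the far part is controlled by the half--kernel Cauchy--Schwarz factorisation of Lemma~\ref{lem:remainder}, keeping each $V T_{\lambda,\epsilon}$ in $L^1$ (via the Kato norm, with the extra exponential if it carries the far edge) and using the leading $T_{\lambda,\epsilon}(x,z_1)$ and any one $R_{\lambda,\epsilon}$ for the two $L^2$ slots, giving the remainder $C_2 e^{-\epsilon\sqrt\lambda/4}/(\epsilon\lambda)$.

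The estimates on $r^V_{0,\epsilon}$ and $R^V_\epsilon$ rest on the algebraic identity
\[
\Big(\textstyle\sum_{i=0}^n d_g(z_i,z_{i+1})\Big)\prod_{i=0}^{n}\frac{1}{d_g(z_i,z_{i+1})}=\sum_{j=0}^{n}\prod_{i\ne j}\frac{1}{d_g(z_i,z_{i+1})},\qquad z_0=z_{n+1}=x,
\]
so that the geodesic--length factor $S$ produced by $\partial_\lambda$ cancels exactly one of the $n+1$ singular denominators. For the pure--potential terms this leaves, for each $j$, a chain of $n$ factors $|V|/d_g$, which I would estimate by integrating the variables out one at a time using $\frac{1}{ab}\le\frac1c\big(\frac1a+\frac1b\big)$ for geodesic triangles ($c=d_g(z_{k-1},x)\le a+b=d_g(z_{k-1},z_k)+d_g(z_k,x)$), the bound $\int_{d_g<\epsilon}|V|/d_g\le\|V\|_{\cK(\epsilon)}$, and $\|V\|_{\cK(2\epsilon)}\le C\|V\|_{\cK(\epsilon)}$; since $\kappa$ is bounded this gives $|r^{V,(n)}_{0,\epsilon}(t,x)|\le(C\mathcal{M}'_\epsilon\|V\|_{\cK(\epsilon)})^n$ uniformly in $t$, whence $|r^V_{0,\epsilon}(t,x)|\le C_4\|V\|_{\cK(\epsilon)}$. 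For the $R$--containing diagonal terms one uses $|\kappa(r)|\le Cr^{-1}$, so $|S\kappa(\sqrt tS)|\le Ct^{-1/2}$ and the numerators are $\le Ct$ times a product of small factors --- the $\epsilon^{-2}$ carried by each $R_{\lambda,\epsilon}$ being absorbed by the $\epsilon^2$ from integrating out that variable, and each $R$ factor also preventing the accumulation of an uncontrolled $1/d_g^2$ --- so $|R^V_\epsilon(t,x)|\le C_3\rho^{-2}\epsilon t$. The almost--monotonicity estimates \eqref{eq:almostmono1}, \eqref{eq:almostmono2} follow from the same computation applied to $S(\kappa(\sqrt tS)-\kappa(\sqrt{t'}S))$, which by $|\kappa'(r)|\le Cr^{-1}$ is $\le CS\,\frac{\sqrt{t'}-\sqrt t}{\sqrt t}$ --- again carrying a spare $S$ that cancels a denominator. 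Finally, since $\partial_\lambda T_{\lambda,\epsilon}(x,x)=-1/(8\pi\sqrt\lambda)$ and $G^V_\lambda=T_{\lambda,\epsilon}-\gamma^V_{\lambda,\epsilon}$, assembling the three pieces --- and justifying the termwise $\lambda$--differentiation of the Neumann series from the above bounds --- yields the asserted decomposition.

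The step I expect to be the main obstacle is keeping the pure--potential chains inside the Kato class: iterating $T_{\lambda,\epsilon} V T_{\lambda,\epsilon}\cdots V T_{\lambda,\epsilon}$ naively leads, once the interior variables are integrated out, to a collision of the two boundary singular factors $d_g(x,z_1)^{-1}$ and $d_g(z_n,x)^{-1}$, hence to a divergent $\int|V|/d_g^2$; it is the combination of the spare length factor $S$ (from the Yukawa identity) with the geodesic triangle inequality that rescues the argument. Executing this while tracking all $2^n$ sign patterns and the exact powers of $\epsilon$, $\rho$ and $\|V\|_{\cK(\epsilon)}$ required by the stated constants (and establishing \eqref{eq:almostmono1}, \eqref{eq:almostmono2}) is the delicate part.
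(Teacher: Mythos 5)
Your proposal follows essentially the same route as the paper: the same parametrix and Neumann series, the same extraction of the pure-potential terms $\sigma^{(n,V)}_{\lambda,\epsilon}$ (which become the $t^{3/2}r^V_{0,\epsilon}$ Stieltjes transforms) from the terms containing at least one $R_{\lambda,\epsilon}$, which are then split into diagonal and far parts exactly as in Lemmas \ref{lem:mainv} and \ref{lem:bv}, with the first $R$-factor and the leading $T$-factor reserved for the two $L^2$ slots in the Cauchy--Schwarz factorization. The chain estimate you sketch via the identity $S\prod d_g^{-1}=\sum_j\prod_{i\neq j}d_g^{-1}$ and the geodesic triangle inequality is precisely the Rodnianski--Schlag bound \eqref{eq:rodschl}, which the paper simply cites, and your use of $|\kappa(r)|\le Cr^{-1}$ and $|\kappa'(r)|\le Cr^{-1}$ for the $\mathcal O(t)$ and almost-monotonicity estimates matches Lemmas \ref{lem:r0variation} and \ref{lem:g}.
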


Before proving this proposition let us derive the following important consequence which is the technical main result of this paper.

\begin{corollary}\label{coro:weyl-V}
	Let $V:M\to\R$ be in the Kato class and let $\epsilon_0>0$ be as in Proposition \ref{prop:greender}. Then, for all $\epsilon\in(0,\epsilon_0]$ we have, as $\lambda\to\infty$, uniformly in $x\in M$,
	\begin{equation}\label{eq:weyl-corrected}
	\1(-\Delta_g+V\le t)(x,x)=\frac{t^{3/2}}{6\pi^2}- \frac12\, r_{0,\epsilon}^V(t,x)\, t^{3/2}+\cO^{(\epsilon)}(t),\quad t\to+\ii
	\end{equation}
	where the $\cO^{(\epsilon)}$ is uniform in $x\in M$ (but depends on $\epsilon$). More precisely, we have for all $x\in M$ and all $t\ge 0$,
	$$
	\left| \1(-\Delta_g+V\le t)(x,x) - \frac{t^{3/2}}{6\pi^2} + \frac12\, r_{0,\epsilon}^V(t,x)\, t^{3/2} \right| \le C_0\, \epsilon^{-1} \left( t + \epsilon^{-2} \right),
	$$
	where $C_0$ depends only on upper bounds on $\mathcal M_{\rho/2}$ and $\mathcal M_{\rho/2}'$.
\end{corollary}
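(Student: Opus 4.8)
The plan is to run the argument of Corollary~\ref{coro:weyl-3D} with Proposition~\ref{prop:greender} in the role of Proposition~\ref{prop:avak}; the one new structural ingredient is that Theorem~\ref{thm:stieljes-3D} is now applied with a \emph{non-constant} leading coefficient $B_0$. Fix $x\in M$ and $\epsilon\in(0,\epsilon_0]$ and put
$$
A(t,x):=\1(-\Delta_g+V\le t)(x,x)-\frac{t^{3/2}}{6\pi^2}+\frac12\,r_{0,\epsilon}^V(t,x)\,t^{3/2}+\frac12\,R^V_\epsilon(t,x) \,.
$$
As in the proof of Corollary~\ref{coro:weyl-3D} one has $\int_0^\infty(t+\lambda)^{-3}\1(-\Delta_g+V\le t)(x,x)\,dt=-\tfrac12\partial_\lambda G^V_\lambda(x,x)$ (we may assume $-\Delta_g+V\ge0$, the general bounded-below case requiring only minor changes), and, using $\int_0^\infty(t+\lambda)^{-3}t^{3/2}\,dt=\tfrac{3\pi}{8}\lambda^{-1/2}$, Proposition~\ref{prop:greender} divided by $2$ gives
$$
\left|\int_0^\infty\frac{A(t,x)}{(t+\lambda)^3}\,dt\right|\le\frac{C_2}{2}\,\frac{e^{-\epsilon\sqrt\lambda/4}}{\epsilon\lambda}\qquad\text{for all }\lambda\ge C_1/\epsilon^2 \,.
$$
I would then decompose $A(t,x)=B_0(t)\,t^{3/2}+B_1(t)+B_2(t)$ with $B_0(t)=-\tfrac1{6\pi^2}+\tfrac12 r_{0,\epsilon}^V(t,x)$, $B_1(t)=\1(-\Delta_g+V\le t)(x,x)$, $B_2(t)=\tfrac12 R^V_\epsilon(t,x)$, and apply Theorem~\ref{thm:stieljes-3D} in the simplified form \eqref{eq:stieltjessimple}, with $t_0=0$, $\delta=4/\epsilon$, $\Lambda=C_1/\epsilon^2$ and Tauberian rate $\epsilon/4$.

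Checking the hypotheses of Theorem~\ref{thm:stieljes-3D}: $B_0$ is bounded since $|r_{0,\epsilon}^V|\le C_4\|V\|_{\cK(\epsilon)}\le C_4/C_6$; $B_1$ is nondecreasing and right-continuous; $C_{B_2}=\sup_{t>0}|R^V_\epsilon(t,x)|/(2t)\le C_3\epsilon/(2\rho^2)$; and the right-continuity and local bounded variation of $u\mapsto u^3B_0(u^2)$ and $u\mapsto B_2(u^2)$ follow from the explicit formula for $r^V_{0,\epsilon}$, the construction of $R^V_\epsilon$, and the estimates \eqref{eq:almostmono1}--\eqref{eq:almostmono2}. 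The genuinely new point — compared with Corollary~\ref{coro:weyl-3D}, where $B_0$ was a negative constant and the almost-monotonicity held trivially — is the almost-monotonicity of the non-constant $B_0$: for $0\le u\le v\le u+\delta$, estimate \eqref{eq:almostmono1} applied with $t=u^2\le t'=v^2$ gives $|B_0(v^2)-B_0(u^2)|\le\tfrac12 C_5\|V\|_{\cK(\epsilon)}(v-u)/u$, hence $u(B_0(v^2)-B_0(u^2))\ge-\tfrac12 C_5\|V\|_{\cK(\epsilon)}\delta$, i.e.\ $C_{B_0}\le 2C_5\|V\|_{\cK(\epsilon)}/\epsilon$. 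I would also need a crude a priori bound on $B_1(0)=\1(-\Delta_g+V\le0)(x,x)$: by monotonicity of $\1(-\Delta_g+V\le\cdot)(x,x)$, $\tfrac12\lambda^{-2}B_1(0)\le\int_0^\infty(t+\lambda)^{-3}\1(-\Delta_g+V\le t)(x,x)\,dt=-\tfrac12\partial_\lambda G^V_\lambda(x,x)$, and evaluating the right side at $\lambda=C_1/\epsilon^2$ via Proposition~\ref{prop:greender} — each of its three terms being $\mathcal O(\epsilon)$ after this substitution, using $|r_{0,\epsilon}^V|\le C_4/C_6$, $|R^V_\epsilon(t,x)|\le C_3\rho^{-2}\epsilon t$ and $\epsilon\le\rho$ — yields $B_1(0)\le C\epsilon^{-3}$ with $C$ depending only on $\mathcal M_{\rho/2},\mathcal M_{\rho/2}'$.

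It then remains to substitute these estimates into \eqref{eq:stieltjessimple} and observe that, using systematically $C_6\|V\|_{\cK(\epsilon_0)}\le1$ (so $\|V\|_{\cK(\epsilon)}\le1/C_6$) and $C_6\epsilon_0\le\rho$, every summand there is controlled by $\mathcal M_{\rho/2},\mathcal M_{\rho/2}'$ alone: $\|B_0\|_{L^\infty}$ as above, $(\epsilon/4)^3|B_1(0)|\le C/64$, $(\epsilon/4)C_{B_2}\le C_3\epsilon^2/(8\rho^2)$, $(\epsilon/4)C_0=(\epsilon/4)\cdot\tfrac{C_2}{2\epsilon}=C_2/8$, and $(\epsilon/4)C_{B_0}\le\tfrac12 C_5\|V\|_{\cK(\epsilon)}\le C_5/(2C_6)$; the last estimate is exactly where the choice $\delta=4/\epsilon$ is used, converting $\delta\,\epsilon\,\|V\|_{\cK(\epsilon)}$ into the bounded quantity $\|V\|_{\cK(\epsilon)}$. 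Thus $\mathcal B\le C_0$ with $C_0$ depending only on $\mathcal M_{\rho/2},\mathcal M_{\rho/2}'$, and Theorem~\ref{thm:stieljes-3D} gives $|A(t,x)|\le 4C_0\,\epsilon^{-1}(t+16\epsilon^{-2})$ for all $t\ge0$; absorbing $\tfrac12|R^V_\epsilon(t,x)|\le\tfrac12 C_3\epsilon^{-1}t$ (using $\epsilon\le\rho$) into the right side yields the stated estimate, and \eqref{eq:weyl-corrected} follows, uniformly in $x$ since no constant depends on $x$. I expect this final bookkeeping — verifying that $V$ enters only through the parameter $\epsilon$, so that $C_0$ is $V$-independent — to be the main obstacle: the individual ingredients (the a priori bound on $B_1(0)$, the almost-monotonicity of the non-constant $B_0$ from \eqref{eq:almostmono1}, and the scaling choice $\delta=4/\epsilon$) are each routine, but they must be combined with care.
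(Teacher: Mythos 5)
Your proposal follows essentially the same route as the paper's proof: the same decomposition $B_0=-(6\pi^2)^{-1}+\tfrac12 r_{0,\epsilon}^V$, $B_1=\1(-\Delta_g+V\le\cdot)(x,x)$, $B_2=\tfrac12 R_\epsilon^V$, the same verification of almost-monotonicity of the non-constant $B_0$ from \eqref{eq:almostmono1}, the same a priori bound $B_1(0)\le C\epsilon_0^{-3}$ obtained by evaluating $-\partial_\lambda G^V_\lambda(x,x)$ at $\lambda\sim C_1/\epsilon^2$ via Proposition~\ref{prop:greender}, and the same application of \eqref{eq:stieltjessimple} with $\delta\sim\epsilon^{-1}$. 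The minor variations (choosing $\delta=4/\epsilon$ rather than $\epsilon^{-1}$, and deriving the $B_1(0)$ bound from monotonicity under the Stieltjes integral rather than from the localization of the negative spectrum) are cosmetic, and the bookkeeping showing that $V$ enters only through $\epsilon$ is carried out correctly.
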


\begin{proof}[Proof of Corollary \ref{coro:weyl-V} assuming Proposition \ref{prop:greender}]
	Define for all $t\ge0$
	$$A(t,x) :=\1(-\Delta_g+V\le t)(x,x)-t^{3/2}/(6\pi^2)+r_{0,\epsilon}^V(t,x)\,t^{3/2}/2 + R^V_\epsilon(t,x)/2 \,.$$
	As in the proof of Corollary \ref{coro:weyl-3D}, by the spectral theorem for the operator $-\Delta_g +V$ we find
	$$\int_0^\ii\frac{\1(-\Delta_g+V\le t)(x,x)}{(\lambda+t)^3}\,dt= -\frac12\,\partial_\lambda G_\lambda^V(x,x) \,.$$
	Continuing to argue as in that proof, we have by Proposition \ref{prop:greender},
	\begin{align*}
	\left|\int_0^\ii\frac{A(t,x)}{(t+\lambda)^3}\,dt\right| &= \left|-\frac12\partial_\lambda G_\lambda(x,x)-\frac{1}{16\pi\sqrt{\lambda}}+\frac12\int_0^\ii\frac{t^{3/2}\, r_{0,\epsilon}^V(t,x) + R^V_\epsilon(t,x)}{(t+\lambda)^3}\,dt\right| \\
	&\le C_2\frac{e^{-\epsilon\sqrt{\lambda}/4}}{\epsilon\lambda}	 
	\end{align*}
	for all $\lambda\ge C_1/\epsilon^2$. The result then follows from Theorem \ref{thm:stieljes-3D} in the simplified form \eqref{eq:stieltjessimple} with $B_0\equiv  -(6\pi)^{-1}+r_{0,\epsilon}^V(t,x)/2$, $B_1(t)=\1(-\Delta_g+V\le t)(x,x)$ and $B_2(t)= R^V_\epsilon(t,x)/2$. Note that the assumptions on $B_0$ and $B_2$ required for Theorem \ref{thm:stieljes-3D} are satisfied by the bounds in Proposition \ref{prop:greender}. For instance, the bounds \eqref{eq:almostmono1} and \eqref{eq:almostmono2} imply that $u\mapsto u^3 r_{0,\epsilon}^V(u^2,x)$ and $u\mapsto R_\epsilon^V(u^2,x)$ are locally Lipschitz on $[0,\infty)$, and hence continuous and locally of bounded variation on $[0,\infty)$. Moreover, inequality \eqref{eq:almostmono1} implies that the almost monotonicity assumption on $B_0$ in Theorem \ref{thm:stieljes-3D} is satisfied for any $\delta\ge 0$ with $C_{B_0} \le C_5 \|V\|_{\mathcal K(\epsilon)} \delta$. We apply Theorem \ref{thm:stieljes-3D} for any $\epsilon\in(0,\epsilon_0]$ with the choice $\delta = \epsilon^{-1}$. In view of the bounds in Proposition \ref{prop:greender} this gives
	$$
	\left| \1(-\Delta_g+V\le t)(x,x) - \frac{t^{3/2}}{6\pi^2} + \frac12\, r_{0,\epsilon}^V(t,x)\, t^{3/2} \right| \le C_0\, \left( 1 + \epsilon^3 B_1(0) \right) \epsilon^{-1} \left( t + \epsilon^{-2} \right),
	$$
	where $C_0$ depends only on upper bounds on $\mathcal M_{\rho/2}$ and $\mathcal M_{\rho/2}'$. To complete the proof of the corollary, it suffices to show that
	\begin{equation}
	\label{eq:boundnegefs}
	B_1(0) = \1(-\Delta + V \le 0)(x,x) \le C \epsilon_0^{-3} \,.
	\end{equation}
	Since each negative eigenvalue $E_n$ of $-\Delta_g +V$ satisfies $-\lambda_0<E_n\le0$ with $\lambda_0:= 2 C_1/\epsilon_0^2$ (indeed, since the resolvent was shown to exist for all $\lambda\ge C_1/\epsilon_0^2$, this means in particular that the spectrum of $-\Delta_g+V$ is above $-C_1/\epsilon_0^2$), we have
	$$
	\lambda_0^{-2}\, \1(-\Delta + V \le 0)(x,x) \le (-\Delta_g +V+\lambda_0)^{-2}(x,x) = - \partial_\lambda G_{\lambda_0}^V(x,x) \,.
	$$
	On the other hand, Proposition \ref{prop:greender} implies that
	$$
	- \partial_\lambda G_{\lambda_0}^V(x,x)  \le \frac{1}{8\pi\sqrt{\lambda_0}} + \int_0^\infty \frac{t^{3/2} C_4 \|V\|_{\mathcal K(\epsilon)} + C_3 \rho^{-2} \epsilon_0 t}{(t+\lambda_0)^3}\,dt + C_2 \frac{e^{-\epsilon_0\sqrt{\lambda_0}/4}}{\epsilon_0\lambda_0} \le C \lambda_0^{-1/2}
	$$
	with a constant $C$ depending only on upper bounds on $\mathcal M_{\rho/2}$ and $\mathcal M_{\rho/2}'$. Multiplying by $\lambda_0^2$ and recalling the definition of $\lambda_0$, we obtain \eqref{eq:boundnegefs}. This completes the proof of the corollary.
\end{proof}

\subsection{Proof of Proposition \ref{prop:greender}}

We will see that the free parametrix $T_{\lambda,\epsilon}$ is still a good parametrix for $G_\lambda^V$. With $T_{\lambda,\epsilon}$ and $R_{\lambda,\epsilon}$ defined in the previous section, depending on a parameter $\epsilon\in(0,\rho]$ to be determined, we set
$$
\gamma_{\lambda,\epsilon}^V:=T_{\lambda,\epsilon}-G_\lambda^V
$$
and
$$
R_{\lambda,\epsilon}^V:= (-\Delta_g+V+\lambda)\gamma_{\lambda,\epsilon}^V=R_{\lambda,\epsilon}+VT_{\lambda,\epsilon} \,.
$$
We thus have the integral representation
$$
\gamma_{\lambda,\epsilon}^V(x,y)=\int_M T_{\lambda,\epsilon}(x,z) R_{\lambda,\epsilon}^V(z,y)\,dv_g(z)-\int_M\gamma_{\lambda,\epsilon}^V(x,z) R_{\lambda,\epsilon}^V(z,y)\,dv_g(z)
$$
and, at least formally, the series representation
$$
\gamma_{\lambda,\epsilon}^V(x,y)=\sum_{n\ge1}(-1)^{n+1}\gamma_{\lambda,\epsilon}^{(n,V)}(x,y)
$$
with
$$
\gamma_{\lambda,\epsilon}^{(n,V)}(x,y):=\int_M\,dv_g(z_1)\cdots\int_M\,dv_g(z_n)\, T_{\lambda,\epsilon}(x,z_1)R_{\lambda,\epsilon}^V(z_1,z_2)\cdots R_{\lambda,\epsilon}^V(z_n,y) \,.
$$

When trying to prove pointwise bounds on $\gamma_{\lambda,\epsilon}^{(n,V)}$ by the same method as in the previous section, one runs into the problem that, while there are sufficiently good bounds on $\|VT_{\lambda,\epsilon}\|_{L^\ii_y L^1_x}$, see \eqref{eq:tvbounds} below, the norm $\|VT_{\lambda,\epsilon}\|_{L^\ii_y L^2_x}$ might be infinite for Kato class potentials. One can, however, obtain integrated bounds since $T_{\lambda,\epsilon}\in L^\ii_y L^2_x$, showing that the series defining $\gamma_{\lambda,\epsilon}^V$ converges in $L^\ii_y L^2_x$ meaning that $\gamma_{\lambda,\epsilon}^V$ maps $L^2$ to $L^\infty$ and hence $L^2$ to $L^2$. Again by uniqueness of the resolvent, this shows the identity $G_\lambda^V=T_{\lambda,\epsilon}-\gamma_{\lambda,\epsilon}^V$, where $\gamma_{\lambda,\epsilon}^V$ is defined as the sum of the above convergent series.

The way around the impasse of obtaining a pointwise convergent series is to extract the most singular term from $\gamma_{\lambda,\epsilon}^{(n,V)}$. Note that, since $R_{\lambda,\epsilon}^V$ is a sum of two terms, namely, $R_{\lambda,\epsilon}$ and $VT_{\lambda,\epsilon}$, the quantity $\gamma_{\lambda,\epsilon}^{(n,V)}$ can be written as a sum of $2^n$ terms. The most singular term is the one where all factors are $VT_{\lambda,\epsilon}$, that is,
$$
\sigma_{\lambda,\epsilon}^{(n,V)}(x,y): =\int_M\,dv_g(z_1)\cdots\int_M\,dv_g(z_n)\, T_{\lambda,\epsilon}(x,z_1)V(z_1)T_{\lambda,\epsilon}(z_1,z_2)\cdots V(z_n)T_{\lambda,\epsilon}(z_n,y) \,.
$$

Before discussing this singular term, let us derive bounds on the difference $\gamma_{\lambda,\epsilon}^{(n,V)}-\sigma_{\lambda,\epsilon}^{(n,V)}$ that show, in particular, that
the series $\sum_{n\ge 1} (-1)^{n+1} \left( \gamma_{\lambda,\epsilon}^{(n,V)} - \sigma_{\lambda,\epsilon}^{(n,V)} \right) $ converges. We write symbolically (identifying the kernels with operators)
$$
\gamma_{\lambda,\epsilon}^{(n,V)} -\sigma_{\lambda,\epsilon}^{(n,V)} = \sum_{k=1}^n T_{\lambda,\epsilon} (VT_{\lambda,\epsilon})^{k-1} R_{\lambda,\epsilon} (R_{\lambda,\epsilon}^V)^{n-k} = \sum_{k=1}^n (T_{\lambda,\epsilon} V)^{k-1} T_{\lambda,\epsilon} R_{\lambda,\epsilon} (R_{\lambda,\epsilon}^V)^{n-k} \,.
$$
Moreover, by using the Cauchy--Schwarz inequality in the integral connecting $T_{\lambda,\epsilon}$ and $R_{\lambda,\epsilon}$ we find that
\begin{eqnarray}
\label{eq:boundmiddle}
\left\| (T_{\lambda,\epsilon} V)^{k-1} T_{\lambda,\epsilon} R_{\lambda,\epsilon} (R_{\lambda,\epsilon}^V)^{n-k} \right\|_{L^\infty_x L^\infty_y} \le \|T_{\lambda,\epsilon} V\|_{L^\infty_x L^1_y}^{k-1} \|T_{\lambda,\epsilon}\|_{L^\infty_x L^2_y} \|R_{\lambda,\epsilon}\|_{L^\infty_y L^2_x} \|R_{\lambda,\epsilon}^V\|_{L^\infty_y L^1_x}^{n-k} \,.
\end{eqnarray}
Combining this inequality with the bounds \eqref{eq:rbounds}, \eqref{eq:tbounds} and the obvious bounds
\begin{eqnarray}
\label{eq:tvbounds}
\|VT_{\lambda,\epsilon}\|_{L^\ii_y L^1_x}\le C \mathcal M_\epsilon' \, \|V\|_{\cK(\epsilon)} 
\qquad\text{and}\qquad
\|T_{\lambda,\epsilon} V\|_{L^\ii_x L^1_y}\le C \mathcal M_\epsilon' \, \|V\|_{\cK(\epsilon)},
\end{eqnarray}
we obtain
$$
\left| \gamma_{\lambda,\epsilon}^{(n,V)}(x,y) - \sigma_{\lambda,\epsilon}^{(n,V)}(x,y) \right| \le C\, \frac{\mathcal M_\epsilon \mathcal M_\epsilon'}{\epsilon^2\lambda^{1/2}} \left( A \left( \mathcal M_\epsilon' \| V \|_{\cK(\epsilon)} + \frac{\mathcal M_\epsilon}{\epsilon^2\lambda}\right) \right)^{n-1}.
$$
(Similarly as in the previous section, a factor of $n$ can be dropped by enlarging the absolute constant $A$.) This is the desired bound on $\gamma_{\lambda,\epsilon}^{(n,V)}(x,y) - \sigma_{\lambda,\epsilon}^{(n,V)}(x,y)$.

To proceed, we decompose
\begin{equation}
\label{eq:decompv}
\gamma_{\lambda,\epsilon}^{(n,V)}(x,y)-\sigma_{\lambda,\epsilon}^{(n,V)}(x,y)=a^{(n,V)}_{\lambda,\epsilon}(x,y)+b^{(n,V)}_{\lambda,\epsilon}(x,y) \,,
\end{equation}
where $a_{\lambda,\epsilon}^{(n,V)}$ and $b_{\lambda,\epsilon}^{(n,V)}$ are defined in the same way as in the previous sections, namely, in $a_{\lambda,\epsilon}^{(n,V)}$ all $n$ integrations with respect to the variables $z_k$, $k=1,\ldots,n$, are restricted to $d_g(z_k,z_{k+1})<\epsilon/2$, with $z_{n+1}=y$. The following two lemmas are the analogues of Lemmas \ref{lem:main} and \ref{lem:remainder}.

\begin{lemma}\label{lem:mainv}
	There are $A,C>0$ and for every $n\ge 1$ and $\epsilon\in(0,\rho]$ there is a function $r^{(n,V)}_\epsilon:(0,\infty)\times M\to\R$ such that for all $x\in M$
	$$\partial_\lambda a^{(n,V)}_{\lambda,\epsilon}(x,x)= - \int_0^\ii\frac{r^{(n,V)}_\epsilon(t,x)}{(t+\lambda)^3}\,dt
	$$
	and for all $x\in M$, all $t>0$, all $n\ge1$ and all $\epsilon\in(0,\rho]$
	$$
	|r^{(n,V)}_\epsilon(t,x)|\le C \mathcal M_{\rho/2} \mathcal M_{\epsilon}' \,\frac{\epsilon t}{\rho^2} \left( A \left( \mathcal M_\epsilon' \|V\|_{\cK(\epsilon)} + \frac{\mathcal M_{\rho/2}\, \epsilon^2}{\rho^2} \right) \right)^{n-1}.
	$$
\end{lemma}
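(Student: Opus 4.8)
The plan is to follow the proof of Lemma~\ref{lem:main} line by line, the only genuinely new point being how the factors coming from $V$ are accommodated by the Kato norm. First I would use that on $\{d_g<\epsilon/2\}$ the cut-off $\chi(\epsilon^{-1}d_g)$ equals $1$ and $\chi',\chi''$ vanish, so that on the range of arguments relevant to $a^{(n,V)}_{\lambda,\epsilon}$ every one of the kernels $T_{\lambda,\epsilon}$, $VT_{\lambda,\epsilon}$, $R_{\lambda,\epsilon}$, $R^V_{\lambda,\epsilon}$ is, in its variables $z,z'$, of the clean form $e^{-\sqrt\lambda\,d_g(z,z')}/d_g(z,z')$ times a factor $\Phi(z,z')$ that does not depend on $\lambda$ (for $R^V_{\lambda,\epsilon}$ one has $\Phi=(4\pi)^{-1}((-\Delta_g)_zU_0+VU_0)$, and the leading factor $T_{\lambda,\epsilon}(x,z_1)$ additionally carries the harmless $\lambda$-independent cut-off $\chi(\epsilon^{-1}d_g(x,z_1))$). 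Setting $z_0=z_{n+1}=x$ and $D(x,\vec z):=\sum_{j=0}^n d_g(z_j,z_{j+1})$, this gives
$$
a^{(n,V)}_{\lambda,\epsilon}(x,x)=\sum_{k=1}^n\int_{\substack{d_g(z_j,z_{j+1})<\epsilon/2\\ 1\le j\le n}} e^{-\sqrt\lambda\,D(x,\vec z)}\,\Psi^{(n,V)}_{k,\epsilon}(x,\vec z)\,dv_g(z_1)\cdots dv_g(z_n),
$$
where $\Psi^{(n,V)}_{k,\epsilon}$ collects the $\lambda$-independent factors $\Phi$, the weights $V(z_1)\cdots V(z_{k-1})$ coming from the $(VT_{\lambda,\epsilon})^{k-1}$-block, and the product $(d_g(x,z_1)\cdots d_g(z_n,x))^{-1}$; the Kato condition guarantees $\int|\Psi^{(n,V)}_{k,\epsilon}|\,dv_g<\infty$ since each $z_j$-integral of $|V(z_j)|/d_g$ is finite and $U_0,(-\Delta_g)_zU_0$ are bounded on $V_{\rho/2}$.

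Next I would differentiate in $\lambda$ and insert the identity $\partial_\lambda e^{-\sqrt\lambda D}=-\tfrac12\int_0^\infty\frac{t^{3/2}\,D\,\kappa(\sqrt t\,D)}{(t+\lambda)^3}\,dt$ (obtained from \eqref{eq:repr}, with $\kappa$ from \eqref{eq:kappa}, exactly as in the proof of Lemma~\ref{lem:main}), and then exchange the $t$- and $\vec z$-integrations to arrive at the asserted Stieltjes identity with
$$
r^{(n,V)}_\epsilon(t,x):=\frac{t^{3/2}}{2}\sum_{k=1}^n\int_{\substack{d_g(z_j,z_{j+1})<\epsilon/2\\ 1\le j\le n}} D(x,\vec z)\,\kappa\big(\sqrt t\,D(x,\vec z)\big)\,\Psi^{(n,V)}_{k,\epsilon}(x,\vec z)\,dv_g(z_1)\cdots dv_g(z_n).
$$
Both manipulations are legitimate because $|D\,\kappa(\sqrt t\,D)|\le C t^{-1/2}$ by \eqref{eq:kappabound1}, so that $\int_0^\infty\frac{t^{3/2}}{(t+\lambda)^3}\int|D\,\kappa(\sqrt t\,D)\,\Psi^{(n,V)}_{k,\epsilon}|\,dv_g\,dt$ equals a constant times $\lambda^{-1}\int|\Psi^{(n,V)}_{k,\epsilon}|\,dv_g<\infty$.

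For the bound I would again use $|\kappa(r)|\le C r^{-1}$, which gives $|r^{(n,V)}_\epsilon(t,x)|\le C t\sum_{k=1}^n\int|\Psi^{(n,V)}_{k,\epsilon}|\,dv_g$, and then estimate each $n$-fold integral by the Cauchy--Schwarz/Schur scheme underlying \eqref{eq:boundmiddle}: insert the Cauchy--Schwarz inequality at the junction $z_k$ between the two \emph{unweighted} kernels $T_{\lambda,\epsilon}(z_{k-1},z_k)$ and $R_{\lambda,\epsilon}(z_k,z_{k+1})$, estimate the left block $T_{\lambda,\epsilon}(x,z_1)V(z_1)\cdots V(z_{k-1})T_{\lambda,\epsilon}(z_{k-1},z_k)$ by $\|T_{\lambda,\epsilon}V\|_{L^\infty_xL^1_y}^{k-1}\|T_{\lambda,\epsilon}\|_{L^\infty_xL^2_y}$ and the right block $R_{\lambda,\epsilon}(z_k,z_{k+1})R^V_{\lambda,\epsilon}(z_{k+1},z_{k+2})\cdots R^V_{\lambda,\epsilon}(z_n,x)$ by $\|R_{\lambda,\epsilon}\|_{L^\infty_yL^2_x}\|R^V_{\lambda,\epsilon}\|_{L^\infty_yL^1_x}^{n-k}$, all norms being those of the $\lambda$-free versions of the kernels valid on $\{d_g<\epsilon/2\}$ (with $\chi\le1$ used for the leading factor). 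By \eqref{eq:tvbounds} each $V$-weighted factor is $\le C\mathcal M_\epsilon'\|V\|_{\cK(\epsilon)}$; by \eqref{eq:rbounds} and \eqref{eq:tvbounds} each of the $n-k$ factors $\|R^V_{\lambda,\epsilon}\|_{L^\infty_yL^1_x}$ is $\le A(\mathcal M_\epsilon'\|V\|_{\cK(\epsilon)}+\mathcal M_{\rho/2}\epsilon^2/\rho^2)$; and the two remaining $L^2$ norms, estimated precisely as in the proof of Lemma~\ref{lem:main} using $|(-\Delta_g)_zU_0|\le(\rho/2)^{-2}\mathcal M_{\rho/2}\min\{U_0(x,y)^2,U_0(y,x)^2\}$ on $V_{\rho/2}$, together contribute $\le C\mathcal M_{\rho/2}\mathcal M_\epsilon'\epsilon/\rho^2$. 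Multiplying by the factor $t$ and collecting the $(k-1)+(n-k)=n-1$ remaining $L^1$-factors yields exactly the bound stated in the lemma. (A continuity statement for $r^{(n,V)}_\epsilon(\cdot,x)$ in $t$, if needed later, follows from the boundedness of $\kappa'$ as at the end of the proof of Lemma~\ref{lem:main}.)

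The main obstacle, and the only real departure from the $V=0$ argument, is that for a Kato-class potential there is no control over $\int|V|^2/d_g^2$, so no $L^2$ estimate on a $V$-weighted kernel is available; the scheme above is arranged precisely so that every $V$ meets only a single power of $1/d_g$ integrated over the variable carrying $V$ — which is exactly what \eqref{eq:tvbounds} controls — by placing both $L^2$ slots on the unique unweighted pair $T_{\lambda,\epsilon}(z_{k-1},z_k)$, $R_{\lambda,\epsilon}(z_k,z_{k+1})$. Once that routing is fixed, everything else is a repetition of the computations already carried out for $V=0$ in Lemma~\ref{lem:main}.
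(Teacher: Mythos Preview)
Your proposal is correct and follows essentially the same approach as the paper: the same decomposition $\sum_{k=1}^n (T_{\lambda,\epsilon}V)^{k-1}T_{\lambda,\epsilon}R_{\lambda,\epsilon}(R^V_{\lambda,\epsilon})^{n-k}$, the same use of \eqref{eq:repr} to pass to the Stieltjes representation, the same bound $|\kappa(r)|\le Cr^{-1}$, and the same Cauchy--Schwarz routing \eqref{eq:boundmiddle} that places both $L^2$ slots on the unweighted pair so that every $V$-weighted factor is controlled by $\|V\|_{\cK(\epsilon)}$ alone. Your closing remark about why the $L^2$ slot must avoid the $V$-weighted kernels is exactly the point.
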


\begin{proof}
	The proof is similar to that of Lemma \ref{lem:main} and we only sketch the main steps. Decomposing the integrand similarly as in the proof of the bound on $\gamma_{\lambda,\epsilon}^{(n,V)} -\sigma_{\lambda,\epsilon}^{(n,V)}$ and using the same representation formula as in the proof of Lemma \ref{lem:main} we obtain the claimed formula with
	\begin{align*}
	r^{(n,V)}_\epsilon(t,x)& :=\frac{t^{3/2}}{2(4\pi)^{n+1}} \sum_{k=1}^n \int_{d_g(z_n,x)<\epsilon/2}\,dv_g(z_n)\int_{d_g(z_{n-1},z_n)<\epsilon/2}\,dv_g(z_{n-1})\cdots\int_{d_g(z_1,z_2)<\epsilon/2}\,dv_g(z_1) \\
	& \qquad\times (d_g(x,z_1)+\cdots+d_g(z_n,x)) \, \kappa(\sqrt t\,( d_g(x,z_1)+\cdots+d_g(z_n,x))) \\
	& \qquad \times \frac{U_0(x,z_1)\chi(\epsilon^{-1} d_g(x,z_1))}{d_g(x,z_1)}
	\left( \prod_{\ell=1}^{k-1} V(z_\ell) \frac{U_0(z_\ell,z_{\ell+1})}{d_g(z_\ell,z_{\ell+1})} \right) \frac{\Delta U_0(z_k,z_{k+1})}{d_g(z_k,z_{k+1})} \\
	& \qquad\times \left( \prod_{m=k+1}^{n} \frac{\Delta U_0(z_{m},z_{m+1}) + V(z_{m}) U_0(z_{m},z_{m+1})}{d_g(z_m,z_{m+1})}  \right).
	\end{align*}
	Here we use the convention $z_{n+1}=x$. Using \eqref{eq:kappabound1} we obtain
	\begin{align*}
	\left| r^{(n,V)}_\epsilon(t,x)\right| & \le \frac{C t}{2(4\pi)^{n+1}} \sum_{k=1}^n \int_{d_g(z_n,x)<\epsilon/2}\,dv_g(z_n)\int_{d_g(z_{n-1},z_n)<\epsilon/2}\,dv_g(z_{n-1})\cdots\int_{d_g(z_1,z_2)<\epsilon/2}\,dv_g(z_1) \\
	& \qquad \times \frac{U_0(x,z_1)\1(d_g(x,z_1)<\epsilon)}{d_g(x,z_1)}
	\left( \prod_{\ell=1}^{k-1} |V(z_\ell)| \frac{U_0(z_\ell,z_{\ell+1})}{d_g(z_\ell,z_{\ell+1})} \right) \frac{|\Delta U_0(z_k,z_{k+1})|}{d_g(z_k,z_{k+1})} \\
	& \qquad\times \left( \prod_{m=k+1}^{n} \frac{|\Delta U_0(z_{m},z_{m+1})| + |V(z_{m})| U_0(z_{m},z_{m+1})}{d_g(z_m,z_{m+1})}  \right).
	\end{align*}
	Using the bound \eqref{eq:boundmiddle} we can estimate the $k$-th summand on the right side by
	\begin{align*}
	& \left\| \frac{U_0(x,y)\1(d_g(x,y)<\epsilon)}{d_g(x,y)} V(y) \right\|_{L^\infty_x L^1_y}^{k-1} \\
	& \times \left\| \frac{U_0(x,y)\1(d_g(x,y)<\epsilon/2)}{d_g(x,y)} \right\|_{L^\infty_x L^2_y} \frac{\mathcal M_{\rho/2}}{(\rho/2)^2} \left\| \frac{U_0(y,x)^2 \1(d_g(x,y)<\epsilon/2)}{d_g(x,y)} \right\|_{L^\infty_y L^2_x} \\
	& \times 	
	\left( \frac{\mathcal M_{\rho/2}}{(\rho/2)^2} \left\| \frac{U_0(y,x)^2 \1(d_g(x,y)<\epsilon/2)}{d_g(x,y)} \right\|_{L^\infty_y L^1_x} + \left\| V(x) \frac{U_0(x,y)\1(d_g(x,y)<\epsilon/2)}{d_g(x,y)} \right\|_{L^\infty_y L^1_x} \right)^{n-k}.
	\end{align*}
	Using similar bounds as before one sees that this is bounded by a constant times
	\begin{align*}
	\left( A \mathcal M_\epsilon' \|V\|_{\cK(\epsilon)} \right)^{k-1} \sqrt\epsilon \ \frac{\mathcal M_{\rho/2}\cM_\epsilon'\,\sqrt{\epsilon}}{\rho^2} \left( A \left( \frac{\mathcal M_{\rho/2}\,\epsilon^2}{\rho^2} + \mathcal M_\epsilon' \|V\|_{\cK(\epsilon)} \right) \right)^{n-k}.
	\end{align*}
	This leads to the claimed bound.	
\end{proof}

\begin{lemma}\label{lem:bv}
	There are $A,C>0$ such that for all $x\in M$, all $\lambda>0$, all $n\ge 1$, and all $\epsilon\in(0,\rho]$ we have
	$$
	\left| \partial_\lambda b_{\lambda,\epsilon}^{(n,V)}(x,x)\right| \le C \mathcal M_\epsilon' \mathcal M_\epsilon \frac{ e^{-\epsilon \sqrt\lambda/4}}{\epsilon\lambda} \left( A \left( \mathcal M_\epsilon' \|V\|_{\cK(\epsilon)} + \frac{\mathcal M_\epsilon}{\epsilon^2\lambda} \right) \right)^{n-1}.
	$$
\end{lemma}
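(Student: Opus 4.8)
The plan is to adapt the proof of Lemma~\ref{lem:remainder} to the presence of $V$, using the structure already exhibited for $\gamma_{\lambda,\epsilon}^{(n,V)}-\sigma_{\lambda,\epsilon}^{(n,V)}$. First I would expand every factor $R_{\lambda,\epsilon}^V=R_{\lambda,\epsilon}+VT_{\lambda,\epsilon}$ appearing in
$$\gamma_{\lambda,\epsilon}^{(n,V)}-\sigma_{\lambda,\epsilon}^{(n,V)}=\sum_{k=1}^n T_{\lambda,\epsilon}(VT_{\lambda,\epsilon})^{k-1}R_{\lambda,\epsilon}(R_{\lambda,\epsilon}^V)^{n-k},$$
so that $\gamma_{\lambda,\epsilon}^{(n,V)}-\sigma_{\lambda,\epsilon}^{(n,V)}$ becomes a sum of at most $2^n$ operator products, each with $n$ inner vertices $z_1,\dots,z_n$ (and $z_{n+1}=x$ on the diagonal) and $n+1$ kernel factors: a leading $T_{\lambda,\epsilon}(x,z_1)$, which carries a single $U_0$; a distinguished ``clean'' factor $R_{\lambda,\epsilon}(z_k,z_{k+1})$; and $n-1$ further factors, each of the form $VT_{\lambda,\epsilon}$ or $R_{\lambda,\epsilon}$. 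The crucial structural point is that \emph{every} such product still contains the clean factor $R_{\lambda,\epsilon}(z_k,z_{k+1})$, which — unlike $VT_{\lambda,\epsilon}$ for Kato class $V$ — has a finite $L^\infty_yL^2_x$ norm.

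Into each product I would insert, as in Lemma~\ref{lem:remainder}, the inequality $1-\1(d_g(z_i,z_{i+1})<\epsilon/2\ \text{for all}\ i)\le\sum_{i=1}^n\1(d_g(z_i,z_{i+1})\ge\epsilon/2)$, and then apply the Leibniz rule in $\lambda$ to the $n+1$ $\lambda$-dependent factors. This bounds $|\partial_\lambda b_{\lambda,\epsilon}^{(n,V)}(x,x)|$ by at most $Cn^2$ integrals per product, each one being such a product with a single ``far'' factor restricted to $d_g\ge\epsilon/2$ and a single factor replaced by its $\lambda$-derivative. Every such integral is estimated by the Cauchy--Schwarz factorization from the proof of Lemma~\ref{lem:remainder}, always placing the leading $T_{\lambda,\epsilon}(x,z_1)$ into an $L^2$ slot; the second $L^2$ slot is filled by the far factor if it is of $R_{\lambda,\epsilon}$-type — its restriction to $\{d_g\ge\epsilon/2\}$ having $L^\infty_yL^2_x$ norm $\le C\mathcal M_\epsilon\mathcal M_\epsilon'(\epsilon^2\lambda^{1/4})^{-1}e^{-\epsilon\sqrt\lambda/4}$ — and by the clean factor $R_{\lambda,\epsilon}(z_k,z_{k+1})$ if the far factor is of $VT_{\lambda,\epsilon}$-type, in which case the restricted far $VT_{\lambda,\epsilon}$-factor is kept in an $L^1$ slot, where its norm is $\le C\mathcal M_\epsilon'\|V\|_{\cK(\epsilon)}e^{-\epsilon\sqrt\lambda/2}$ by the definition of the Kato norm together with $e^{-\sqrt\lambda d_g}\le e^{-\epsilon\sqrt\lambda/2}$ on $\{d_g\ge\epsilon/2\}$. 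The remaining $n-1$ factors go into $L^1$ slots, contributing $\le C\mathcal M_\epsilon'\|V\|_{\cK(\epsilon)}$ per $VT_{\lambda,\epsilon}$-factor and $\le C\mathcal M_\epsilon(\epsilon^2\lambda)^{-1}$ per $R_{\lambda,\epsilon}$-factor, by \eqref{eq:rbounds} and \eqref{eq:tvbounds}; and, exactly as in Lemma~\ref{lem:remainder}, replacing any factor by its $\lambda$-derivative costs only an extra factor $\le C\epsilon/\sqrt\lambda$ on its norms, since $d_g<\epsilon$ on the relevant support. Multiplying all contributions gives, for each of the integrals coming from a fixed product, a bound
$$C\,\mathcal M_\epsilon'\mathcal M_\epsilon\,\frac{e^{-\epsilon\sqrt\lambda/4}}{\epsilon\lambda}\Big(C\mathcal M_\epsilon'\|V\|_{\cK(\epsilon)}+\frac{C\mathcal M_\epsilon}{\epsilon^2\lambda}\Big)^{n-1},$$
and summing over the at most $2^n$ products, the $n$ choices of the far factor and the $n+1$ Leibniz terms — absorbing the resulting factor $\le C^n$ into the constant $A$ — yields the claimed estimate.

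The main obstacle is the one that forced the extraction of $\sigma_{\lambda,\epsilon}^{(n,V)}$ in the first place: for a Kato class potential $\|VT_{\lambda,\epsilon}\|_{L^\infty_yL^2_x}$ may be infinite, so a $VT_{\lambda,\epsilon}$-factor can never occupy one of the two $L^2$ slots of the factorization. Here this is circumvented because $\gamma_{\lambda,\epsilon}^{(n,V)}-\sigma_{\lambda,\epsilon}^{(n,V)}$ keeps, in every summand, the clean factor $R_{\lambda,\epsilon}(z_k,z_{k+1})$ (along with the leading $T_{\lambda,\epsilon}(x,z_1)$), and this factor provides the required second $L^2$ slot precisely in the bad case when the far factor is of $VT_{\lambda,\epsilon}$-type. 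The only other point needing care is the book-keeping of the $2^n$ terms arising from $R_{\lambda,\epsilon}^V=R_{\lambda,\epsilon}+VT_{\lambda,\epsilon}$, which is what produces the mixed geometric ratio $C(\mathcal M_\epsilon'\|V\|_{\cK(\epsilon)}+\mathcal M_\epsilon\epsilon^{-2}\lambda^{-1})$ instead of a single power.
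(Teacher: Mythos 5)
Your strategy is essentially the paper's: every summand of $\gamma_{\lambda,\epsilon}^{(n,V)}-\sigma_{\lambda,\epsilon}^{(n,V)}$ retains a clean factor $R_{\lambda,\epsilon}(z_k,z_{k+1})$ supplying an $L^2$ slot that no $VT_{\lambda,\epsilon}$ factor can, and the far cutoff $\1(d_g(z_\ell,z_{\ell+1})\ge\epsilon/2)$ produces the exponential gain. The gap is in your slot assignment. The two-$L^2$-slot factorization from the proof of Lemma \ref{lem:remainder} requires, for every factor $K_j$ lying strictly \emph{between} the two $L^2$ positions, the geometric mean $\|K_j\|_{L^\infty_x L^1_y}^{1/2}\|K_j\|_{L^\infty_y L^1_x}^{1/2}$; for a factor $K_j(z_j,z_{j+1})=V(z_j)T_{\lambda,\epsilon}(z_j,z_{j+1})$ one of these two mixed norms equals $\sup_{z}|V(z)|\,\|T_{\lambda,\epsilon}(z,\cdot)\|_{L^1}$, which is infinite for unbounded Kato-class $V$. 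Equivalently, the single-variable Cauchy--Schwarz of \eqref{eq:boundmiddle} only ever places two \emph{adjacent} factors into $L^2$. Hence your prescription --- the leading $T_{\lambda,\epsilon}(x,z_1)$ in one slot and either the far $R$-type factor or the clean $R_{\lambda,\epsilon}(z_k,z_{k+1})$ in the other --- fails as soon as $TV$ factors sit in between, i.e.\ for every $k\ge 2$. Already $n=2$, $k=2$ with the cutoff on the last edge, namely the kernel $T_{\lambda,\epsilon}(x,z_1)V(z_1)T_{\lambda,\epsilon}(z_1,z_2)R_{\lambda,\epsilon}(z_2,x)\1(d_g(z_2,x)\ge\epsilon/2)$, defeats the scheme, and no regrouping helps since $\|T_{\lambda,\epsilon}V\|_{L^\infty_x L^2_y}$ may also be infinite.

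The repair is what the paper does: fix the pairing variable at $z_k$ in \emph{all} cases, so that the two $L^2$ factors are the adjacent clean pair $T_{\lambda,\epsilon}(z_{k-1},z_k)$ and $R_{\lambda,\epsilon}(z_k,z_{k+1})$; for $k\ge2$ the leading $T_{\lambda,\epsilon}(x,z_1)$ then goes into an $L^\infty_x L^1_y$ slot together with $V(z_1)$, which is precisely why both estimates in \eqref{eq:tvbounds} are recorded. The factor $e^{-\epsilon\sqrt\lambda/4}$ is then extracted as the ratio of the restricted to the unrestricted norm of whichever factor carries the cutoff, \emph{in whatever slot that factor already occupies} ($L^1$ for a $VT_{\lambda,\epsilon}$ or $R_{\lambda,\epsilon}^V$ factor away from position $k$, $L^2$ for the clean $T$ or the clean $R$); one never needs to promote the far factor to an $L^2$ slot. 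This also covers the case your dichotomy omits, the cutoff landing on the clean edge $(z_{k-1},z_k)$, which is neither ``$R$-type'' nor ``$VT$-type''. With this modification the rest of your computation (the $\epsilon/\sqrt\lambda$ cost of $\partial_\lambda$, the individual norm bounds, the mixed geometric ratio, and absorbing the combinatorial count into $A$) goes through and reproduces the stated estimate.
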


\begin{proof}
	We recall that the integrand of $\gamma_{\lambda,\epsilon}^{(n,V)}-\sigma_{\lambda,\epsilon}^{(n,V)}$ can be written symbolically as a sum of terms of the form $(T_{\lambda,\epsilon} V)^{k-1} T_{\lambda,\epsilon} R_{\lambda,\epsilon} (R_{\lambda,\epsilon}^V)^{n-k}$ for some $1\le k\le n$, which will be fixed from now on. When taking the $\lambda$ derivative, by the product rule it can fall on any one of the $n+1$ terms. Next, as in the proof of Lemma \ref{lem:remainder}, we first take absolute values of the integral kernels and then multiply them by $\1(d_g(z_\ell,z_{\ell+1})\ge\epsilon/2)$ for some $1\le \ell\le n$, which will also be fixed from now on. Thus, $\partial_\lambda b_{\lambda,\epsilon}^{(n,V)}(x,x)$ can be bounded by a sum of $n^2(n+1)$ terms.
	
	Each of the terms $T_{\lambda,\epsilon}$, $R_{\lambda,\epsilon}$, $\partial_\lambda R_{\lambda,\epsilon}$ and $\partial_\lambda T_{\lambda,\epsilon}$ can be bounded as in the proof of Lemma \ref{lem:remainder} and we see again that the effect of the $\lambda$ derivative can be bounded by multiplying the bound without this derivative by a factor of $\epsilon/\sqrt\lambda$. Our task is thus to bound
	\begin{align*}
	& \frac{\epsilon}{\sqrt{\lambda}} \int_{d_g(z_n,x)<\epsilon}\,dv_g(z_n)\int_{d_g(z_{n-1},z_n)<\epsilon}\,dv_g(z_{n-1})\cdots\int_{d_g(z_1,z_2)<\epsilon}\,dv_g(z_1) \ \1(d_g(z_\ell,z_{\ell+1})\ge\epsilon/2) \\
	& \qquad \times 
	\1(d_g(x,z_1)<\epsilon) \left( \prod_{m=0}^{k-2} \frac{e^{-\sqrt\lambda d_g(z_m,z_{m+1})}}{d_g(z_m,z_{m+1})} U_0(z_m,z_{m+1})|V(z_{m+1})| \right) \\
	& \qquad \times
	\frac{e^{-\sqrt{\lambda}d_g(z_{k-1},z_k)}}{d_g(z_{k-1},z_k)}U_0(z_{k-1},z_k)\frac{\mathcal M_\epsilon}{\epsilon^2}
	\frac{e^{-\frac12\sqrt\lambda d_g(z_k,z_{k+1})}}{d_g(z_k,z_{k+1})} U_0(z_{k+1},z_k)^2 \\
	& \qquad\times
	\prod_{m'=k+1}^n \left( \frac{\mathcal M_\epsilon}{\epsilon^2}
	\frac{e^{-\frac12\sqrt\lambda d_g(z_{m'},z_{m'+1})}}{d_g(z_{m'},z_{m'+1})} U_0(z_{m'+1},z_{m'})^2 + |V(z_{m'})| \frac{e^{-\sqrt\lambda d_g(z_{m'},z_{m'+1})}}{d_g(z_{m'},z_{m'+1})} U_0(z_{m'},z_{m'+1}) \right),
	\end{align*}
	with the convention $z_0=x=z_{n+1}$. As in \eqref{eq:boundmiddle} we apply the Cauchy--Schwarz inequality in the $z_k$ variable and obtain
	\begin{align*}
	& \frac{\epsilon}{\sqrt{\lambda}}\  \mathcal F \ \left\| \frac{e^{-\sqrt\lambda d_g(x,y)}}{d_g(x,y)}
	U_0(x,y)\1(d_g(x,y)<\epsilon) V(y) \right\|_{L^\infty_x L^1_y}^{k-1} \\
	& \times \left\| \frac{e^{-\sqrt\lambda d_g(x,y)}}{d_g(x,y)}
	U_0(x,y)\1(d_g(x,y)<\epsilon) \right\|_{L^\infty_x L^2_y}
	\ \frac{\mathcal M_{\epsilon}}{\epsilon^2} \left\| \frac{e^{-\frac12 \sqrt\lambda d_g(x,y)}}{d_g(x,y)} U_0(y,x)^2 \1(d_g(x,y)<\epsilon) \right\|_{L^\infty_y L^2_x} \\
	& \times 	
	\left( \left\| \frac{\1(d_g(x,y)<\epsilon)}{d_g(x,y)} \left(
	\frac{\mathcal M_{\epsilon}}{\epsilon^2}  e^{-\frac12 \sqrt\lambda d_g(x,y)} U_0(y,x)^2 + |V(x)| U_0(x,y) e^{-\sqrt\lambda d_g(x,y)} \right) \right\|_{L^\infty_y L^1_x} \right)^{n-k}
	\end{align*}
	with
	\begin{align*}
	\mathcal F =
	\begin{cases}
	\left\| \frac{e^{-\sqrt\lambda d_g(x,y)}}{d_g(x,y)}
	U_0(x,y)\1(\epsilon/2\le d_g(x,y)<\epsilon) V(y) \right\|_{L^\infty_x L^1_y} / \left\| \frac{e^{-\sqrt\lambda d_g(x,y)}}{d_g(x,y)}
	U_0(x,y)\1(d_g(x,y)<\epsilon) V(y) \right\|_{L^\infty_x L^1_y} \\
	\qquad\qquad \text{if}\ \ell\le k-2 \,,\\
	\left\| \frac{e^{-\sqrt\lambda d_g(x,y)}}{d_g(x,y)}
	U_0(x,y)\1(\epsilon/2\le d_g(x,y)<\epsilon) \right\|_{L^\infty_x L^2_y} / \left\| \frac{e^{-\sqrt\lambda d_g(x,y)}}{d_g(x,y)}
	U_0(x,y)\1(d_g(x,y)<\epsilon) \right\|_{L^\infty_x L^2_y} \\
	\qquad\qquad \text{if}\ \ell=k-1 \,,\\
	\left\| \frac{e^{-\frac12 \sqrt\lambda d_g(x,y)}}{d_g(x,y)} U_0(y,x)^2 \1(\epsilon/2\le d_g(x,y)<\epsilon) \right\|_{L^\infty_y L^2_x} / \left\| \frac{e^{-\frac12 \sqrt\lambda d_g(x,y)}}{d_g(x,y)} U_0(y,x)^2 \1(d_g(x,y)<\epsilon) \right\|_{L^\infty_y L^2_x} \\
	\qquad\qquad \text{if}\ \ell=k \,,\\
	\left\| \frac{\1(\epsilon/2\le d_g(x,y)<\epsilon)}{d_g(x,y)} \left(
	\frac{\mathcal M_{\epsilon}}{\epsilon^2}  e^{-\frac12 \sqrt\lambda d_g(x,y)} U_0(y,x)^2 + |V(x)| U_0(x,y) e^{-\sqrt\lambda d_g(x,y)} \right) \right\|_{L^\infty_y L^1_x} \\
	\quad / \left\| \frac{\1(d_g(x,y)<\epsilon)}{d_g(x,y)} \left(
	\frac{\mathcal M_{\epsilon}}{\epsilon^2}  e^{-\frac12 \sqrt\lambda d_g(x,y)} U_0(y,x)^2 + |V(x)| U_0(x,y) e^{-\sqrt\lambda d_g(x,y)} \right) \right\|_{L^\infty_y L^1_x} \\
	\qquad\qquad \text{if}\ \ell\ge k+1 \,.
	\end{cases}
	\end{align*}
	Using similar bounds as before we obtain the upper bound
	\begin{align*}
	& C \frac{\epsilon}{\sqrt{\lambda}}\  \mathcal F' \ \left( A \mathcal M_\epsilon' \|V\|_{\cK(\epsilon)} \right)^{k-1} \frac{1}{\lambda^{1/4}} \ \frac{\mathcal M_\epsilon\,\mathcal M_\epsilon'}{\epsilon^2 \lambda^{1/4}}
	\left( A \left( \mathcal M_\epsilon' \|V\|_{\cK(\epsilon)} + \frac{\mathcal M_\epsilon}{\epsilon^2 \lambda} \right) \right)^{n-k}
	\end{align*}
	with
	\begin{align*}
	\mathcal F' =
	\begin{cases}
	\left\| \frac{e^{-\sqrt\lambda d_g(x,y)}}{d_g(x,y)}
	U_0(x,y)\1(\epsilon/2\le d_g(x,y)<\epsilon) V(y) \right\|_{L^\infty_x L^1_y} / ( \mathcal M_\epsilon' \|V\|_{\cK(\epsilon_\lambda)}) \\
	\qquad\qquad \text{if}\ \ell\le k-2 \,,\\
	\left\| \frac{e^{-\sqrt\lambda d_g(x,y)}}{d_g(x,y)}
	U_0(x,y)\1(\epsilon/2\le d_g(x,y)<\epsilon) \right\|_{L^\infty_x L^2_y} / \lambda^{-1/4} \\
	\qquad\qquad \text{if}\ \ell=k-1 \,,\\
	\left\| \frac{e^{-\frac12 \sqrt\lambda d_g(x,y)}}{d_g(x,y)} U_0(y,x)^2 \1(\epsilon/2\le d_g(x,y)<\epsilon) \right\|_{L^\infty_y L^2_x} / (\mathcal \, \mathcal M_\epsilon'\, \lambda^{-1/4}) \\
	\qquad\qquad \text{if}\ \ell=k \,,\\
	\left\| \frac{\1(\epsilon/2\le  d_g(x,y)<\epsilon)}{d_g(x,y)} \left(
	\frac{\mathcal M_{\epsilon}}{\epsilon^2}  e^{-\frac12 \sqrt\lambda d_g(x,y)} U_0(y,x)^2 + |V(x)| U_0(x,y) e^{-\sqrt\lambda d_g(x,y)} \right) \right\|_{L^\infty_y L^1_x} \\
	\quad / \left( \mathcal M_\epsilon' \|V\|_{\cK(\epsilon_\lambda)} + \frac{\mathcal M_\epsilon}{\epsilon^2 \lambda} \right) \\
	\qquad\qquad \text{if}\ \ell\ge k+1 \,.
	\end{cases}
	\end{align*}
	Since $\mathcal F'\le C e^{-\epsilon\sqrt\lambda/4}$, we obtain the claimed bound.
\end{proof}

This concludes our discussion of the terms $a_{\lambda,\epsilon}^{(n,V)}(x,x)$ and $b_{\lambda,\epsilon}^{(n,V)}(x,x)$ in the decomposition \eqref{eq:decompv}. We now focus on the term $\sigma^{(n,V)}_{\lambda,\epsilon}(x,x)$. 
% Again, one may restrict all the integrals to regions $|z_j-z_{j+1}|\le\epsilon/2$ up to an exponential error, since there is one $d_g(z_j,z_{j+1})$ less downstairs and one can apply the same reasoning as above in $V$. 
Using the representation formula \eqref{eq:repr} with the function $\kappa$ from \eqref{eq:kappa} we can write 
% $$\partial_\lambda\sigma^{(n,V)}_{\lambda,\epsilon}(x,x)=\int_0^\ii\frac{r^{n,V}(t,x)}{(t+\lambda)^3}+\cO((A\|V\|_{\cK(\epsilon)})^n e^{-(\epsilon/2)\sqrt{\lambda}}),$$
$$\partial_\lambda\sigma^{(n,V)}_{\lambda,\epsilon}(x,x)= - \int_0^\ii\frac{t^{3/2}\, r_{0,\epsilon}^{(n,V)}(t,x)}{(t+\lambda)^3}\,dt$$
with
\begin{multline}
\label{eq:r0n}
r_{0,\epsilon}^{(n,V)}(t,x):= \frac{1}{2(4\pi)^{n+1}}\int\,dv_g(z_n)\int\,dv_g(z_{n-1})\cdots\int\,dv_g(z_1)\\
\times \kappa(\sqrt t( d_g(x,z_1)+\cdots+d_g(z_n,x)))\, \left( d_g(x,z_1)+\cdots+d_g(z_n,x) \right) \\
\times \frac{U_0(x,z_1)\chi(\epsilon^{-1} d_g(x,z_1))}{d_g(x,z_1)} V(z_1) \frac{U_0(z_1,z_2)\chi(\epsilon^{-1} d_g(z_1,z_2))}{d_g(z_1,z_2)} \cdots V(z_n)
\frac{U_0(z_n,x)\chi(\epsilon^{-1} d_g(z_n,x))}{d_g(z_n,x)} \,.
\end{multline}

To bound $r_{0,\epsilon}^{(n,V)}$, we use the Rodnianski--Schlag bound \eqref{eq:rodschl}. A small variation of this argument, combined with the bound
\begin{eqnarray}
\label{eq:kappabound2}
|\kappa(r)|\le C
\qquad\text{for all}\ r>0 \,,
\end{eqnarray}
gives
\begin{equation}
\label{eq:r0sup}
| r_{0,\epsilon}^{(n,V)}(t,x)|\le C\mathcal M_\epsilon'  \left( A \mathcal M_\epsilon' \|V\|_{\cK(\epsilon)} \right)^n.
\end{equation}
Note that this shows that $t^{3/2} r_{0,\epsilon}^{(n,V)}$ is  $\mathcal O(t^{3/2})$, while the other term $r_\epsilon^{(n,V)}$ was shown before to be $\mathcal O(t)$. What makes the term $r_{0,\epsilon}^{(n,V)}$ well-behaved is the following bound on its variation.

\begin{lemma}\label{lem:r0variation}
	There are constant $C$ and $A$ such that for all $0<t\le t'$, all $x\in M$, all $n\ge1$, and all $\epsilon\in(0,\rho]$
	$$
	\left| r_{0,\epsilon}^{(n,V)}(t,x) - r_{0,\epsilon}^{(n,V)}(t',x) \right| \le C\, \frac{\sqrt{t'} - \sqrt t}{\sqrt t}
	\, \mathcal M_\epsilon' \left( A \mathcal M_\epsilon' \|V\|_{\cK(\epsilon)} \right)^n
	$$
	and
	\begin{multline*}
	\left| t^{-3/2} r_{\epsilon}^{(n,V)}(t,x) - (t')^{-3/2} r_{\epsilon}^{(n,V)}(t',x) \right| \\
	\le C\, \frac{\sqrt{t'} - \sqrt t}{\sqrt t}
	\, \mathcal M_{\rho/2}\, \mathcal M_\epsilon' \,\frac{\epsilon^2}{\rho^2} \left( A \left( \mathcal M_\epsilon' \|V\|_{\cK(\epsilon)} + \frac{\mathcal M_{\rho/2}\,\epsilon^2}{\rho^2}\right) \right)^{n-1}.
	\end{multline*}
\end{lemma}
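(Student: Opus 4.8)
The plan is to observe that the entire dependence on $t$ of $r_{0,\epsilon}^{(n,V)}$ in \eqref{eq:r0n}, and of $t^{-3/2}r^{(n,V)}_\epsilon$ in the formula for $r^{(n,V)}_\epsilon$ derived in the proof of Lemma~\ref{lem:mainv}, sits in the single scalar factor $\kappa(\sqrt t\,D)\,D$ inside the integral, where $D:=d_g(x,z_1)+\cdots+d_g(z_n,x)$. So everything reduces to an estimate on the $t$-increment of $\kappa(\sqrt t\,D)\,D$. Differentiating \eqref{eq:kappa} gives $\kappa'(z)=\frac{8}{\pi}(z^2\sin z-3\sin z+3z\cos z)z^{-4}$, which is $O(z)$ as $z\to0$, $O(z^{-2})$ as $z\to\infty$, and continuous on $(0,\infty)$; hence $|\kappa'(z)|\le Cz^{-1}$ for all $z>0$ with an absolute constant $C$. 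Consequently, for $0<t\le t'$ and any $D>0$,
$$|\kappa(\sqrt t\,D)-\kappa(\sqrt{t'}\,D)|\le\int_{\sqrt t}^{\sqrt{t'}}|\kappa'(sD)|\,D\,ds\le C\int_{\sqrt t}^{\sqrt{t'}}\frac{ds}{s}=C\ln\frac{\sqrt{t'}}{\sqrt t}\le C\,\frac{\sqrt{t'}-\sqrt t}{\sqrt t},$$
using $\ln(1+u)\le u$, so that $\big|\kappa(\sqrt t\,D)\,D-\kappa(\sqrt{t'}\,D)\,D\big|\le CD\,\frac{\sqrt{t'}-\sqrt t}{\sqrt t}$.

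For the first inequality I would subtract \eqref{eq:r0n} at $t'$ from the same at $t$; the integrands coincide except for the factor $\kappa(\sqrt t\,D)\,D$, so $r_{0,\epsilon}^{(n,V)}(t,x)-r_{0,\epsilon}^{(n,V)}(t',x)$ equals $(2(4\pi)^{n+1})^{-1}$ times the integral over $z_1,\dots,z_n$ of $\big(\kappa(\sqrt t\,D)\,D-\kappa(\sqrt{t'}\,D)\,D\big)$ against the very same product $d_g(x,z_1)^{-1}U_0(x,z_1)\chi(\epsilon^{-1}d_g(x,z_1))\,V(z_1)\cdots$ of kernels as in \eqref{eq:r0n}. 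Bounding the first factor by $CD\,\frac{\sqrt{t'}-\sqrt t}{\sqrt t}$, pulling the $t$-dependent constant out of the integral, distributing the leftover $D=\sum_j d_g(z_j,z_{j+1})$ against the singular kernels $d_g(z_j,z_{j+1})^{-1}$, and applying the same Rodnianski--Schlag-type $L^\infty_xL^1_y$ estimates as in the proof of \eqref{eq:r0sup}, one recovers the right-hand side of \eqref{eq:r0sup} multiplied by $\frac{\sqrt{t'}-\sqrt t}{\sqrt t}$. This is the first claimed bound.

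The second inequality is obtained by the same mechanism applied to $t^{-3/2}r^{(n,V)}_\epsilon$: its $t$-increment equals $(2(4\pi)^{n+1})^{-1}\sum_{k=1}^n$ of the integral of $\big(\kappa(\sqrt t\,D)\,D-\kappa(\sqrt{t'}\,D)\,D\big)$ against the mixed $\Delta U_0$/$V$-kernels from the proof of Lemma~\ref{lem:mainv}. Here I would estimate $\big|\kappa(\sqrt t\,D)\,D-\kappa(\sqrt{t'}\,D)\,D\big|\le C(n+1)\,\epsilon\,\frac{\sqrt{t'}-\sqrt t}{\sqrt t}$ by bounding $D\le(n+1)\epsilon$ directly on the support of the cut-offs (rather than cancelling $D$ against $\kappa$, as was done in Lemma~\ref{lem:mainv}), pull the constant out, and bound the remaining integral of the product of singular kernels exactly as in the proof of Lemma~\ref{lem:mainv}. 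Compared with that proof the only changes are that the factor $t^{-1/2}$ arising there from $|\kappa(\sqrt t\,D)\,D|\le C/\sqrt t$ is replaced by $\frac{\sqrt{t'}-\sqrt t}{\sqrt t}$, and that one extra power of $\epsilon$ is produced, which is precisely why $\epsilon^2/\rho^2$ (rather than $\epsilon/\rho^2$) appears in the statement. Absorbing the polynomial factor $n+1$ into $A$, as in Lemma~\ref{lem:remainder}, yields the second bound.

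There is no genuine obstacle here: once the elementary modulus-of-continuity estimate $|\kappa(\sqrt t\,D)-\kappa(\sqrt{t'}\,D)|\le C(\sqrt{t'}-\sqrt t)/\sqrt t$ is in hand, both bounds are verbatim repetitions of computations already carried out for \eqref{eq:r0sup} and in the proof of Lemma~\ref{lem:mainv}, with the extra scalar $(\sqrt{t'}-\sqrt t)/\sqrt t$ carried along. The only point requiring a little care is making the powers of $\epsilon$, $\mathcal M_\epsilon'$ and $\mathcal M_{\rho/2}$ in the second estimate come out as stated, which is why there it is convenient to bound $D$ by hand rather than against $\kappa$.
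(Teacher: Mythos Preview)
Your proposal is correct and follows essentially the same approach as the paper. The paper isolates the key scalar estimate as a separate lemma (Lemma~\ref{lem:g}), proving $|\kappa(rv)-\kappa(ru)|\le C(v-u)/u$ via the mean value theorem and $|\kappa'(t)|\le C\min\{1,t^{-2}\}\le C(ru)^{-1}$, whereas you integrate $|\kappa'|\le Cz^{-1}$ directly and pass through $\ln(\sqrt{t'}/\sqrt t)$; the conclusion is the same. After that, both arguments feed the bound back into the Rodnianski--Schlag estimate \eqref{eq:rodschl} for the first inequality and into the proof of Lemma~\ref{lem:mainv} (with the extra factor $D\le(1+n/2)\epsilon$ on the support) for the second, exactly as you describe.
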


The proof of this lemma relies on the following property of the function $\kappa$ from \eqref{eq:kappa}.

\begin{lemma}\label{lem:g}
	There is a $C>0$ such that for all $v\ge u> 0$ and all $r\ge0$ we have 
	$$|\kappa(rv)-\kappa(ru)|\le\frac{C\,(v-u)}{u}.$$
\end{lemma}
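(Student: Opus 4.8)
The plan is to deduce the estimate from a pointwise decay bound on the derivative $\kappa'$, namely
\begin{equation}\label{eq:kappaprimebound}
|\kappa'(s)| \le \frac{C}{(1+s)^2}\qquad\text{for all}\ s\ge 0,
\end{equation}
together with the fundamental theorem of calculus $\kappa(rv)-\kappa(ru)=\int_{ru}^{rv}\kappa'(s)\,ds$, which is legitimate because $\kappa$ is entire and hence $C^1$ on $\R$. Granting \eqref{eq:kappaprimebound}, for $r>0$ one estimates
$$
|\kappa(rv)-\kappa(ru)|\le C\int_{ru}^{rv}\frac{ds}{(1+s)^2}=C\left(\frac{1}{1+ru}-\frac{1}{1+rv}\right)=C\,\frac{r(v-u)}{(1+ru)(1+rv)}\le C\,\frac{v-u}{u},
$$
where in the last step we used $(1+ru)(1+rv)\ge ru\cdot 1=ru$; for $r=0$ the left-hand side vanishes and there is nothing to prove. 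So the whole statement reduces to \eqref{eq:kappaprimebound}.

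To establish \eqref{eq:kappaprimebound} I would differentiate the explicit formula \eqref{eq:kappa}. Using $\tfrac{d}{dz}(\sin z-z\cos z)=z\sin z$ gives, for $z\neq 0$,
$$
\kappa'(z)=\frac{8}{\pi}\left(\frac{\sin z}{z^2}+\frac{3\cos z}{z^3}-\frac{3\sin z}{z^4}\right).
$$
For real $s\ge 1$ this yields $|\kappa'(s)|\le \tfrac{8}{\pi}(s^{-2}+3s^{-3}+3s^{-4})\le Cs^{-2}\le C(1+s)^{-2}$. For $0\le s\le 1$ the explicit expression is useless (it is singular term by term), but here one invokes instead that $\kappa$ is entire, so $\kappa'$ is continuous on the compact interval $[0,1]$ and hence bounded there; combined with $(1+s)^2\le 4$ this again gives $|\kappa'(s)|\le C(1+s)^{-2}$. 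Patching the two ranges proves \eqref{eq:kappaprimebound}.

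The only mildly delicate point is precisely this behaviour of $\kappa'$ near the origin, which is why one must not rely on the closed-form derivative there but on the analyticity of $\kappa$ at $0$; apart from that, the argument is an elementary one-variable computation and I do not anticipate any real obstacle. (One could equally well run the argument with the weaker bound $|\kappa'(s)|\le C\min\{1,s^{-1}\}$ and the inequality $\ln(v/u)\le (v-u)/u$, but the $(1+s)^{-2}$ route keeps the integral computation cleanest.)
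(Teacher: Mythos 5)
Your proof is correct and follows essentially the same route as the paper: the paper applies the mean value theorem together with the bound $|\kappa'(t)|\le C\min\{1,t^{-2}\}$ (equivalent to your $C(1+t)^{-2}$) and then uses $\min\{1,(ru)^{-2}\}\le (ru)^{-1}$, which is the same one power of decay you extract from $(1+ru)(1+rv)\ge ru$. The substitution of the fundamental theorem of calculus for the mean value theorem is immaterial.
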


\begin{proof}[Proof of Lemma \ref{lem:g}]
	By the mean value theorem, we have
	$$\kappa(rv)-\kappa(ru)=\kappa'(\xi) r(v-u) $$
	for some $ru\le \xi\le rv$. Using $|\kappa'(t)|\le C \min\{1,t^{-2}\}$ for all $t\ge 0$, we obtain
	$$
	|\kappa'(\xi)| \le C \min\{1,(ru)^{-2}\} \,.
	$$
	Since $\min\{1,(ru)^{-2}\}\le (ru)^{-1}$, we obtain the inequality in the lemma.
\end{proof}

\begin{proof}[Proof of Lemma \ref{lem:r0variation}]
	We bound
	\begin{multline*}
	 \left| r_{0,\epsilon}^{(n,V)}(t,x) - r_{0,\epsilon}^{(n,V)}(t',x) \right| \le \frac{1}{2(4\pi)^{n+1}} \int\,dv_g(z_n)\int\,dv_g(z_{n-1})\cdots\int\,dv_g(z_1)\times \\
	  \times \left| \kappa(\sqrt t( d_g(x,z_1)+\cdots+d_g(z_n,x)))
	- \kappa(\sqrt{t'}( d_g(x,z_1)+\cdots+d_g(z_n,x))) \right|\times\\
	 \times \, \left( d_g(x,z_1)+\cdots+d_g(z_n,x) \right)\times\\
	 \times \frac{U_0(x,z_1)\chi(\epsilon^{-1}d_g(x,z_1))}{d_g(x,z_1)} V(z_1) \frac{U_0(z_1,z_2)\chi(\epsilon^{-1}d_g(z_1,z_2))}{d_g(z_1,z_2)} \cdots V(z_n)
	\frac{U_0(z_n,x)\chi(\epsilon^{-1}d_g(z_n,x))}{d_g(z_n,x)} \,.
	\end{multline*}
	By Lemma \ref{lem:g} we have
	$$
	\left| \kappa(\sqrt t( d_g(x,z_1)+\cdots+d_g(z_n,x)))
	- 	\kappa(\sqrt{t'}( d_g(x,z_1)+\cdots+d_g(z_n,x))) \right|
	\le C \frac{\sqrt{t'} - \sqrt t}{\sqrt t} \,.
	$$
	The first bound in the lemma now follows from the Rodnianski--Schlag bound \eqref{eq:rodschl}. The second bound in the lemma is simpler to prove. We begin similarly as before, using Lemma \ref{lem:g}. This leads to a similar expression as the one we estimated in the proof of Lemma \ref{lem:mainv}, but with an additional factor $t^{-1}\frac{\sqrt{t'} - \sqrt t}{\sqrt t}\left( d_g(x,z_1)+ \cdots d_g(z_n,x)\right)$ under the integral. Because of the restriction in the domain of integration, this factor is bounded by $t^{-1}\frac{\sqrt{t'} - \sqrt t}{\sqrt t}( 1+n/2)\epsilon$, we arrive at the second bound in the lemma.
\end{proof}

Finally, we are ready to give the

\begin{proof}[Proof of Proposition \ref{prop:greender}]
	We denote by $A$ the maximum of the constants with this name appearing in Lemma \ref{lem:mainv}, Lemma \ref{lem:bv} and Lemma \ref{lem:r0variation}. By increasing $A$ even further, we may assume that $A\ge 1$. We set $\epsilon_1 := \rho/\sqrt{4A\mathcal M_{\rho/2}}$ and let $\epsilon_2>0$ be maximal with the property that $\|V\|_{\cK(\epsilon_2)}\le 1/(4A\mathcal M_{\rho/2}')$. In the following we assume that $0<\epsilon\le\epsilon_0:=\min\{\epsilon_1,\epsilon_2\}$. We note that $\epsilon\le\epsilon_1\le\rho/2$ and therefore $\mathcal M_\epsilon'\le \mathcal M_{\rho/2}'$ and $\mathcal M_\epsilon\le \mathcal M_{\rho/2}$. Then, by Lemma \ref{lem:mainv}, the series
	$$
	R^V_\epsilon(t,x) := \sum_{n\ge 1} (-1)^{n+1} r^{(n,V)}_\epsilon(t,x)
	$$
	converges and satisfies
	$$
	\left| R^V_\epsilon(t,x)\right| \le C \mathcal M_{\rho/2} \mathcal M_{\rho/2}' \frac{\epsilon\, t}{\rho^2} \,.
	$$
	Moreover, also the series
	\begin{equation}\label{eq:defr0}
	r_{0,\epsilon}^V(t,x) := \sum_{n\ge 1} (-1)^{n+1} r_{0,\epsilon}^{(n,V)}(t,x)
	\end{equation}
	converges and satisfies, by \eqref{eq:r0sup},
	$$
	\left|r_{0,\epsilon}^V(t,x)\right| \le  C\,  ( \mathcal M_{\rho/2}' )^2\, \|V\|_{\cK(\epsilon)}
	$$
	as well as, by Lemma \ref{lem:r0variation}, if $0<t\le t'$,
	$$
	\left| r_{0,\epsilon}^V(t,x) - r_{0,\epsilon}^V(t',x) \right| \le C\, \frac{\sqrt{t'}-\sqrt t}{\sqrt t}\, ( \mathcal M_{\rho/2}' )^2\, \|V\|_{\cK(\epsilon)}
	$$
	and
	$$
	\left|t^{-3/2} R_\epsilon^V(t,x)-(t')^{-3/2} R_\epsilon^V(t',x)\right| \le C\, \frac{\sqrt{t'}-\sqrt t}{\sqrt t}\, \mathcal M_{\rho/2} \mathcal M_\epsilon' \,\frac{\epsilon^2}{\rho^2} \,.
	$$
	Provided that $\lambda \ge 4A\mathcal M_{\rho/2}/\epsilon^2$, we obtain from Lemma \ref{lem:bv}
	\begin{align*}
	& \left| \partial_\lambda \gamma_{\lambda,\epsilon}^{V}(t,x) + \int_0^\infty \frac{R^V_\epsilon(t,x) + t^{3/2} r_{0,\epsilon}^V(t,x)}{(t+\lambda)^3}\,dt \right| \\
	& \qquad = \left| \sum_{n\ge 1} (-1)^{n+1} \left( \partial_\lambda \gamma_{\lambda,\epsilon}^{(n,V)}(x,x) - \partial_\lambda \sigma_{\lambda,\epsilon}^{(n,V)}(x,x)-
	\partial_\lambda a_{\lambda,\epsilon}^{(n,V)}(x,x) \right) \right| \\
	& \qquad = \left| \sum_{n\ge 1} (-1)^{n+1} \partial_\lambda b_{\lambda,\epsilon}^{(n,V)}(x,x) \right| \\
	& \qquad \le C \mathcal M_{\epsilon}' \mathcal M_{\epsilon} \frac{e^{-\epsilon\sqrt\lambda/4}}{\epsilon\lambda} \,.
	\end{align*}
	Combining this bound with the identity
	$$
	\partial_\lambda G_\lambda^V(x,x) = \partial_\lambda T_{\lambda,\epsilon}^V(x,x) - \partial_\lambda\gamma_{\lambda,\epsilon}^V(x,x) = - \frac{1}{8\pi\sqrt\lambda} - \partial_\lambda\gamma_{\lambda,\epsilon}^V(x,x) \,,
	$$
	we obtain the bound in the proposition.	
\end{proof}

%%%%%%%%%%%%

\section{Consequences and discussion of Corollary \ref{coro:weyl-V}}\label{sec:weylproof}

In this section we show that the main results stated in the introduction are consequences of Corollary \ref{coro:weyl-V}.

\subsection{Pointwise Weyl laws}

We begin with the proof of the pointwise Weyl law for Kato class potentials.

\begin{proof}[Proof of Theorem \ref{thm:weylptw}]
	By Corollary \ref{coro:weyl-V} and Proposition \ref{prop:greender} we have for all $\epsilon\in(0,\epsilon_0)$
	$$
	\limsup_{t\to\infty} t^{-3/2} \sup_{x\in M} \left| \1(-\Delta_g +V\le t)(x,x) - \frac{t^{3/2}}{6\pi^2} \right| \le \frac12 \limsup_{t\to\infty} \sup_{x\in M} \left| r_{0,\epsilon}^V(t,x) \right|\le C\|V\|_{\cK(\epsilon)},
	$$
	with $C$ independent of $\epsilon$. By definition of the Kato class, this implies the theorem by letting $\epsilon\to0$.
\end{proof}

We next state and prove a pointwise spectral cluster bound, valid for Kato class potentials. This bound appears in \cite{BlaSirSog-19} under the additional assumption that $V\in L^{3/2}(M)$.

\begin{theorem}[Spectral cluster bounds]\label{thm:cluster}
	Let $V:M\to\R$ be in the Kato class. Then, uniformly in $x\in M$ and $t,T>0$,
	$$
	\left| \1\left(t<-\Delta_g +V\le (\sqrt t +\sqrt T)^2\right)(x,x) \right| \le C \left( t T^{1/2} + t +T^{3/2} + 1 \right).
	$$
\end{theorem}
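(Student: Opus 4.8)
The plan is to derive the cluster bound by differencing two instances of the corrected pointwise Weyl law from Corollary \ref{coro:weyl-V}. We fix $\epsilon=\epsilon_0$ throughout (legitimate, since Corollary \ref{coro:weyl-V} and Proposition \ref{prop:greender} hold for every $\epsilon\in(0,\epsilon_0]$), so that all constants below depend only on $\|V\|_{\cK(\epsilon_0)}$, $\epsilon_0$, and on upper bounds on $\mathcal M_{\rho/2}$ and $\mathcal M_{\rho/2}'$. Writing $s:=t$ and $s':=(\sqrt t+\sqrt T)^2$, so that $\sqrt{s'}-\sqrt s=\sqrt T$, we have
$$
\1\big(t<-\Delta_g+V\le (\sqrt t+\sqrt T)^2\big)(x,x)=\1(-\Delta_g+V\le s')(x,x)-\1(-\Delta_g+V\le s)(x,x),
$$
and the quantitative form of Corollary \ref{coro:weyl-V}, applied at both energies (both bounds being valid for all energies $\ge0$), rewrites the right-hand side as
$$
\frac{(s')^{3/2}-s^{3/2}}{6\pi^2}-\tfrac12\Big(r_{0,\epsilon_0}^V(s',x)\,(s')^{3/2}-r_{0,\epsilon_0}^V(s,x)\,s^{3/2}\Big)+\mathcal E,
\qquad |\mathcal E|\le C_0\epsilon_0^{-1}\big(s+s'+2\epsilon_0^{-2}\big).
$$
Since $s'=t+2\sqrt{tT}+T\le 2(t+T)$, the error satisfies $|\mathcal E|\le C(t+T+1)$.

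For the leading term, expanding $(s')^{3/2}=(\sqrt t+\sqrt T)^3=t^{3/2}+3tT^{1/2}+3t^{1/2}T+T^{3/2}$ gives $(s')^{3/2}-s^{3/2}=3tT^{1/2}+3t^{1/2}T+T^{3/2}$, and the elementary inequality $t^{1/2}T\le tT^{1/2}+T^{3/2}$ (split according to whether $t\ge T$ or $t<T$) bounds this by $C(tT^{1/2}+T^{3/2})$.

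The decisive step is the treatment of the two correction terms, which I would split as
$$
r_{0,\epsilon_0}^V(s',x)\,(s')^{3/2}-r_{0,\epsilon_0}^V(s,x)\,s^{3/2}=r_{0,\epsilon_0}^V(s',x)\big((s')^{3/2}-s^{3/2}\big)+\big(r_{0,\epsilon_0}^V(s',x)-r_{0,\epsilon_0}^V(s,x)\big)\,s^{3/2}.
$$
The first term is bounded by $C_4\|V\|_{\cK(\epsilon_0)}\big((s')^{3/2}-s^{3/2}\big)\le C\|V\|_{\cK(\epsilon_0)}(tT^{1/2}+T^{3/2})$ thanks to the previous paragraph. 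For the second term, the almost-monotonicity estimate \eqref{eq:almostmono1} yields $|r_{0,\epsilon_0}^V(s',x)-r_{0,\epsilon_0}^V(s,x)|\le C_5\|V\|_{\cK(\epsilon_0)}(\sqrt{s'}-\sqrt s)/\sqrt s=C_5\|V\|_{\cK(\epsilon_0)}\sqrt{T}/\sqrt{t}$, hence the second term is $\le C\|V\|_{\cK(\epsilon_0)}\,t^{3/2}\cdot\sqrt T/\sqrt t=C\|V\|_{\cK(\epsilon_0)}\,tT^{1/2}$. Collecting the three contributions and using once more $T\le T^{3/2}+1$ to absorb the linear-in-$T$ part of $\mathcal E$, we reach the asserted bound $C\big(tT^{1/2}+t+T^{3/2}+1\big)$.

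I do not expect a genuine obstacle: the content is entirely carried by Corollary \ref{coro:weyl-V} and Proposition \ref{prop:greender}. The one point that needs care is the splitting in the third paragraph, where one must apply the H\"older-type continuity \eqref{eq:almostmono1}, rather than the cruder pointwise bound $|r_{0,\epsilon_0}^V|\le C_4\|V\|_{\cK(\epsilon_0)}$, to the summand carrying the full factor $s^{3/2}=t^{3/2}$; the crude bound would produce a term of order $\|V\|_{\cK(\epsilon_0)}\,t^{3/2}$, inadmissible when $T\ll t$, whereas \eqref{eq:almostmono1} turns it into the acceptable $tT^{1/2}$.
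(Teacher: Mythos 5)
Your proposal is correct and follows essentially the same route as the paper: difference the quantitative Weyl law of Corollary \ref{coro:weyl-V} at the two energies, split the correction term as $r_{0,\epsilon}^V(s',x)\bigl((s')^{3/2}-s^{3/2}\bigr)+\bigl(r_{0,\epsilon}^V(s',x)-r_{0,\epsilon}^V(s,x)\bigr)s^{3/2}$, and control the two pieces with the sup bound on $r_{0,\epsilon}^V$ and the almost-monotonicity estimate \eqref{eq:almostmono1}, respectively. The point you flag as decisive — that the summand carrying the full factor $t^{3/2}$ must be handled via \eqref{eq:almostmono1} rather than the crude sup bound — is exactly the mechanism the paper uses.
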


\begin{proof}
	According to Corollary \ref{coro:weyl-V}, we have, uniformly in $x\in M$ and for a fixed $\epsilon\in(0,\epsilon_0)$
	\begin{align*}
	& \1\left(t<-\Delta_g +V\le (\sqrt t +\sqrt T)^2\right)(x,x) \\
	& = \1\left(-\Delta_g +V\le (\sqrt t +\sqrt T)^2\right)(x,x) - \1\left(-\Delta_g +V\le t\right)(x,x) \\
	& = \frac{(\sqrt t + \sqrt T)^3 - t^{3/2}}{6\pi^2} + \frac12 \left( r_{0,\epsilon}^V((\sqrt t+\sqrt T)^2,x) (\sqrt t+\sqrt T)^3 - r_{0,\epsilon}^V(t,x) t^{3/2} \right) + \mathcal O(t +T) \\
	& = \frac12 r_{0,\epsilon}^V((\sqrt t+\sqrt T)^2,x) \left( (\sqrt t+\sqrt T)^3 - t^{3/2} \right) + \frac12 \left( r_{0,\epsilon}^V((\sqrt t+\sqrt T)^2,x) - r_{0,\epsilon}^V(t,x) \right) t^{3/2} \\
	& \qquad + \mathcal O(t \sqrt T + T^{3/2} + t + T)	\,.
	\end{align*}
	According to Proposition \ref{prop:greender}, we have, uniformly in $x\in M$,
	$$
	r_{0,\epsilon}^V((\sqrt t+\sqrt T)^2,x) \left( (\sqrt t+\sqrt T)^3 - t^{3/2} \right) = \mathcal O(t\sqrt T + T^{3/2})
	$$
	and
	$$
	r_{0,\epsilon}^V((\sqrt t+\sqrt T)^2,x) - r_{0,\epsilon}^V(t,x) = \mathcal O( t^{-1/2} T^{1/2}) 
	$$
	Inserting this into the above bound, we obtain the assertion of the theorem.
\end{proof}

Next, we prove the sharp pointwise version of Weyl's law under an additional regularity assumption.

\begin{proof}[Proof of Theorem \ref{thm:weylptwsharp}]
	According to Corollary \ref{coro:weyl-V} it suffices to prove that
	$$
	\sup_{x\in M} \left| r_{0,\epsilon}^V(t,x) \right| \le C t^{-1/2} \,.
	$$
	We shall show that this holds for a certain choice of the parameter $\epsilon$ in the definition of $r_{0,\epsilon}^V$.
	
	Applying the bound \eqref{eq:kappabound1} in the definition \eqref{eq:r0n} we obtain
	\begin{align*}
	& \left| r_{0,\epsilon}^{(n,V)}(t,x) \right|\le  \frac{C\, t^{-1/2}}{2(4\pi)^{n+1}}\int dv_g(z_n) \cdots\int dv_g(z_1)\\
	& \times \frac{U_0(x,z_1)|\chi(\epsilon^{-1} d_g(x,z_1))|}{d_g(x,z_1)} |V(z_1)| \frac{U_0(z_1,z_2)|\chi(\epsilon^{-1} d_g(z_1,z_2))|}{d_g(z_1,z_2)} \cdots |V(z_n)|
	\frac{U_0(z_n,x)|\chi(\epsilon^{-1} d_g(z_n,x))|}{d_g(z_n,x)} \\
	& \le C t^{-1/2} A^n (\mathcal M_\epsilon')^{n+1} \int\,dv_g(z_n)\int\,dv_g(z_{n-1})\cdots\int\,dv_g(z_1)\\
	& \times \frac{\1( d_g(x,z_1)<\epsilon)}{d_g(x,z_1)} |V(z_1)| \frac{\1( d_g(z_1,z_2)<\epsilon)}{d_g(z_1,z_2)} \cdots |V(z_n)|
	\frac{\1( d_g(z_n,x)<\epsilon)}{d_g(z_n,x)} \,.
	\end{align*}
	We write $z_0=z_{n+1}=x$ and
	$$
	\prod_{k=0}^n \frac{1}{d_g(z_k,z_{k+1})} = \prod_{\ell=0}^n \frac{d_g(z_\ell,z_{\ell+1})^{1/n}}{\prod_{k=0}^n d_g(z_k,z_{k+1})^{1/n}} \,,
	$$
	so by H\"older's inequality, the above $n$-fold integral is bounded by
	\begin{align*}
	& \prod_{\ell=0}^n \left( \int dv_g(z_n)\cdots\int dv_g(z_1) 
	\frac{d_g(z_\ell,z_{\ell+1})^{(n+1)/n}}{\prod_{k=0}^n d_g(z_k,z_{k+1})^{(n+1)/n}}\, 
	\prod_{m=0}^n \1( d_g(z_m,z_{m+1})<\epsilon) |V(z_1)|\cdots |V(z_n)| \right)^{1/(n+1)} \\
	& \le \left( \sup_{x\in M} \int dv_g(y) \frac{\1(d_g(x,y)<\epsilon)}{d_g(x,y)^{(n+1)/n}} |V(y)| \right)^n \\
	& \le \left( \sup_{x\in M} \int dv_g(y) \frac{\1(d_g(x,y)<\epsilon)}{d_g(x,y)^2} |V(y)|
	\right) \left( \sup_{x\in M} \int dv_g(y) \frac{\1(d_g(x,y)<\epsilon)}{d_g(x,y)} |V(y)| \right)^{n-1}.
	\end{align*}
	Thus, we have shown that
	$$
	\left| r_{0,\epsilon}^{(n,V)}(t,x) \right| \le C t^{-1/2} (\mathcal M_\epsilon')^2 \left( \sup_{x\in M} \int dv_g(y) \frac{\1(d_g(x,y)<\epsilon)}{d_g(x,y)^2} |V(y)| \right)
	\left( A \mathcal M_\epsilon' \|V\|_{\cK(\epsilon)} \right)^{n-1}.
	$$
	We now choose $\epsilon\le\epsilon'$ to have the supremum on the right finite and $\epsilon>0$ so small that $\|V\|_{\cK(\epsilon)}\le 1/(2A \mathcal M_\epsilon')$ to have exponential decay in $n$. By \eqref{eq:defr0} we obtain, after summation over $n$,
	$$
	\left| r_{0,\epsilon}^{V}(t,x)\right| \le C t^{-1/2} (\mathcal M_\epsilon')^2 \left( \sup_{x\in M} \int dv_g(y) \frac{\1(d_g(x,y)<\epsilon)}{d_g(x,y)^2} |V(y)| \right).
	$$
	This proves the theorem.	
\end{proof}

By H\"older's inequality the assumption in Theorem \ref{thm:weylptwsharp} is satisfied if $V\in L^q$ for some $q>3$. Thus, we obtain

\begin{corollary}\label{cor:weylptwsharp}
	Assume that $V\in L^q(M)$ for some $q>3$. Then, uniformly in $x\in M$,
	$$
	\1(-\Delta_g +V\le t)(x,x) = \frac{t^{3/2}}{6\pi^2} + \mathcal O(t) \,.
	$$
\end{corollary}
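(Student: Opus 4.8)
\textbf{Proof proposal for Corollary \ref{cor:weylptwsharp}.}

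The plan is simply to verify that the hypothesis $V\in L^q(M)$ with $q>3$ implies condition \eqref{eq:weylptwsharpass} of Theorem \ref{thm:weylptwsharp}, after which the conclusion is immediate from that theorem. So the only thing to check is that
$$
\sup_{x\in M}\int_{d_g(y,x)<\epsilon'}\frac{|V(y)|}{d_g(y,x)^2}\,dv_g(y)<\infty
$$
for a suitable $\epsilon'>0$; in fact any $\epsilon'\in(0,\rho]$ will do.

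First I would apply H\"older's inequality with exponents $q$ and $q'=q/(q-1)$ to bound, for each $x\in M$,
$$
\int_{d_g(y,x)<\epsilon'}\frac{|V(y)|}{d_g(y,x)^2}\,dv_g(y)
\le \|V\|_{L^q(M)}\left(\int_{d_g(y,x)<\epsilon'}\frac{dv_g(y)}{d_g(y,x)^{2q'}}\right)^{1/q'}.
$$
The remaining step is to show that the integral on the right is finite and bounded uniformly in $x$. Working in geodesic normal coordinates around $x$ (valid since $\epsilon'\le\rho$), the Riemannian volume element is comparable to Lebesgue measure on the ball of radius $\epsilon'$ in $\R^3$, with constants controlled by $\mathcal M_{\rho}'$ (or any fixed bound on the metric), so
$$
\int_{d_g(y,x)<\epsilon'}\frac{dv_g(y)}{d_g(y,x)^{2q'}}\le C\int_{|z|<\epsilon'}\frac{dz}{|z|^{2q'}},
$$
which converges precisely when $2q'<3$, i.e.\ when $q'<3/2$, i.e.\ when $q>3$. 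This is exactly the assumed range of $q$, and the bound is manifestly independent of $x$. Hence \eqref{eq:weylptwsharpass} holds, and Theorem \ref{thm:weylptwsharp} yields the claimed expansion uniformly in $x\in M$.

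There is essentially no obstacle here; the only mild point requiring care is the uniformity in $x$ of the volume comparison in normal coordinates, which is standard on a compact $C^4$ manifold and already implicit in the constants $\mathcal M_{\rho/2},\mathcal M_{\rho/2}'$ introduced earlier. Everything else is a one-line application of H\"older's inequality together with the elementary three-dimensional integrability criterion $2q'<3\Leftrightarrow q>3$, which also makes transparent why $q=3$ is the threshold, matching the discussion following the statement of Theorem \ref{thm:weylptwsharp}.
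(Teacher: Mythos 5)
Your proof is correct and is exactly the argument the paper intends: the corollary is deduced from Theorem \ref{thm:weylptwsharp} by the one-line H\"older estimate showing that $V\in L^q$ with $q>3$ implies \eqref{eq:weylptwsharpass}, since $2q'<3$ precisely when $q>3$. The uniformity in $x$ via normal coordinates on the compact manifold is handled as you describe, so nothing is missing.
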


%%%%%%%%%%

\subsection{Violation of the sharp pointwise Weyl law}

While Corollary \ref{cor:weylptwsharp} shows that under the assumption $V\in L^q$ for some $q>3$ the pointwise Weyl law holds with the same remainder bound $\mathcal O(t)$ as in the case without $V$, we now show that there are potentials $V\in L^q(M)$ for some $3/2<q\le 3$ for which this remainder estimate is violated. These examples also show that for Kato class potentials the remainder $o(t^{3/2})$ in Theorem \ref{thm:weylptw} cannot be improved to $\mathcal O(t^{3/2-\epsilon})$ for some $\epsilon>0$ valid for all Kato class $V$.

Note that if $V\in L^q$ for some $q>3/2$, then $V$ is in the Kato class, so Corollary \ref{coro:weyl-V} is applicable.  Therefore our strategy to proving a violation of the $\mathcal O(t)$ bound will be to show that the bound $r_{0,\epsilon}^V(t,x)=\mathcal O(t^{-1/2})$ can be violated. Recall that $r_{0,\epsilon}^V$ is defined in \eqref{eq:defr0} as an infinite sum of terms $r_{0,\epsilon}^{(n,V)}$. As a first step towards showing the violation of an $\mathcal O(t^{-1/2})$ bound on $r_{0,\epsilon}^V$, we show that such a violation can come from at most a finite number of terms $r_{0,\epsilon}^{(n,V)}$.

\begin{lemma}\label{lem:Lq-potential}
 Let $V\in L^q(M)$ for some $q>3/2$ and set
 $$
 N_q : = 1+ \left\lfloor \frac{q}{2q-3}\right \rfloor \,.
 $$
 Then, uniformly in $x\in M$, 
 $$\sum_{n\ge N_q} |r^{(n,V)}_0(t,x)|=\cO(t^{-1/2})\,,$$
 provided $\epsilon$ in the definition of $r_{0,\epsilon}^{(n,V)}$ is chosen small enough.
\end{lemma}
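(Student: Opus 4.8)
The plan is to exploit the bound $|\kappa(r)|\le C\min\{1,r^{-1}\}$, which supplies the decay $t^{-1/2}$ for free, together with the $L^q$-integrability of $V$ and the smoothing of the truncated Green kernel, using that for $n\ge N_q$ the chain defining $r^{(n,V)}_{0,\epsilon}$ carries enough intermediate integrations to compensate for the failure of $\sup_x\int d_g(x,y)^{-2}|V(y)|\,dv_g(y)<\infty$ when $q\le3$ (which is what obstructs the argument of Theorem~\ref{thm:weylptwsharp}). Concretely, starting from \eqref{eq:r0n} and using $|\kappa(\sqrt t\,L)|\,L\le C\min\{L,t^{-1/2}\}\le Ct^{-1/2}$ together with $0\le U_0\le\mathcal M_\epsilon'$ and $|\chi(\epsilon^{-1}s)|\le C\,\1(s<\epsilon)$, I would obtain, writing $T$ for the operator on $L^2(M)$ with kernel $T(x,y):=d_g(x,y)^{-1}\1(d_g(x,y)<\epsilon)$,
\[
|r^{(n,V)}_{0,\epsilon}(t,x)|\le C\,t^{-1/2}\Big(\tfrac{C\mathcal M_\epsilon'}{4\pi}\Big)^{n+1}I_n(x),\qquad
I_n(x):=\int_{M^n}\prod_{k=0}^n T(z_k,z_{k+1})\prod_{k=1}^n|V(z_k)|\,dv_g(z_1)\cdots dv_g(z_n),
\]
with the convention $z_0=z_{n+1}=x$. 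Setting $W:=|V|^{1/2}\in L^{2q}(M)$, letting $M_W$ denote multiplication by $W$, and putting $K:=M_W T M_W$, $\psi_x:=W\,T(x,\cdot)$, we then have $I_n(x)=\langle\psi_x,K^{n-1}\psi_x\rangle$.

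Next I would record two elementary estimates, both consequences of H\"older's inequality and the generalized Young/Schur inequality for the kernel $T$, which satisfies $\|T(x,\cdot)\|_{L^c}\le C_c\,\epsilon^{3/c-1}$ (uniformly in $x$) for every $c\in[1,3)$: (i) for every $p$ with $\tfrac1p>\tfrac13+\tfrac1{2q}$ one has $\|\psi_x\|_{L^{p}}\le C\,\epsilon^{\beta}\|V\|_{L^q}^{1/2}$ uniformly in $x$, with some $\beta=\beta(p)>0$; and (ii) $K$ maps $L^a(M)\to L^b(M)$ with norm $\le C\,\epsilon^{\gamma}\|V\|_{L^q}$, $\gamma=\gamma(a,b)>0$, whenever $1\le a\le b$, $\tfrac1a\le1-\tfrac1{2q}$ and $\tfrac1b\in\big(\tfrac1a-(\tfrac23-\tfrac1q),\,\tfrac1a\big]$. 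The "gain" $\tfrac23-\tfrac1q$ in (ii) is strictly positive precisely because $q>3/2$, and the power of $\epsilon$ comes from truncating the kernel at scale $\epsilon$.

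The point where $N_q$ enters is the following arithmetic: $n\ge N_q$ is equivalent to $q>\tfrac{3n}{2n-1}$, and hence to $(n-1)\big(\tfrac23-\tfrac1q\big)>\tfrac1q-\tfrac13$. Choosing $\tfrac1{p_0}:=\tfrac13+\tfrac1{2q}+\eta$ with $\eta>0$ small, the total exponent drop $\tfrac2{p_0}-1=\tfrac1q-\tfrac13+2\eta$ needed to pass from $L^{p_0}$ to its conjugate space $L^{p_0'}$ is then still $<(n-1)\big(\tfrac23-\tfrac1q\big)$; distributing it equally over the $n-1$ factors of $K$ — every intermediate application being admissible since $\tfrac1{p_j}$ only decreases from $\tfrac1{p_0}$, which is $<1-\tfrac1{2q}$ (again using $q>3/2$) — yields by (ii) that $\|K^{n-1}\|_{L^{p_0}\to L^{p_0'}}\le\big(C\epsilon^{\gamma_\ast}\|V\|_{L^q}\big)^{n-1}$ with a gain $\gamma_\ast>0$ that can be taken uniform in $n\ge N_q$. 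Hence, by (i),
\[
I_n(x)=\langle\psi_x,K^{n-1}\psi_x\rangle\le\|\psi_x\|_{L^{p_0}}^2\,\|K^{n-1}\|_{L^{p_0}\to L^{p_0'}}\le C\,\epsilon^{2\beta}\|V\|_{L^q}\,\big(C\epsilon^{\gamma_\ast}\|V\|_{L^q}\big)^{n-1}.
\]
(When $q>3$, so $N_q=1$, one has $\tfrac1{p_0}<\tfrac12$, hence $\psi_x\in L^2$ uniformly in $x$, and one uses instead the $L^2$-boundedness $\|K\|_{L^2\to L^2}\le C\epsilon^{(2q-3)/q}\|V\|_{L^q}$.) Plugging this into the bound for $r^{(n,V)}_{0,\epsilon}$, the $n$-dependence of the estimate is $C^n(\mathcal M_{\rho/2}')^{n+1}\epsilon^{\gamma_\ast(n-1)}$, which is geometrically summable as soon as $\epsilon$ is chosen small enough that $C\,\mathcal M_{\rho/2}'\,\epsilon^{\gamma_\ast}\,\|V\|_{L^q}<\tfrac12$; summing over $n\ge N_q$ then gives $\sum_{n\ge N_q}|r^{(n,V)}_{0,\epsilon}(t,x)|\le Ct^{-1/2}$ uniformly in $x$, which is the claim.

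The main obstacle is not the estimate for an individual term but arranging the constants — above all the choice of $\eta$ and the per-step smoothing gain $\gamma_\ast$ — so that they are \emph{uniform in $n\ge N_q$}, which is what makes the series in $n$ converge (and forces one to distribute the exponent drop across $K^{n-1}$ rather than, say, passing through $L^2$ at a fixed stage). A secondary point requiring care is checking that all intermediate Lebesgue exponents appearing in the $(n-1)$-fold composition $K^{n-1}:L^{p_0}\to L^{p_0'}$ remain in the admissible range $[1,\infty]$; this is exactly where the strict inequalities $q>3/2$ and $q>\tfrac{3n}{2n-1}$ are used.
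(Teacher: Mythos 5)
Your proof is correct, but it is organized quite differently from the paper's. The paper's argument is essentially a two-line corollary of work already done: in the proof of Theorem \ref{thm:weylptwsharp} the single $(n+1)$-fold H\"older inequality (redistributing $\prod_k d_g(z_k,z_{k+1})^{-1}$ so that each factor carries the exponent $(n+1)/n$) yields
$$
|r_{0,\epsilon}^{(n,V)}(t,x)|\le C t^{-1/2}\mathcal M_\epsilon'\Big(A\mathcal M_\epsilon'\sup_{x}\int \tfrac{\1(d_g(x,y)<\epsilon)}{d_g(x,y)^{(n+1)/n}}|V(y)|\,dv_g(y)\Big)^{n},
$$
and one further H\"older with $(q,q')$ reduces everything to the local integrability of $d_g(x,y)^{-(n+1)q'/n}$, i.e.\ $(n+1)q'/n<3$, which is exactly $n\ge N_q$; the resulting factor $\epsilon^{3/q'-1}=\epsilon^{2-3/q}$ has positive exponent precisely because $q>3/2$ and gives the geometric decay in $n$. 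You instead factorize the chain as $\langle\psi_x,K^{n-1}\psi_x\rangle$ with $K=M_{|V|^{1/2}}TM_{|V|^{1/2}}$ and iterate the generalized Young inequality, so that $n\ge N_q$ appears as the statement that $n-1$ smoothing steps, each gaining strictly less than $\tfrac23-\tfrac1q$ in $1/p$, suffice to pass from $L^{p_0}$ to $L^{p_0'}$ --- the same arithmetic in different packaging (and your separate $L^2$ treatment of $q>3$ matches the paper's remark that $N_q=1$ then follows from Theorem \ref{thm:weylptwsharp}). What your route buys is a self-contained and somewhat more conceptual picture (a Sobolev-type gain per factor of $V$, with the $\epsilon$-smallness visible as a positive power per step); what it costs is the uniform-in-$n$ bookkeeping you correctly flag: one must check that the per-step exponent $c$ stays bounded away from $3$ and that all intermediate Lebesgue exponents are admissible, so that the Young constants and the gain $\gamma_\ast$ are uniform over $n\ge N_q$. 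Both of these checks go through, so the argument is sound.
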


As a consequence, there is only a finite number of additional terms in the corrected Weyl law \eqref{eq:weyl-corrected}. This number of terms grows as the singularity of the potential gets worse. Note that if $q>3$, then $N_q=1$ and the statement of the lemma is a consequence of the proof of Theorem \ref{thm:weylptwsharp}.

\begin{proof}
During the proof of Theorem \ref{thm:weylptwsharp} we have shown the bound
$$
\left| r_{0,\epsilon}^{(n,V)}(t,x) \right| \le C t^{-1/2} \mathcal M_\epsilon' \left( A \mathcal M_\epsilon' \sup_{x\in M} \int dv_g(y) \frac{\1(d_g(x,y)<\epsilon)}{d_g(x,y)^{(n+1)/n}} |V(y)| \right)^n.
$$
We bound
$$
\int dv_g(y) \frac{\1(d_g(x,y)<\epsilon)}{d_g(x,y)^{(n+1)/n}} |V(y)|
\le \|V\|_{L^q} \left( \int dv_g(y)\, \frac{\1(d_g(x,y)<\epsilon)}{d_g(x,y)^{(n+1)q'/n}} \right)^{1/q'}.
$$
Under the assumption $(n+1)q'/n<3$, which is the same as $n\ge N_q$, we have
\begin{align*}
\int dv_g(y)\, \frac{\1(d_g(x,y)<\epsilon)}{d_g(x,y)^{(n+1)q'/n}}
&\le \sup_{d_g(y,x)<\epsilon} U_0(x,y)^{-2} \int_{|z|<\epsilon} \frac{dz}{|z|^{(n+1)q'/n}} \\
&\le C' \epsilon^{3-(n+1)q'/n} \sup_{d_g(y,x)<\epsilon} U_0(x,y)^{-2}.
\end{align*}
Note that the constant $C'$ here can be chosen independently of $n$. Thus,
$$
\left| r_{0,\epsilon}^{(n,V)}(t,x) \right| \le C t^{-1/2} \epsilon^{-1} \mathcal M_\epsilon' \left( A \mathcal M_\epsilon' \|V\|_{L^q} (C')^{1/q'} \epsilon^{3/q'-1} \sup_{d_g(y,x)<\epsilon} U_0(x,y)^{-2/q'} \right)^n.
$$
Note that $3/q'-1 = 2-3/q>0$. Therefore, by choosing $\epsilon$ small enough, the bound is exponentially decaying in $n$ and we obtain the assertion in the lemma.
 \end{proof}

 The following example shows that the additional terms $r_{0,\epsilon}^{(n,V)}$ can cause a violation of the sharp pointwise Weyl law.

\begin{proposition}\label{lem:example-potential}
 Let $\eta\in(0,1)$, $x_0\in M$, $\gamma\in\R$ and let $V:M\to\R$ defined by 
 $$V(x)=\gamma\, \frac{\chi(d_g(x,x_0))}{d_g(x,x_0)^{2-\eta}} \,,\qquad x\in M.$$
 Then
 $$
 \1(-\Delta_g + V\le t)(x,x) = \frac{t^{3/2}}{6\pi^2} - \frac12 \sum_{1\le n<1/\eta} (-1)^{n+1} r_{0,\epsilon}^{(n,V)}(t,x)\,t^{3/2} + \mathcal O(t) \,.
 $$
 Moreover, for all $n<2/\eta$ we have
 \begin{equation}\label{eq:example-potential-bound}
 \sup_{x\in M} |r^{(n,V)}_{0,\epsilon}(t,x)| \le C t^{-n\eta/2}.
 \end{equation}
 and, in geodesic normal coordinates around $x_0$, for all $y\in T_{x_0}M\sim\R^3$,
 \begin{multline}\label{eq:example-potential-lim}
    \lim_{t\to+\ii} t^{n\eta/2} r^{(n,V)}_{0,\epsilon}(t,y/\sqrt{t}) = \frac{\gamma^n}{2(4\pi)^{n+1}}\int_{\R^3}dz_1\cdots\int_{\R^3}dz_n \\
    \times\frac{\kappa(|z_1|+|z_1-z_2|+\cdots+|z_{n-1}-z_n|+|z_n|)\, (|z_1|+|z_1-z_2|+\cdots+|z_{n-1}-z_n|+|z_n|)}{|z_1||z_1-z_2|\cdots|z_{n-1}-z_n||z_n||z_1+y|^{2-\eta}\cdots|z_n+y|^{2-\eta}} \,,
 \end{multline}
 where the right side is finite.
%  Then, for all $t>0$, and all $y\in T_{x_0}M$ sufficiently small, we have
%  $$r(t,\exp_{x_0}(y))=t^{-\eta/2}F(y,y\sqrt{t})+\cO(t^{-\min(\eta,1/2)})$$
%  as $t\to+\ii$, where the $\cO$ is uniform in $y$ and 
%  $$F(y,y'):=-\frac{\theta(x_0,\exp_{x_0}(y))}{32\pi^2}\int_{T_{x_0}M}\frac{\kappa(|z|_{g(\exp_{x_0}(y))})}{|z|_{g(\exp_{x_0}(y))}}\frac{dz}{|z+y'|_{g_{x_0}}^{2-\eta}},$$
%  where we used the notation $|z|_{g(\exp_{x_0}(y))}:=g_{\exp_{x_0}(y)}(D_y\exp_{x_0}(z),D_y\exp_{x_0}(z))$ for all $y\in\cU$, $z\in T_{x_0}M$.
\end{proposition}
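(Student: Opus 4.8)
The plan is to combine the estimates already derived for the individual terms $r_{0,\epsilon}^{(n,V)}$ with a scaling analysis localized near $x_0$.

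First I would establish the cutoff on the number of relevant terms. By Lemma~\ref{lem:Lq-potential}, the potential $V$ in question lies in $L^q(M)$ for every $q<3/(2-\eta)$, and with this range of $q$ the threshold $N_q = 1 + \lfloor q/(2q-3)\rfloor$ equals $1+\lfloor 1/\eta\rfloor$ in the limit $q\uparrow 3/(2-\eta)$ (more precisely, for $q$ close enough to that endpoint one gets $N_q = 1+\lfloor 1/\eta\rfloor$, so $n\ge N_q$ is the same as $n\ge 1+\lfloor 1/\eta\rfloor$, i.e. $n\ge 1/\eta$ when $1/\eta$ is not an integer, and one handles the integer case directly). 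Hence $\sum_{n\ge N_q}|r_{0,\epsilon}^{(n,V)}(t,x)| = \mathcal O(t^{-1/2})$ uniformly in $x$, and combined with Corollary~\ref{coro:weyl-V} this already yields the displayed formula with the finite sum over $1\le n<1/\eta$, once we know the remaining terms $r_{0,\epsilon}^{(n,V)}$ for $n<1/\eta$ are each $\mathcal O(t^{-1/2})$ or smaller — which follows from \eqref{eq:example-potential-bound} since $n\eta/2 \ge \eta/2$; wait, that is not enough, so I would instead argue that for $1\le n<1/\eta$ the bound \eqref{eq:example-potential-bound} gives a contribution $r_{0,\epsilon}^{(n,V)}(t,x)\,t^{3/2} = \mathcal O(t^{(3-n\eta)/2})$, and the claim is precisely that these survive in the expansion; the point of the Proposition is not that they vanish but that they are the explicit additional terms, so the displayed identity is just Corollary~\ref{coro:weyl-V} together with the $\mathcal O(t^{-1/2})$ control of the tail $n\ge 1/\eta$.

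Second, for the bound \eqref{eq:example-potential-bound}, I would repeat the Hölder-type argument from the proof of Theorem~\ref{thm:weylptwsharp} but tracking the homogeneity more carefully: in the definition \eqref{eq:r0n} use $|\kappa(r)|\le C\min\{1,r^{-1}\}$, and note that each factor $|V(z_j)| = \gamma|z_j-x_0|^{-(2-\eta)}|\chi(\ldots)|$ contributes, after passing to geodesic normal coordinates at $x_0$ and using $U_0$-bounds, a kernel of the form $|z_{j-1}-z_j|^{-1}|z_j-x_0|^{-(2-\eta)}|z_j-z_{j+1}|^{-1}$; rescaling all variables by $\sqrt t$ and the base point correspondingly shows the whole $n$-fold integral scales like $t^{-n\eta/2}$ provided the resulting rescaled integral converges, and convergence at the origin and at the singularity $z_j=x_0$ holds exactly when $n<2/\eta$ (so that $(2-\eta)n < 4$ and the combined singularity is integrable in $\mathbb R^{3n}$ — this is a standard iterated-convolution estimate, e.g. via the Rodnianski--Schlag bound \eqref{eq:rodschl} adapted to the Kato-type norm with the extra $\eta$). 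The uniformity in $x$ comes from the fact that the worst case is $x$ near $x_0$ and the rescaled integral is continuous in the shift parameter.

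Third, for the limit \eqref{eq:example-potential-lim}, I would substitute $x = y/\sqrt t$ in \eqref{eq:r0n}, change variables $z_j \mapsto \exp_{x_0}(w_j/\sqrt t)$, and pass to the limit $t\to\infty$: the metric quantities converge ($U_0\to 1$, $d_g\to$ Euclidean distance, $dv_g\to dw$, and the cutoffs $\chi(\epsilon^{-1}d_g)\to 1$ since all distances shrink like $t^{-1/2}$), and $\kappa$ is continuous, so dominated convergence — justified by the integrable majorant from the previous step, which is exactly where $n<2/\eta$ is used — gives the stated Euclidean integral, with $|z_j + y|^{2-\eta}$ appearing because $d_g(\exp_{x_0}(w_j/\sqrt t), x_0)^{-(2-\eta)} \cdot t^{-3/2}$-type bookkeeping turns the potential's singularity at $x_0$ into a singularity at $z_j = -y$ after the final relabelling $w_j = z_j\sqrt t + y$; care is needed to get the shift right, and the finiteness of the right-hand side is the same convergence statement as before. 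The main obstacle I anticipate is the bookkeeping in this last scaling step — correctly matching powers of $t$ coming from the $n$ volume elements, the $n$ potential factors, the kernel factors $d_g^{-1}$, and the overall prefactor, so that everything collapses to exactly $t^{-n\eta/2}$ — together with producing a clean dominating function valid uniformly in $t$ and $y$ to license the passage to the limit.
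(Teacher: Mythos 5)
Your overall architecture coincides with the paper's: reduce the tail $n\ge 1+\lfloor 1/\eta\rfloor$ via Lemma \ref{lem:Lq-potential} with $q\nearrow 3/(2-\eta)$, handle the borderline term $n=1/\eta$ (integer case) via \eqref{eq:example-potential-bound}, and obtain \eqref{eq:example-potential-bound}--\eqref{eq:example-potential-lim} by rescaling $z_j=(y+\zeta_j)/\sqrt t$ in geodesic normal coordinates and applying dominated convergence; your power-of-$t$ bookkeeping ($t^{-3n/2}$ from volumes, $t^{(n+1)/2}$ from the kernel denominators, $t^{-1/2}$ from $\kappa(\sqrt t\,\Sigma)\Sigma$, $t^{n(2-\eta)/2}$ from the potentials, totalling $t^{-n\eta/2}$) is exactly what the paper does.

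There is, however, a genuine gap at the analytic heart of the argument: the finiteness of the limiting integral over $\R^{3n}$, equivalently the integrable majorant. You attribute the constraint $n<2/\eta$ to local integrability ``at the origin and at the singularity'' and write it as $(2-\eta)n<4$; both are wrong. The local singularities $|z_j-z_{j+1}|^{-1}|z_j+y|^{-(2-\eta)}$ are always locally integrable in each $\R^3$ factor; the constraint comes from integrability \emph{at infinity}. Using $|\kappa(r)|\le C(1+r)^{-2}$, the rescaled integrand has total degree $3n-1-(n+1)-n(2-\eta)=n\eta-2$ at large scales, so one needs precisely $n\eta<2$. Your two proposed tools cannot deliver this: the Rodnianski--Schlag/Kato-norm bound \eqref{eq:rodschl} only uses $|\kappa|\le C$ and yields the uniform bound \eqref{eq:r0sup} with no decay in $t$; and the H\"older argument from Theorem \ref{thm:weylptwsharp} requires $\sup_x\int d_g(x,y)^{-(n+1)/n}|V(y)|\,dv_g<\infty$, which for this $V$ holds only when $n>1/\eta$ and then gives only $t^{-1/2}$, not $t^{-n\eta/2}$. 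The paper instead proves the finiteness by an explicit induction on $n$, integrating out one variable at a time with estimates of the form $\sup_{z''}\int_{\R^3}(1+a+|z|+|z-z'|)^{-\beta}|z-z'|^{-1}|z+z''|^{-(2-\eta)}dz\le C(1+a+|z'|)^{-(\beta-\eta)}$, each step costing $\eta$ in the decay exponent starting from $\beta=2$; this is the step you would need to supply.
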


After the proof of this proposition we will discuss the singular term for $n=1$ in more detail and show, in particular, that this term is not identically zero for $\gamma\neq 0$. This shows that the sharp pointwise Weyl law can be violated for a Kato class potential. We also note that as the potential becomes more singular (that is, as $\eta$ decreases to zero), the number of `additional' terms becomes arbitrarily large.

\begin{proof}
	Since the potential $V$ belongs to $L^q$ for all $q<3/(2-\eta)$, we may apply Lemma \ref{lem:Lq-potential} and obtain 
	$$
	\sup_{x\in M}\sum_{n\ge 1 + \left\lfloor 1/\eta \right\rfloor	} |r^{(n,V)}_0(t,x)|=\cO(t^{-1/2}) \,.
	$$
	Here we used the fact that $\frac{q}{2q-3}\searrow \frac 1\eta$ as $q\nearrow 3/(2-\eta)$, and therefore $N_q=1+\left\lfloor\frac{q}{2q-3}\right\rfloor = 1+ \left\lfloor 1/\eta \right\rfloor$ for all $q$ below but sufficiently close to $3/(2-\eta)$.
	
	This, together with Corollary \ref{coro:weyl-V} implies that
	$$
	\1(-\Delta_g + V\le t)(x,x) = \frac{t^{3/2}}{6\pi^2} - \frac12 \sum_{1\le n\le \lfloor 1/\eta\rfloor} (-1)^{n+1} r_{0,\epsilon}^{(n,V)}(t,x)\, t^{3/2} + \mathcal O(t) \,.
	$$
	This proves the first claim in the proposition, except in the case where $1/\eta$ is an integer and $n=1/\eta$. The fact that this term can also be included in the remainder term follows from \eqref{eq:example-potential-bound} (noting that $1/\eta <2/\eta$), which will be proved below.
	
	From now on, we concentrate on proving \eqref{eq:example-potential-bound} and \eqref{eq:example-potential-lim}. It is easy to see that the limit is independent of $\epsilon$. Therefore we may assume in the following that $\epsilon$ is so small (depending on $n$) and $t$ is so large (depending on $y$) that all points in the support of the integrand of the $n$-fold integral defining $r_{0,\epsilon}^{(n,V)}$ are contained in $B_{\rho/2}(x_0)$. Consequently we can compute the integral defining $r^{(n,V)}_0(t,y)$  in geodesic normal coordinates around $x_0$. We identify $x_0$ with the origin, $T_{x_0}M$ with $\R^3$ and $B_{\rho/2}(x_0)$ with $\cU$. We recall that in these coordinates, $dv_g(z)=\theta(z,0)\,dz$. We will also identify $\exp_{x_0}(y)$ with $y$ and $g_{x_0}$ with the Euclidean metric to lighten the notation. In particular, we have $d_g(x_0,y)=|y|$ for all $y$. We also use the following representation of the Riemannian distance:
 $$d_g(z,z')=\sqrt{\langle z-z',G(z,z')(z-z')\rangle}=:|z-z'|_{z,z'},\quad z,z'\in \cU,$$
 for some smooth family $(G(z,z'))$ of positive definite symmetric matrices with 
 $$G(z,z)=(D_z\exp_{x_0})^T g(\exp_{x_0}(z)) D_z\exp_{x_0},$$
 see for instance \cite[Eq. (4.3)]{DosKenSal-14}. In particular, the map $u\mapsto |u|_{z,z'}$ is homogeneous of degree $1$. Changing variables $z_j= (y+\zeta_j)/\sqrt{t}$ in the integral defining $r^{(n,V)}(t,y)$, we find (with the convention $\zeta_0=0=\zeta_{n+1}$)
 \begin{align*}
 	& r_{0,\epsilon}^{(n,V)}(t,y) =\frac{t^{-n\eta/2}\,\gamma^n}{2(4\pi)^{n+1}}
 \int_{\R^3} d\zeta_n \int_{\R^3} d\zeta_{n-1} \cdots \int_{\R^3} d\zeta_1 \\
 	& \quad \left( \prod_{m=1}^n \frac{\theta(0,(y+\zeta_m)/\sqrt t)\, \chi( |\zeta_{m-1}-\zeta_m|/(\epsilon\sqrt t))}{\theta((y+\zeta_{m-1})/\sqrt t,(y+\zeta_m)/\sqrt t)^{1/2}} \right) \frac{\chi(|\zeta_{n}|_{(y+\zeta_n)/\sqrt t,y/\sqrt t}/(\epsilon\sqrt t))}{\theta((y+\zeta_n)/\sqrt t,y/\sqrt t)^{1/2}} \\
	& \quad \times \kappa\left( \sum_{m=0}^n |\zeta_{m}-\zeta_{m+1}|_{(y+z_{m})/\sqrt t,(y+\zeta_{m+1})/\sqrt{t}}\right) 
	\frac{\sum_{m=0}^n |\zeta_{m}-\zeta_{m+1}|_{(y+z_{m})/\sqrt t,(y+\zeta_{m+1})/\sqrt{t}}}{\prod_{m=0}^n |\zeta_{m}-\zeta_{m+1}|_{(y+z_{m})/\sqrt t,(y+\zeta_{m+1})/\sqrt{t}}} \\ 
	& \quad \times \prod_{m=1}^n \frac{\chi(|y+\zeta_m|/\sqrt t)}{|y+\zeta_m|^{2-\eta}} \,.
 \end{align*}
 We now let $t\to\infty$ and use dominated convergence. Note that the integrand in the above formula tends pointwise to the integrand in \eqref{eq:example-potential-lim} since for all $y,\zeta,\zeta',\zeta''\in\R^3$, 
 $$
 |\zeta|_{(y+\zeta')/\sqrt{t},(y+\zeta'')/\sqrt{t}}\to |\zeta|_{0,0}=|\zeta|
 \qquad\text{as}\ t\to+\ii \,,
 $$
 as well as $\theta(0,0)=1=\chi(0)$.
 
 To get the integrable majorant, one can use 
 $$ (1/C)|z-z'|\le d_g(z,z') \le C|z-z'|
 \qquad\text{for all}\ z,z'\in \cU \,,
 $$
 as well as $1/C \le \theta\le C$ and $\chi\le C$. The required bound then follows immediately from the bound $|\kappa(r)|\le C (1+|r|)^{-2}$ and the fact that, if $n<2/\eta$,
 $$\int_{\R^3}\cdots\int_{\R^3}\frac{\sum_{m=0}^n |z_m - z_{m+1}|\, dz_1\cdots dz_n}{(1+ \sum_{m=0}^n |z_m - z_{m+1}|)^2\  (\prod_{m=0}^n |z_m - z_{m+1}| ) \ |z_1+y|^{2-\eta}\cdots|z_n+y|^{2-\eta}}<+\ii \,,
 $$
 where again $z_0=0=z_{n+1}$. This fact may be proved inductively, using the estimates
 \begin{align*}
 & \sup_{z''}\int_{\R^3}\frac{dz}{(1+a+|z|+|z-z'|)^2|z||z+z''|^{2-\eta}}\le\frac{C}{(1+a+|z'|)^{2-\eta}} \,,\\
 & \sup_{z''}\int_{\R^3}\frac{dz}{(1+a+|z|+|z-z'|)^\beta|z-z'||z+z''|^{2-\eta}}\le\frac{C}{(1+a+|z'|)^{\beta-\eta}}  \,,
 \end{align*}
 which hold for all $z'\in\R^3$, all $a>0$, and all $\beta>\eta$ for some $C>0$ independent of $z',a$. The claimed dependence of the right sides on $a$ follows simply by scaling. 
 
 This completes the proof of \eqref{eq:example-potential-lim}. The same argument yields the uniform bound \eqref{eq:example-potential-bound}.
\end{proof}

In the remainder of this subsection, we discuss the limiting profile that arises in the case $n=1$ in \eqref{eq:example-potential-lim}. We set
$$
\Xi_\eta(y) := \frac{1}{16 \,\pi^2}\int_{\R^3}dz \, \frac{\kappa(2|z|)}{|z| \, |z+y|^{2-\eta}} \,,
$$
so that \eqref{eq:example-potential-lim} reads $\lim_{t\to+\infty} t^{\eta/2} r_{0,\epsilon}^{(1,V)}(t,y/\sqrt t) = \gamma\, \Xi_\eta(y)$ and, by Corollary \ref{coro:weyl-V},
$$
\1(-\Delta_g+V\le t)(y/\sqrt t,y/\sqrt t) = \frac{t^{3/2}}{6\pi^2} - \frac{\gamma\, t^{(3-\eta)/2}}2\, \Xi_\eta(y) + o(t^{(3-\eta)/2}) \,. 
$$
Note that due to the oscillation of the function $\kappa$ it is not a priori clear whether $\Xi_\eta$ is not identically zero. However, physical intuition suggests that places where the potential $V$ assumes large positive values repel `particles' and conversely, places where $V$ is large negative attract `particles'. Thus, we expect the particle density at the singularity, $\1(-\Delta_g+V\le t)(x_0,x_0)$, to be decreasing in $\gamma$ and therefore $\Xi_\eta(0)>0$. The following lemma confirms this and also provides the asymptotic behavior of $\Xi_\eta$ at infinity.

\begin{lemma}\label{xifunction}
	Let $\eta\in(0,1)$. Then
	$$
	\Xi_\eta(0) = \frac{1}{\pi^2}\, \frac{2^{1-\eta}}{3-\eta} \, \frac{\cos(\pi\eta/2)}{1-\eta}\Gamma(\eta) >0
	$$
	and
	$$
	\lim_{|y|\to\infty} |y|^{2-\eta}\, \Xi_\eta(y) = \frac{1}{2\pi^2} \,.	
	$$
\end{lemma}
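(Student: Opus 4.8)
\emph{Strategy.} The plan is to perform the angular integration in $\Xi_\eta$ explicitly, reducing everything to one-dimensional integrals, and then to exploit the elementary identity
$$
\kappa(s)\, s = -\frac{8}{\pi}\,\frac{d}{ds}\Big(\frac{\sin s}{s}\Big),
$$
which follows at once from \eqref{eq:kappa} and the relation $\sin s - s\cos s = -s^2\frac{d}{ds}(\sin s/s)$. In particular $\int_0^\infty \kappa(2r)\,r\,dr = -\frac{2}{\pi}\big[\tfrac{\sin 2r}{2r}\big]_0^\infty = \frac{2}{\pi}$, a fact used in both halves of the proof.

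\emph{The value at the origin.} Setting $y=0$ and passing to polar coordinates gives
$$
\Xi_\eta(0) = \frac{1}{4\pi}\int_0^\infty \kappa(2r)\,r^{\eta-1}\,dr = \frac{2^{-\eta}}{4\pi}\int_0^\infty \kappa(s)\,s^{\eta-1}\,ds .
$$
I would then integrate by parts three times; all boundary terms at $0$ and at $\infty$ vanish because $0<\eta<2$ and because $\sin s - s$ and $\cos s-1$ vanish to high order at the origin. This reduces the integral to
$$
\int_0^\infty \kappa(s)\,s^{\eta-1}\,ds = \frac{8}{\pi}\,\frac{1}{3-\eta}\int_0^\infty s^{\eta-2}\sin s\,ds ,
$$
and the last integral is the classical Mellin transform $\int_0^\infty s^{u-1}\sin s\,ds = \Gamma(u)\sin(\pi u/2)$ with $u=\eta-1$, valid by analytic continuation for $-1<\re u<1$. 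Using $\Gamma(\eta-1)=\Gamma(\eta)/(\eta-1)$ and $\sin(\pi(\eta-1)/2)=-\cos(\pi\eta/2)$ and collecting constants yields exactly the claimed formula; positivity is then immediate, since $2^{1-\eta}$, $3-\eta$, $\cos(\pi\eta/2)$, $1-\eta$ and $\Gamma(\eta)$ are all positive for $\eta\in(0,1)$.

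\emph{The behaviour at infinity.} Again using polar coordinates, but now computing the spherical average of the Riesz kernel exactly, $\int_{\mathbb{S}^2}|r\omega+y|^{-(2-\eta)}\,d\omega = \frac{2\pi}{\eta\, r|y|}\big((r+|y|)^\eta-|r-|y||^\eta\big)$, one obtains
$$
|y|^{2-\eta}\,\Xi_\eta(y) = \frac{1}{4\pi}\int_0^\infty \kappa(2r)\,r\,\Phi_{|y|}(r)\,dr, \qquad \Phi_{|y|}(r):=\frac{|y|^{1-\eta}}{2\eta\, r}\big((r+|y|)^\eta-|r-|y||^\eta\big).
$$
The function $\Phi_{|y|}$ is explicit, and differentiating $\log\Phi_{|y|}$ and using the concavity of $a\mapsto a^\eta$ shows it is \emph{unimodal}: $\Phi_{|y|}(0^+)=1$, it increases strictly on $(0,|y|)$ to its maximum $\Phi_{|y|}(|y|)=2^\eta/(2\eta)$, then decreases strictly to $\Phi_{|y|}(\infty)=0$. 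Hence $\Phi_{|y|}$ is bounded and has total variation at most $2^\eta/\eta$, \emph{uniformly in $|y|$}, and $\Phi_{|y|}(M)=1+O(M^2/|y|^2)$ for each fixed $M$. Since $\int_0^\infty\kappa(2r)\,r\,dr=\frac2\pi$, the assertion is equivalent to $\int_0^\infty\kappa(2r)\,r\,(\Phi_{|y|}(r)-1)\,dr\to0$. Writing $\kappa(2r)r=-\frac{2}{\pi}\frac{d}{dr}\big(\frac{\sin 2r}{2r}\big)$ and integrating by parts — the boundary terms vanish because $\Phi_{|y|}(0)-1=0$ and $\frac{\sin 2r}{2r}\to0$ at $\infty$ — this equals $\frac{1}{2\pi^2}\int_0^\infty\frac{\sin 2r}{2r}\,\Phi_{|y|}'(r)\,dr$. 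Splitting the integral at a fixed $M$ and using $|\sin(2r)/(2r)|\le\min\{1,1/(2r)\}$ together with the variation bound, its absolute value is at most $(\Phi_{|y|}(M)-1)+2^{\eta-1}/(\eta M)$. Letting $|y|\to\infty$ (the first term tends to $0$) and then $M\to\infty$ gives the limit $0$, so $|y|^{2-\eta}\Xi_\eta(y)\to\frac{1}{4\pi}\cdot\frac{2}{\pi}=\frac{1}{2\pi^2}$.

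The only genuinely delicate point — and what I expect to be the main obstacle — is that $\kappa(2|z|)/|z|$ is \emph{not} integrable on $\R^3$: its radial profile $\kappa(2r)r$ decays only like $1/r$, so one cannot simply pass to the limit under the integral sign in the definition of $\Xi_\eta$. The plan circumvents this through the two structural facts that $\Phi_{|y|}$ is uniformly bounded and has uniformly bounded total variation (both consequences of unimodality with an $|y|$-independent maximum), combined with one integration by parts that transfers the decay onto the factor $\sin(2r)/(2r)$. The remaining ingredients — the explicit angular integral and the integrations by parts for $\Xi_\eta(0)$ — are routine.
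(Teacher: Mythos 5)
Your proposal is correct. For $\Xi_\eta(0)$ you follow essentially the paper's route: after the angular integration and one integration by parts (one suffices, not three, since $\tfrac{d}{ds}(\sin s-s\cos s)=s\sin s$), everything reduces to $\int_0^\infty s^{\eta-2}\sin s\,ds$, which you evaluate by quoting the Mellin transform $\int_0^\infty s^{u-1}\sin s\,ds=\Gamma(u)\sin(\pi u/2)$; the paper derives this very formula by rotating the contour of $\int_0^\infty (e^{ir}-1)r^{\eta-2}\,dr$ onto the imaginary axis, so the two computations coincide. For the limit at infinity your argument is genuinely different and, in my view, cleaner. The paper rescales so that $16\pi^2\,\Xi_\eta(y)$ becomes $\tfrac{2^{3-\eta}}{\eta}s^{\eta-1}\int_0^\infty\sin(s\rho)F(\rho)\,d\rho$ with $F(\rho)=\rho\int_\rho^\infty(\rho')^{-3}\bigl((\rho'+2)^\eta-|\rho'-2|^\eta\bigr)d\rho'$, and then extracts the leading $\eta\,2^\eta/s$ behaviour of this oscillatory integral through several further integrations by parts, an explicit cancellation between the contributions of $[0,1]$ and $[1,\infty)$, and the Riemann--Lebesgue lemma. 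You instead keep the spherical average as a profile $\Phi_{|y|}(r)$ multiplying the fixed radial density $\kappa(2r)r$, observe that $\Phi_{|y|}$ is unimodal with the $|y|$-independent maximum $2^{\eta-1}/\eta$ at $r=|y|$ (hence uniformly bounded with uniformly bounded total variation; this does follow as you say from the convexity of $r\mapsto(s+r)^\eta-(s-r)^\eta$ on $(0,s)$ and the concavity of $x\mapsto x^\eta$ on $(s,\infty)$), and that $\Phi_{|y|}\to1$ locally uniformly; a single integration by parts against $-\tfrac{2}{\pi}\tfrac{d}{dr}\bigl(\tfrac{\sin 2r}{2r}\bigr)$ then tames the conditional convergence of the limiting integral. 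This correctly isolates and resolves the one genuine obstruction, namely that $\kappa(2r)r$ is not absolutely integrable so one cannot pass to the limit naively, and it reaches the answer with noticeably less computation than the paper, at the cost of verifying the monotonicity claims. The only blemishes are cosmetic: the prefactor of $\int_0^\infty\tfrac{\sin 2r}{2r}\,\Phi_{|y|}'(r)\,dr$ is $\tfrac{2}{\pi}$ (or $\tfrac{1}{2\pi^2}$ once the outer $\tfrac{1}{4\pi}$ is absorbed), which is immaterial since you only need that integral to vanish, and ``three'' integrations by parts should read ``one''.
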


Numerical integration indicates that $\Xi_\eta$ is positive and decreasing, but not convex.

\begin{proof}
	We have
	\begin{align*}
	16\,\pi^2 \,\Xi_\eta(0) = 2^{-\eta} 4\pi  \int_0^\infty dr\, \frac{\kappa(r)}{r^{1-\eta}} = 2^{5-\eta} \int_0^\infty dr\, \frac{\sin r - r\cos r}{r^{4-\eta}} = \frac{2^{5-\eta}}{3-\eta} \int_0^\infty dr\,\frac{\sin r}{r^{2-\eta}} \,.
	\end{align*}
	Here, when integrating by parts we used the fact that $\lim_{r\to 0} r^{-3+\eta}(\sin r - r\cos r)= 0$. We cannot integrate by parts again if we want to have the integral absolutely convergent at infinity. We write
	$$
	\int_0^\infty dr\,\frac{\sin r}{r^{2-\eta}} = \im \int_0^\infty dr\, \frac{e^{ir}-1}{r^{2-\eta}} \,.
	$$
	The term $1$ was subtracted to make the integral converge at the origin. Thus, we are integrating the analytic function $(e^{iz}-1)/e^{(2-\eta)\log z}$ along the positive real axis. Since the integrand is $O(|z|^{-2+\eta})$ as $|z|\to\ii$ on the upper half plane and $O(|z|^{-1+\eta})$ as $|z|\to0$, we can move the contour to the positive imaginary axis and obtain
	$$
	\int_0^\infty dr\, \frac{e^{ir}-1}{r^{2-\eta}} = i \int_0^\infty dt\, \frac{e^{-t}-1}{t^{2-\eta} e^{(1-\eta/2)i\pi}} = i e^{-(1-\eta/2)i\pi} \Gamma(-1+\eta) \,.
	$$
	In the last formula we have used the standard method of analytic continuation of the gamma function. Using
	$$
	i e^{-(1-\eta/2)i\pi} \Gamma(-1+\eta) = \frac{i e^{i\pi\eta/2}}{1-\eta}\Gamma(\eta) \,,
	$$
	we obtain the claimed formula for $\Xi_\eta(0)$.
	
	In order to compute the large $y$ asymptotics we set $s:=|y|$ and compute
	\begin{align*}
	16\,\pi^2 \,\Xi_\eta(y) & = 2^{-\eta}\, 2\pi \int_0^\infty dr\, r\,\kappa(r) \int_{-1}^1 \frac{dt}{(r^2 - 4 r s t+ 4s^2)^{1-\eta/2}} \\
	& = \frac{2^{-\eta} \pi}{\eta} \int_0^\infty dr\, \frac{\kappa(r)}{s} \left( (r+2s)^{\eta} - |r-2s|^{\eta} \right) \\
	& = \frac{2^{3-\eta}}{\eta} \, s^{-1+\eta} \int_0^\infty d\rho\, \sin(s\rho)\, F(\rho) \,,
	\end{align*}
	where
	$$
	F(\rho) := \rho \int_\rho^\infty \frac{d\rho'}{(\rho')^3} \left( (\rho'+2)^\eta - |\rho'-2|^\eta \right). 
	$$
	Here we integrated by parts once. We shall show that
	$$
	\int_0^\infty d\rho\,\sin(s\rho)\, F(\rho) = \eta\,2^\eta\, s^{-1} + o(s^{-1}) \,,
	$$
	which implies the asymptotics in the lemma. Integrating by parts once and using the integrability of $F'$ on $[1,+\ii)$ and the Riemann--Lebesgue lemma, we obtain
	$$
	\int_1^\infty d\rho\, \sin(s\rho)\, F(\rho) = \frac{\cos s}{s}\, F(1) + o(s^{-1}) \,.
	$$
	This term is cancelled to leading order by
	$$
	\int_0^1 d\rho\, \sin(s\rho)\, \rho F(1) = - \frac{\cos s}{s}\, F(1) + \frac{\sin s}{s^2}\, F(1) \,.
	$$
	This leaves us with
	\begin{align*}
	\int_0^1 d\rho \sin(s\rho) \left( F(\rho) - \rho F(1)\right) & = \int_0^1 d\rho \sin(s\rho)\, \rho \int_\rho^1  \frac{d\rho'}{(\rho')^3} \left( (\rho'+2)^\eta - |\rho'-2|^\eta \right) \\
	& = \eta\, 2^\eta \int_0^1 d\rho \sin(s\rho)\, \rho \int_\rho^1  \frac{d\rho'}{(\rho')^2} \left( 1 + (\rho')^2 g(\rho') \right),
	\end{align*}
	where
	$$
	g(\rho') := \eta^{-1}\, 2^{-\eta} (\rho')^{-3} \left( (\rho'+2)^\eta - |\rho'-2|^\eta - \eta\,2^\eta\,\rho' \right).
	$$
	Using the fact that $g\in C^1[0,1]$ one finds after two integrations by part that
	$$
	\int_0^1 d\rho \sin(s\rho)\, \rho \int_\rho^1  d\rho'\, g(\rho') = \mathcal O(s^{-2}) \,.
	$$
	Finally, by an explicit computation,
	$$
	\int_0^1 d\rho \sin(s\rho)\, \rho \int_\rho^1  \frac{d\rho'}{(\rho')^2} = \frac1s + \mathcal O(s^{-2}) \,.
	$$
	Collecting all these bounds, we obtain the claimed asymptotics of $\int d\rho\,\sin(s\rho)\,F(\rho)$.	
\end{proof}

	%  For $y=0$, one finds that 
	%  $$r(t,x_0)\sim ct^{-\eta/2},\quad c:=-\frac{1}{32\pi^2}\int_{T_{x_0}M}\frac{\kappa(|z|)}{|z|^{3-\eta}}\,dz$$
	%  as $t\to+\ii$ (where we wrote $|\cdot|$ instead of $|\cdot|_{g_{x_0}}$ for shortness), showing that the term $r(t,x)$ may have a pointwise decay as slow as one wishes, if the singularity of the potential approaches $1/|x|^2$. One can be more precise, using for instance the bound
	%  $$|F(y,y')-F(0,y')|\le C|y|^{1/2}\frac{1}{1+|y'|^{1-\eta}},$$
	%  which shows that 
	%  $$t^{\eta/2}r(t,\exp_{x_0}(y/\sqrt{t}))\to F(0,y)=-\frac{1}{32\pi^2}\int_{T_{x_0}M}\frac{\kappa(|z|)}{|z+y|^{3-\eta}}\,dz$$
	%  as $t\to+\ii$, in $L^\ii(y\le\epsilon\sqrt{t})$. In particular, $r(t,x)\sim t^{-\eta/2}$ on the scale $d_g(x,x_0)\lesssim t^{-1/2}$.

%%%%%%%%%%%%

\subsection{Integrated Weyl laws}

Next, we show that, while the term $r_{0,\epsilon}^V$ can lead to a violation of the pointwise sharp Weyl law, the integrated version of the sharp Weyl law remains valid for Kato class potentials. In fact, it even holds for potentials which are sums of Kato class and $L^{3/2}(M)$ functions.

\begin{proof}[Proof of Theorem \ref{thm:weylintsharp}]
	We begin with the case where $V$ belongs to the Kato class (which was treated by a different method in \cite{HuaSog-20}). Clearly, we have
	$$
	N(t,-\Delta_g+V) = \int_M \1(-\Delta_g +V\le t)(x,x)\,dx \,.
	$$
	In view of the uniform pointwise asymptotics in Corollary \ref{coro:weyl-V}, it suffices to show that
	$$
	\int_M r_{0,\epsilon}^V(t,x)\,dx=\cO(t^{-1/2})
	\qquad\text{as}\ t\to+\ii \,.
	$$ 
	The parameter $\epsilon$ which enters the definition of $r_{0,\epsilon}^V$ is fixed throughout this proof.
	
 From \eqref{eq:kappabound1} we deduce that for all $a,b,x\in M$ and all $\alpha\ge0$,
 \begin{align*}
 	& \int_{\substack{d_g(x,a)<\epsilon\\ d_g(x,b)<\epsilon}}\kappa(\sqrt{t}(d_g(x,a)+\alpha+d_g(b,x))\frac{d_g(x,a)+\alpha+d_g(b,x)}{d_g(a,x)d_g(x,b)}\,dv_g(x) \\
	& \qquad\le \frac{C}{\sqrt{t}}\int_{\substack{d_g(x,a)<\epsilon\\ d_g(x,b)<\epsilon}}\frac{dx}{d_g(a,x)d_g(x,b)}\le \frac{C' \epsilon}{\sqrt{t}}.
 \end{align*}
 We do not track the dependence of the constants in this proof on the geometry of $M$. Inserting this bound into the definition of \eqref{eq:defr0} we easily obtain that for all $n\ge1$ 
 $$
 \int_M \left| r^{(n,V)}_0(t,x) \right| dx \le C''\epsilon (A\|V\|_{\cK(\epsilon)})^{n-1} \|V\|_{L^1}  t^{-1/2} \,. 
 $$
 Decreasing $\epsilon$ if necessary we can assume that $A\|V\|_{\cK(\epsilon)}\le 1/2$ and then we obtain the claimed bound by summing over $n$.
 
 Let us now consider the general case where $V=V_1+V_2$ with $V_1$ in the Kato class and $V_2\in L^{3/2}(M)$. We make use of the fact that for any self-adjoint lower $A$ and $B$ one has $N(0,A+B)\le N(0,A)+N(0,B)$. This implies that for any $\delta\in (0,1]$,
 \begin{align*}
 & N(0,-(1+\delta)\Delta_g+V_1-t) - N(0,-\delta\Delta_g-V_2) \le N(0,-\Delta_g+V-t) \\
 & \qquad \le N(0,-(1-\delta)\Delta_g+V_1-t) + N(0,-\delta\Delta_g+V_2) \,.
 \end{align*}
By the first part of the proof, we have for any $\delta\in(0,1/2]$,
$$
N(0,-(1\pm\delta)\Delta_g+V_1-t) = N((1\pm\delta)^{-1} t,-\Delta_g + (1\pm\delta)^{-1} V_1) = \frac{t^{3/2}}{6\pi^2}+\cO(\delta t^{3/2}+t) \,.
$$ 
(Here we use the fact that the bounds in Corollary \ref{coro:weyl-V} and in the first part of the proof hold for $(1\pm\delta)^{-1}V$ instead of $V$ with constants independent of $\delta\in(0,1/2]$.)

In order to deal with the contribution from $V_2$, we shall use the CLR inequality
 \begin{equation}\label{eq:CLR}
    N(0,-\Delta_g+W)\le C\int_M (W(x)-E_*)^{3/2}_-\,dx
 \end{equation}
 with constants $C$ and $E_*$ depending only on $M$. This bound is essentially known and we give some references after the proof. In our case it implies that
 $$
 N(0,-\delta\Delta_g\pm V_2) = N(0,-\Delta_g\pm \delta^{-1}V_2) \le C \int_M (\delta^{-1}V_2(x) -E_*)^{3/2}_-\,dx = \mathcal O(\delta^{-3/2}) \,.
 $$
 Choosing $\delta=t^{-3/5}$ we get $\delta t^{3/2}=\delta^{-3/2}=t^{9/10} = \mathcal O(t)$. Combining the previous bounds we obtain the assertion of the theorem.
\end{proof}

Since we have not been able to find the CLR inequality \eqref{eq:CLR} stated explicitly in the literature, let us briefly comment on its proof. Inequalities of this type go back to Cwikel \cite{Cwikel-77}, Lieb \cite{Lieb-76a,Lieb-79} and Rozenblum \cite{Rozenblum-72,Rozenblum-76} in the Euclidean case. Further proofs in that case were given in \cite{Fefferman-83,LiYau-83,Conlon-85,Frank-14}. The proofs in \cite{Rozenblum-72,LiYau-83} extend to the case of a manifold. For instance, \cite{FraLieSei-10}, which is based on \cite{LiYau-83} (see also \cite{LevSol-97,BlaStuRez-87}), implies that the Sobolev inequality
$$
\int_M \left( |\nabla_g u|^2 + E_* |u|^2\right)dv_g(x) \ge S \left( \int_M |u|^6\,dv_g(x) \right)^{1/3},
$$
which is valid with some positive constants $S$ and $E_*$ depending on $M$, implies  \eqref{eq:CLR} with $C\le \sqrt e S^{-3/2}$. (Conversely, it is elementary to see that \eqref{eq:CLR} implies the Sobolev inequality with constant $S$ satisfying $C\ge S^{-3/2}$.) Alternatively, one can deduce \eqref{eq:CLR} from \cite[Thm.~3.2 and Lem.~3.4]{Frank-14}. This needs as an ingredient the bound $\1(-\Delta_g\le E)(x,x)\le C'(E+E_*)^{3/2}$ for all $E\ge 0$ and $x\in M$, which itself follows from a rough version of the Weyl law for $-\Delta_g$.

\begin{remark}\label{rk:improved-remainder}
	The argument in the previous proof based on \eqref{eq:CLR} shows that if $V\in L^{3/2}(M)$ and $N(t,-\Delta_g) = t^{3/2}/(6\pi^2) + o(t)$, then also $N(t,-\Delta_g +V) = t^{3/2}/(6\pi^2) + o(t)$. More precisely, if $\left| N(t,-\Delta_g) - t^{3/2}/(6\pi^2)\right| \le r(t)$ for a nondecreasing function $r$, then the previous proof shows that, if $V\in L^{3/2}(M)$, then
	$$
	\left|N(t,-\Delta_g +V) - t^{3/2}/(6\pi^2)\right|\le r(2t) + C t^{9/10} \,.
	$$
	Since the CLR inequality is also valid in any dimension $N\ge3$, the previous argument extends to this case with the remainder $\cO(t^{9/10})$ replaced by $\cO(t^{N^2/(2(N+2))})$. For instance, when $M$ has nonpositive sectional curvatures, then this bound and B\'erard's theorem yield
	$$
	N(t,-\Delta_g +V) = \frac{|\Sph^{N-1}|}{N\,(2\pi)^N}\ t^{N/2} + \mathcal O (t^{(N-1)/2}/\ln t)
	$$
	for $V\in L^{N/2}(M)$. The same result, but for Kato class $V$, was shown in \cite[Thm.~1.3]{HuaSog-20}. On the other hand, our argument  is not strong enough to reproduce the $L^{3/2}$ analogue of the $\mathcal O(t^{3/4})$ bound for $M=(\R/\Z)^3$ and Kato class $V$, which is shown in \cite[Thm.~1.4]{HuaSog-20}.
\end{remark}

We expect that $N(t,-\Delta_g) = t^{3/2}/(6\pi^2) + o(t)$ implies $N(t,-\Delta_g+V) = t^{3/2}/(6\pi^2) + o(t)$ for all Kato class $V$, but we do not know how to prove it.

%%%%%%%%%%%%
%%%%%%%%%%%%

\section{Euclidean domains}\label{sec:eucl}

In this section we will show that the previous construction extends easily to domains in $\R^3$. Let $\Omega\subsetneq\R^3$ be an open set. We emphasize that no regularity assumption on $\partial\Omega$ and no finite measure assumption are imposed, except in Corollary \ref{coro:weyl-V-domain}. Let us denote by $-\Delta_\Omega^D$ the Dirichlet Laplacian on $\Omega$. Let $V$ a Kato class potential on $\Omega$. The starting point is a parametrix for the Green's function 
$$G_\lambda^V(x,y):=(-\Delta_\Omega^D+V+\lambda)^{-1}(x,y),\ x\neq y\in\Omega.$$
which will be the Green's function for same operator on $\R^N$, thus getting rid of the boundary:
$$T_{\lambda}^V(x,y):=(-\Delta_{\R^3}+V\1_\Omega+\lambda)^{-1}(x,y),\ x\neq y\in\R^N.$$
We begin with an estimate on this Green's function. 

\begin{lemma}\label{lem:domain-decay-green-V}
 There exist $C>0$ and $\Lambda>0$ such that for all $x\neq y\in\R^N$ and for all $\lambda\ge\Lambda$, we have
 $$|T_{\lambda}^V(x,y)|\le C\frac{e^{-\frac{\sqrt{\lambda}}{2}|x-y|}}{|x-y|}.$$
\end{lemma}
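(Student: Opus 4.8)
The plan is to route the estimate through the heat semigroup by the Feynman--Kac formula rather than through a Neumann series for the resolvent; this dispatches the Kato-class singularities in one stroke. First observe that, extending $V\1_\Omega$ by zero, one obtains a function of the Kato class of $\R^3$: for $x\in\Omega$ the defining integral is bounded by that of $V$, and for $x\notin\Omega$ one compares with the point of $\Omega$ nearest to $x$ (if the ball $\{\,|x-\cdot|<r\,\}$ meets $\Omega$ at all), losing only a factor $2$. Hence one may work on all of $\R^3$ with the potential $V\1_\Omega$ and forget $\partial\Omega$ completely.

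Then I would proceed in four steps. \emph{(i)} Recall the Khasminskii-type bound: there are constants $C_0\ge1$ and $\omega\ge0$, depending only on $V$ and $\Omega$, such that the Brownian bridge $b$ from $x$ to $y$ in time $t$ satisfies $\esp_t^{x\to y}\!\big[\exp\big(\int_0^t|V|\1_\Omega(b_s)\,ds\big)\big]\le C_0 e^{\omega t}$ for all $x,y\in\R^3$ and all $t>0$; this is a standard consequence of the definition of the Kato class (see e.g.\ \cite{AizSim-82} and the Kato-class facts recalled in the appendix). \emph{(ii)} Plug this into the Feynman--Kac representation
\[
e^{-t(-\Delta_{\R^3}+V\1_\Omega)}(x,y)=(4\pi t)^{-3/2}e^{-|x-y|^2/(4t)}\,\esp_t^{x\to y}\!\Big[e^{-\int_0^t V\1_\Omega(b_s)\,ds}\Big],
\]
and use $e^{-\int_0^t V\1_\Omega(b_s)\,ds}\le e^{\int_0^t|V|\1_\Omega(b_s)\,ds}$ to get $\big|e^{-t(-\Delta_{\R^3}+V\1_\Omega)}(x,y)\big|\le C_0 e^{\omega t}(4\pi t)^{-3/2}e^{-|x-y|^2/(4t)}$; integrating this in $x$ and in $y$ bounds the semigroup on $L^1$ and on $L^\infty$, hence on $L^2$, by $C_0 e^{\omega t}$, so $-\Delta_{\R^3}+V\1_\Omega\ge-\omega$ and the Laplace representation $T_\lambda^V=\int_0^\infty e^{-\lambda t}\,e^{-t(-\Delta_{\R^3}+V\1_\Omega)}\,dt$ holds for $\lambda>\omega$. \emph{(iii)} For $x\ne y$ and $\lambda>\omega$, combine the two previous displays with the subordination identity $\int_0^\infty e^{-at}(4\pi t)^{-3/2}e^{-r^2/(4t)}\,dt=\frac{e^{-\sqrt a\,r}}{4\pi r}$ (valid for $a,r>0$; for $V\equiv0$ this is just the formula $(-\Delta_{\R^3}+a)^{-1}(x,y)=\frac{e^{-\sqrt a|x-y|}}{4\pi|x-y|}$ recalled earlier), taken with $a=\lambda-\omega$ and $r=|x-y|$, to obtain $\big|T_\lambda^V(x,y)\big|\le C_0\,\frac{e^{-\sqrt{\lambda-\omega}\,|x-y|}}{4\pi|x-y|}$. \emph{(iv)} Take $\Lambda:=\max\{2\omega,1\}$; then $\lambda\ge\Lambda$ forces $\lambda-\omega\ge\lambda/4$, hence $\sqrt{\lambda-\omega}\ge\frac12\sqrt\lambda$, and the lemma follows with $C=C_0/(4\pi)$.

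The only step that is not routine bookkeeping is \emph{(i)}: the uniform exponential-moment bound for the Brownian bridge is exactly where the Kato hypothesis enters, and it is the statement that makes the Kato class the right one here. It is well known, so I would cite it rather than reprove it. A purely analytic alternative, if one insists on avoiding probability, is to iterate the resolvent identity $T_\lambda^V=G_\lambda^0-G_\lambda^0(V\1_\Omega)T_\lambda^V$ with $G_\lambda^0(x,y)=\frac{e^{-\sqrt\lambda|x-y|}}{4\pi|x-y|}$ and to bootstrap the ansatz $|T_\lambda^V(x,y)|\le M\,\frac{e^{-\frac12\sqrt\lambda|x-y|}}{4\pi|x-y|}$: the engine is the inequality $\int_{\R^3}\frac{e^{-\sqrt\lambda|x-z|}}{4\pi|x-z|}|V\1_\Omega(z)|\,\frac{e^{-\frac12\sqrt\lambda|z-y|}}{4\pi|z-y|}\,dz\le\varepsilon(\lambda)\,\frac{e^{-\frac12\sqrt\lambda|x-y|}}{4\pi|x-y|}$ with $\varepsilon(\lambda)\to0$, derived from $\frac1{|x-z||z-y|}\le\frac1{|x-y|}\big(\frac1{|x-z|}+\frac1{|z-y|}\big)$ together with the vanishing of the Kato modulus of $V$ at scale $\lambda^{-1/2}$; the contraction $M\mapsto1+\varepsilon(\lambda)M$ then closes the argument once any finite bound of this shape is known a priori. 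I expect the Feynman--Kac route to be the shorter one to write.
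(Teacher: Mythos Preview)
Your proof is correct and takes a genuinely different route from the paper's. The paper expands $T_\lambda^V$ as a Neumann series $T_\lambda^0+\sum_{n\ge1}T_\lambda^{(n,V)}$ via the resolvent identity and bounds each term directly: the triangle inequality gives the exponential factor $e^{-\frac{\sqrt\lambda}{2}|x-y|}$, and the identity $\frac{|x-y|}{\prod_j|z_j-z_{j+1}|}\le\sum_k\prod_{j\ne k}\frac{1}{|z_j-z_{j+1}|}$ reduces the rest to powers of $\sup_x\int\frac{e^{-\sqrt\lambda|x-z|/2}}{|x-z|}|V(z)|\,dz\le C\|V\|_{\cK(8/\sqrt\lambda)}$, which is small for large $\lambda$. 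This is exactly the ``analytic alternative'' you sketch at the end. Your Feynman--Kac route trades that series manipulation for the Khasminskii bridge bound and the subordination identity; it is slicker once one is willing to import the probabilistic input, and it makes transparent why the exponent must be $\frac12\sqrt\lambda$ rather than $\sqrt\lambda$ (one loses $\omega$ to the potential). The paper's approach, by contrast, is entirely self-contained within the analytic framework already set up (Lemma~\ref{lem:est-kato} and the explicit Green's function), which fits the paper's aim of avoiding external machinery. Both arguments are specific to $N=3$ despite the $\R^N$ in the statement, since the near-diagonal behaviour $|x-y|^{-1}$ is three-dimensional.
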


\begin{proof}
 Using the resolvent formula, we expand $T_{\lambda}^V=T_{\lambda}^0+\sum_{n\ge1} T_{\lambda}^{(n,V)}$, where 
 $$T_{\lambda}^{(n,V)}(x,y)=\frac{1}{(4\pi)^{n+1}}\int_\Omega\,dz_1\cdots\int_\Omega\,dz_n \frac{e^{-\sqrt{\lambda}(|x-z_1|+|z_1-z_2|+\cdots+|z_n-y|)}}{|x-z_1||z_1-z_2|\cdots|z_n-y|}V(z_1)\cdots V(z_n).$$
 By the triangle inequality, we have for all $z_1,\ldots,z_n$,
 $$e^{-\frac{\sqrt{\lambda}}{2}(|x-z_1|+|z_1-z_2|+\cdots+|z_n-y|)}\le e^{-\frac{\sqrt{\lambda}}{2}|x-y|}$$
 and
 \begin{align*}
    \frac{|x-y|}{|x-z_1||z_1-z_2|\cdots|z_n-y|} &\le \frac{|x-z_1|+|z_1-z_2|+\cdots+|z_n-y|}{|x-z_1||z_1-z_2|\cdots|z_n-y|}\\
    &= \frac{1}{|z_1-z_2|\cdots|z_n-y|}+\frac{1}{|x-z_1||z_2-z_3|+\cdots+|z_n-y|}\\
    &\qquad+\cdots+\frac{1}{|x-z_1||z_1-z_2|\cdots|z_{n-1}-z_n|}.
 \end{align*}
 As a consequence, we deduce that 
 \begin{align*}
    \sup_{x\neq y}|x-y|e^{\frac{\sqrt{\lambda}}{2}|x-y|}|T_{\lambda}^{(n,V)}(x,y)|&\le\frac{n+1}{(4\pi)^{n+1}}\left(\sup_{x\in\R^N}\int_{\Omega}\frac{e^{-\frac{\sqrt{\lambda}}{2}|x-z|}}{|x-z|}|V(z)|\,dz\right)^n\\
    &\le C\,(n+1) (A\|V\|_{\cK(8/\sqrt{\lambda})})^n,
 \end{align*}
 where in the last line we used Lemma \ref{lem:est-kato}.
\end{proof}

\begin{remark}
 By the maximum principle, we know that $0\le G_\lambda^0\le T_\lambda^0$ for all $\lambda\ge0$ (pointwise). Hence, the same proof using the resolvent identity for $G_\lambda^V$ shows that the same bound
 \begin{equation}\label{eq:est-GlambdaV}
    |G_\lambda^V(x,y)|\le C\frac{e^{-\frac{\sqrt{\lambda}}{2}|x-y|}}{|x-y|}
 \end{equation}
 holds for $\lambda$ large enough.
\end{remark}

We can now estimate the error between $G_\lambda^V$ and $T_{\lambda}^V$.

\begin{lemma}\label{lem:domain-error-green}
 There are $C>0$ and $\Lambda>0$ such that for all $x\neq y\in\Omega$ and $\lambda\ge\Lambda$ we have 
 $$\left| G_\lambda^V(x,y) - T_{\lambda}^V(x,y) \right| \le C \frac{e^{-\frac{\sqrt{\lambda}}{2}d(x,\partial\Omega)}}{d(x,\partial\Omega)}.$$
\end{lemma}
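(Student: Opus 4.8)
The plan is to realize the difference $T_\lambda^V(\cdot,y)-G_\lambda^V(\cdot,y)$ (which is nonnegative, since both are Green's functions of the same operator on $\Omega$ and the Dirichlet one kills Brownian paths at the boundary) as the solution in $\Omega$ of the homogeneous equation $(-\Delta+V+\lambda)h=0$ with ``boundary values'' $T_\lambda^V(\cdot,y)$ on $\partial\Omega$, and then to estimate it by combining the pointwise decay of $T_\lambda^V$ from Lemma~\ref{lem:domain-decay-green-V} with the fact that a Brownian path started at $x$ must travel distance at least $d(x,\partial\Omega)$ before it can reach $\partial\Omega$.

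First I would fix $\lambda$ large enough that $G_\lambda^V$ and $T_\lambda^V$ are both well defined (guaranteed by the Kato condition, which makes $-\Delta_\Omega^D+V$ and $-\Delta_{\R^3}+V\1_\Omega$ bounded below) and that Lemma~\ref{lem:domain-decay-green-V} applies. Denoting by $p_t^\Omega$ and $p_t$ the heat kernels of $-\Delta_\Omega^D+V$ and $-\Delta_{\R^3}+V\1_\Omega$, one has the subordination formulas $G_\lambda^V=\int_0^\infty e^{-\lambda t}p_t^\Omega\,dt$, $T_\lambda^V=\int_0^\infty e^{-\lambda t}p_t\,dt$, and for a Kato class $V$ the Feynman--Kac representations of these semigroups are standard and in particular give $0\le p_t^\Omega\le p_t$ on $\Omega\times\Omega$. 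Now let $(X_s)_{s\ge0}$ be Brownian motion with generator $\Delta$ and $\tau$ its first exit time from $\Omega$; the strong Markov property applied at $\tau$ yields the identity
\[
p_t(x,y)-p_t^\Omega(x,y)=\esp_x\!\left[\1(\tau<t)\,e^{-\int_0^\tau V(X_s)\,ds}\,p_{t-\tau}(X_\tau,y)\right],
\]
since a path from $x$ to $y$ that leaves $\Omega$ does so first at some time $\tau<t$ at a point $X_\tau\in\partial\Omega$, and afterwards evolves freely. Multiplying by $e^{-\lambda t}$, integrating in $t$ (by Fubini, all terms being positive) and substituting $t=\tau+s$ gives
\[
T_\lambda^V(x,y)-G_\lambda^V(x,y)=\esp_x\!\left[e^{-\lambda\tau-\int_0^\tau V(X_s)\,ds}\,T_\lambda^V(X_\tau,y)\right],
\]
and, by the symmetry of $T_\lambda^V$ and $G_\lambda^V$, the same quantity equals $\esp_y[\,e^{-\lambda\tau-\int_0^\tau V(X_s)\,ds}\,T_\lambda^V(X_\tau,x)\,]$ with $(X_s)$ now started at $y$.

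From this last representation the bound is immediate. On the relevant event $X_\tau\in\partial\Omega$, so $|X_\tau-x|\ge d(x,\partial\Omega)>0$, and since $r\mapsto e^{-\frac{\sqrt\lambda}{2}r}/r$ is decreasing on $(0,\infty)$, Lemma~\ref{lem:domain-decay-green-V} gives $|T_\lambda^V(X_\tau,x)|\le C\,e^{-\frac{\sqrt\lambda}{2}d(x,\partial\Omega)}/d(x,\partial\Omega)$. Pulling this deterministic bound out of the expectation leaves $\esp_y[e^{-\lambda\tau-\int_0^\tau V(X_s)\,ds}]$; estimating $e^{-\int_0^\tau V}\le e^{\int_0^\tau V_-}$ and using the standard Khasminskii-type estimate for Kato class potentials (see the appendix), namely $\sup_z\esp_z[e^{\int_0^t V_-(X_s)\,ds}]\le 2e^{\alpha t}$ for all $t\ge0$ with a suitable $\alpha=\alpha(V)$, one obtains $\esp_y[e^{-\lambda\tau-\int_0^\tau V(X_s)\,ds}]\le 2\,\esp_y[e^{(\alpha-\lambda)\tau}]\le 2$ as soon as $\lambda\ge\alpha$, uniformly in $y$. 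Taking $\Lambda$ larger than the threshold of Lemma~\ref{lem:domain-decay-green-V} and than $\alpha$ then gives $|G_\lambda^V(x,y)-T_\lambda^V(x,y)|\le 2C\,e^{-\frac{\sqrt\lambda}{2}d(x,\partial\Omega)}/d(x,\partial\Omega)$.

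The main obstacle is technical rather than conceptual: since $\partial\Omega$ is allowed to be completely arbitrary and $V$ is merely Kato class, the ``boundary value'' picture cannot be justified by classical elliptic regularity, so one must instead invoke the Feynman--Kac machinery for Schr\"odinger semigroups with Kato potentials --- the domination $p_t^\Omega\le p_t$, the strong Markov decomposition above, and the exponential moment bound for $\int_0^\tau V_-(X_s)\,ds$. These are all available in the literature (and partly in the appendix), so the real work is to assemble them; once the representation formula is in place, the estimate itself is one line.
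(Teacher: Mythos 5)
Your argument is correct in substance but proceeds by a genuinely different, probabilistic route. The paper stays entirely analytic: it writes $\gamma_\lambda^V:=G_\lambda^V-T_\lambda^V$ as the solution of the homogeneous equation in $\Omega$ with boundary data $-T_\lambda^V(x,\cdot)$, splits off $\tilde\gamma=-G_\lambda^0(V\gamma_\lambda^V(x,\cdot))$ so that $\gamma_\lambda^V(x,\cdot)-\tilde\gamma$ is $(-\Delta+\lambda)$-harmonic, applies the classical maximum principle to that piece, and absorbs $\tilde\gamma$ using $\|G_\lambda^0V\|_{L^\infty\to L^\infty}\le C\|V\|_{\cK(2/\sqrt\lambda)}<1$ for large $\lambda$ (Lemma \ref{lem:est-kato}); the boundary data are then estimated exactly as you do via Lemma \ref{lem:domain-decay-green-V} and the monotonicity of $r\mapsto e^{-\sqrt\lambda r/2}/r$. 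Your route replaces the maximum principle and the Neumann-series absorption by the strong Markov decomposition of Feynman--Kac paths at the first exit time $\tau$, which yields the exact identity $T_\lambda^V(x,y)-G_\lambda^V(x,y)=\esp_y\bigl[e^{-\lambda\tau-\int_0^\tau V(X_s)\,ds}\,T_\lambda^V(X_\tau,x)\bigr]$ (your use of symmetry to land the argument of $T_\lambda^V$ at $x$ rather than $y$ is the right move, since the claimed bound involves $d(x,\partial\Omega)$). What your approach buys is an explicit, sign-definite representation of the error and no need for the smallness of $\|V\|_{\cK(2/\sqrt\lambda)}$ beyond what Khasminskii requires; what it costs is importing the Feynman--Kac machinery for Kato-class potentials and for the Dirichlet Laplacian on a completely arbitrary open set, which the paper deliberately avoids (its appendix is analytic and does \emph{not} contain the exponential moment bound you attribute to it --- you would have to cite Aizenman--Simon or Simon's review for Khasminskii's lemma).

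Two small points to tighten. First, the step $\esp_y[e^{-\lambda\tau+\int_0^\tau V_-}]\le 2\,\esp_y[e^{(\alpha-\lambda)\tau}]$ does not follow from the fixed-time bound $\sup_z\esp_z[e^{\int_0^t V_-}]\le 2e^{\alpha t}$ by conditioning on the random time $\tau$; you need an extra argument, e.g.\ since $A_t:=e^{\int_0^t V_-(X_s)\,ds}$ is nondecreasing, $e^{-\lambda\tau}A_\tau\le\lambda\int_0^\infty e^{-\lambda t}A_t\,dt$, whence $\esp_y[e^{-\lambda\tau}A_\tau]\le 2\lambda/(\lambda-\alpha)\le 4$ for $\lambda\ge 2\alpha$. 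Second, the identification of the form-defined operators $-\Delta_\Omega^D+V$ and $-\Delta_{\R^3}+V\1_\Omega$ with the killed and free Feynman--Kac semigroups for an arbitrary open $\Omega$ and Kato-class $V$ is standard but should be cited rather than asserted, since it is precisely the kind of regularity-free statement that needs a reference.
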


\begin{proof}
 For all $x\in\Omega$, the function $\gamma_{\lambda}^V:=G_\lambda^V-T_{\lambda}^V$   satisfies the equation
 $$\begin{cases}
    (-\Delta+V+\lambda)_y\gamma_{\lambda}^V(x,\cdot)=0 & \text{in}\ \Omega,\\
    \gamma_{\lambda}^V(x,\cdot) = -T_{\lambda}^V(x,\cdot) &\text{on}\ \partial\Omega.
   \end{cases}
 $$
 Let us show that there are $C>0$ and $\Lambda>0$ such that for all $x\in\Omega$ and $\lambda\ge\Lambda$ we have 
 $$
 \|\gamma_{\lambda}^V(x,\cdot)\|_{L^\ii(\Omega)}\le C\|\gamma_{\lambda}^V(x,\cdot)\|_{L^\ii(\partial\Omega)} \,.
 $$
 Since $\|\gamma_{\lambda}^V(x,\cdot)\|_{L^\ii(\partial\Omega)} =\|T_{\lambda}^V(x,\cdot)\|_{L^\ii(\partial\Omega)}$, this together with Lemma \ref{lem:domain-decay-green-V} gives the result. 
 
 To prove the desired bound, let us introduce the function $\tilde{\gamma}:=-G_\lambda^0 (V\gamma_{\lambda}^V(x,\cdot))$, so that
  $$\begin{cases}
        (-\Delta+\lambda)\tilde{\gamma}=-V\gamma_{\lambda}^V(x,\cdot) &\text{in}\ \Omega, \\
        \tilde{\gamma}=0 &\text{on}\ \partial\Omega \,.
   \end{cases}
 $$
Then, the function $\gamma_{\lambda}^V(x,\cdot)-\tilde{\gamma}$ satisfies the equation
 $$\begin{cases}
    (-\Delta+\lambda)(\gamma_{\lambda}^V(x,\cdot)-\tilde{\gamma}) = 0 &\text{in}\ \Omega,\\
    \gamma_{\lambda}^V(x,\cdot)-\tilde{\gamma} = \gamma_{\lambda}^V(x,\cdot) &\text{on}\ \partial\Omega.
   \end{cases}
 $$
 By the maximum principle, this implies that 
 $$\|\gamma_{\lambda}^V(x,\cdot)-\tilde{\gamma}\|_{L^\ii(\Omega)}\le \|\gamma_{\lambda}^V(x,\cdot)\|_{L^\ii(\partial\Omega)}$$
 and hence 
 $$\|\gamma_{\lambda}^V(x,\cdot)\|_{L^\ii(\Omega)}\le \|\gamma_{\lambda}^V(x,\cdot)\|_{L^\ii(\partial\Omega)} + \|\tilde{\gamma}\|_{L^\ii(\Omega)}.$$
 On the other hand, we have 
 $$\|\tilde{\gamma}\|_{L^\ii(\Omega)}=\|G_\lambda^0 V \gamma_{\lambda}^V (x,\cdot)\|_{L^\ii(\Omega)}\le \|G_\lambda^0 V\|_{L^\ii\to L^\ii}\|\gamma_{\lambda}^V(x,\cdot)\|_{L^\ii(\Omega)}.$$
 Since, again by the maximum principle, $0\le G_\lambda^0\le T_{\lambda}^0$ pointwise, we deduce using Lemma \ref{lem:est-kato} that
 $$\|G_\lambda^0 V\|_{L^\ii\to L^\ii}\le\|T_{\lambda}^0 V\|_{L^\ii\to L^\ii}\le C\|V\|_{\cK(2/\sqrt{\lambda})}.$$
 For $\lambda>0$ large enough so that $C\|V\|_{\cK(2/\sqrt{\lambda})}<1$, we deduce that
 $$\|\gamma_{\lambda}^V(x,\cdot)\|_{L^\ii(\Omega)}\le (1-C\|V\|_{\cK(2/\sqrt{\lambda})})^{-1}\|\gamma_{\lambda}^V(x,\cdot)\|_{L^\ii(\partial\Omega)},$$
 which is the desired result.
\end{proof}

\begin{corollary}\label{coro:error-domain}
 There are $C>0$ and $\Lambda>0$ such that for all $x\in\Omega$ and all $\lambda\ge\Lambda$ we have 
 $$\left|\partial_\lambda G_\lambda^V(x,x) - \partial_\lambda T_{\lambda}^V(x,x)\right|\le C\frac{e^{-\frac{\sqrt{\lambda}}{2}d(x,\partial\Omega)}}{\lambda d(x,\partial\Omega)}.$$
\end{corollary}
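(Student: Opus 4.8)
The plan is to reduce the statement to the pointwise Green's function bounds of Lemmas~\ref{lem:domain-decay-green-V} and \ref{lem:domain-error-green} by rewriting the $\lambda$-derivatives of the diagonal Green's functions as $L^2$-norms. The first thing I would record is that, for $\lambda\ge\Lambda$ large enough that both $G_\lambda^V$ and $T_{\lambda}^V$ are genuine resolvent kernels, one has
$$
\partial_\lambda G_\lambda^V(x,x) = -\int_\Omega G_\lambda^V(x,z)^2\,dz \,,
\qquad
\partial_\lambda T_{\lambda}^V(x,x) = -\int_{\R^3} T_{\lambda}^V(x,z)^2\,dz \,,
$$
where $G_\lambda^V(x,\cdot)$ is extended by zero outside $\Omega$. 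These identities follow from differentiating the first resolvent identity in $\lambda$ and passing to integral kernels; the right-hand sides are absolutely convergent because \eqref{eq:est-GlambdaV} and Lemma~\ref{lem:domain-decay-green-V} show $G_\lambda^V(x,\cdot)\in L^2(\Omega)$ and $T_{\lambda}^V(x,\cdot)\in L^2(\R^3)$ (in three dimensions the $|x-z|^{-1}$ singularity is square integrable). The diagonal values are unambiguous: the short-distance singularity $(4\pi|x-z|)^{-1}$ of the Green's function is independent of $\lambda$, so $G_\lambda^V(x,z)-(4\pi|x-z|)^{-1}$ extends continuously to $z=x$ and it is its $\lambda$-derivative that appears in the statement.

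Next I would write $\gamma_{\lambda}^V=G_\lambda^V-T_{\lambda}^V$, split $\int_{\R^3}=\int_{\R^3\setminus\Omega}+\int_\Omega$, and use the identity $T_{\lambda}^V(x,z)^2 - G_\lambda^V(x,z)^2 = -\gamma_{\lambda}^V(x,z)\bigl(T_{\lambda}^V(x,z)+G_\lambda^V(x,z)\bigr)$ on $\Omega$ to obtain
$$
\partial_\lambda G_\lambda^V(x,x)-\partial_\lambda T_{\lambda}^V(x,x)
= \int_{\R^3\setminus\Omega} T_{\lambda}^V(x,z)^2\,dz
\;-\; \int_\Omega \gamma_{\lambda}^V(x,z)\bigl(T_{\lambda}^V(x,z)+G_\lambda^V(x,z)\bigr)\,dz \,.
$$
It then remains to estimate the two terms. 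For the first, Lemma~\ref{lem:domain-decay-green-V} together with $|x-z|\ge d(x,\partial\Omega)$ for $z\notin\Omega$ gives $\int_{\R^3\setminus\Omega}|T_{\lambda}^V(x,z)|^2\,dz\le C\lambda^{-1/2}e^{-\sqrt\lambda\, d(x,\partial\Omega)}$, which is at most $C\,(\lambda\,d(x,\partial\Omega))^{-1}e^{-\sqrt\lambda\, d(x,\partial\Omega)/2}$ since $t\mapsto te^{-t/2}$ is bounded on $[0,\infty)$. For the second, Lemma~\ref{lem:domain-error-green} bounds $|\gamma_{\lambda}^V(x,z)|$ by $C\,d(x,\partial\Omega)^{-1}e^{-\sqrt\lambda\, d(x,\partial\Omega)/2}$ uniformly in $z\in\Omega$, while \eqref{eq:est-GlambdaV} and Lemma~\ref{lem:domain-decay-green-V} give $\int_\Omega\bigl(|T_{\lambda}^V(x,z)|+|G_\lambda^V(x,z)|\bigr)\,dz\le C\int_{\R^3}\frac{e^{-\sqrt\lambda|x-z|/2}}{|x-z|}\,dz = C'\lambda^{-1}$. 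Multiplying these two bounds controls the second term, and combining the two estimates yields the claim.

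The hard part will not be any estimate but rather the bookkeeping in the first step: making rigorous that $\partial_\lambda G_\lambda^V(x,x)=-\|G_\lambda^V(x,\cdot)\|_{L^2(\Omega)}^2$ (and its analogue for $T_{\lambda}^V$), and that the diagonal value is to be understood after subtracting the $\lambda$-independent short-distance singularity so that differentiation in $\lambda$ is legitimate there. Once this is settled, everything else is a routine application of the two Green's function bounds already proved, and the value $\Lambda$ is simply taken large enough for those bounds and for the invertibility of the two operators to hold simultaneously.
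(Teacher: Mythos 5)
Your argument is essentially the paper's: both write $\partial_\lambda G_\lambda^V(x,x)=-\int_\Omega G_\lambda^V(x,z)^2\,dz$ (and likewise for $T_\lambda^V$), factor the difference of squares, and then combine the uniform bound of Lemma~\ref{lem:domain-error-green} on $\gamma_\lambda^V$ with the pointwise kernel bounds of Lemma~\ref{lem:domain-decay-green-V} and \eqref{eq:est-GlambdaV}, using $\int_{\R^3}e^{-\sqrt\lambda|x-z|/2}|x-z|^{-1}\,dz\le C\lambda^{-1}$. Your explicit treatment of the exterior piece $\int_{\R^3\setminus\Omega}T_\lambda^V(x,z)^2\,dz$ is a small but correct refinement of the paper's display, which tacitly absorbs that term.
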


\begin{proof}
 We first remark that $\partial_\lambda G_\lambda^V = -(G_\lambda^V)^2$, where the product on the right side is in the sense of operators. Thus, 
 $$\partial_\lambda G_\lambda (x,x) = -\int_\Omega |G_\lambda^V(x,y)|^2\,dy.$$
 This and the corresponding formula for $T_{\lambda}^V$ gives
 $$\left|\partial_\lambda G_\lambda^V(x,x) - \partial_\lambda T_{\lambda}^V(x,x)\right|\le \int_\Omega|G_\lambda^V(x,y)-T_{\lambda}^V(x,y)|(|G_\lambda^V(x,y)|+|T_{\lambda}^V(x,y)|)\,dy.$$
 Using Lemma \ref{lem:domain-decay-green-V}, \eqref{eq:est-GlambdaV} and Lemma \ref{lem:domain-error-green}, we deduce that
 \begin{align*}
    \left|\partial_\lambda G_\lambda^V(x,x) - \partial_\lambda T_{\lambda}^V(x,x)\right| &\le C\frac{e^{-\frac{\sqrt{\lambda}}{2}d(x,\partial\Omega)}}{d(x,\partial\Omega)}\int_\Omega\frac{e^{-\frac{\sqrt{\lambda}}{2}|x-y|}}{|x-y|}\,dy\\
    &\le C\frac{e^{-\frac{\sqrt{\lambda}}{2}d(x,\partial\Omega)}}{\lambda d(x,\partial\Omega)} \,,
 \end{align*}
 which is the desired result.
\end{proof}

It will also be useful to have the following uniform bound.

\begin{lemma}\label{lem:domain-uniform}
 There are $C>0$ and $\Lambda>0$ such that for all $x\in\Omega$ and all $\lambda\ge\Lambda$ we have 
 $$|\partial_\lambda G_\lambda^V(x,x)|\le \frac{C}{\sqrt{\lambda}}.$$
\end{lemma}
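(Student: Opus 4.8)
The proof will be a short consequence of the pointwise Green's function bound \eqref{eq:est-GlambdaV} together with the operator identity already used in the proof of Corollary \ref{coro:error-domain}. Since $-\Delta_\Omega^D + V$ is bounded below and selfadjoint, for $\lambda$ large enough we have $\partial_\lambda G_\lambda^V = -(G_\lambda^V)^2$ in the sense of operators, and therefore, evaluating the kernel on the diagonal,
$$
\partial_\lambda G_\lambda^V(x,x) = - \int_\Omega |G_\lambda^V(x,y)|^2\,dy \,.
$$
In particular $\partial_\lambda G_\lambda^V(x,x) \le 0$, so it suffices to bound the right side from below, i.e. to bound $\int_\Omega |G_\lambda^V(x,y)|^2\,dy$ from above.

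The next step is to insert the pointwise estimate \eqref{eq:est-GlambdaV}, which is valid for all $x\neq y\in\Omega$ and all $\lambda\ge\Lambda$ (with $\Lambda$ chosen so that $C\|V\|_{\cK(2/\sqrt\lambda)}<1$, as in the remark preceding Lemma \ref{lem:domain-error-green}). This gives
$$
\int_\Omega |G_\lambda^V(x,y)|^2\,dy \le C^2 \int_\Omega \frac{e^{-\sqrt\lambda\,|x-y|}}{|x-y|^2}\,dy \le C^2 \int_{\R^3} \frac{e^{-\sqrt\lambda\,|z|}}{|z|^2}\,dz \,,
$$
where in the last step we enlarged the domain of integration to all of $\R^3$ (using that the integrand is nonnegative). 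The remaining integral is elementary: passing to polar coordinates,
$$
\int_{\R^3} \frac{e^{-\sqrt\lambda\,|z|}}{|z|^2}\,dz = 4\pi \int_0^\infty e^{-\sqrt\lambda\,r}\,dr = \frac{4\pi}{\sqrt\lambda} \,.
$$
Combining the three displays yields $|\partial_\lambda G_\lambda^V(x,x)| \le 4\pi C^2/\sqrt\lambda$ for all $x\in\Omega$ and $\lambda\ge\Lambda$, which is the claim.

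\textbf{Main obstacle.} There is essentially none: the only nontrivial input is the pointwise Green's function bound \eqref{eq:est-GlambdaV}, which has already been established via the resolvent expansion and the maximum principle in the remark after Lemma \ref{lem:domain-decay-green-V}, and the bound $\|T_\lambda^0 V\|_{L^\infty\to L^\infty}\le C\|V\|_{\cK(2/\sqrt\lambda)}$ from Lemma \ref{lem:est-kato}. The one small point worth noting is that the same value of $\Lambda$ as in \eqref{eq:est-GlambdaV} works here, so the statement is uniform in $x\in\Omega$ with no dependence on the distance to $\partial\Omega$, in contrast to Corollary \ref{coro:error-domain}; this is exactly because we are estimating $G_\lambda^V$ itself rather than the boundary-induced correction $\gamma_\lambda^V$.
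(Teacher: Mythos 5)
Your argument is correct. Both you and the paper start from the identity $\partial_\lambda G_\lambda^V(x,x) = -\int_\Omega |G_\lambda^V(x,y)|^2\,dy$, but you then diverge: you insert the pointwise bound \eqref{eq:est-GlambdaV} and compute the resulting radial integral explicitly, whereas the paper reads the quantity off as $\|G_\lambda^V\|_{L^2\to L^\infty}^2$ and estimates that operator norm by running the resolvent identity $G_\lambda^V = G_\lambda^0 - G_\lambda^0 V G_\lambda^V$ at the level of operator norms, using $\|G_\lambda^0 V\|_{L^\infty\to L^\infty}\le C\|V\|_{\cK(2/\sqrt\lambda)}<1$ and then the pointwise comparison $0\le G_\lambda^0\le T_\lambda^0$ to reduce to the free kernel. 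The trade-off is minor but real: your route leans on \eqref{eq:est-GlambdaV}, which in the paper is only asserted in a remark (its proof being a rerun of Lemma \ref{lem:domain-decay-green-V} with $G_\lambda^0\le T_\lambda^0$), while the paper's route needs only the free pointwise bound plus the $L^\infty\to L^\infty$ smallness of $G_\lambda^0 V$ from Lemma \ref{lem:est-kato}, so it is marginally more self-contained; on the other hand your computation is more explicit and makes the $\lambda^{-1/2}$ rate transparent. Your closing observation — that the estimate is uniform in $x$ with no dependence on $d(x,\partial\Omega)$ because one is bounding $G_\lambda^V$ itself rather than the boundary correction $\gamma_\lambda^V$ — is exactly the right point.
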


\begin{proof}
 We have seen that 
 $$\sup_{x\in\Omega}|\partial_\lambda G_\lambda^V(x,x)|=\sup_{x\in\Omega}\int_\Omega|G_\lambda^V(x,y)|^2\,dy=\|G_\lambda^V\|_{L^2\to L^\ii}^2.$$
 Now using the resolvent identity as operators
 $$ G_\lambda^V = G_\lambda^0 -G_\lambda^0 V G_\lambda^V,$$
 we deduce that 
 $$\|G_\lambda^V\|_{L^2\to L^\ii}\le \|G_\lambda^0\|_{L^2\to L^\ii} +\|G_\lambda^0 V\|_{L^\ii\to L^\ii}\|G_\lambda^V\|_{L^2\to L^\ii}.$$
 As in the proof of Lemma \ref{lem:domain-error-green} we have $\|G_\lambda^0 V\|_{L^\ii\to L^\ii}\le C\|V\|_{\cK(2/\sqrt{\lambda})}<1$ for all large $\lambda$, so
 $$\|G_\lambda^V\|_{L^2\to L^\ii}\le(1-C\|V\|_{\cK(2/\sqrt{\lambda})})^{-1}\|G_\lambda^0\|_{L^2\to L^\ii}.$$
 Using again the pointwise inequality $0\le G_\lambda^0\le T_{\lambda}^0$, we deduce the desired bound.
\end{proof}

One can then use the same strategy as in the previous section to decompose $\partial_\lambda T_{\lambda}^V(x,x)$, and we obtain

\begin{proposition}\label{prop:weyl-V-domain}
 Let $V:\Omega\to\R$ in the Kato class. Then there is an $\epsilon_0>0$ such that for all $\epsilon\in(0,\epsilon_0]$ one has
 $$
 \1(-\Delta_D^\Omega+V\le t)(x,x) = \frac{t^{3/2}}{6\pi^2} - \frac12 r_{0,\epsilon}^V(t,x)\,t^{3/2}+\cO^{(\epsilon)}\left(\frac{t}{d(x,\partial\Omega)}\right),\quad t\to+\ii$$
 where the $\cO^{(\epsilon)}$ is uniform in $x\in \Omega$ (but depends on $\epsilon$) and where
 $$
 r_{0,\epsilon}^V(t,x)= \sum_{n\ge1}(-1)^n r_{0,\epsilon}^{(n,V)}(t,x)
 $$
 with $r_{0,\epsilon}^{(n,V)}(t,x)$ defined as in the previous section. More precisely, one has for all $x\in\Omega$, $t\ge 0$ and $\epsilon\in (0,\epsilon_0]$,
 $$
 \left|  \1(-\Delta_D^\Omega+V\le t)(x,x) - \frac{t^{3/2}}{6\pi^2} + \frac12 r_{0,\epsilon}^V(t,x)\,t^{3/2} \right| \le C_0\, \alpha^{-1}\, (t + \alpha^{-2}) \,,
 $$
where $\alpha = \min\{d(x,\partial\Omega),\epsilon\}$ and where $C_0$ is an absolute constant.
\end{proposition}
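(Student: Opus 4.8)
The plan is to run the Avakumovi\'c argument of Sections~\ref{sec:avak}--\ref{sec:avakv} with exactly one new ingredient, the maximum-principle bound of Corollary~\ref{coro:error-domain}, which replaces the true Dirichlet Green's function $G^V_\lambda$ on $\Omega$ by the free-space Green's function $T^V_\lambda=(-\Delta_{\R^3}+V\1_\Omega+\lambda)^{-1}$, up to an error that is exponentially small in $\sqrt\lambda\,d(x,\partial\Omega)$. Thus there are two parametrix layers: first $T^V_\lambda$ for $G^V_\lambda$ (removing the boundary), then the flat Avakumovi\'c parametrix $T_{\lambda,\epsilon}$ for $T^V_\lambda$ (producing the Stieltjes structure and the correction term $r^V_{0,\epsilon}$); at the end the two errors merge into a single quantity governed by $\alpha:=\min\{d(x,\partial\Omega),\epsilon\}$, and the Tauberian theorem of Section~\ref{sec:tauberian2} converts resolvent asymptotics into spectral asymptotics.

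For the second layer I take $T_{\lambda,\epsilon}(x,y)=e^{-\sqrt\lambda|x-y|}\chi(\epsilon^{-1}|x-y|)/(4\pi|x-y|)$, which is the parametrix of Section~\ref{sec:parametrix} specialized to $M=\R^3$; there $U_0\equiv1$, $\theta\equiv1$ and $\mathcal M_{\rho/2}=\mathcal M'_{\rho/2}=1$, and since $\Delta U_0\equiv0$ the error kernel $R_{\lambda,\epsilon}$ reduces to the $\chi'$- and $\chi''$-terms, supported in $\{\epsilon/2\le|x-y|<\epsilon\}$. Setting $\gamma^V_{\lambda,\epsilon}=T_{\lambda,\epsilon}-T^V_\lambda$ and expanding in the Neumann series as in the proof of Proposition~\ref{prop:greender}, the most singular term $\sigma^{(n,V)}_{\lambda,\epsilon}$ (all factors $VT_{\lambda,\epsilon}$) yields exactly $r^{(n,V)}_{0,\epsilon}$ as in \eqref{eq:r0n}, while every remaining term carries at least one factor $R_{\lambda,\epsilon}$. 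Since $R_{\lambda,\epsilon}$ is supported away from the diagonal, the localized remainder $a^{(n,V)}_{\lambda,\epsilon}$ vanishes identically (so there is no $R^V_\epsilon$-term here), and the remaining $b^{(n,V)}_{\lambda,\epsilon}$ is estimated exactly as in Lemma~\ref{lem:bv}. Choosing $\epsilon$ small enough that $C\|V\|_{\cK(\epsilon)}<1$ (so the series converges) I obtain, for $\lambda\ge C_1/\epsilon^2$,
\[
\Bigl|\,\partial_\lambda T^V_\lambda(x,x)+\tfrac{1}{8\pi\sqrt\lambda}-\int_0^\ii\tfrac{t^{3/2}\,r^V_{0,\epsilon}(t,x)}{(t+\lambda)^3}\,dt\,\Bigr|\le C\,\tfrac{e^{-\epsilon\sqrt\lambda/4}}{\epsilon\lambda},
\]
where $r^V_{0,\epsilon}=\sum_{n\ge1}(-1)^n r^{(n,V)}_{0,\epsilon}$ satisfies $|r^V_{0,\epsilon}(t,x)|\le C\|V\|_{\cK(\epsilon)}$ and, by Lemma~\ref{lem:r0variation}, $|r^V_{0,\epsilon}(t,x)-r^V_{0,\epsilon}(t',x)|\le C\|V\|_{\cK(\epsilon)}(\sqrt{t'}-\sqrt t)/\sqrt t$ for $0<t\le t'$; these are the only structural facts about $r^V_{0,\epsilon}$ used afterwards.

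Next I add Corollary~\ref{coro:error-domain}. Since $\alpha\le d(x,\partial\Omega)$ and $\alpha\le\epsilon$, both the error above and the boundary error $|\partial_\lambda G^V_\lambda(x,x)-\partial_\lambda T^V_\lambda(x,x)|\le Ce^{-\sqrt\lambda d(x,\partial\Omega)/2}/(\lambda d(x,\partial\Omega))$ are at most $Ce^{-\alpha\sqrt\lambda/4}/(\alpha\lambda)$, so (choosing $\epsilon_0$ small enough that $C_1/\epsilon_0^2\ge\Lambda$, with $\Lambda$ as in Corollary~\ref{coro:error-domain} and Lemma~\ref{lem:domain-uniform}) for all $\lambda\ge C_1/\alpha^2$,
\[
\Bigl|\,\partial_\lambda G^V_\lambda(x,x)+\tfrac{1}{8\pi\sqrt\lambda}-\int_0^\ii\tfrac{t^{3/2}\,r^V_{0,\epsilon}(t,x)}{(t+\lambda)^3}\,dt\,\Bigr|\le C\,\tfrac{e^{-\alpha\sqrt\lambda/4}}{\alpha\lambda}.
\]
Using $\int_0^\ii\1(-\Delta_\Omega^D+V\le t)(x,x)\,(t+\lambda)^{-3}\,dt=-\tfrac12\partial_\lambda G^V_\lambda(x,x)$, set $A(t,x):=\1(-\Delta_\Omega^D+V\le t)(x,x)-t^{3/2}/(6\pi^2)+\tfrac12 r^V_{0,\epsilon}(t,x)\,t^{3/2}$; its higher Stieltjes transform is $\le Ce^{-\alpha\sqrt\lambda/4}/(\alpha\lambda)$ for $\lambda\ge C_1/\alpha^2$, and I apply Theorem~\ref{thm:stieljes-3D} in the simplified form \eqref{eq:stieltjessimple} with $B_0=-(6\pi^2)^{-1}+\tfrac12 r^V_{0,\epsilon}(\cdot,x)$, $B_1(t)=\1(-\Delta_\Omega^D+V\le t)(x,x)$, $B_2\equiv0$ and parameters $\epsilon_0\asymp\alpha$, $\delta=\alpha^{-1}$, $\Lambda=C_1/\alpha^2$, $t_0=0$, exactly as in the proof of Corollary~\ref{coro:weyl-V}. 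The hypotheses hold: $B_1$ is nondecreasing and right-continuous; $B_0$ is bounded and $u\mapsto u^3 B_0(u^2)$ is right-continuous and locally of bounded variation (from boundedness plus the variation bound on $r^V_{0,\epsilon}$); and the almost-monotonicity of $B_0$ holds with $C_{B_0}\le C\|V\|_{\cK(\epsilon)}\delta$ by that same variation bound, as in \eqref{eq:almostmono1}.

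Finally I bound $B_1(0)=\1(-\Delta_\Omega^D+V\le0)(x,x)$, which enters \eqref{eq:stieltjessimple} through $\epsilon_0^3|B_1(0)|$. By Lemma~\ref{lem:domain-uniform}, $(-\Delta_\Omega^D+V+\lambda_0)^{-2}(x,x)=-\partial_\lambda G^V_{\lambda_0}(x,x)\le C/\sqrt{\lambda_0}$ for $\lambda_0:=2C_1/\alpha^2$; since the resolvent exists for $\lambda\ge\Lambda$ the spectrum of $-\Delta_\Omega^D+V$ lies above $-\lambda_0$, so every nonpositive eigenvalue $E_n$ obeys $(E_n+\lambda_0)^{-2}\ge\lambda_0^{-2}$ and hence $B_1(0)\le\lambda_0^2(-\Delta_\Omega^D+V+\lambda_0)^{-2}(x,x)\le C\lambda_0^{3/2}\le C\alpha^{-3}$, i.e.\ $\epsilon_0^3|B_1(0)|\le C$. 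Then in \eqref{eq:stieltjessimple} every term is absolute: $\|B_0\|_{L^\ii}\le C$, $\epsilon_0^3|B_1(0)|\le C$, $C_{B_2}=0$, $\epsilon_0 C_0\le C$, $\epsilon_0 C_{B_0}\le C$ (using $\|V\|_{\cK(\epsilon)}\le1$), and $C_1$ is absolute, so $\mathcal B\le C$ \emph{with no residual geometric dependence} because $\mathcal M_{\rho/2}=\mathcal M'_{\rho/2}=1$ in flat space. Theorem~\ref{thm:stieljes-3D} then gives $|A(t,x)|\le C_0\,\alpha^{-1}(t+\alpha^{-2})$ for all $t\ge0$ with $C_0$ absolute, which is the asserted precise bound; the displayed asymptotic formula with $\cO^{(\epsilon)}$ follows as $t\to+\ii$. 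I expect the only point needing real care is the bookkeeping of the two-layer parametrix---merging the boundary and parametrix errors into one $e^{-\alpha\sqrt\lambda/4}/(\alpha\lambda)$ bound, reconciling the thresholds $\lambda\ge C_1/\epsilon^2$ and $\lambda\ge\Lambda$, and re-checking in the flat setting that $r^V_{0,\epsilon}$ supplies exactly the boundedness and $BV_{\mathrm{loc}}$/almost-monotonicity inputs of Theorem~\ref{thm:stieljes-3D}, together with the a priori bound on the number of nonpositive eigenvalues; everything else is a direct specialization of Sections~\ref{sec:avak}--\ref{sec:avakv} with $U_0\equiv1$.
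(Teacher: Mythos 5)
Your proposal is correct and follows essentially the same route as the paper: a two-layer parametrix (boundary removal via Corollary \ref{coro:error-domain}, then the flat Avakumovi\'c expansion of $\partial_\lambda T_\lambda^V(x,x)$ with $U_0\equiv 1$ so that $a^{(n,V)}_{\lambda,\epsilon}\equiv 0$ and hence $C_{B_2}=0$), merging of the two exponentially small errors into a single $\alpha^{-1}\lambda^{-1}e^{-\alpha\sqrt\lambda/4}$ bound with $\alpha=\min\{d(x,\partial\Omega),\epsilon\}$, and an application of Theorem \ref{thm:stieljes-3D} with $\Lambda\asymp\alpha^{-2}$, $\delta=\alpha^{-1}$ and the bound $B_1(0)\le C\alpha^{-3}$ from Lemma \ref{lem:domain-uniform}. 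Your write-up is in fact more detailed than the paper's sketch and the bookkeeping of thresholds and constants checks out.
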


\begin{proof}
	We only sketch the main steps in the proof. We apply again Theorem \ref{thm:stieljes-3D} in the simplified from \eqref{eq:stieltjessimple} with $B_0(t) = -(6\pi^2)^{-1} + \frac12 r_{0,\epsilon}^V(t,x)$ and $B_1(t) = \1(-\Delta_D^\Omega+V\le t)(x,x)$. Note that no term $C_{B_2}$ arises in this situation because the remainder $R_{\lambda,\epsilon}$ in the parametrix for $V=0$ vanishes in a ball of radius $\epsilon/2$. Thus, we have $C_{B_2}=0$ and $t_0=0$. The number $\epsilon_0$ in the theorem is chosen so that $\|V\|_{\mathcal K(\epsilon_0)}$ is smaller than an absolute constant, so that the series defining $r_{0,\epsilon}^V$ converges and so that Corollary \ref{coro:error-domain} is applicable for all $\lambda\ge C/\epsilon_0^{-2}$. This, together with the Rodnianski--Schlag bound \eqref{eq:rodschl}, shows that $\|B_0\|_{L^\infty}$ is bounded by an absolute constant and that $B_0$ satisfies the almost monotonicity condition for any $\delta\ge 0$ with constant $C_{B_0}$ bounded by an absolute constant times $\delta$. Next, we have $0\le B_1(0)\le C \epsilon_0^{-3}$ with an absolute constant $C$. This follows as in the proof of Corollary \ref{coro:weyl-V} from Lemma \ref{lem:domain-uniform}.
	Finally, the decomposition of $\partial_\lambda T_\lambda^V(x,x)$ leads to an error $(\epsilon\lambda)^{-1} e^{-\epsilon\sqrt\lambda/4}$ for $\lambda\ge C \epsilon^{-2}$ for any $\epsilon\in(0,\epsilon_0]$. On the other hand, Corollary \ref{coro:error-domain} leads to an error  $(d(x,\partial\Omega)\lambda)^{-1} e^{-d(x,\partial\Omega)\sqrt\lambda/2}$ for $\lambda \ge C \epsilon_0^{-2}$. Together, this gives an error $(\alpha\lambda)^{-1} e^{-\alpha\sqrt\lambda/4}$ for $\lambda\ge C \epsilon^{-2}$ for any $\epsilon\in(0,\epsilon_0]$. Thus, we apply Theorem \ref{thm:stieljes-3D} with $\epsilon_0=\alpha/2$, $\Lambda = C \alpha^{-2}$ and $C_0 = C \alpha^{-1}$. With these choices, the corollary is a consequence of \eqref{eq:stieltjessimple}.
\end{proof}

\iffalse

\begin{remark}
 The factor $1/d(x,\partial\Omega)$ in the error term comes from the quantitative Tauberian theorem of Proposition \ref{prop:stieljes-3D-quantitative} applied with $C_0=C/d(x,\partial\Omega)$ and $\epsilon_0=d(x,\partial\Omega)/2$, together with the error bound of Corollary \ref{coro:error-domain}. To get rid of the additional factor $\ln_+\epsilon_0^{-1}$, one may exploit Remark \ref{rk:quantitative-tauberian-improved} and a rough bound on $B_1(t):=\1(-\Delta+V\le t)(x,x)$ obtained in the following way. From the bound \eqref{eq:est-GlambdaV}, we deduce that for all $\lambda\ge\Lambda$
 $$\int_0^\infty\frac{dB_1(t)}{(t+\lambda)^2}\,dt=(-\Delta+V+\lambda)^{-2}(x,x)=\int_\Omega |G_\lambda^V(x,y)|^2\,dy\le C\lambda^{-1/2},$$
 where $C,\Lambda$ only depend on $V$. Since $B_1$ is non-decreasing, 
 $$B_1(\lambda)-B_1(0)=\int_0^\lambda dB_1(t) \le 4\lambda^2\int_0^\lambda\frac{dB_1(t)}{(t+\lambda)^2}\le4\lambda^2\int_0^\infty\frac{dB_1(t)}{(t+\lambda)^2}\le C\lambda^{3/2},$$
 hence for all $\lambda\ge0$ we have
 $$|B_1(\lambda)|\le |B_1(0)|+C(\Lambda^{3/2}+\lambda^{3/2}),$$
 which implies the improved bound of Proposition \ref{prop:stieljes-3D-quantitative} as explained in Remark \ref{rk:quantitative-tauberian-improved}.
\end{remark}

\fi

\begin{corollary}\label{coro:weyl-V-domain}
	Assume that $\Omega$ has finite measure and satisfies $|\{ x\in\Omega :\ d(x,\partial\Omega)\le\delta\}|\le C\delta$ for all sufficiently small $\delta>0$. Let $V:\Omega\to\R$ be in the Kato class. Then, one has
 $$N(t,-\Delta_\Omega^D+V)=\frac{t^{3/2}}{6\pi^2}\,|\Omega| + \cO(t\log t),\quad t\to+\ii.$$
\end{corollary}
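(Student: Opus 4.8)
The plan is to integrate the pointwise expansion of Proposition~\ref{prop:weyl-V-domain} over $\Omega$. Fix $\epsilon\in(0,\epsilon_0]$ small enough that the hypothesis $|\{x\in\Omega:\dist(x,\partial\Omega)\le\delta\}|\le C\delta$ holds for all $\delta\le\epsilon$ and that $A\|V\|_{\cK(\epsilon)}<\tfrac12$. Since $|\Omega|<\infty$, the operator $-\Delta_\Omega^D+V$ has compact resolvent and discrete spectrum, and $N(t,-\Delta_\Omega^D+V)=\int_\Omega\1(-\Delta_\Omega^D+V\le t)(x,x)\,dx$. By Proposition~\ref{prop:weyl-V-domain}, with $\alpha(x):=\min\{\dist(x,\partial\Omega),\epsilon\}$,
\[
\1(-\Delta_\Omega^D+V\le t)(x,x)=\frac{t^{3/2}}{6\pi^2}-\tfrac12 r_{0,\epsilon}^V(t,x)\,t^{3/2}+E(t,x),\qquad |E(t,x)|\le C_\epsilon\,\alpha(x)^{-1}\bigl(t+\alpha(x)^{-2}\bigr).
\]
Integration of the first term yields the main term $\frac{t^{3/2}}{6\pi^2}|\Omega|$, so it remains to prove $t^{3/2}\int_\Omega|r_{0,\epsilon}^V(t,x)|\,dx=\cO(t)$ and $\int_\Omega|E(t,x)|\,dx=\cO(t\log t)$.

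The term with $r_{0,\epsilon}^V$ is handled exactly as in the proof of Theorem~\ref{thm:weylintsharp}: using $|\kappa(r)|\le Cr^{-1}$, the elementary bound displayed there (the geometric weights $U_0$ being $\equiv1$ in the Euclidean case) and the Kato condition, one gets $\int_\Omega|r_{0,\epsilon}^{(n,V)}(t,x)|\,dx\le C\epsilon\,(A\|V\|_{\cK(\epsilon)})^{n-1}\|V\|_{L^1(\Omega)}\,t^{-1/2}$ for every $n\ge1$; summing over $n$ gives $\int_\Omega|r_{0,\epsilon}^V(t,x)|\,dx=\cO(t^{-1/2})$, so this term contributes only $\cO(t)$. (Here $V\in L^1(\Omega)$, which is immediate from the Kato condition and the finiteness of $|\Omega|$, the latter forcing $\{x\in\Omega:\dist(x,\partial\Omega)>s\}$ to be bounded for each $s>0$.)

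The error term is the crux, the difficulty being that $\alpha(x)^{-1}(t+\alpha(x)^{-2})$ is \emph{not} integrable up to $\partial\Omega$. I would split $\Omega=\Omega_1\cup\Omega_2\cup\Omega_3$ with $\Omega_1:=\{\dist(x,\partial\Omega)\le t^{-1/2}\}$, $\Omega_2:=\{t^{-1/2}<\dist(x,\partial\Omega)\le\epsilon\}$, $\Omega_3:=\{\dist(x,\partial\Omega)>\epsilon\}$, assuming $t$ large enough that $t^{-1/2}<\epsilon$. On $\Omega_1$ the bound of Proposition~\ref{prop:weyl-V-domain} is useless, but combining $|r_{0,\epsilon}^V|\le C\|V\|_{\cK(\epsilon)}$ with the uniform bound $\1(-\Delta_\Omega^D+V\le t)(x,x)\le C(1+t^{3/2})$ — which follows from Lemma~\ref{lem:domain-uniform} by the monotonicity argument of Step~1 in the proof of Theorem~\ref{thm:stieljes-3D} — gives $|E(t,x)|\le Ct^{3/2}$ on $\Omega_1$, hence $\int_{\Omega_1}|E|\le Ct^{3/2}|\Omega_1|\le Ct^{3/2}\cdot t^{-1/2}=\cO(t)$ by the measure hypothesis. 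On $\Omega_3$, $\alpha(x)=\epsilon$ is fixed, so $\int_{\Omega_3}|E|\le C_\epsilon(t\epsilon^{-1}+\epsilon^{-3})|\Omega|=\cO(t)$. On $\Omega_2$, $\alpha(x)=\dist(x,\partial\Omega)$, so $\int_{\Omega_2}|E|\le C_\epsilon\int_{\Omega_2}\bigl(t\,\dist(x,\partial\Omega)^{-1}+\dist(x,\partial\Omega)^{-3}\bigr)\,dx$; setting $F(s):=|\{x\in\Omega:\dist(x,\partial\Omega)\le s\}|$ and integrating by parts in $\int_{\{t^{-1/2}<\dist\le\epsilon\}}\dist(x,\partial\Omega)^{-k}\,dx=\int_{t^{-1/2}}^{\epsilon}s^{-k}\,dF(s)$, the bound $F(s)\le Cs$ gives $\int_{\Omega_2}\dist(x,\partial\Omega)^{-1}\,dx\le C\log(\epsilon\sqrt t)+C'=\cO(\log t)$ and $\int_{\Omega_2}\dist(x,\partial\Omega)^{-3}\,dx\le Ct+C'=\cO(t)$. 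Hence $\int_{\Omega_2}|E|=\cO(t\log t)$, and collecting the three pieces gives $\int_\Omega|E|=\cO(t\log t)$, which together with the previous paragraph proves the corollary.

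The only genuinely delicate point is the balancing in $\Omega_2$: a cutoff smaller than $t^{-1/2}$ makes $\int_{\Omega_1}|E|\sim t^{3/2}\tau$ too large, a larger one makes $\int_{\Omega_2}t\,\dist^{-1}\sim t\log(1/\tau)$ too large, and $\tau\sim t^{-1/2}$ is the optimal compromise; this is precisely the source of the logarithmic loss, and explains why the remainder is $\cO(t\log t)$ rather than $\cO(t)$. All remaining steps are routine once Proposition~\ref{prop:weyl-V-domain} and Lemma~\ref{lem:domain-uniform} are in hand.
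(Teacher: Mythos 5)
Your proof is correct and follows essentially the same route as the paper: integrate the pointwise expansion of Proposition~\ref{prop:weyl-V-domain}, control the error in the bulk by the bound $|\{x\in\Omega:\ d(x,\partial\Omega)\le\delta\}|\le C\delta$ (which, exactly as in your coarea/integration-by-parts computation, yields \eqref{eq:bdrynbh} and hence the $\log t$ loss), and handle the boundary layer $\{d(x,\partial\Omega)\le t^{-1/2}\}$ by the crude uniform bound $\1(-\Delta_\Omega^D+V\le t)(x,x)\le Ct^{3/2}$ deduced from Lemma~\ref{lem:domain-uniform}. Your choice of cutoff at $t^{-1/2}$ and your explanation of why it is optimal coincide with the paper's.

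Two remarks. First, you are actually \emph{more} explicit than the paper on one point: the term $\tfrac12 r_{0,\epsilon}^V(t,x)\,t^{3/2}$ is only $\mathcal O(t^{3/2})$ pointwise, so it must be controlled after integration; the paper's proof leaves this implicit, while you correctly import the computation from the proof of Theorem~\ref{thm:weylintsharp} giving $\int_\Omega|r_{0,\epsilon}^{(n,V)}(t,x)|\,dx\le C\epsilon(A\|V\|_{\cK(\epsilon)})^{n-1}\|V\|_{L^1}\,t^{-1/2}$. Second, your parenthetical claim that $V\in L^1(\Omega)$ is ``immediate'' from the Kato condition and $|\Omega|<\infty$ is only immediate when $\Omega$ is bounded (cover $\overline\Omega$ by finitely many balls of radius $r$ and use $\int_{B(x,r)\cap\Omega}|V|\le r\|V\|_{\cK(r)}$); for an unbounded open set of finite measure the boundary layer can carry infinitely many such balls, and one can construct finite-measure $\Omega$ satisfying $|\{d\le\delta\}|\le C\delta$ together with a Kato-class $V\notin L^1(\Omega)$ (e.g.\ a union of well-separated balls $B(y_j,j^{-2/3})$ with $V=|y-y_j|^{-3/2}$ on each). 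This is a shared limitation with the paper's argument, not a defect relative to it, but you should either assume $\Omega$ bounded or add $V\in L^1(\Omega)$ as a hypothesis rather than call it immediate.
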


For $V=0$ this corollary is due to Courant \cite{Courant-20}. Still for $V=0$, the error estimate $\cO(t\log t)$ was improved by Seeley \cite{Seeley-78} to the sharp order $\cO(t)$. We have not tried to adapt Seeley's proof to the case $V\neq0$.

\begin{proof}
 We write $N(t,-\Delta_\Omega^D+V)= \int_\Omega \1(-\Delta_\Omega^D+V\le t)(x,x)\,dx$ and split the integral into the region $\{ x\in\Omega:\ d(x,\partial\Omega)>1/\sqrt{t}\}$ and its complement.
 
 In the set $\{ x\in\Omega:\ d(x,\partial\Omega)>1/\sqrt{t}\}$ we use the bound of Proposition \ref{prop:weyl-V-domain} together with
 \begin{equation}
 \label{eq:bdrynbh}
\int_\Omega\frac{\1(d(x,\partial\Omega)>1/\sqrt{t})}{d(x,\partial\Omega)}\,dx\le C\log t,
 \end{equation} 
 and obtain 
 \begin{align*}
    \int_\Omega\1(d(x,\partial\Omega)>1/\sqrt{t}) & \1(-\Delta_\Omega^D+V\le t)(x,x)\,dx \\
    &= \frac{t^{3/2}}{6\pi^2}|\{x\in\Omega,\ d(x,\partial\Omega)>1/\sqrt{t})\}|+\cO(t\log t) \\
    &= \frac{t^{3/2}}{6\pi^2}(|\Omega|-\cO(t^{-1/2}))+\cO(t\log t)\\
    &= \frac{t^{3/2}}{6\pi^2}|\Omega|+\cO(t\log t).
 \end{align*}
 To prove \eqref{eq:bdrynbh} under the assumptions of the corollary, we write, using the coarea formula and integration by parts,
 \begin{align*}
 \int_\Omega & \frac{\1(d(x,\partial\Omega)>1/\sqrt{t})}{d(x,\partial\Omega)} \,dx \\
 & = \int_{1/\sqrt t}^\infty \mathcal H^2(\{ x\in\Omega:\ d(x,\partial\Omega)=s\}) \,\frac{ds}{s} \\
 & = \int_{1/\sqrt t}^\infty |\{ x\in\Omega:\ d(x,\partial\Omega)\le s\}| \,\frac{ds}{s^2}
 - |\{ x\in\Omega:\ d(x,\partial\Omega)\le 1/\sqrt t \}|\sqrt t 
 \end{align*}
 Since $|\{ x\in\Omega:\ d(x,\partial\Omega)\le s\}|$ is $\le C s$ for all sufficiently small $s$ and $\le |\Omega|$ for all $s$, we obtain \eqref{eq:bdrynbh}.
  
 In the set $\{ x\in\Omega:\ d(x,\partial\Omega)\le 1/\sqrt{t}\}$ we use the uniform bound of Lemma \ref{lem:domain-uniform}, which implies that for all $\lambda$ large enough one has
 $$
 \frac{1}{(t+\lambda)^2}\1(-\Delta_\Omega^D+V\le t)(x,x)\le -\partial_\lambda G_\lambda^V(x,x)\le \frac{C}{\sqrt{\lambda}} \,.
 $$
 Taking $\lambda=t$ gives for all $t$ large enough,
 $$\1(-\Delta_\Omega^D+V\le t)(x,x)\le Ct^{3/2},$$
 and hence 
 $$\int_\Omega \1(d(x,\partial\Omega)\le1/\sqrt{t})\1(-\Delta_\Omega^D+V\le t)(x,x)\,dx\le C\frac{t^{3/2}}{\sqrt{t}}=Ct \,.$$
 
 Combining the bounds in the two different sets we obtain the claimed asymptotics.
\end{proof}

%%%%%%%%%
%%%%%%%%%

\appendix

\section{Schr\"odinger operators with Kato-class potentials}\label{app:op-kato}

In this appendix we work in general dimensions $N\ge 1$, since this presents no extra effort and since the results might be useful elsewhere. The Kato class was explicitly introduced by Kato \cite{Kato-72} in the context of selfadjointness of Schr\"odinger operators (see also \cite{Schechter-71}), and developed by Aizenman and Simon \cite{AizSim-82} as a natural class of potentials to study Schr\"odinger semigroups. For a comprehensive review, we refer to \cite{Simon-82}.

\subsection{Kato class potentials}

Let
$$w_N(r)=\begin{cases}
r^{-(N-2)} &\text{if}\ N\ge3,\\
\log(1/r) &\text{if}\ N=2,\\
1 &\text{if}\ N=1,
\end{cases}
$$
and set, for any measurable function $V:\R^N\to\R$,
$$\|V\|_{\cK(r)}:=\sup_{x\in\R^N}\int_{|y-x|<r}|V(y)|w_N(|x-y|)\,dy.$$
We say that $V:\R^N\to\R$ belongs to the Kato class if it is measurable and
$$\lim_{r\to0}\|V\|_{\cK(r)} = 0 \,.$$

\begin{remark}\label{katovs32}
When $N\ge2$, any potential $V\in L^p(\R^N)+L^\ii(\R^N)$ for $p>N/2$ belongs to the Kato class by the Hölder inequality (one can even replace $L^p(\R^N)$ by $L^p_{{\rm unif}}(\R^N)$). However, the Kato class and $L^{N/2}$ are distinct, and neither one is included in the other. For instance, $x\mapsto\1(|x|\le1)/(|x|^2\log|x|)\in L^{N/2}$ but does not belong to the Kato class; and $x\mapsto\1(|x|\le2)(|x|-1)^{-\alpha}$ for $2/N<\alpha<1$ belongs to the Kato class but not to $L^{N/2}$. 
\end{remark}

For $\lambda>0$ let us introduce the Green's function of the Laplacian,
$$
G_\lambda(x-y):=(-\Delta+\lambda)^{-1}(x,y)=\frac{1}{(2\pi)^N}\int_{\R^N}\frac{e^{i\xi\cdot (x-y)}}{|\xi|^2+\lambda}\,d\xi \,,\qquad x,y\in\R^N \,.
$$

\begin{lemma}\label{lem:est-kato}
 Let $V$ a Kato class potential. Then, for all $\lambda\ge1$ we have 
 $$\|V*G_\lambda\|_{L^\ii}\le C\|V\|_{\cK(2/\sqrt{\lambda})}.$$
\end{lemma}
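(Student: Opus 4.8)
The plan is to estimate $\|V*G_\lambda\|_{L^\ii} = \sup_{x\in\R^N}\int_{\R^N}|V(y)|\,G_\lambda(x-y)\,dy$ by splitting the $y$-integral over the ball $B(x,2/\sqrt\lambda)$ and over the concentric spherical shells of the same thickness: on the ball the Kato weight $w_N$ appears for free, while on each shell the exponential decay of $G_\lambda$ beats the polynomial growth of the shell volume. First I would record the two pointwise bounds on $G_\lambda$ that are needed. Since $G_\lambda$ is positive, radial and nonincreasing in $\lambda$ (because $\partial_\lambda G_\lambda = -(-\Delta+\lambda)^{-2}(x,y)\le 0$ pointwise), one has $G_\lambda(z)\le G_0(z)=c_N w_N(|z|)$ for $N\ge 3$, with the analogous statement for $N=1,2$ up to harmless constants near $|z|\sim 1$. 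Moreover, from the Bessel-function representation $G_\lambda(z)=c_N(\sqrt\lambda/|z|)^{(N-2)/2}K_{(N-2)/2}(\sqrt\lambda|z|)$ and the elementary bound $K_\nu(t)\le C_\nu t^{-\nu}e^{-t/2}$ (resp. $K_0(t)\le Ce^{-t/2}$ for $t\ge 1$) one gets the decaying bound $G_\lambda(z)\le C_N\,w_N(|z|)\,e^{-\sqrt\lambda|z|/2}$ for $N\ge 3$, again with the obvious modification for $N\le 2$.

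Now fix $x$, write $r_\lambda:=2/\sqrt\lambda$, and set $A_0:=\{y:|x-y|<r_\lambda\}$ and $A_k:=\{y: kr_\lambda\le |x-y|<(k+1)r_\lambda\}$ for $k\ge 1$. On $A_0$ the first pointwise bound gives directly
$$\int_{A_0}|V(y)|\,G_\lambda(x-y)\,dy\le c_N\int_{|x-y|<r_\lambda}|V(y)|\,w_N(|x-y|)\,dy\le c_N\|V\|_{\cK(r_\lambda)}$$
by the very definition of the Kato norm; this disposes of the singular part. For $k\ge 1$, on $A_k$ we have $\sqrt\lambda|x-y|\ge 2k$, hence $G_\lambda(x-y)\le C_N(kr_\lambda)^{-(N-2)}e^{-k}$ from the second pointwise bound. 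I would then cover the shell $A_k$ by at most $C_N k^{N-1}$ balls $B(x_i,r_\lambda)$; on each such ball monotonicity of $w_N$ yields $w_N(|x_i-y|)\ge w_N(r_\lambda)$, so $\int_{B(x_i,r_\lambda)}|V|\le w_N(r_\lambda)^{-1}\|V\|_{\cK(r_\lambda)}=C_N\lambda^{-(N-2)/2}\|V\|_{\cK(r_\lambda)}$, and summing over the covering gives $\int_{A_k}|V|\le C_N k^{N-1}\lambda^{-(N-2)/2}\|V\|_{\cK(r_\lambda)}$. Multiplying by the bound on $G_\lambda$ and using $(kr_\lambda)^{-(N-2)}=C_N\lambda^{(N-2)/2}k^{-(N-2)}$, the powers of $\lambda$ cancel exactly and one obtains $\int_{A_k}|V|G_\lambda(x-\cdot)\le C_N\,k\,e^{-k}\|V\|_{\cK(r_\lambda)}$. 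Since $\sum_{k\ge 1}k e^{-k}<\ii$, adding this to the $A_0$ contribution yields $\|V*G_\lambda\|_{L^\ii}\le C_N\|V\|_{\cK(2/\sqrt\lambda)}$, uniformly in $\lambda\ge 1$.

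I expect the only genuinely delicate bookkeeping to be this cancellation of powers of $\lambda$ between the size of $G_\lambda$ on a shell at distance $\sim k/\sqrt\lambda$ and the Kato-norm bound on a ball of radius $2/\sqrt\lambda$, together with the requirement that $e^{-k}$ absorb the covering count $k^{N-1}$; both work out without slack. One point needs a separate word: in dimension $N=2$ the weight $w_2(r)=\log(1/r)$ is positive and decreasing only for $r<1$, so for $1\le\lambda\le 4$ (when $r_\lambda$ is close to $2$) I would first observe that $\|V\|_{\cK(r_\lambda)}$ is comparable to $\|V\|_{\cK(1)}$ and run the shell argument with balls of radius $\min\{r_\lambda,1/2\}$; in dimension $N=1$ the weight is constant and the argument is even simpler. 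This low-dimensional endpoint adjustment, rather than the main estimate, is the fiddly part.
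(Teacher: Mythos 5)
Your proof is correct and follows essentially the same route as the paper's: the same decomposition into a ball of radius $\sim 1/\sqrt\lambda$ plus concentric shells, the same covering of the $k$-th shell by $O(k^{N-1})$ balls of radius $2/\sqrt\lambda$ combined with the monotonicity of $w_N$ and the exponential far-field bound on $G_\lambda$, and the same cancellation of the powers of $\lambda$. The only differences are cosmetic (your slightly weaker but sufficient bound $G_\lambda(z)\le C w_N(|z|)e^{-\sqrt\lambda |z|/2}$, and your explicit care with the $N=2$ endpoint, which the paper handles by the ratio $\log\frac{1}{\sqrt\lambda |z|}/\log\frac{1}{|z|}$ in the near region).
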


The proof is based on the following bounds on the Green's function $G_\lambda$.

\begin{lemma}\label{lem:bound-green}
	There is a $C>0$ such that for any $\lambda>0$ and $x\neq y\in\R^N$,  the function $G_\lambda$ satisfies:	
	 if $|x-y|\le1/\sqrt{\lambda}$,
	$$|G_\lambda(x-y)|\le\begin{cases}
	\frac{C}{|x-y|^{N-2}} & \text{if}\ N\ge3,\\
	C\log\frac{1}{\sqrt{\lambda}|x-y|}& \text{if}\ N=2,\\
	\frac{C}{\sqrt{\lambda}}&\text{if}\ N=1,
	\end{cases}
	$$
	and if $|x-y|\ge1/\sqrt{\lambda}$,
	$$|G_\lambda(x-y)|\le C\lambda^{\frac{N-2}{2}}\frac{e^{-\sqrt{\lambda}|x-y|}}{(\sqrt{\lambda}|x-y|)^{\frac{N-1}{2}}}=C\frac{e^{-\sqrt{\lambda}|x-y|}}{|x-y|^{N-2}}(\sqrt{\lambda}|x-y|)^{\frac{N-3}{2}}.$$
\end{lemma}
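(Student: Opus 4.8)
\emph{Approach.} The plan is to reduce everything to $\lambda=1$ by scaling and then estimate a single explicit one‑dimensional integral. Inserting $(|\xi|^2+\lambda)^{-1}=\int_0^\ii e^{-t(|\xi|^2+\lambda)}\,dt$ into the defining Fourier integral for $G_\lambda$ and carrying out the (absolutely convergent) Gaussian integration in $\xi$ gives the subordination formula
$$
G_\lambda(x)=(4\pi)^{-N/2}\int_0^\ii t^{-N/2}e^{-|x|^2/(4t)-\lambda t}\,dt \,,
$$
and the substitution $t\mapsto t/\lambda$ yields $G_\lambda(x)=\lambda^{(N-2)/2}G_1(\sqrt\lambda\,x)$. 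Hence it suffices to prove the two bounds for $G_1$ in the regimes $|x|\le1$ and $|x|\ge1$ and then replace $x$ by $\sqrt\lambda\,x$. (The substitution $t=\tfrac{|x|}{2}e^u$ used below in fact shows $G_1$ to be a constant multiple of $|x|^{-(N-2)/2}K_{(N-2)/2}(|x|)$, so an alternative is to invoke the standard small‑ and large‑argument asymptotics of the modified Bessel function $K_\nu$; I would rather keep the argument self‑contained.)

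\emph{Near regime $0<|x|\le1$.} Here I would substitute $t\mapsto|x|^2/(4t)$ in the subordination integral, which extracts a factor $|x|^{2-N}$ and leaves $\int_0^\ii u^{N/2-2}e^{-u-|x|^2/(4u)}\,du$. For $N\ge3$ one bounds $e^{-|x|^2/(4u)}\le1$ and the remaining integral is $\le\Gamma(N/2-1)<\ii$, giving $|G_1(x)|\le C|x|^{-(N-2)}$. For $N=2$ the prefactor is absent and, after one further change of variables, the integral equals $\int_{|x|^2/4}^\ii v^{-1}e^{-v}\,dv$, which is of size $O(1+\log(1/|x|))$; this yields the logarithmic bound. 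For $N=1$ the integral is explicit and $G_1(x)=\tfrac12 e^{-|x|}\le\tfrac12$.

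\emph{Far regime $|x|\ge1$.} Here I would use the substitution $t=\tfrac{|x|}{2}e^u$, under which $|x|^2/(4t)+t=|x|\cosh u$ and
$$
G_1(x)=c_N\,|x|^{1-N/2}\int_{-\ii}^\ii e^{-(N/2-1)u}e^{-|x|\cosh u}\,du \,.
$$
Using the pointwise bound $\cosh u\ge1+u^2/2$ one dominates $e^{-|x|\cosh u}\le e^{-|x|}e^{-|x|u^2/2}$ and evaluates the resulting Gaussian integral, which is $\le C_N|x|^{-1/2}$ because $|x|\ge1$ (the exponential correction $e^{(N/2-1)^2/(2|x|)}$ is then bounded). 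This gives $|G_1(x)|\le C_N|x|^{-(N-1)/2}e^{-|x|}$; replacing $x$ by $\sqrt\lambda\,x$ and using $G_\lambda(x)=\lambda^{(N-2)/2}G_1(\sqrt\lambda x)$ produces the first displayed form of the lemma, and the second is the algebraic identity $\lambda^{(N-2)/2}(\sqrt\lambda|x|)^{-(N-1)/2}=|x|^{-(N-2)}(\sqrt\lambda|x|)^{(N-3)/2}$.

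\emph{Main obstacle.} Frankly there is no serious obstacle: the proof is essentially bookkeeping. The one step that produces the sharp answer rather than a crude bound is the substitution $t=\tfrac{|x|}{2}e^u$ combined with $\cosh u\ge1+u^2/2$ in the far regime, which simultaneously pins down the exponential rate $e^{-\sqrt\lambda|x-y|}$ and the polynomial prefactor $(\sqrt\lambda|x-y|)^{-(N-1)/2}$. Beyond this, the only things requiring care are correctly tracking the powers of $\lambda$ through the scaling relation, and, for $N=2$, reading the logarithmic bound near the transition radius $|x-y|=1/\sqrt\lambda$ with an implicit additive constant.
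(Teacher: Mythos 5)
Your argument is correct and follows precisely the first route the paper itself sketches for this lemma (the paper gives no detailed proof, merely noting that the Gaussian-superposition formula for $(|\xi|^2+\lambda)^{-1}$ plus ``straightforward estimates'' suffices); your scaling reduction to $\lambda=1$ and the substitution $t=\tfrac{|x|}{2}e^u$ with $\cosh u\ge 1+u^2/2$ in the far regime carry out exactly those estimates. The only quibble is that in the $N=2$ near regime the integral $\int_0^\infty u^{-1}e^{-u-|x|^2/(4u)}\,du$ is not literally equal to $\int_{|x|^2/4}^\infty v^{-1}e^{-v}\,dv$ --- one should split at $u=|x|^2/4$ and substitute $u\mapsto |x|^2/(4u)$ on the small-$u$ piece, which bounds that piece by an absolute constant --- but the resulting $O\bigl(1+\log(1/|x|)\bigr)$ estimate is exactly as you state, and the need for the additive constant near the transition radius is a (harmless) imprecision already present in the lemma's statement rather than in your proof.
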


This lemma is well-known and there are several ways to prove it. One is by writing $(|\xi|^2+\lambda)^{-1}$ as a superposition of Gaussians $e^{-t|\xi|^2}$ and using the Fourier transform of the Gaussian together with some straightforward estimates. Another way is to recognize $G_\lambda(x-y)$ as an inverse power of $|x-y|$ times a Bessel function of the third kind and using standard estimates for these.

\begin{proof}[Proof of Lemma \ref{lem:est-kato}]
 Let $x\in\R^N$. Let us split
 \begin{multline*}
    \int_{\R^N}|V(y)| G_\lambda(x-y) \,dy = \int_{|x-y|\le1/\sqrt{\lambda}}|V(y)| G_\lambda(x-y)\,dy \\
    + \sum_{k\ge1} \int_{k/\sqrt{\lambda}<|x-y|\le(k+1)/\sqrt{\lambda}} |V(y)| G_\lambda(x-y)\,dy.
 \end{multline*}
 Using the first bounds of Lemma \ref{lem:bound-green}, we deduce that 
 $$\int_{|x-y|\le1/\sqrt{\lambda}}|V(y)| G_\lambda(x-y)\,dy \le \|V\|_{\cK(1/\sqrt{\lambda})}\sup_{|z|\le1/\sqrt{\lambda}} \frac{G_\lambda(z)}{w_N(|z|)},$$
 and since for all $|z|\le1/\sqrt{\lambda}$ we have
 $$\frac{G_\lambda(z)}{w_N(|z|)}\le
 \begin{cases}
  C &\text{if}\ N\ge3,\\
  C\frac{\log \frac{1}{\sqrt{\lambda}|z|}}{\log\frac{1}{|z|}}=C(1+\frac{\log \frac{1}{\sqrt{\lambda}}}{\log\frac{1}{|z|}}) \le 2C &\text{if}\ N=2,\\
  \frac{C}{\sqrt{\lambda}} \le C &\text{if}\ N=1,
 \end{cases}
 $$
 we deduce that 
 $$\int_{|x-y|\le1/\sqrt{\lambda}}|V(y)| G_\lambda(x-y)\,dy \le C\|V\|_{\cK(1/\sqrt{\lambda})}.$$
 For any $k\ge1$, split the annulus 
 $$A=\{ z\in\R^N\ :\ k/\sqrt{\lambda}< |z|\le (k+1)/ \sqrt{\lambda}\}= A\cap \bigcup_{j=1}^{M(k,\lambda)} B(z_j,2/\sqrt{\lambda})$$
 in $M(k,\lambda)\le Ck^{N-1}$ balls of radius $2/\sqrt{\lambda}$. Thus, we have
 \begin{align*}
    \int_{k/\sqrt{\lambda}<|x-y|\le(k+1)/\sqrt{\lambda}} |V(y)| G_\lambda(x-y)\,dy &\le \sum_{j=1}^{M(k,\lambda)} \int_{\substack{|x-y-z_j|\le 2/\sqrt{\lambda} \\ x-y\in A}} |V(y)| G_\lambda(x-y)\,dy\\
    &\le \|V\|_{\cK(2/\sqrt{\lambda})} \sum_{j=1}^{M(k,\lambda)}\sup_{\substack{|z-z_j|\le2/\sqrt{\lambda} \\ z\in A}}\frac{G_\lambda(z)}{w_N(|z-z_j|)} \,.
 \end{align*}
 Since $r\mapsto 1/w_N(r)$ is an increasing function, for all $|z-z_j|\le 2/\sqrt{\lambda}$ we have
 $$\frac{1}{w_N(z-z_j)}\le \frac{1}{w_N(2/\sqrt{\lambda})} \,,$$
 while, since $G_\lambda$ is radially decreasing,
 \begin{align*}
    \sum_{j=1}^{M(k,\lambda)}\sup_{\substack{|z-z_j|\le2/\sqrt{\lambda} \\ z\in A}}\frac{G_\lambda(z)}{w_N(|z-z_j|)} &\le CM(k,\lambda)\frac{G_\lambda(k/\sqrt{\lambda})}{w_N(2/\sqrt{\lambda})}\\
    &\le Ck^{N-1}\frac{\lambda^{(N-2)/2}}{w_N(2/\sqrt{\lambda})}\frac{e^{-k}}{k^{N-2}} k^{(N-3)/2} \\
    &\le Ck^{(N-1)/2} e^{-k} \,.
 \end{align*}
 Summing over $k$, this leads to the result.
\end{proof}

As a consequence of Lemma \ref{lem:bound-green} we now prove the following quadratic form estimate, following \cite[Rk.~(2), p.~459]{Simon-82}.

\begin{lemma}
 Let $V$ in the Kato class. Then, for all $u\in C^\ii_0(\R^N)$ and all $\lambda\ge1$ we have 
 $$\int_{\R^N} V|u|^2\,dx \le C\|V\|_{\cK(2/\sqrt{\lambda})}\left(\int_{\R^N}|\nabla u|^2\,dx+\lambda\int_{\R^N}|u|^2\,dx\right).$$
\end{lemma}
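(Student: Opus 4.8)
The plan is to prove the sharper statement that $|V|^{1/2}$ is form-bounded relative to $-\Delta$ with form bound controlled by $\|V\|_{\cK(2/\sqrt\lambda)}$, i.e.
\[
\int_{\R^N}|V|\,|u|^2\,dx\le \||V|^{1/2}(-\Delta+\lambda)^{-1/2}\|_{L^2\to L^2}^2\,\langle u,(-\Delta+\lambda)u\rangle ,
\]
and then to identify the operator norm on the right with $\||V|^{1/2}(-\Delta+\lambda)^{-1}|V|^{1/2}\|_{L^2\to L^2}$, which is an integral operator with the explicit nonnegative kernel $|V(x)|^{1/2}G_\lambda(x-y)|V(y)|^{1/2}$, where $G_\lambda$ is the Green's function estimated in Lemma \ref{lem:bound-green}. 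Since $|V|^{1/2}$ need not be bounded, I would first replace $V$ by the truncations $V_R:=|V|\,\1(|V|\le R)\,\1_{\{|x|<R\}}$, prove the estimate for each $V_R$ with a constant independent of $R$, and let $R\to\infty$ by monotone convergence on the left-hand side; since the lemma is only claimed for $u\in C^\ii_0(\R^N)$, this suffices, and it is then enough to note $\int V|u|^2\le\int|V|\,|u|^2$.

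\textbf{Main steps.} For fixed $R$ the multiplication operator $V_R^{1/2}$ is bounded, so $A_R:=V_R^{1/2}(-\Delta+\lambda)^{-1/2}$ is a bounded operator on $L^2(\R^N)$; for $u\in C^\ii_0(\R^N)$, writing $v:=(-\Delta+\lambda)^{1/2}u\in L^2$ one has $\|v\|^2=\langle u,(-\Delta+\lambda)u\rangle=\int_{\R^N}(|\nabla u|^2+\lambda|u|^2)\,dx$ and $\int_{\R^N}V_R|u|^2\,dx=\|A_Rv\|^2\le\|A_R\|^2\|v\|^2$. Using $\|A_R\|^2=\|A_RA_R^*\|=\|V_R^{1/2}(-\Delta+\lambda)^{-1}V_R^{1/2}\|$ and Schur's test for the kernel $V_R(x)^{1/2}G_\lambda(x-y)V_R(y)^{1/2}$ with the weight $p_\delta(x):=V_R(x)^{1/2}+\delta$, $\delta>0$, one gets, at points where $V_R(x)>0$,
\[
\frac{1}{p_\delta(x)}\int_{\R^N}V_R(x)^{1/2}G_\lambda(x-y)V_R(y)^{1/2}p_\delta(y)\,dy
\le (G_\lambda*V_R)(x)+\delta\,\|G_\lambda*V_R^{1/2}\|_\infty
\le \|G_\lambda*|V|\|_\infty+\delta\,M_R ,
\]
with $M_R:=\|G_\lambda*V_R^{1/2}\|_\infty<\ii$ since $V_R^{1/2}\in L^1\cap L^\ii$, while at points where $V_R(x)=0$ the left-hand side vanishes; the transpose bound is identical by symmetry of the kernel. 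Hence $\|V_R^{1/2}(-\Delta+\lambda)^{-1}V_R^{1/2}\|\le\|G_\lambda*|V|\|_\infty+\delta M_R$, and letting $\delta\to0$ gives $\|V_R^{1/2}(-\Delta+\lambda)^{-1}V_R^{1/2}\|\le\|G_\lambda*|V|\|_\infty$. Since $|V|$ is again a Kato class potential and the proof of Lemma \ref{lem:est-kato} only uses $|V|$, we have $\|G_\lambda*|V|\|_\infty\le C\|V\|_{\cK(2/\sqrt\lambda)}$ for all $\lambda\ge1$. Combining, $\int_{\R^N}V_R|u|^2\,dx\le C\|V\|_{\cK(2/\sqrt\lambda)}\int_{\R^N}(|\nabla u|^2+\lambda|u|^2)\,dx$, and $R\to\infty$ finishes the proof.

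\textbf{Main obstacle.} The only genuinely delicate point is that the natural Schur weight $|V|^{1/2}$ degenerates on $\{V=0\}$, which is why I would carry out both the truncation $V\rightsquigarrow V_R$ (so that all operators involved are manifestly bounded and the identity $\|A_R\|^2=\|A_RA_R^*\|$ is literally applicable) and the $\delta$-regularization of the weight; once these bookkeeping issues are dispatched, the estimate is a routine combination of the functional calculus with the convolution bound of Lemma \ref{lem:est-kato}.
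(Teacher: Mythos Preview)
Your argument is correct and reaches the same conclusion as the paper, but the route is genuinely different. The paper proceeds by Stein--Hadamard complex interpolation: after the same truncation step, it considers the analytic family $s\mapsto\langle |V|^s v_{2s},(-\Delta+\lambda)^{-1}|V|^{1-s}v_{2(1-s)}\rangle$ on the strip $0\le\re s\le1$, bounds it on both boundary lines by $\|G_\lambda*|V|\|_{L^\ii}\|v\|_{L^2}^2$ (using Lemma~\ref{lem:est-kato} and duality $L^1$--$L^\ii$), and then invokes the three-line lemma at $s=1/2$ to obtain $\|(-\Delta+\lambda)^{-1/2}|V|^{1/2}\|_{L^2\to L^2}^2\le C\|V\|_{\cK(2/\sqrt\lambda)}$. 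You instead use the $C^*$-identity $\|A_R\|^2=\|A_RA_R^*\|$ to reduce directly to a bound on the integral operator $V_R^{1/2}(-\Delta+\lambda)^{-1}V_R^{1/2}$, which you control by Schur's test with the regularized weight $V_R^{1/2}+\delta$, landing on the same quantity $\|G_\lambda*|V|\|_{L^\ii}$. Your approach is more elementary in that it avoids complex analysis entirely; the paper's interpolation argument, on the other hand, is the standard device for passing from $L^1\to L^\ii$ bounds on $(-\Delta+\lambda)^{-1}|V|$ to $L^2\to L^2$ bounds on $(-\Delta+\lambda)^{-1/2}|V|^{1/2}$ without ever writing down the kernel of the square root. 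Both are textbook moves here, and the $\delta$-regularization you introduce to make the Schur weight strictly positive is the right way to handle the degeneracy on $\{V=0\}$.
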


Note that, since the Kato class is included in $L^1_{{\rm loc}}(\R^N)$, the left-side of the previous inequality is well-defined. Moreover, the inequality shows that the quadratic form $u\mapsto \langle u,Vu\rangle$ extends continuously from $C^\ii_0(\R^N)$ to $H^1(\R^N)$, and that it is relatively bounded with respect to the quadratic form of $-\Delta$ with relative bound zero (by taking $\lambda$ large enough). By the KLMN theorem, the quadratic form $u\mapsto\int\left(|\nabla u|^2+V|u|^2\right)dx$ defined on $H^1(\R^N)$ is thus associated to a unique bounded below self-adjoint operator with domain included in $H^1(\R^N)$. This is the definition of $-\Delta+V$ for $V$ in the Kato class.

\begin{proof}
 We prove the estimate for bounded, compactly supported $V$. Then, replacing $V$ by $V_n=\1(|V|\le n)\1(|x|\le n)V$ and taking the limit $n\to\ii$ by monotone convergence on the left side and using $\|V_n\|_{\cK(r)}\le\|V\|_{\cK(r)}$ on the right side leads to the result. Now let $v$ a Schwartz function. Define for all $z\in\C$,
 $$v_z=|v|^z \frac{v}{|v|}$$
 and consider the function
 $$\phi:s\in S\mapsto \left\langle |V|^s v_{2s},(-\Delta+\lambda)^{-1}|V|^{1-s} v_{2(1-s)}\right\rangle$$
 defined on the strip $S=\{z\in\C:\ 0\le \re z\le 1\}$. Due to the bound for all $s\in S$ and for all $x\in\R^N$,
 $$||V(x)|^s v_{2s}(x)|\le \1_{\supp V} (1+\|V\|_{L^\ii})(1+\|v\|_{L^\ii}^2),$$
 we deduce that $\| |V|^s v_{2s}\|_{L^2}\le C$, and since $\|(-\Delta+\lambda)^{-1}\|_{L^2\to L^2}\le1$, we deduce that $\phi$ is well-defined, analytic on the interior of $S$, continuous and bounded on $S$. Now for $s=it$ we have, by Lemma \ref{lem:est-kato}
 \begin{align*}
    |\phi(it)| &\le \||V|^{it} v_{2it}\|_{L^\ii}\|(-\Delta+\lambda)^{-1}|V|^{1-it}\|_{L^\ii\to L^\ii}\| v_{2(1-it)}\|_{L^1}\\
    &\le \|G_\lambda * |V|\|_{L^\ii}\|v\|_{L^2}^2\\
    &\le C\|V\|_{\cK(2/\sqrt{\lambda})}\|v\|_{L^2}^2.
 \end{align*}
 For $s=1+it$ we have, again by Lemma \ref{lem:est-kato}
 \begin{align*}
  |\phi(1+it)| &\le \|v_{2(1+it)}\|_{L^1}\|(-\Delta+\lambda)^{-1}|V|^{1+it}\|_{L^\ii\to L^\ii}\||V|^{-it}v_{-2it}\|_{L^\ii}\\
  &\le \|G_\lambda * |V| \|_{L^\ii} \|v\|_{L^2}^2\\
  &\le C\|V\|_{\cK(2/\sqrt{\lambda})}\|v\|_{L^2}^2. 
 \end{align*}
 Using Hadamard's three line lemma, we deduce that 
 $$ \| (-\Delta+\lambda)^{-1/2}|V|^{1/2} v\|_{L^2}^2 =|\phi(1/2)| \le C\|V\|_{\cK(2/\sqrt{\lambda})}\|v\|_{L^2}^2.$$
 As a consequence, for any Schwartz function $w$, 
 $$\langle |V|^{1/2}(-\Delta+\lambda)^{-1/2}w,v\rangle \le C\|V\|_{\cK(2/\sqrt{\lambda})}^{1/2}\|v\|_{L^2}\|w\|_{L^2}.$$
 Taking the supremum over all $v$ shows that 
 $$\| |V|^{1/2}(-\Delta+\lambda)^{-1/2} w\|_{L^2}  \le C\|V\|_{\cK(2/\sqrt{\lambda})}^{1/2}\|w\|_{L^2}.$$
 Choosing $w=(-\Delta+\lambda)^{1/2}u$, we obtain the result. 
\end{proof}

Finally, we recall the useful bound of Rodnianski and Schlag \cite[Lemma 2.5]{RodSch-04} which is proved for $M=\R^3$ and $\epsilon=+\ii$ but carries easily to the more general following situation:

\begin{lemma}
 Let $M$ a three-dimensional Riemannian manifold and $V$ a Kato class potential on $M$. Then, for all $n\ge1$ and all $\epsilon>0$ we have 
\begin{multline}
\label{eq:rodschl}
\sup_{x_0,x_{n+1}\in M\times M}\int_{M_\epsilon(x_0,x_{n+1})}\frac{\prod_{j=1}^n |V(x_j)|}{\prod_{j=0}^n d_g(x_j,x_{j+1})}\sum_{\ell=0}^n d_g(x_\ell,x_{\ell+1}) \,dv_g(x_1)\cdots\,dv_g(x_n) \\
\le(n+1)\|V\|_{\cK(\epsilon)}^n,
\end{multline}
where $M_\epsilon(x_0,x_{n+1}):=\{(x_1,\ldots,x_n)\in M^n,\ \forall j=0,\ldots,n,\ d_g(x_j,x_{j+1})<\epsilon\}$ and 
$$\|V\|_{\cK(\epsilon)}:=\sup_{x\in M}\int_{d_g(x,y)<\epsilon}\frac{|V(y)|}{d_g(x,y)}\,dv_g(y).$$
\end{lemma}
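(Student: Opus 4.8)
The plan is to use the sum $\sum_{\ell=0}^n d_g(x_\ell,x_{\ell+1})$ in the numerator to split the left-hand side into $n+1$ contributions, in each of which one distance factor in the denominator is cancelled. Cancelling the $\ell$-th factor breaks the chain $x_0 \leftrightarrow x_1 \leftrightarrow \cdots \leftrightarrow x_{n+1}$ into two shorter chains which can then be integrated out one variable at a time, each step being exactly of the form controlled by the Kato norm.

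Concretely, I would first write
$$
\int_{M_\epsilon(x_0,x_{n+1})} \frac{\prod_{j=1}^n |V(x_j)|}{\prod_{j=0}^n d_g(x_j,x_{j+1})} \sum_{\ell=0}^n d_g(x_\ell,x_{\ell+1}) \,dv_g(x_1)\cdots dv_g(x_n) = \sum_{\ell=0}^n I_\ell \,,
$$
where
$$
I_\ell := \int_{M_\epsilon(x_0,x_{n+1})} |V(x_1)|\cdots|V(x_n)| \prod_{\substack{0\le j\le n\\ j\ne \ell}} \frac{1}{d_g(x_j,x_{j+1})}\,dv_g(x_1)\cdots dv_g(x_n) \,.
$$
In $I_\ell$ the distance $d_g(x_\ell,x_{\ell+1})$ is absent, so the graph joining the integration variables is the disjoint union of the path $x_0 \text{–} x_1 \text{–} \cdots \text{–} x_\ell$ and the path $x_{\ell+1}\text{–}\cdots\text{–}x_n\text{–}x_{n+1}$, with $x_0,x_{n+1}$ fixed. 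Each path I would peel off starting from its broken end. For the second path, integrate first in $x_{\ell+1}$, using $\int_{d_g(x_{\ell+1},x_{\ell+2})<\epsilon} |V(x_{\ell+1})|/d_g(x_{\ell+1},x_{\ell+2})\,dv_g(x_{\ell+1}) \le \|V\|_{\cK(\epsilon)}$ uniformly in $x_{\ell+2}$, then in $x_{\ell+2}$, and so on up to $x_n$, which is controlled uniformly in the fixed point $x_{n+1}$; this produces a factor $\|V\|_{\cK(\epsilon)}^{\,n-\ell}$. For the first path, integrate first in $x_\ell$ (using the edge $d_g(x_{\ell-1},x_\ell)$ and $|V(x_\ell)|$), then $x_{\ell-1}$, down to $x_1$, controlled uniformly in $x_0$; this produces $\|V\|_{\cK(\epsilon)}^{\,\ell}$. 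The remaining edge constraints $d_g(x_j,x_{j+1})<\epsilon$ that have not yet been used are simply retained at each stage (and, the integrands being nonnegative, could even be dropped), and the degenerate cases $\ell=0$ and $\ell=n$, where one path has no free vertex, are included. Hence $I_\ell \le \|V\|_{\cK(\epsilon)}^{\,n}$ for each $\ell$, and summing over $\ell=0,\dots,n$ gives the asserted bound $(n+1)\|V\|_{\cK(\epsilon)}^{\,n}$.

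I do not expect a real obstacle here; the content is purely bookkeeping, and the one point worth emphasizing is \emph{why} the statement involves $\sum_\ell d_g(x_\ell,x_{\ell+1})$: after cancelling the $\ell$-th factor, every single remaining integration is of the elementary type $\int |V(y)|/d_g(y,\cdot)\,dv_g(y) \le \|V\|_{\cK(\epsilon)}$, so one never has to bound an integral $\int |V(y)|/(d_g(x,y)\,d_g(y,z))\,dv_g(y)$, which is not controlled by the Kato norm alone. As for the claim that the $M=\R^3$, $\epsilon=\infty$ result of Rodnianski–Schlag transfers to a general three-dimensional Riemannian manifold and to finite $\epsilon$: the argument uses only the Kato-class estimate, which holds verbatim with $dv_g$ replacing Lebesgue measure, and imposing $d_g(x_j,x_{j+1})<\epsilon$ merely shrinks the domain of integration while permitting the use of $\|V\|_{\cK(\epsilon)}$ with finite $\epsilon$.
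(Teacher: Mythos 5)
Your proof is correct and is essentially the standard argument: expanding $\sum_{\ell}d_g(x_\ell,x_{\ell+1})$ cancels one denominator at a time, splits the chain into two paths, and each remaining variable is integrated out from the broken end using exactly one factor of $\|V\|_{\cK(\epsilon)}$. The paper itself states the lemma without proof (citing Rodnianski--Schlag), but it carries out precisely this telescoping-plus-iterated-Kato-integration argument in the Euclidean setting in the proof of Lemma \ref{lem:domain-decay-green-V}, so your approach matches the intended one.
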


\subsection{The case of bounded domains or compact manifolds}

If we replace $\R^N$ by an open set $\Omega\subset\R^N$ or a compact Riemannian manifold $M$, one can still define the Kato class by replacing $\sup_{x\in\R^N}$ by $\sup_{x\in\Omega}$ or $\sup_{x\in M}$, and in the case of a manifold by additionally replacing each $|x-y|$ by the Riemannian distance $d_g(x,y)$. In both cases, the same strategy as above shows that $V$ is infinitesimally form bounded with respect to $-\Delta_\Omega^D$ or $-\Delta_g$, using in both cases the fact that the Green's functions $(-\Delta_\Omega^D+\lambda)^{-1}(x,y)$ or $(-\Delta_g+\lambda)^{-1}(x,y)$ can be controlled by the one on $\R^N$. This last point can be proved by the maximum principle in the case of a Euclidean domain, or using the estimates of Section \ref{sec:parametrix} on a Riemannian manifold. Indeed, it can be shown that the function $\gamma_{\lambda,\epsilon}$ satisfies 
$$|\gamma_{\lambda,\epsilon}(x,y)|\le C e^{-\sqrt{\lambda}d_g(x,y)/4}$$
for $\epsilon^2\lambda\ge A'$, defining 
$$B=\sup_{x,y\in M}e^{\sqrt{\lambda}d_g(x,y)/4}|\gamma_{\lambda,\epsilon}(x,y)|$$
and estimating $B$ using the integral equation satisfied by $\gamma_{\lambda,\epsilon}$. The smallness comes from the estimate
$$\left|\int_M \gamma_{\lambda,\epsilon}(x,z)R_{\lambda,\epsilon}(z,y)\,dv_g(z)\right|\le C B e^{-\sqrt{\lambda}d_g(x,y)/4}\int_M \frac{e^{-\sqrt{\lambda}d_g(z,y)/4}}{\epsilon^2d_g(z,y)}\,dv_g(z)\le Ce^{-\sqrt{\lambda}d_g(x,y)/4}\frac{B}{\epsilon^2\lambda}.$$

The infinitesimal form boundedness also proves that $-\Delta+V$ either on an open set of finite measure in $\R^N$ or on a compact manifold has discrete spectrum, since for $\lambda>0$ large enough we have 
$$-\Delta+V+\lambda+1=\epsilon(-\Delta)+V+\lambda+(1-\epsilon)(-\Delta)+1\ge (1-\epsilon)(-\Delta+1)$$
in the sense of quadratic forms, hence the operator 
$$(-\Delta+1)^{1/2}(-\Delta+V+\lambda+1)^{-1/2}$$
is bounded. Since the operator $(-\Delta+1)^{-1/2}$ is compact, this shows that $(-\Delta+V+\lambda+1)^{-1/2}$ is compact and hence $-\Delta+V$ has discrete spectrum (and is bounded below).

%%%%%%%%%%%%%%%%%%

\section{Details in the proof of Theorem \ref{thm:stieljes-3D}}\label{sec:tauberdetails}

In this appendix we keep track of the various constants in the proof of Theorem \ref{thm:stieljes-3D}. It will be convenient to abbreviate
$$
A_0  :=\|B_0\|_{L^\ii}\Lambda^{3/2} + |B_1(0)| + C_{B_2}(t_0+\Lambda) + C_0\Lambda.
$$
\emph{Step 1.} We show that $A$ satisfies the following pointwise bound for all $t\ge 0$,
\begin{equation}
\label{eq:quantabound}
|A(t)|\le C\left(\|B_0\|_{L^\ii} t^{3/2} + (C_{B_2}+C_0)t+ A_0 \right).
\end{equation}
The starting point of the proof is the bound
\begin{equation}
\label{eq:quantproof0}
|A(t)|\le \|B_0\|_{L^\ii}t^{3/2}+\max\{|B_1(0)|,|B_1(t)|\}+C_{B_2}(t_0+t)
\end{equation}
for all $t\ge 0$, which follows from the fact that $B_1$ is nondecreasing. The claimed inequality \eqref{eq:quantabound} therefore follows if we can show that for all $t\ge 0$,
\begin{equation}\label{eq:quantb1t}
|B_1(t)|\le C\left(\|B_0\|_{L^\ii} t^{3/2} + (C_{B_2}+C_0) t+ A_0 \right).
\end{equation}
Since $B_1$ is nondecreasing, we have for all $\lambda>0$
\begin{equation}
\label{eq:quantproof1}
\int_0^\ii \frac{B_1(t)-B_1(0)}{(t+\lambda)^3}\,dt\ge\int_\lambda^\ii \frac{B_1(t)-B_1(0)}{(t+\lambda)^3}\,dt \ge \frac{B_1(\lambda)-B_1(0)}{8\lambda^2}.
\end{equation}
On the other hand, using
$$\int_0^\ii \frac{|B_0(t)|t^{3/2}}{(t+\lambda)^3}\,dt\le C\|B_0\|_{L^\ii}\lambda^{-1/2}
\qquad\text{and}\qquad
\int_0^\ii \frac{|B_2(t)|}{(t+\lambda)^3}\,dt \le C C_{B_2}(t_0\lambda^{-2}+\lambda^{-1}),$$
we obtain because of the assumption on $A$ that for all $\lambda\ge\Lambda$,
\begin{align*}
\int_0^\ii \frac{B_1(t)-B_1(0)}{(t+\lambda)^3}\,dt & = \int_0^\ii \frac{A(t)-B_0(t)t^{3/2}-B_2(t)-B_1(0)}{(t+\lambda)^3}\,dt \\
& \le C\left(|B_1(0)|\lambda^{-2}+\|B_0\|_{L^\ii}\lambda^{-1/2}+C_{B_2}(t_0\lambda^{-2}+\lambda^{-1})+C_0\frac{e^{-\epsilon_0\sqrt{\lambda}}}{\lambda}\right).
\end{align*}
Combining this with \eqref{eq:quantproof1} we obtain for all $\lambda\ge\Lambda$,
\begin{equation*}
B_1(\lambda) - B_1(0) \le  C\left(|B_1(0)|+\|B_0\|_{L^\ii}\lambda^{3/2}+C_{B_2}(t_0+\lambda)+C_0\lambda e^{-\epsilon_0\sqrt{\lambda}} \right).
\end{equation*}
This, together with the bound 
$$
|B_1(\lambda)|\le \left(B_1(\lambda)-B_1(0)\right) + |B_1(0)| \le \left(B_1(\max\{\lambda,\Lambda\})-B_1(0)\right) + |B_1(0)| \,,
$$
yields \eqref{eq:quantb1t}. This completes the proof of \eqref{eq:quantabound}.

\emph{Step 2.} Next, we derive bounds on $\int_0^\infty e^{-su} A(u^2)\,du$ both for $s=\sigma>0$ and for $|s|\le\epsilon_0/2$.
First, using \eqref{eq:quantabound} we obtain for all $\sigma>0$
\begin{align}
\label{eq:quantalapl}
\left| \int_0^\ii e^{-\sigma u}A(u^2)\,du \right| & \le  C\left(\|B_0\|_{L^\ii}\sigma^{-4}+(C_{B_2}+C_0)\sigma^{-3} + A_0 \sigma^{-1} \right).
\end{align}
In order to derive a bound for $|s|\le\epsilon_0/2$, we first note that because of \eqref{eq:quantabound} we have, for all $\lambda> 0$,
\begin{align*}
\int_0^\ii\frac{|A(t)|}{(t+\lambda)^3}\,dt & \le C\left(\|B_0\|_{L^\ii}\lambda^{-1/2}
+(C_{B_2}+C_0)\lambda^{-1} + A_0 \lambda^{-2} \right).
\end{align*}
Combining this estimate with \eqref{eq:kappabound} we deduce that for all $|s|\le\epsilon_0/2$,
\begin{equation*}
\int_0^\Lambda \lambda^{3/2} |\kappa(s\sqrt\lambda)|\int_0^\ii \frac{|A(t)|}{(t+\lambda)^3}\,dt\,d\lambda 
\le C \sqrt\Lambda \, e^{\epsilon_0\sqrt{\Lambda}/2} A_0 \,.
\end{equation*}
On the other hand, using the assumption on $A$ and again \eqref{eq:kappabound} we deduce that for all $|s|\le\epsilon_0/2$,
$$\int_\Lambda^\ii \lambda^{3/2} |\kappa(s\sqrt\lambda)| \left|
\int_0^\ii \frac{A(t)}{(t+\lambda)^3}\,dt \right| d\lambda \le C C_0 \int_\Lambda^\ii \lambda^{1/2} e^{-\epsilon_0\sqrt{\lambda}}e^{|s|\sqrt{\lambda}}\,d\lambda\le C\, C_0\, \epsilon_0^{-3} .$$
To summarize, we have for all $|s|\le\epsilon_0/2$,
\begin{align}\label{eq:quantalapl2}
\left| \int_0^\ii e^{-su}A(u^2)\,du \right| & \le C \left(  \sqrt\Lambda\, e^{\epsilon_0\sqrt{\Lambda}/2} A_0 + C_0\,\epsilon_0^{-3}\right).
\end{align}

\emph{Step 3.} Next, we derive bounds on $\int_0^\infty e^{-su} \,dg_0(u)$ both for $s=\sigma>0$ and for $|s|\le\epsilon_0/2$. 
We first consider the expression with $g$ instead of $g_0$. Using $|A(0)|\le |B_1(0)|+C_{B_2}t_0$, we obtain immediately from \eqref{eq:quantalapl} and \eqref{eq:quantalapl2} that for all $\sigma>0$
\begin{equation*}
\left| \int_0^\ii e^{-\sigma u} dg(u) \right| \le C\Big(
\|B_0\|_{L^\ii}\sigma^{-3}+(C_{B_2}+C_0)\sigma^{-2} + A_0  \Big)
\end{equation*}
and for all $|s|\le\epsilon_0/2$
\begin{equation*}
\left| \int_0^\ii e^{-su} dg(u) \right| \le C \left(
\epsilon_0 \sqrt\Lambda\, e^{\epsilon_0\sqrt{\Lambda}/2} A_0  + C_0\,\epsilon_0^{-2} + |B_1(0)|+C_{B_2}t_0 \right).
\end{equation*}
This implies that for all $\sigma>0$
\begin{equation}\label{eq:quantg0lapl}
\left| \int_0^\ii e^{-\sigma u} dg_0(u) \right| \le Ce^{-\sigma u_0}\Big(
\|B_0\|_{L^\ii}\sigma^{-3}+(C_{B_2}+C_0)\sigma^{-2} + A_0  \Big)
\end{equation} 
and for all $|s|\le\epsilon_0/2$
\begin{equation}\label{eq:quantg0lapl2}
\left| \int_0^\ii e^{-su} dg_0(u) \right| \le C e^{\epsilon_0 u_0/2}\left(
\epsilon_0\sqrt\Lambda\, e^{\epsilon_0\sqrt{\Lambda}/2} A_0 + C_0\,\epsilon_0^{-2} + |B_1(0)|+C_{B_2}t_0 \right).
\end{equation}

\emph{Step 4.} Now we prove bounds on $f(0)$ and $f'(s)$ for $|s|\le\epsilon_0/2$. According to \eqref{eq:quantf}, $f$ is the second antiderivative of $s\mapsto \int_0^\ii e^{-su} dg_0(u)$, normalized to vanish at infinity. To estimate the antiderivative for $|s|\le\epsilon_0/2$, we choose a path of integration which goes from $s$ to $\epsilon_0/2$ and from $\epsilon_0/2$ to $+\infty$ in straight lines. To estimate it for $\sigma>0$, we simply integrate from $\sigma$ to $+\infty$ in a straight line. Using the bounds from the previous step we find easily that for all $\sigma>0$,
\begin{equation*}
|f'(\sigma)|\le C e^{-\sigma u_0} \Big(
\|B_0\|_{L^\ii}\sigma^{-2}+(C_{B_2}+C_0)\sigma^{-1}
+ u_0^{-1} A_0 \Big)
\end{equation*}
and for all $|s|\le\epsilon_0/2$,
\begin{align*}
|f'(s)| & \le C\epsilon_0 e^{\epsilon_0 u_0/2}\left(
\epsilon_0\sqrt\Lambda\, e^{\epsilon_0\sqrt{\Lambda}/2}A_0  +C_0\,\epsilon_0^{-2} + |B_1(0)|+C_{B_2}t_0 \right) \\
& \quad + C e^{-\epsilon_0u_0/2} \left(
\|B_0\|_{L^\ii} \epsilon_0^{-2}+(C_{B_2}+C_0) \epsilon_0^{-1} + u_0^{-1} A_0  \right).
\end{align*}
Integrating once more, we obtain from these two bounds that
\begin{align}\label{eq:quanthlapl}
|f(0)|& \le C\epsilon_0^2 e^{\epsilon_0 u_0/2}\left(
\epsilon_0 \sqrt\Lambda\, e^{\epsilon_0\sqrt{\Lambda}/2} A_0  + C_0\,\epsilon_0^{-2} + |B_1(0)|+C_{B_2}t_0 \right) \notag \\
& \quad + C e^{-\epsilon_0u_0/2} \left(
\|B_0\|_{L^\ii} \epsilon_0^{-1}+(C_{B_2}+C_0)\left( 1+ \ln_+\tfrac{1}{\epsilon_0 u_0}\right) +  (1+u_0\epsilon_0)u_0^{-2} A_0\right).
\end{align}
This is the desired bound on $a=f(0)$. Moreover, if we define $\phi(s)=(f(s)-a)/s$, then for all $|s|\le\epsilon_0/2$,
\begin{align}\label{eq:quanthlapl2}
|\phi(s)|\le\sup_{|s'|\le\epsilon_0/2}|f'(s')|
& \le C\epsilon_0 e^{\epsilon_0 u_0/2}\left(
\epsilon_0\sqrt\Lambda\, e^{\epsilon_0\sqrt{\Lambda}/2}A_0  +C_0\,\epsilon_0^{-2} + |B_1(0)|+C_{B_2}t_0 \right) \notag \\
& \quad + C e^{-\epsilon_0u_0/2} \left(
\|B_0\|_{L^\ii} \epsilon_0^{-2}+(C_{B_2}+C_0) \epsilon_0^{-1} + u_0^{-1} A_0  \right).
\end{align}
Thus, setting $T=\epsilon_0/2$, we obtain
\begin{multline}\label{eq:bound-a-phi}
|a| + \int_{-T}^T |\phi(it)|\,dt 
\le C\epsilon_0^2 e^{\epsilon_0 u_0/2}\left(
\epsilon_0\sqrt\Lambda\, e^{\epsilon_0\sqrt{\Lambda}/2}A_0 + C_0\,\epsilon_0^{-2} + |B_1(0)|+C_{B_2}t_0 \right) \\
\quad + C e^{-\epsilon_0u_0/2} \left(
\|B_0\|_{L^\ii} \epsilon_0^{-1}+(C_{B_2}+C_0)\left( 1+ \ln_+\tfrac{1}{\epsilon_0 u_0}\right) + \left( 1 + \epsilon_0 u_0 \right) u_0^{-2} A_0 \right).
\end{multline}

\emph{Step 5.} We now show that the bound $h(v)-h(u)\ge -c$ holds for all $0\le u\le v\le u+\delta$ with
\begin{equation}\label{eq:bound-c}
c \le C \left( C_{B_0} + \|B_0\|_{L^\infty} \left(\delta + u_0 \right) + C_{B_2} \left( 1 + \ln(1+\delta u_0^{-1}) + t_0 u_0^{-2} \right)\right).
\end{equation}
To prove this, we decompose
$$
g_0(u)=\tilde{B_0}(u)u^3+\tilde{B_1}(u)+\tilde{B_2}(u)
$$
where
$$\tilde{B_0}(u)=\begin{cases}
B_0((u-u_0)^2) & \text{if}\ u\ge u_0,\\
B_0(0) & \text{if}\ u<u_0,
\end{cases},\quad
\tilde{B_1}(u)=\begin{cases}
B_1((u-u_0)^2) - B_1(0) & \text{if}\ u\ge u_0,\\
0 & \text{if}\ u<u_0,
\end{cases},$$
$$\tilde{B_2}(u)=\begin{cases}
((u-u_0)^3-u^3)B_0((u-u_0)^2)+B_2((u-u_0)^2)-B_2(0) & \text{if}\ u\ge u_0,\\
- u^3 B_0(0) & \text{if}\ u<u_0,
\end{cases}.  
$$
We have, for all $0\le u\le v$,
\begin{align}\label{eq:almostmonoproof}
h(v)-h(u) &= \int_{(u,v]}\frac{dg_0(w)}{w^2} =\int_{(u,v]}\frac{d(\tilde{B_0}(w)w^3)}{w^2}+\int_{(u,v]}\frac{d\tilde{B_1}(w)}{w^2}+\int_{(u,v]}\frac{d\tilde{B_2}(w)}{w^2}
\end{align}
and we discuss separately the three  terms on the right side. The function $\tilde{B_1}$ is non-decreasing since $B_1$ is non-decreasing and therefore the second term is nonnegative.

For the first term in \eqref{eq:almostmonoproof} we have if $u\le v\le u+\delta$, using the right-continuity of $w\mapsto w^3\tilde{B_0}(w^2)$,
\begin{align*}
\int_{(u,v]}\frac{d(\tilde{B_0}(w)w^3)}{w^2} &= \tilde{B_0}(v)v-\tilde{B_0}(u)u+2\int_u^v\tilde{B_0}(w)\,dw \\
&= \tilde{B_0}(v)(v-u)+(\tilde{B_0}(v)-\tilde{B_0}(u))u+2\int_u^v\tilde{B_0}(w)\,dw \\
&\ge -3 \|B_0\|_{L^\ii}\delta +(\tilde{B_0}(v)-\tilde{B_0}(u))u.
\end{align*}
If $u\le u_0$, we simply bound
$$
(\tilde{B_0}(v)-\tilde{B_0}(u))u \ge - 2\|B_0\|_{L^\infty} u_0 \,.
$$
If $u> u_0$ we use the assumption on $B_0$ to deduce that 
$$\tilde{B_0}(v)-\tilde{B_0}(u)\ge -C_{B_0}/(u-u_0),$$
and hence
\begin{align*}
(\tilde{B_0}(v)-\tilde{B_0}(u))u &= (\tilde{B_0}(v)-\tilde{B_0}(u))(u-u_0)+(\tilde{B_0}(v)-\tilde{B_0}(u))u_0 \\
& \ge -C_{B_0} -2\|B_0\|_{L^\ii}u_0.
\end{align*}
To summarize, for all $0\le u \le v\le u+\delta$,
$$
\int_{(u,v]}\frac{d(\tilde{B_0}(w)w^3)}{w^2} \ge -\left( C_{B_0} + \|B_0\|_{L^\infty} (3\delta + 2u_0)\right).
$$
For the third term in \eqref{eq:almostmonoproof} we first note that if $u<u_0$, then
$$
\int_{(u,\min\{v,u_0\}]} \frac{d \tilde{B_2}(w)}{w^2} = - 3 B_0(0) \left( \min\{v,u_0\}-u \right).
$$
If $v\le u+\delta$, the right side is $\ge - 3 \|B_0\|_{L^\infty}\delta$. Thus, in the following we will bound the integral under the assumption that $u\ge u_0$. Since $\tilde{B_2}$ is right-continuous, we write
\begin{align*}
\int_{(u,v]}\frac{d \tilde{B_2}(w)}{w^2} &= \frac{\tilde{B_2}(v)}{v^2}-\frac{\tilde{B_2}(u)}{u^2}+2\int_u^v\frac{\tilde{B_2}(w)}{w^3}\,dw \,.
\end{align*}
We use $|u^3-(u-u_0)^{3}|\le Cu_0(u_0^2+u^2)\le Cu_0u^2$ to get
$$
|\tilde{B_2}(u)|\le C \left( u_0 u^2 \|B_0\|_{L^\infty} + C_{B_2}(t_0 + (u-u_0)^2) \right).
$$
Therefore, since $u_0\le u$,
\begin{align*}
\frac{|\tilde{B_2}(u)|}{u^2} & \le C \left( u_0 \|B_0\|_{L^\infty} + \frac{C_{B_2} t_0}{u^2} + \frac{C_{B_2}(u-u_0)^2}{u^2} \right) \le C \left( u_0 \|B_0\|_{L^\infty} + \frac{C_{B_2} t_0}{u_0^2} + C_{B_2}  \right)
\end{align*}
and, using
$$
\int_u^{u+\delta} \frac{(w-u_0)^2}{w^3}\,dw \le C \left( 1+ \ln\left(1+\frac{\delta}{u_0}\right)\right),
$$
also
\begin{align*}
\int_u^v\frac{|\tilde{B_2}(w)|}{w^3}\,dw & \le C \left( u_0 \|B_0\|_{L^\infty} \ln\frac{v}{u} + C_{B_2}t_0 \left( \frac{1}{u^2} - \frac{1}{v^2}\right) + C_{B_2} \left( 1+ \ln\left(1+\frac{\delta}{u_0}\right) \right) \right) \\
& \le C \left( u_0 \|B_0\|_{L^\infty} \ln\left( 1+ \delta u_0^{-1}\right) + C_{B_2} \left( 1+ \ln\left( 1+ \delta u_0^{-1}\right)+ t_0 u_0^{-2}\right) \right).
\end{align*}
To summarize, for all $0\le u\le v\le u+\delta$,
\begin{align*}
\int_{(u,v]}\frac{d \tilde{B_2}(w)}{w^2} &
\ge -C\left( \|B_0\|_{L^\infty} \left( u_0 + \delta \right)  +  C_{B_2} \left( 1+ \ln(1+\delta u_0^{-1}) + t_0 u_0^{-2} \right) \right).
\end{align*}
Combining our bounds on the three terms in \eqref{eq:almostmonoproof} gives \eqref{eq:bound-c}.

\emph{Step 6.} 
Applying Theorem \ref{coro:tauber-approx-increasing} to $h$ with $T=\epsilon_0/2$ and using the bounds \eqref{eq:bound-a-phi} and \eqref{eq:bound-c}, we find that for all $u\ge 0$ we have 
\begin{align*}
|h(u)|& \le C \epsilon_0^2 e^{\epsilon_0 u_0/2}\left(
\epsilon_0\sqrt\Lambda\, e^{\epsilon_0\sqrt{\Lambda}/2}A_0 + C_0\,\epsilon_0^{-2} + |B_1(0)|+C_{B_2}t_0 \right) \\
& \quad + C e^{-\epsilon_0u_0/2} \left(
\|B_0\|_{L^\ii} \epsilon_0^{-1}+(C_{B_2}+C_0)\left( 1+ \ln_+\tfrac{1}{\epsilon_0 u_0}\right) + \left( 1 + \epsilon_0 u_0 \right) u_0^{-2} A_0\right) \\
& \quad + C \left( 1 + \delta^{-1}\epsilon_0^{-1}\right) \left( C_{B_0} + \|B_0\|_{L^\infty} \left(\delta + u_0 \right) + C_{B_2} \left( 1 + \ln(1+\delta u_0^{-1}) + t_0 u_0^{-2} \right)\right) \\
& =: \Theta \,.
\end{align*}
Thus, if $x\ge u_0$,
$$
|A((x-u_0)^2) - A(0)| = |g_0(x)| = \left| x^2 h(x) -2 \int_0^x uh(u)\,du \right| \le 2 \Theta x^2
$$
and thus, if $u\ge 0$,
$$
|A(u)| \le |A(0)| +  2\Theta (u_0 + \sqrt t)^2 \le |B_1(0)| + C_{B_2} t_0 + 4 \Theta (u_0^2 + t) \,.
$$
Note that with the choice $u_0=\epsilon_0^{-1}$ we get
\begin{align*}
\Theta & \le C \epsilon_0^2 \left(
\left(1+\epsilon_0\sqrt\Lambda\, e^{\epsilon_0\sqrt{\Lambda}/2}\right)A_0 + C_0\,\epsilon_0^{-2} \right) \\
& \quad + C \left( 1 + \delta^{-1}\epsilon_0^{-1}\right) \left( C_{B_0} + \|B_0\|_{L^\infty} \left(\delta + \epsilon_0^{-1} \right) + C_{B_2} \left( 1 + \ln(1+\delta \epsilon_0) + t_0 \epsilon_0^2 \right)\right)
\end{align*}
and therefore, finally,
\begin{multline}\label{eq:quantaboundfinal}
|A(u)|  \le C \epsilon_0^2 \left(
\left(1+\epsilon_0\sqrt\Lambda\, e^{\epsilon_0\sqrt{\Lambda}/2}\right)A_0 + C_0\,\epsilon_0^{-2} \right) \left( t+ \epsilon_0^{-2} \right) \\
 \quad + C \left( 1 + \delta^{-1}\epsilon_0^{-1}\right) \left( C_{B_0} + \|B_0\|_{L^\infty} \left(\delta + \epsilon_0^{-1} \right) + C_{B_2} \left( 1 + \ln(1+\delta \epsilon_0) + t_0 \epsilon_0^2 \right)\right) \left( t+\epsilon_0^{-2}\right).
\end{multline}
This proves the bound claimed in Theorem \ref{thm:stieljes-3D}.

%%%%%%%%%%%%%%%%%%%%%%%%%%%%%%%%%%%%%%%%%
%%%%%%%%%%%%%%%%%%%%%%%%%%%%%%%%%%%%%%%%%
% \bibliographystyle{siam}
% \bibliography{biblio}

\end{document}